\LetLtxMacro\amsproof\proof                      
\LetLtxMacro\amsendproof\endproof                
  \LetLtxMacro\proof\amsproof                    %
  \LetLtxMacro\endproof\amsendproof              %
\newcommand{\MyThmtoolsConstructor}[3]{
  \declaretheorem[
  numberlike=COUNTERHACK,
  numbered=yes,
  name=#2,
  Refname={#2}, 
  refname={\MakeLowercase{#2}},  
  ]{#1}
  \declaretheorem[
  numbered=no,
  name=Informal #2,
  Refname={#2}, 
  refname={\MakeLowercase{#2}},  
  ]{#1-informal}
  \declaretheorem[
  numbered=no,
  name=#2,
  Refname={#2}, 
  refname={\MakeLowercase{#2}},  
  ]{#1*}
}
\setlist{nolistsep} 
\newcounter{mycomment}
\newcounter{proposed}
\newfunc{\exponential}{exp}
\newlang{\langINDEX}{INDEX}
\newlang{\langLENGTHGEQ}{LENGTHGEQ}
\newlang{\langtemp}{A}          
\newlang{\langREACH}{REACH}
\newcommand{\set}[1]{\left\{#1\right\}}
\newcommand{\bitst}{\{0,1\}^{*}} 
\newcommand{\nbits}{\{0,1\}^{n}} 
\newcommand{\closure}{\overline}
\newcommand{\clos}{\overline}
\newcommand{\N}{\mathbb{N}}     
\renewcommand{\R}{\mathbb{R}}   
\newcommand{\Z}{\mathbb{Z}}     
\newcommand{\B}{\mathbb{B}}     
\newcommand{\pr}{\mathbb{P}}   
\newcommand{\ex}{\mathbb{E}}   
\renewcommand{\S}{\mathcal{S}}
\renewcommand{\P}{\mathcal{P}}
\DeclareMathOperator{\range}{range}
\DeclareMathOperator{\defeq}{\overset{def}{=}}
\DeclareMathOperator{\adj}{\overset{adj}{\sim}}
\DeclareMathOperator{\rep}{rep}
\DeclareMathOperator{\sign}{sign}
\DeclareMathOperator{\mycolor}{color}
\DeclareMathOperator{\diam}{diam}
\DeclareMathOperator{\sperner}{Sperner}
\DeclareMathOperator{\member}{member}
\DeclarePairedDelimiter{\ceil}{\lceil}{\rceil}
\DeclarePairedDelimiter\floor{\lfloor}{\rfloor}
\DeclarePairedDelimiter\abs{\lvert}{\rvert}
\DeclarePairedDelimiter\norm{\lVert}{\rVert}
\newcommand{\mytcbtheoremconstructor}[4]{

  \ifthenelse{\boolean{print}}{
    \newtcbtheorem[use counter=boxcounter, number within=section]{#1}{#2}{
      boxrule=0.7mm,
      skin=enhanced,
      middle=3mm,
      breakable, 
      colback=white,
      colframe=#4!50!black,
      fonttitle=\bfseries
    }{#3}
  }{
    \newtcbtheorem[use counter=boxcounter, number within=section]{#1}{#2}{
      boxrule=0.7mm,
      skin=enhanced,
      middle=3mm,
      breakable, 
      bicolor,
      colback=#4!20,
      colbacklower=white,
      colframe=#4!50!black,
      fonttitle=\bfseries,
    }{#3}
  }


}
\title{The Geometry of Rounding
\thanks{Research supported in part by NSF grant 1934884, 2130536, 2130608}
}
\date{\today}
\begin{document}
\maketitle

\renewcommand{\P}{\mathcal{P}}

\begin{abstract}
Rounding has proven to be a fundamental tool in theoretical computer science. By observing that rounding and partitioning of $\R^d$ are equivalent, we introduce the following natural partition problem which we call the {\em secluded hypercube partition problem}: Given $k\in\N$ (ideally small) and $\epsilon>0$ (ideally large), is there a partition of $\R^d$ with unit hypercubes such that for every point $\vec{p} \in \R^d$, its closed $\epsilon$-neighborhood (in the $\ell_{\infty}$  norm) intersects at most $k$ hypercubes? 

We undertake a comprehensive study of this partition problem. We prove that for every $d\in\N$, there is an explicit (and efficiently computable) hypercube partition of $\R^d$ with $k = d+1$ and $\epsilon = \frac{1}{2d}$. We complement this construction by proving that the value of $k=d+1$ is the best possible (for any $\epsilon$) for a broad class of ``reasonable'' partitions including hypercube partitions. We also investigate the optimality of the parameter $\epsilon$ and prove that any partition in this broad class that has $k=d+1$, must have $\epsilon\leq\frac{1}{2\sqrt{d}}$. These bounds imply limitations of certain deterministic rounding schemes existing in the literature. Furthermore, this general bound is based on the currently known lower bounds for the dissection number of the cube, and improvements to this bound will yield improvements to our bounds. 

While our work is motivated by the desire to understand rounding algorithms, one of our main conceptual contributions is the introduction of the {\em secluded hypercube partition problem}, which fits well with a long history of investigations by mathematicians on various hypercube partitions/tilings of Euclidean  space.

\end{abstract}

\smallskip
\noindent \textbf{Keywords.} Rounding, Partition, Tiling, Packing, Tesselation, Cube, Hypercube, Reclusive, Sperner, Dissection, Triangulation

\pagenumbering{roman}
\newpage
\tableofcontents
\newpage
\pagenumbering{arabic}

\section{Introduction}
\label{sec:introduction}

Rounding has proven to be  a fundamental tool in theoretical computer science. Generically, if $\mathcal{X}\subseteq\mathcal{Y}$ are metric spaces with the same metric, then a rounding algorithm $\mathcal{R}$ maps points in $\mathcal{X}$ to points in $\mathcal{Y}$ (typically so that $x$ is close to $\mathcal{R}(x)$). Intuitively, the purpose of rounding is to limit the number of possible outcomes of the algorithm, which in turn can help reduce the complexity (e.g. space complexity, time complexity, or the number of random bits needed). Often $\mathcal{X}=\R^d$ with the metric  induced by the $l^\infty$ norm (we call this metric $d_{max}$).




In complexity theory, one of the earliest applications of rounding is in the seminal work of Saks and Zhou~\cite{SaksZhou99} in the context of space-bounded derandomization. They devised a randomized algorithm $\mathcal{R}$ rounding from $[0, 1]^d$ to $[0, 1]^d$  with the following property:
for every $\vec{x} \in [0,1]^d$,
\[
  \Pr_{r \in \Sigma^{\ell}}\big[\forall \vec{y}\in B_{\epsilon}(\vec{x}): \mathcal{R}(r, \vec{x}) = \mathcal{R}(r,\vec{y})\big] \geq 1- 1/\poly(d)
\]
and for any $r\in\Sigma^\ell$, $d_{max}\left(\vec{x},\mathcal{R}(r,\vec{x})\right)<\epsilon$.

Here, $B_{\epsilon}(\vec{x})$ is the open $\epsilon$-ball around $\vec{x}$ with respect to the $l^\infty$ norm and $\Sigma^\ell=\set{0,1}^\ell$ is the  sample space of random bits. The property above expresses that for any $\epsilon$-ball, with high probability over the choice of randomness, all points in the ball round to the same value. The Saks and Zhou rounding scheme is a randomized rounding scheme with $\epsilon = 1/\poly(d)$ using $\ell = O(\log d)$ bits of randomness. This randomized rounding is a critical step in the well known derandomization of probabilistic space-bounded algorithms, namely $\BPSPACE(S)\subseteq \DSPACE(S^{3/2})$. 

A more recent example in which randomized rounding is employed is in the context of pseudodeterminism and multi-pseudodeterminism~
\cite{gat_probabilistic_2011,Goldreich19,GrossmanLiu19}. A pseudodetermistic algorithm is a randomized algorithm that on any particular input, returns a canonical solution with probability at least $2/3$ (which can be boosted to $1-\delta$). A $k$-pseudodeterministic algorithm, a generalization of a pseudodeterministic algorithm, is a probabilistic algorithm such that for each input $x$, there is a set $S_x$ of cardinality at most $k$, and on input $x$, the algorithm returns a value in $S_x$ with probability at least $\frac{k+1}{k+2}$ (which again can be boosted to $1-\delta$). In the context of designing multi-pseudodeterministic algorithms, Goldreich~\cite{Goldreich19} designed a randomized algorithm $\mathcal{R}$ with the following property:
for every $\vec{x}\in\R^d$, there exists $S_{\vec{x}}$ of cardinality at most $k = d+1$ such that
\[
\Pr_{r \in \Sigma^\ell}\left[
    \forall\vec{y}\in B_\epsilon(\vec{x}) :\mathcal{R}(r,\vec{y})\in S_{\vec{x}}
\right] \geq \frac{k+1}{k+2}
\]
and for any $r\in\Sigma^\ell$, $d_{max}\left(\vec{x},\mathcal{R}(r,\vec{x})\right)<\epsilon$.

Goldreich's rounding  has a different requirement than Saks and Zhou's: for any $\epsilon$-ball, Goldreich does not require that with high probability all points in the ball round to the same value, but he does require that there is a small set such that with high probability all points in the ball round to points in that set. Also, in Saks and Zhou's scheme, the single value that the ball is rounded to can depend on random bits $r$, but Goldreich's scheme requires the set $S_{\vec{x}}$ to be independent of the randomness. Using this rounding scheme, Goldreich showed the existence of $(d+1)$-pseudodeterministic approximation algorithms for a class of functions whose range is $\R^d$. Grossman and Liu used a similar rounding scheme to design low {\em influential-bit} algorithms, a notion that generalizes pseudodeterministic algorithms~\cite{GrossmanLiu19}. Similar rounding schemes have found applications in a very recent work of Impagliazzo, Lei, Pitassi, and Sorrell~\cite{impagliazzo_reproducibility_2022} in the context of making statistical learning algorithms reproducible. 



In~\cite{hoza_preserving_2018}, Hoza and Klivans designed a certain {\em deterministic} rounding scheme. Their work is motivated by the problem of reducing randomness for adaptive algorithms. Consider the problem of simulating  $m$ adaptive (adversarially chosen) invocations of a randomized estimation algorithm that returns a vector in $\R^d$.  If the algorithm requires $n$ bits of randomness, then the trivial way of simulating $m$ invocations requires $nm$ random bits. To reduce the randomness, they designed a deterministic rounding algorithm $\mathcal{R}$ that has the same properties as that of Goldreich. In other words, for every $\vec{x}\in\R^d$, the size of the set $\set{\mathcal{R}(\vec{y}):\vec{y}\in B_{\epsilon}(\vec{x})}$ has cardinality at most $d+1$. Combining this rounding algorithm with the INW pseudorandom generator~\cite{INW}, Hoza and Klivans reduced the amount of randomness required for the above adaptive invocation problem to $n+O(m\log(d+1))$. 
We give more elaborate details on all these rounding schemes and how they fit into our work in \Autoref{sec:rounding-schemes-in-prior-work}.



The present work investigates {\em the geometry of rounding}.  We equate the notion of  deterministic rounding schemes to partitions of Euclidean space $\R^{d}$. This connection led us to the introduction of a very natural partition problem that we call the {\em secluded hypercube partition problem}. Using this geometric approach, we  establish upper and lower bounds on the parameters of certain deterministic rounding schemes. Perhaps more importantly, the partition problem we investigate is very natural and should be of independent interest to a broader community. The introduction of the secluded hypercube partition problem is one of our main conceptual contributions. 

\begin{definition}[$(k(d),\epsilon(d))$-Deterministic Rounding]
A deterministic rounding scheme is a family of functions ${\mathcal F} = \{f_d\}_{d\in\N}$ where $f_d: \R^d \to \R^d$. We call $\mathcal{F}$ a  $(k(d),\varepsilon(d))$-deterministic rounding scheme if (1)  $\forall \vec{x}\in\R^d$, $d_{max}(\vec{x},f_d(\vec{x}))\leq 1
\footnote{
  The bound of 1 is not critical. We can use any constant and scale the parameters appropriately.
}$
(2) $\forall \vec{x}\in \R^d$, $\abs{\{f_d(\vec{z})\colon \vec{z}\in B_{\varepsilon(d)}(\vec{x})\}} \leq k(d)$. 
\end{definition}


\begin{definition}[$(k,\epsilon)$-Secluded Partition] 
A partition $\P$ of $\R^d$ is called a $(k,\epsilon)$-secluded partition if for every $\vec{p} \in \R^d$, $\abs{\set{X\in\P\colon X\cap \closure{B}_{\epsilon}(\vec{p})\not=\emptyset}}\leq k$.
\end{definition}

\begin{remark}
The use of the word ``secluded'' is meant as a synonym for ``remote'' or ``isolated'' and is meant to indicate that every point in space is only nearby (within $\epsilon$) a few (at most $k$) members of the partition.
\end{remark}

We refer to $k$ as the {\em degree} of the partition and $\epsilon$ as its {\em tolerance}. The values of $k$ and $\epsilon$ will often be functions of the dimension $d$.

There is a natural equivalence between rounding schemes and partitions of Euclidean space in a very general sense. A rounding function $f: \R^d \rightarrow \R^d$ induces a partition $\P_f$ as follows: $\P_f=\set{f^{-1}(y):y\in\range(f)}$. Conversely, a partition $\P$ induces a deterministic rounding function $f_{\P}$ as follows: for each member $X \in \P$ let $\vec{p}_{X} \in X$ some fixed representative. Then the rounding function $f_{\P}$ maps any point $\vec{p} \in X$ to $\vec{p}_{X}$. This connection leads to the following observation. 

\begin{observation}[Equivalence of Rounding Schemes and Partitions]\label{:rounding-schemes-and-partitions}
A $(k(d), \epsilon(d))$-deterministic rounding scheme induces, for each $d\in\N$, a $(k(d), \epsilon(d))$-secluded partition of $\R^d$ in which each member has diameter\footnote{The diameter is at most $2$, not at most $1$. For example, the rounding scheme which rounds a point by sending each coordinate to the nearest even integer (breaking ties by rounding up) is a $(2^d, 1)$-deterministic rounding scheme, and the partition it induces in each dimension consists of hypercubes of side length $2$.}  at most $2$. Conversely, a sequence $\langle \P_d\rangle_{d=1}^\infty$ of partitions where $\P_d$ is $(k(d), \epsilon(d))$-secluded and contains only members of diameter at most $1$ induces many\footnote{For each member, any representative of that member can be chosen.} $(k(d), \epsilon(d))$-deterministic rounding schemes.
\end{observation}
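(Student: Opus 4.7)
The plan is to verify each of the two directions of the equivalence by constructing the claimed object explicitly and checking the three relevant conditions: that we indeed have a partition, that the members have small diameter, and that the seclusion/rounding property transfers.

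For the forward direction, I would take a $(k(d),\epsilon(d))$-deterministic rounding scheme $\mathcal{F}=\{f_d\}$ and define $\P_{f_d}\defeq\{f_d^{-1}(\vec{y})\colon \vec{y}\in\range(f_d)\}$. That this is a partition is immediate from $f_d$ being a function: the nonempty fibers are pairwise disjoint and their union is $\R^d$. To bound the diameter by $2$, I would note that any two points $\vec{x},\vec{y}$ in the same fiber $f_d^{-1}(\vec{z})$ satisfy $d_{max}(\vec{x},\vec{z})\le 1$ and $d_{max}(\vec{y},\vec{z})\le 1$ by the rounding condition, so the triangle inequality gives $d_{max}(\vec{x},\vec{y})\le 2$. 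For the seclusion property, the key observation is that a fiber $f_d^{-1}(\vec{y})$ intersects $\closure{B}_{\epsilon(d)}(\vec{p})$ precisely when $\vec{y}\in\{f_d(\vec{z})\colon \vec{z}\in\closure{B}_{\epsilon(d)}(\vec{p})\}$, so the count of intersecting fibers equals the cardinality of this image set, which is bounded by $k(d)$.

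For the reverse direction, starting with a $(k(d),\epsilon(d))$-secluded partition $\P_d$ of $\R^d$ whose members have diameter at most $1$, I would, for each member $X\in\P_d$, pick any representative $\vec{p}_X\in X$ (using choice) and define $f_{\P_d}(\vec{x})\defeq \vec{p}_{X(\vec{x})}$, where $X(\vec{x})$ is the unique member of $\P_d$ containing $\vec{x}$. Well-definedness follows from $\P_d$ being a partition. The bound $d_{max}(\vec{x},f_{\P_d}(\vec{x}))\le 1$ is immediate from $\vec{x}$ and $\vec{p}_{X(\vec{x})}$ lying in a common member of diameter at most $1$. Finally, $\{f_{\P_d}(\vec{z})\colon \vec{z}\in \closure{B}_{\epsilon(d)}(\vec{p})\}=\{\vec{p}_X\colon X\in\P_d,\ X\cap \closure{B}_{\epsilon(d)}(\vec{p})\ne\emptyset\}$, and since distinct members have distinct representatives, the cardinality of this set is exactly the number of members meeting $\closure{B}_{\epsilon(d)}(\vec{p})$, which is at most $k(d)$ by seclusion.

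The only subtlety I foresee is the minor inconsistency between open and closed $\epsilon$-balls in the two definitions; I would remark that $\closure{B}_\epsilon(\vec{p})\subseteq B_{\epsilon'}(\vec{p})$ for any $\epsilon'>\epsilon$, so the equivalence is clean if one is willing to pass to slightly smaller tolerances, and in any case the containment $B_\epsilon\subseteq\closure{B}_\epsilon$ makes the forward direction go through verbatim. The footnote comments about the diameter being at most $2$ rather than $1$ (in the forward direction) and about the non-uniqueness of the induced rounding scheme (in the reverse direction, due to the choice of representatives) are direct consequences of the construction and warrant only a brief mention.
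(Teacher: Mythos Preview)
Your proposal is correct and follows exactly the construction the paper sketches in the paragraph immediately preceding the observation (fibers of $f_d$ for the forward direction, choice of representatives for the converse); the paper itself treats this as an observation and does not give a more detailed proof than that sketch. Your remark about the open-versus-closed ball discrepancy between the two definitions is a genuine nicety that the paper glosses over.
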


The connection between rounding schemes and partitions of Euclidean space has been observed in earlier works. In particular, the works of Fiege et. al., Kindler et. al., and Braverman and Minzer  investigated the foams problem in the context of parallel repetition~\cite{FKO07,kindler_spherical_2008,kindler_spherical_2012,BM21}. The foams problem asks to find a body that tiles $\R^d$ with $\Z^d$ with the smallest surface area. These works established  connections between the foams problem and the parallel repetition theorem and gave a new foams construction. This new construction led to an optimal {\em noise-resistant} randomized rounding scheme (more precisely a distribution $\mathcal{D}$ over deterministic rounding schemes) with the following property for all $d\in\N$:  $\forall \vec{x},\vec{y}\in\R^d$ such that $d_{max}(\vec{x},\vec{y})\leq\epsilon$, it holds that $\Pr_{f\sim\mathcal{D}}\left[f(\vec{x})=f(\vec{y})\right]\geq1-O(\varepsilon)$. In \cite[Definition~1.2]{kindler_spherical_2008}, a randomized rounding scheme is defined as a distribution of functions for each dimension rather than a single function in each dimension. Not all rounding algorithms necessarily abide by this definition of a randomized rounding scheme, but all of those discussed in this paper do including \cite{SaksZhou99,Goldreich19,kindler_spherical_2008,kindler_spherical_2012,hoza_preserving_2018,impagliazzo_reproducibility_2022}. Thus, while our results will be about deterministic rounding schemes, there is a connection between deterministic and randomized rounding schemes in this sense.



\subsection{Our Contributions}

In this work we conduct an extensive investigation of secluded partitions.
Contributions of this work are two-fold. The first contribution is the formulation of a very natural partition problem known as  the {\em secluded hypercube partition problem} and an explicit construction of such partition with degree $d+1$ and tolerance $1/2d$. The second contribution is   
establishing impossibility results on the degree and the tolerance parameters. In particular,  we establish that any ``reasonable" secluded partition must have degree at least $d+1$ and every reasonable partition with degree $d+1$ must have tolerance at most $1/2\sqrt{d}$.

We start our investigation by considering {\em unit hypercube partitions} of $\R^d$, which are very natural and extensively studied partitions of Euclidean space. A partition $\P$ of $\R^d$ is a unit hypercube partition if every $X\in\P$ is a $d$-dimensional unit hypercube. Note that the diameter of a unit hypercube is $1$ in the $d_{max}$ metric.   

\begin{question*}[Secluded Hypercube Partition Problem]\label{:secluded-hypercube-partition-problem}
Let $d\in\N$. For what values of $k\in\N$ and $\epsilon\in(0,\infty)$ does there exist a $(k, \epsilon)$-secluded unit hypercube partition $\P$ of $\R^{d}$?
\end{question*}

One of our main conceptual contributions is the formulation of the above question which is a very natural geometric question and should be of broad interest. The question asks: given $d\in\N$, $\epsilon\in(0,\infty)$, and $k\in\N$, is there a partition of $\R^d$ with unit hypercubes so that for any point $\vec{p}\in\R^{d}$, its $\epsilon$-neighborhood, in the $d_{max}$ metric, intersects with at most $k$ hypercubes?  It is easy to see that the natural grid partition of $\R^d$ with unit hypercubes is a $(2^d, 1/2)$-secluded partition. Our first technical contribution is the design of a parameterized class of explicit $(d+1,\frac{1}{2d})$-secluded unit hypercube partitions for each $d\in\N$. (See \Autoref{:hypercube-partition-thm} for a more elaborate statement of the following result and its proof.)

\begin{theorem*}[Hypercube Partition Theorem]\label{:intro-hypercube-partition-theorem}
For all $d\in\N$, there exists $(d+1,\frac{1}{2d})$-secluded unit hypercube partitions of $\R^{d}$.
\end{theorem*}

Our next result shows that the above construction is optimal with respect to the degree parameter. In particular, we show that for any unit hypercube partition of $\R^d$ and any $\epsilon\in(0,\infty)$, the degree has to be at least $d+1$. (See \Autoref{:unit-hypercube-optimality} for a slightly stronger version of the following result and its proof.)

\begin{theorem*}[Degree Optimality for Unit Hypercube Partitions]
For every $(k, \epsilon)$-secluded unit hypercube partition $\P$ of $\R^d$, it must be that $k \geq d+1$.
\end{theorem*}

This theorem raises the question of whether there exist non-hypercube partitions with degree smaller than $d+1$. There are trivial examples of such partitions since any partition with fewer than $d+1$ members trivially has degree smaller than $d+1$ no matter what value of tolerance is used. Another trivial example is concentric ``shells'' in the $d_{max}$ metric: $\P=\set{\closure{B}_1,\; \closure{B}_2\setminus \closure{B}_1,\; \closure{B}_3\setminus \closure{B}_2,\; \closure{B}_4\setminus \closure{B}_3,\; \ldots}$ (where $\closure{B}_i$ denotes the ball of radius $i$ centered at $\vec{0}$). For tolerance $\epsilon\leq\frac12$, this is a $(2,\epsilon)$-secluded partition. 

In the first of these examples, since there are a finite number of members, some member has infinite diameter and infinite measure. In the second example, though every member has finite measure and finite diameter, there is no bound on either of these quantities (i.e. for any $D$ and any $\mu$, the partition contains a member with diameter greater than $D$ and measure greater than $\mu$). If we consider partitions with either a bound on the diameter of the members or a bound on the measure of the members, then we can rule out the possibility of degree less than $d+1$. We say a partition has bounded diameter (resp. bounded measure) if there exists some bound $b\in(0,\infty)$ such that every member of the partition has diameter (resp. measure) at most $b$. Such $b$ will be called a diameter bound (resp. measure bound) for the partition.
Our next contribution is a degree optimality theorem for bounded measure partitions. (See \Autoref{:first-optimality-thm} for an equivalent statement of the following result and a proof.)

\begin{theorem*}[Degree Optimality for Bounded Measure Partitions]
   For every $(k, \epsilon)$-secluded partition $\P$ of $\R^d$ with bounded measure, it must be that $k \geq d+1$. 
\end{theorem*}

Since a partition with bounded diameter must also be a partition of bounded measure\footnote{Suppose $\P$ is a partition of $\R^d$ of bounded diameter. Then there is some $D$ such every member $X\in\P$ has $\diam(X)\leq D$. Let $\mu=(2D)^d$. For any $X\in\P$, fix some $\vec{x}\in X$ so $\closure{B}_{D}(\vec{x})\supset X$ and note that $\closure{B}_{D}(\vec{x})$ (in the $d_{max}$ metric) has measure $(2D)^d=\mu$ which shows that each member of $\P$ has measure at most $\mu$, so $\P$ has bounded measure.}, we get the following corollary. (See \Autoref{:second-optimality-thm} for a stronger version of this result and its proof.)

\begin{corollary*}[Degree Optimality for Bounded Diameter Partitions]
   For every $(k, \epsilon)$-secluded partition $\P$ of $\R^d$ with bounded diameter, it must be that $k \geq d+1$. 
\end{corollary*}

Now we turn from the degree parameter to the tolerance parameter. The \nameref{:intro-hypercube-partition-theorem} above states that there is a $(k,\epsilon)$-secluded partition with degree $k=d+1$ (the minimum possible for bounded partitions) and tolerance $\epsilon=\frac{1}{2d}$. Is it possible to design $(k,\epsilon)$-secluded partitions maintaining optimal $k=d+1$, but with larger $\epsilon$? We could just scale up the sizes of the hypercubes in the \nameref{:intro-hypercube-partition-theorem} to obtain any $\epsilon$ we want, so this question only makes sense if we compare $\epsilon$ to the size of the largest member of the partition. (See \Autoref{:sperner-upper-bound} for a more detailed version of the following result and its proof.)


\begin{theorem*}[Tolerance Upper Bound]
  For every $(d+1, \epsilon)$-secluded partition $\P$ of $\R^{d}$ with diameter bound $D$, it must be that $\epsilon \leq \frac{D}{2\sqrt{d}}$.
\end{theorem*}

Our goal was to show an upper bound of $1/\Omega(d)$ on the tolerance parameter as that would establish that our construction is optimal in both degree and tolerance. However, we could only establish this for $d=1$ and $d=2$. Closing the gap between our upper and lower bounds on the tolerance parameter is the main open question that emerged from our investigation.

The above results and \Autoref{:rounding-schemes-and-partitions} establish a limitation of deterministic rounding schemes as follows. 
\begin{corollary}[Limitations of Deterministic Rounding Schemes]
For any $(k(d),\epsilon(d))$-deterministic rounding scheme, $k(d)$ must be at least $d+1$, and for $(d+1, \epsilon(d))$-deterministic rounding schemes, $\epsilon(d)$ must be at most\footnote{\Autoref{:rounding-schemes-and-partitions} gives a diameter bound of $D=2$, so $\frac{D}{2\sqrt{d}}=\frac{1}{\sqrt{d}}$} $\frac{1}{\sqrt{d}}$.
\end{corollary}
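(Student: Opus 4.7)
The plan is to reduce the corollary to the two main partition-theoretic results (the degree optimality for bounded diameter partitions and the tolerance upper bound) via the equivalence between deterministic rounding schemes and partitions recorded in \Autoref{:rounding-schemes-and-partitions}. This is essentially a bookkeeping argument: the only nontrivial thing to verify is that the diameter bound coming from the rounding scheme is exactly $D = 2$, so that plugging into the tolerance upper bound $\epsilon \leq D/(2\sqrt{d})$ yields the claimed $1/\sqrt{d}$.

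First I would unpack the definition of a $(k(d),\epsilon(d))$-deterministic rounding scheme $\mathcal{F}=\{f_d\}_{d\in\N}$. For each fixed $d$, consider the induced partition $\P_{f_d}=\{f_d^{-1}(\vec{y}) : \vec{y}\in\range(f_d)\}$. I would check that this is a $(k(d),\epsilon(d))$-secluded partition: for any $\vec{x}\in\R^d$, the set $\{f_d(\vec{z}) : \vec{z}\in B_{\epsilon(d)}(\vec{x})\}$ has cardinality at most $k(d)$ by hypothesis, and this set is in bijection with the collection of members of $\P_{f_d}$ that meet $B_{\epsilon(d)}(\vec{x})$ (hence also those that meet $\closure{B}_{\epsilon(d)}(\vec{x})$, up to replacing $\epsilon(d)$ by a value arbitrarily close; a brief limiting argument here handles the closed-ball definition of seclusion).

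Next I would establish the diameter bound. For any $X\in\P_{f_d}$, there is a unique $\vec{y}\in\R^d$ with $X=f_d^{-1}(\vec{y})$, and every $\vec{z}\in X$ satisfies $d_{\max}(\vec{z},\vec{y})\leq 1$ by the rounding condition. Hence for any $\vec{z}_1,\vec{z}_2\in X$, the triangle inequality gives $d_{\max}(\vec{z}_1,\vec{z}_2)\leq 2$, so $\P_{f_d}$ has diameter bound $D=2$. Applying the \nameref{:second-optimality-thm} (degree optimality for bounded diameter partitions) immediately gives $k(d)\geq d+1$, which is the first part of the corollary.

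For the second part, assume $k(d)=d+1$. Then $\P_{f_d}$ is a $(d+1,\epsilon(d))$-secluded partition of $\R^d$ with diameter bound $D=2$, so the Tolerance Upper Bound yields
\[
\epsilon(d)\;\leq\;\frac{D}{2\sqrt{d}}\;=\;\frac{2}{2\sqrt{d}}\;=\;\frac{1}{\sqrt{d}},
\]
as claimed. I do not foresee a real obstacle here; the only subtle point is the distinction between the open ball $B_{\epsilon(d)}(\vec{x})$ used in the definition of a rounding scheme and the closed ball $\closure{B}_{\epsilon}(\vec{p})$ used in the definition of a secluded partition. This is resolved by observing that if the open-ball seclusion bound holds for every radius $\epsilon(d)$, then by continuity it holds for the closed ball of any strictly smaller radius, and the upper bound $\epsilon(d)\leq 1/\sqrt{d}$ passes to the limit; alternatively, one can apply the partition theorems with $\epsilon$ replaced by any $\epsilon' < \epsilon(d)$ and then let $\epsilon'\to\epsilon(d)$.
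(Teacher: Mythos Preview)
Your proposal is correct and follows essentially the same route as the paper, which does not give an explicit proof but simply states that the corollary follows from \Autoref{:rounding-schemes-and-partitions} together with the degree optimality and tolerance upper bound results. You have filled in the details the paper leaves implicit, including the open-versus-closed ball subtlety, which the paper's Observation simply asserts without discussion.
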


Finally, we show an application of the hypercube partition theorem in the context of sample complexity of pseudodeterministic algorithms. Let $f_1, \ldots, f_d$ be functions from $\{0, 1\}^n$ to $[0, 1]$. Given blackbox access to the functions, the goal is to output a vector $\vec{\alpha} = \langle \alpha_1, \ldots, \alpha_d\rangle$ that is an  $(\epsilon, \delta)$-approximation to the vector $\langle\ex(f_{i})\rangle_{i=1}^{d}$ 
with respect to $d_{max}$ metric. This means that with probability at least $1-\delta$, the vector $\vec{\alpha}\in\R^{d}$ returned by the algorithm is such that $d_{max}(\vec{\alpha}, \langle\ex(f_{i})\rangle_{i=1}^{d})\leq\epsilon$ which is equivalent to saying that with probability at least $1-\delta$, it holds for all $i$ that $\abs{\alpha_{i}-\ex(f_{i})}\leq\epsilon$. Goldreich proved that there is $(d+1)$-pseudodeterministic algorithm for this task, and this algorithm has a sample complexity of $\widetilde{O}(d^{4})$~\cite{Goldreich19}. Using our hypercube partitions, we show that the sample complexity can be improved to  $\widetilde{O}(d^{2})$ samples.

\begin{restatable*}[Application to Pseudodeterministic Algorithms]{proposition}{RestatableApplicationToPseudodeterministicAlgorithms}
  Let $f_{1},\ldots,f_{d}:\nbits\to[0,1]$ be functions and $\epsilon\in(0,\infty)$ and $\delta\in(0,\frac{1}{2})$.
  There is an algorithm that given sample access to each $f_{i}$ can $(d+1)$-pseudodeterministically $(\epsilon,\delta)$-approximate $\langle\ex(f_{i})\rangle_{i=1}^{d}$ relative to the $d_{max}$ metric using $O\left(\frac{(d+1)^{2}}{\epsilon^{2}}\cdot\log\left(\frac{d}{\delta}\right)\right)$ samples.
\end{restatable*}

\section{Main Results and Proof Outlines}
\label{sec:results}

In this section, we give formal statements and proof outlines for all our results.  We start with some required notation. A complete set of notations that we use in this paper is given in \Autoref{sec:notation}.

For a point $\vec{p} \in \R^d$ and $\varepsilon \in (0,\infty)$, $\closure{B}_{\epsilon}(\vec{p})$ denotes the closed ball of radius $\epsilon$ around $\vec{p}$, with respect to the $d_{max}$ metric (we will use this metric throughout this paper). In other words,
$
\closure{B}_{\epsilon}(\vec{p})\defeq\set{\vec{x}\in\R^d:\norm{\vec{x}-{\vec{p}}}_{\infty}\leq\epsilon}=\set{\vec{x}\in\R^d:\forall i\;\abs{x_{i}-y_{i}}\leq\epsilon}.
$
 
When it is understood
that we are considering a particular partition $\P$, then for any point
$\vec{p}\in\R^{d}$, and any $\epsilon\in(0,\infty)$, we let
$\mathcal{N}_{\epsilon}(\vec{p})$ denote the members of $\P$ intersecting
the closed $\epsilon$-neighborhood of $\vec{p}$:
\[
	\mathcal{N}_{\epsilon}(\vec{p})\defeq 
	\set{X\in\P:X\cap \closure{B}_{\epsilon}(\vec{p})\not=\emptyset}.
\]
In addition, we let $\mathcal{N}_{\closure{0}}(\vec{p})$
denote the members of $\P$ whose closure contains $\vec{p}$.
\[
	\mathcal{N}_{\closure{0}}(\vec{p})\defeq
	\set{X\in\P:\closure{X}\ni\vec{p}}.
\]

These two sets will be ubiquitous throughout the paper. We will restate the \nameref{:secluded-hypercube-partition-problem} using this notation as follows: 

\begin{question}[Secluded Hypercube Partition Problem]\label{ques:motivating}
  Let $d\in\N$. Given $k\in\N$ (degree) and $\epsilon\in(0,\infty)$ (tolerance) does there exists a partition $\P$ of $\R^{d}$ consisting of unit hypercubes such that for every point $\vec{p}\in\R^{d}$, $\abs{\mathcal{N}_{\epsilon}(\vec{p})} \leq k$ ?
\end{question}
This work is a  comprehensive study of \Autoref{ques:motivating}.
It is easy to verify that  for $k=2^d$ and any $\epsilon\leq\frac{1}{2}$ such a partition exists. In particular, the standard grid partition demonstrated in \Autoref{fig:simple-partition}
has the desired property.

Can we do better? For $d=2$ by shifting layers of the standard grid partition, as demonstrated in \Autoref{fig:reclusive-partition}, we can reduce the degree to $3$ and have tolerance of $\frac14$. While it is not obvious, for $d = 3$, this shifting method yields a partition with degree $4$ and tolerance $\frac16$ (again, see \Autoref{fig:reclusive-partition}
.

\begin{figure}[H]
    \centering
      \includegraphics[width=\textwidth]{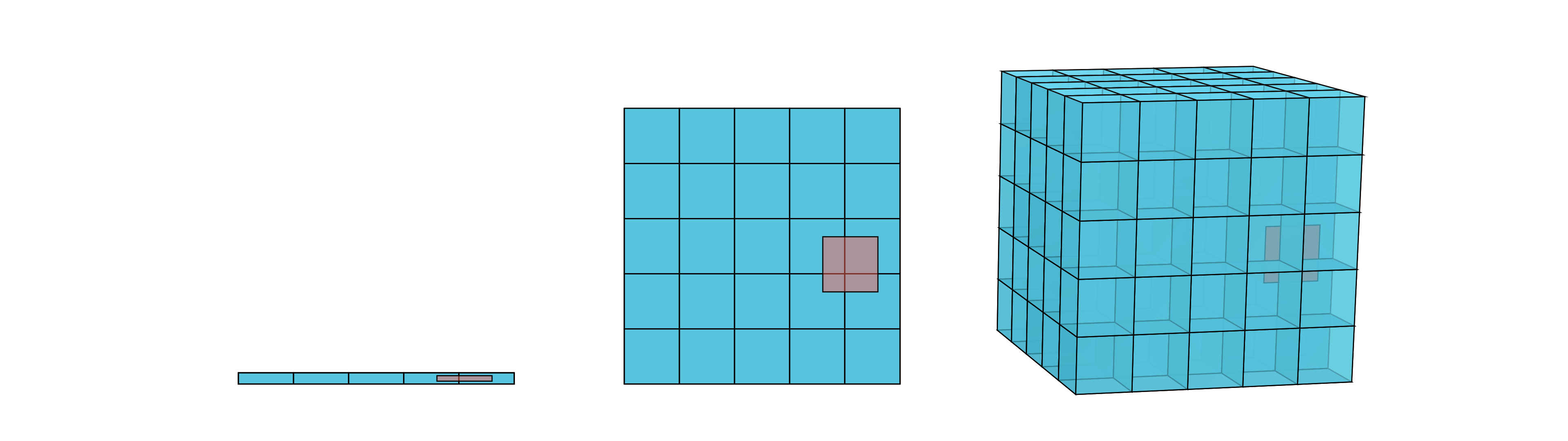}
    \caption{Simple unit hypercube partitions of $\R^{d}$ for $d=1,2,3$. The shaded red regions are $\epsilon=\frac{1}{2}$ radius balls (in the $d_{max}$ metric/$\norm{\cdot}_{\infty}$ norm) showing that these neighborhoods don't intersect more than $k=2^d$ members of the partition (it might look from the picture that centering the ball in one of the hypercubes would intersect more, but it does not because the hypercubes are half open). Furthermore, there are hypercubes that do intersect $2^d$ members as shown. (This figure was produced using Manim \cite{The_Manim_Community_Developers_Manim_Mathematical_2021}.)}
    \label{fig:simple-partition}
\end{figure}

\begin{figure}[H]
    \centering
      \includegraphics[width=\textwidth]{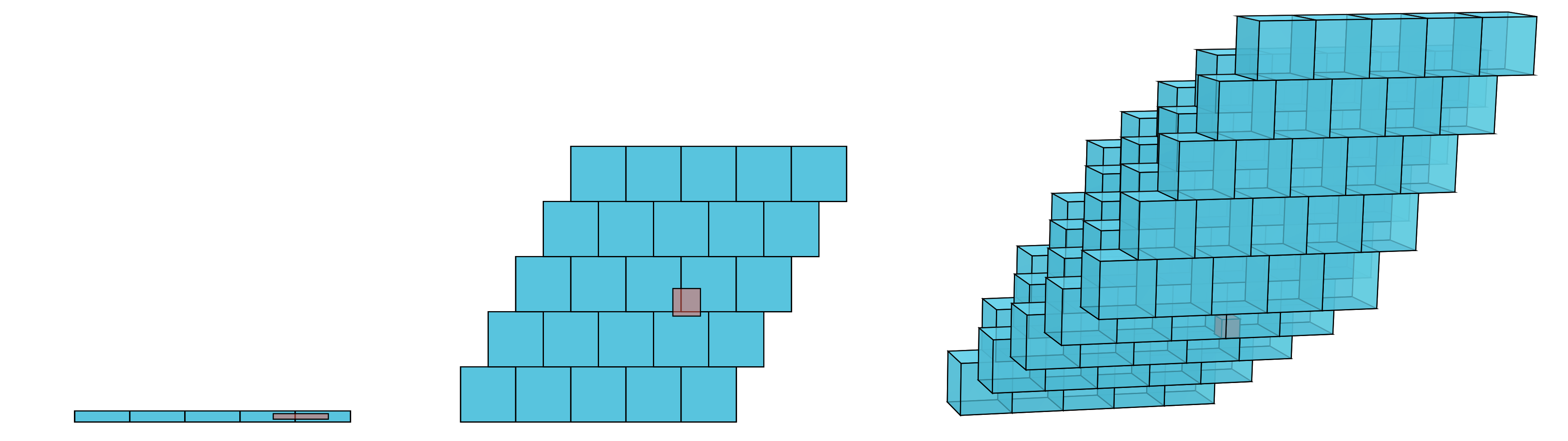}
    \caption{This is a particular reclusive partition of $\R^{d}$ for $d=1,2,3$. The shaded red regions are $\epsilon=\frac{1}{2d}$ radius balls (in the $d_{max}$ metric/$\norm{\cdot}_{\infty}$ norm) showing that these neighborhoods don't intersect more than $k=d+1$ members of the partition. Observe that in the $d=2$ case, the $x$-direction shift (horizontal) is $\frac{1}{2}$ unit; in the $d=3$ case, there are shifts of $\frac{2}{3}$ as well as $\frac{1}{3}$ in various directions. (This figure was produced using Manim~\cite{The_Manim_Community_Developers_Manim_Mathematical_2021}.)}
    \label{fig:reclusive-partition}
\end{figure}


Our first result shows that this intuition generalizes to all dimensions. We construct hypercube partitions with degree $d+1$ and tolerance $\frac{1}{2d}$. We note that, while the intuition about the degree bound generalizes from lower dimensions, we do not believe it was obvious that $\epsilon=\frac{1}{2d}$ was
achievable. We show at the end of \Autoref{sec:reclusive-lattice-partitions} that a slight change to our partitions, while seemingly innocuous, results in significantly worse degree. 

\begin{restatable*}[Hypercube Partition Theorem]{theorem}{RestatableHypercubePartitionThm}\label{:hypercube-partition-thm}
  Let $d\in\N$. Then there exists a $(d+1,\frac{1}{2d})$-secluded unit hypercube partition $\P$ of $\R^{d}$. I.e., for any point $\vec{p}\in\R^{d}$, we have
  \[
     \abs{\mathcal{N}_{\frac{1}{2d}}(\vec{p})}\leq d+1.
  \]
\end{restatable*}

\begin{proof}[Proof Outline]
  The proof of this theorem is constructive. We specify a particular basis for the vector space $\R^d$  and  consider all linear combinations of these vectors in which all coefficients are integers (sometimes called integer linear combinations). These integer linear combinations are such that a unit hypercube can be placed at each position and it will form a partition of $\R^{d}$. We view the basis as the columns of a matrix in an appropriate order.  We define a class of matrices which we call {\em reclusive matrices}---these are upper triangular matrices with $1$'s on the diagonal, and in each row the values that appear after the diagonal entry (inclusive) are decreasing. Given such a reclusive matrix $A$, we associate a lattice group $L_A$ which is the set of vectors of the form $A\vec{v}$ where $\vec{v} \in \Z^d$. Now the members of the partition are the hypercube shifted by elements of the lattice group $L_A$. We prove that every reclusive matrix induces a $(d+1, \epsilon)$-secluded unit hypercube partition where the value of $\epsilon$ depends on a property of the matrix called the {\em reclusive distance} $\Delta$.
 
 For two hypercubes $X,Y$ in this partition, we consider the vectors $\vec{n}$ and $\vec{m}$ (viewed as integer sequences) which define the positions of $X$ and $Y$ relative to the chosen basis. We show that $X$ and $Y$ are adjacent if and only if the vector/sequence $\vec{n}-\vec{m}$ is an alternating sequence of $-1$'s and $1$'s padded by $0$'s (what we call {\em weak-alt-1 sequences}). We use this characterization to prove that for $\Delta$, two members of the partition are either adjacent or their positions are at least $\Delta$ far apart. This means that for any ball of radius $\frac{\Delta}{2}$, all partition members that it intersects must be pairwise adjacent. I.e., they constitute a clique in the graph theory sense when the partition is viewed as an infinite graph with members being vertices and edges representing adjacency\footnote{Members are adjacent if their closures intersect.} of members. Because of this, it suffices to show that there are no cliques of size greater than $d+1$ in this partition graph. We do this by giving an explicit graph coloring of the partition using $d+1$ colors. This will show that the partition defined is a $(d+1, \Delta/2)$-secluded partition. Finally, it is easy to construct (we give examples) reclusive matrices with $\Delta = 1/d$.

\end{proof}



Recall from the introduction that a deterministic rounding scheme yields a partition of $\R^d$. Thus using the rounding scheme of Hoza and Klivans~\cite{hoza_preserving_2018}, we can obtain a $(d+1, \frac{1}{6(d+1)})$-secluded partition of $\R^d$ whose diameter is bounded above by 1. However, the members of this partition are not unit hypercubes (See \Autoref{fig:preserving_randomness_hoza-klivans_partition} in \Autoref{sec:hoza-klivans} for $d = 2$).  For the sake of completeness, we provide a proof of the theorem in \Autoref{sec:hoza-klivans}.

\begin{restatable}[Hoza-Klivans Partition]{theorem}{RestatableHozaKlivansPartitionThm}\label{:hoza-klivans-partition}
  Let $d\in\N$. Then there exists a $(d+1,\frac{1}{6(d+1)})$-secluded  partition $\P$ of $\R^{d}$ with diameter at most $1$.
\end{restatable}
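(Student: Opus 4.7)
My plan is to exhibit the partition as the one induced by the deterministic rounding function $f:\R^d\to\R^d$ of Hoza and Klivans~\cite{hoza_preserving_2018}, and to verify both properties (diameter bound and $(d{+}1)$-secludedness) directly from the structure of $f$. I would define $f$ coordinate-by-coordinate in a sequential way: given $\vec{x}\in\R^d$, I set $y_1$ to be the rounding of $x_1$ on a fixed step-$1$ grid, and for $i\geq 2$ I define $y_i$ to be the rounding of $x_i$ on a step-$1$ grid whose offset is a prescribed function of the previously computed values $y_1,\ldots,y_{i-1}$. The partition of interest is then $\P_f=\set{f^{-1}(\vec{y}):\vec{y}\in\range(f)}$, and the theorem reduces to proving that this $f$ has the corresponding rounding properties.

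The diameter bound is immediate from this description: inside any single cell of $\P_f$, the tuple $(y_1,\ldots,y_d)$ is constant, so the grid used in round $i$ is fixed, and $x_i$ is forced to lie in a single grid interval of length at most $1$. Hence every cell has $d_{max}$-diameter at most $1$.

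For the secludedness property with $\epsilon=\frac{1}{6(d+1)}$, fix $\vec{p}\in\R^d$ and define
\[
S_i=\set{(y_1,\ldots,y_i):\exists\,\vec{x}\in\closure{B}_\epsilon(\vec{p})\text{ with }(f(\vec{x}))_j=y_j\text{ for all }j\leq i}.
\]
I plan to show by induction on $i$ that $\abs{S_i}\leq i+1$, which will give $\abs{\mathcal{N}_\epsilon(\vec{p})}=\abs{S_d}\leq d+1$. The base case $\abs{S_0}=1$ is trivial. For the inductive step, the offset schedule is chosen so that grid shifts for coordinate $i$ arising from different prefixes in $S_{i-1}$ are pairwise separated modulo $1$ by more than $2\epsilon$. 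Since $x_i$ ranges over an interval of length exactly $2\epsilon$ as $\vec{x}$ varies over $\closure{B}_\epsilon(\vec{p})$, at most one prefix in $S_{i-1}$ can have its $i$-th coordinate grid boundary inside this window. Each other prefix extends uniquely to a single $y_i$, while the at most one distinguished prefix extends to at most two values, yielding $\abs{S_i}\leq\abs{S_{i-1}}+1$.

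The main obstacle is verifying the separation property of the offset schedule: we need that across all the $\leq d+1$ prefixes that can arise throughout the induction, the induced grid shifts at each coordinate $i$ are pairwise separated by more than $2\epsilon$ modulo $1$. The value $\epsilon=\frac{1}{6(d+1)}$ is calibrated precisely for this: $d+1$ equally spaced points on the unit circle have minimum pairwise separation $\frac{1}{d+1}=6\epsilon$, which leaves room for the width-$2\epsilon$ rounding windows with a factor-of-three slack to absorb the accumulated offset drift. Checking this separation against the explicit offset recipe of~\cite{hoza_preserving_2018}---and confirming it propagates through all $d$ rounds without collisions---is the key bookkeeping step; the rest of the argument is routine.
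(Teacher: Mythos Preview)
Your approach diverges substantially from the paper's, and the step you flag as ``key bookkeeping'' is in fact the entire content of your construction, not a routine detail.

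The paper does not present the Hoza--Klivans scheme as sequential coordinate-by-coordinate rounding with adaptive offsets. Instead: fix $\epsilon>0$, let $\mathcal{G}$ be the grid of half-open cubes of side $2\epsilon(d+1)$, and let $\Lambda=\{(2\epsilon\Delta)\cdot\vec{1}:\Delta\in[d+1]\}$ be $d+1$ \emph{diagonal} shifts. The one-line observation is that for each coordinate $i$ there is exactly one $\Delta\in[d+1]$ for which the length-$2\epsilon$ interval $x_i+2\epsilon\Delta+[-\epsilon,\epsilon]$ straddles a boundary of the $1$-dimensional grid; by pigeonhole some $\Delta$ avoids this in all $d$ coordinates simultaneously. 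Setting $f(\vec{x})$ to be the center of the $\mathcal{G}$-cube containing $\vec{x}+s(\vec{x})$ (where $s(\vec{x})\in\Lambda$ is any good shift), the paper proves a general lemma: any such ``shift-then-snap'' scheme induces a $(|\Lambda|,\epsilon)$-secluded partition with diameter at most $2\ell+D$, where $\ell$ is the longest shift and $D$ the cube diameter. Here $|\Lambda|=d+1$ and $\ell=D=2\epsilon(d+1)$, giving diameter $6\epsilon(d+1)$; rescaling yields the stated $\epsilon=\frac{1}{6(d+1)}$.

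Your sequential scheme is a genuinely different construction, and it is incomplete as stated. You need an explicit offset function $\phi_i(y_1,\ldots,y_{i-1})$ such that for \emph{every} $\vec{p}$ and every stage $i$, the up-to-$i$ prefixes that actually arise in $S_{i-1}$ receive offsets pairwise separated by more than $2\epsilon$ modulo $1$. But those prefixes depend on $\vec{p}$ and on the full branching history of the earlier stages, so you cannot simply place ``$d+1$ equally spaced points on the circle'' as your sketch suggests; the function $\phi_i$ must be defined on all of $\Z^{i-1}$ and must separate every admissible prefix set, not a fixed one. The Hoza--Klivans construction as recounted in this paper contains no such sequential offset recipe to check against---the scheme is global, not hierarchical. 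Without specifying $\phi_i$ and proving the separation property, your inductive step does not go through. The paper's diagonal-shift approach sidesteps this difficulty entirely: the same $\Delta$ is used for all coordinates, and the pigeonhole argument replaces your induction.
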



\subsection{Lower Bound on Degree ($k$)} 
Our next set of results investigate  the optimality of the degree parameter ($k$)  of partitions. We first prove that the value of $d+1$ in \Autoref{:hypercube-partition-thm} is indeed optimal for hypercube partitions.

\begin{restatable*}[Optimality for Unit Hypercube Partitions]{theorem}{RestatableOptimalityForUnitHypercubePartitionsThm}\label{:unit-hypercube-optimality}
    Let $d\in\N$ and $\P$ a unit hypercube partition of $\R^{d}$. Then there exists $\vec{p}\in\R^{d}$ such that
  \[
    \abs{\mathcal{N}_{\closure{0}}(\vec{p})}\geq d+1
  \]
  Furthermore, $\P$ contains a $(d+1)$-clique.
\end{restatable*}

The above theorem states that there is a particular $\vec{p}\in\R^{d}$ so that for every choice of $\epsilon$, $B_{\epsilon}(\vec{p})$ intersects at least $d+1$ members of the partition, and thus the value $d+1$ from \Autoref{:hypercube-partition-thm} cannot be made any smaller. A $(d+1)$-clique in $\P$ means that there is a set of $d+1$ members of $\P$ that have pairwise intersecting closures (see \Autoref{:defn-adj}), and its existence follows trivially from the first claim of the theorem. This result is stronger than just claiming that such a partition $\P$ must have degree parameter at least $d+1$ (see the upcoming discussion of the three Optimality Theorems (or \Autoref{subsec:optimality-theorems} and \Autoref{subsec:gaps}) for details.)


As discussed in the prior section, in generalizing from hypercubes to more general partitions, we have to impose some condition to avoid trivialities, and we previously mentioned partitions with bounded measure or bounded diameter. We also consider one other condition which imposes a type of local finiteness.
Thus, we prove three different degree optimality theorems. Each has a slightly different requirement on the types of partitions under consideration (each bounding the size of the partition members in some way) and each has a slightly different sense in which the partitions are considered optimal.

In the statements below, $m$ denotes the Lebesgue measure (intuitively the volume of a set). We present all three theorems before discussing the proof outlines. The First Optimality Theorem states that for partitions of $\R^{d}$ in which all members are bounded in measure, then there is no secluded partition with degree less than $d+1$, because no matter the value of $\epsilon$, some point $\vec{p}$ can be found such that $\abs{\mathcal{N}_{\epsilon}(\vec{p})}\geq d+1$.

\begin{restatable*}[First Optimality Theorem]{theorem}{RestatableFirstOptimalityThm}\label{:first-optimality-thm}
   If $d\in\N$, and $\P$ is a partition of $\R^{d}$, and there exists $M\in(0,\infty)$ such that for all $X\in\P$, $X$ is Lebesgue measurable, and $m(X)<M$, then for any $\epsilon\in(0,\infty)$, there exists $\vec{p}\in\R^{d}$ such that
    \[
    \abs{\mathcal{N}_{\epsilon}(\vec{p})}\geq d+1.
    \]
\end{restatable*}

The Second Optimality Theorem states that under the stronger hypothesis that the members of the partition are bounded in diameter, then not only there is no secluded partition with degree less than $d+1$, but it is even false if we allow $\epsilon$ to be different for each point in the space. The $\epsilon$ function in the statement below should be viewed as some fixed $\epsilon$ for each point of $\R^{d}$.

\begin{restatable*}[Second Optimality Theorem]{theorem}{RestatableSecondOptimalityThm}\label{:second-optimality-thm}
    If $d\in\N$, and $\P$ is a partition of $\R^{d}$, and there exists $D\in(0,\infty)$ such that for all $X\in\P$, it holds that $\diam(X)<D$, then for any $\epsilon:\R^{d}\to(0,\infty)$, there exists $\vec{p}\in\R^{d}$ such that
    \[
    \abs{\mathcal{N}_{\epsilon(\vec{p})}(\vec{p})}\geq d+1.
    \]
\end{restatable*}

 The conclusion of the Second Optimality Theorem is  stronger than the conclusion of the First Optimality Theorem, but the hypothesis is also stronger (assuming all members are measurable); could it be, though, that the hypothesis of the First Optimality Theorem implies the conclusions of the Second Optimality Theorem and we just didn't find a proof of this?  We show in \Autoref{sec:clique-optimality} that this is not the case. There really is a ``gap'' between these two theorems.
 
 The Third Optimality Theorem uses a still stronger hypothesis and requires finiteness somewhere in the space. With this strengthened hypothesis, the conclusion can again be strengthened to say that we don't even really care about the $\epsilon$ neighborhoods at all because there is some point at the closure of $d+1$ members. The theorem hypothesis uses the notion of a {\em strict pairwise bound} $D$ (\Autoref{:strict-pairwise-bound}) which just means that for each member $X\in\P$, all points in $X$ are distance {\em strictly less} than $D$ apart\footnote{This is stronger than saying that the diameter of each member is $\leq D$, but weaker than saying that the diameter of each member is $<D$. Members can have diameter $D$, but can't have points that attain that diameter.} (with respect to $d_{max}$).

\begin{restatable*}[Third Optimality Theorem]{theorem}{RestatableThirdOptimalityThm}\label{:third-optimality-thm}
    If $d\in\N$, and $\P$ is a partition of $\R^{d}$, and there exists $D\in(0,\infty)$ such that for all $X\in\P$, $D$ is a strict pairwise bound for $X$ (it is sufficient but not necessary that $\diam(X)<D$), and if there exists some $\vec{\alpha}\in\R^{d}$ such that $\vec{\alpha}+[0,D]^{d}$ intersects finitely many members of $\P$, then there exists $\vec{p}\in\R^{d}$ such that
    \[
    \abs{\mathcal{N}_{\closure{0}}(\vec{p})}\geq d+1.
    \]
    Furthermore, $\P$ contains a $(d+1)$-clique.
\end{restatable*}

Just as with the first two Optimality Theorems, we will show in \Autoref{sec:clique-optimality} that the hypothesis of the Second Optimality Theorem does not imply the conclusion of the Third Optimality Theorem, so again there is a ``gap''. Now we turn to the proof outlines for the four theorems above.

\hypertarget{unit-hypercube-optimality-proof-outline}{}
\begin{proof}[Proof Outline for \nameref*{:unit-hypercube-optimality} (\Autoref*{:unit-hypercube-optimality})]
    As a corollary to the Third Optimality Theorem, we get the same conclusion for partitions whose members have a uniform upper bound on diameter and a uniform lower bound on the measure (see \Autoref{:diam-measure-cor}), and since unit hypercube partitions have this property, the conclusion follows.
\end{proof}
\begin{proof}[Alternate Proof Sketch]
    It is also possible to directly prove this result without using the heavy machinery of the three Optimality Theorems by showing that if $\vec{p}$ is one of the corners of some hypercube $X$, then it is at the closure of at least $d+1$ members. The argument goes by induction on $d$ and considers the $2^d$-many orthants locally around $\vec{p}$ showing that because $X$ takes up exactly one orthant, the only ways to fill up the other $2^d-1$ orthants requires at least $d$ other hypercubes.
\end{proof}

We give the outlines of the three Optimality Theorems in order of increasing complexity (which happens to be reverse order).

\hypertarget{third-optimality-thm-proof-outline}{}
\begin{proof}[Proof Outline for the Third Optimality Theorem]
  To prove this, let $H$ denote the closed hypercube $\vec{\alpha}+[0,D]^{d}$ which intersects only finitely many elements of $\P$. We use the partition $\P$ to induce a (finite) partition $\S$ on $H$:
  \[
  \S=\set{X\cap H: X\in\P,\;X\cap H\not=\emptyset}.
  \]
  $D$ remains a strict pairwise bound for all $X$ in the induced partition $\S$. This means that no member of $\S$ intersects opposing facets of the hypercube $H$, and thus no two corners of $H$ belong to the same member of $\S$. We show that this allows us to color all points of $H$ using $2^d$ colors (one associated with each corner) in such a way to satisfy the coloring properties of a generalized Sperner's lemma/KMM lemma, and such that all points within any member of $\S$ are assigned the same color. This will let us conclude that there is some point $\vec{p}$ of the space belonging to the closure of $d+1$ colors, and because there are only finitely many members of $\S$, we use the fact that the closure of a finite union of members is equal to the finite union of the closures, and so $\vec{p}$ belongs not just to the closure of $d+1$ colors, but to the closure of $d+1$ members of $\S$. The existence of a $(d+1)$-clique is a trivial consequence because all of these $d+1$ members have closures that contain $\vec{p}$ and so each pair of these members is adjacent (see \Autoref{:defn-adj}).
\end{proof}

The proof of the Second Optimality Theorem is quite similar to the previous proof. 

\hypertarget{second-optimality-thm-proof-outline}{}
\begin{proof}[Proof Outline for the Second Optimality Theorem]
  Let $H=[0,D]^{d}$ and let $\S$ denote the partition of $H$ induced by $\P$, observing that as before, the fact that each $X\in\P$ has $\diam(X)<D$ implies no member of $\P$ (and thus $\S$) intersects opposing facets of $H$. Using the same Sperner/KMM technique as before we can find some point $\vec{p}$ in the closure of $d+1$ colors. However, because the closure of an infinite collection of members may be larger than the infinite union of the closures of those members, it is possible that $\vec{p}$ does not belong to the closure of $d+1$ members of $\S$. However, no matter what the value of $\epsilon(\vec{p})$ is, this neighborhood of $\vec{p}$ contains an open set around $\vec{p}$, and that open set must intersect some member of $\S$ corresponding to each of the $d+1$ colors. This gives the result.
\end{proof}

The proof of the First Optimality Theorem is more involved.

\hypertarget{first-optimality-thm-proof-outline}{}
\begin{proof}[Proof Outline for the First Optimality Theorem]
  Having only a bound on the measure of the members of $\P$, it is possible that they extend arbitrarily far (even infinitely far) within $\R^{d}$, so the goal will be to approximate $\P$ by a nicer partition that has bounded diameter elements and then utilize the same Sperner/KMM techniques as used for the Second and Third Optimality Theorems. In fact, the approximating partition $\mathcal{A}$ that we construct will satisfy the hypothesis of the Third Optimality Theorem, so there is a point $\vec{p}$ belonging to the closure of $d+1$ members of $\mathcal{A}$. We then argue that the partition $\mathcal{A}$ is similar enough to $\P$ that the set $B_{\epsilon}(\vec{p})$ intersects at least $d+1$ members of $\P$.
  
  In more detail, the steps are as follows:
  \begin{enumerate}
      \item Partition $\R^{d}$ into a grid of $\epsilon$-diameter hypercubes
      \item Label each of these hypercubes with a member of $\P$ that it intersects a lot (i.e. the measure of the intersection is sufficiently large)
      \item Define a graph (equivalently a symmetric binary relation) on the hypercubes so that they are adjacent in the graph if they are close together and also share the same label
      \item Consider the transitive closure of the relation (i.e. connected components of the graph); show that each connected component has cardinality less than a fixed size, and so has bounded diameter
      \item Consider the partition $\mathcal{A}$ where members correspond to equivalence classes and apply the Third Optimality Theorem to get a point $\vec{p}$ at the closure of $d+1$ members of $\mathcal{A}$
      \item Show that each such element of $\mathcal{A}$ is a superset of a hypercube in the grid and that different elements of $\mathcal{A}$ which have $\vec{p}$ in their closure contain hypercubes with distinct labels
      \item Show that each of the $d+1$ distinctly labeled hypercubes above are contained in $B_{\epsilon}(\vec{p})$
  \end{enumerate}
\end{proof}

\subsection{Upper Bound on Tolerance ($\epsilon$)}

With the optimality of the parameter $k$ firmly established, we wish to consider whether our value of $\epsilon=\frac{1}{2d}$ in \Autoref{:hypercube-partition-thm} is maximal when we have $k$ minimized at $k=d+1$. Ideally, we want to consider optimality of the $\epsilon$ parameter not just for unit hypercube partitions but for a broader class of partitions. A natural class is partitions with the property that each member has a strict pairwise bound\footnote{While it might seem that considering partitions with members of diameter at most $D=1$ would be more natural than considering a strict pairwise bound of $1$, working through the proofs of the results below one can begin to see that actually, the strict pairwise bound is the ``right'' thing to do as that is the condition that works best in the proofs. In fact, trying to use the diameter leads to slightly weaker results.} (see \Autoref{:strict-pairwise-bound}) of $D=1$; all unit hypercube partitions belong to this class\footnote{This is at least true as we define unit hypercube partitions using half-open hypercubes (see \Autoref{:defn-unit-hypercube-partition}).}, and $D=1$ is the smallest value of $D$ we can use so that they do.
For the sake of comparisons between some results in the paper we didn't want to fix $D$ to be $1$ in the statement of the results below, but taking any of the results below as a standalone result, the value of $D$ is nothing more than a trivial scaling factor, so the reader can consider it to be fixed at $D=1$. The next result gives a trivial bound on the tolerance parameter, but gives a tight bound for $d=1$.

\begin{restatable*}[Trivial Tolerance ($\epsilon$) Bound]{proposition}{RestatableTrivialToleranceBound}\label{:optimal-diam-d1}
  If $d\in\N$, and $\P$ is a partition of $\R^{d}$, and there exists $D\in(0,\infty)$ such that for all $X\in\P$, $\diam(X)\leq D$, and if $\epsilon\in(0,\infty)$ such that for all $\vec{p}\in\R^{d}$,
  \[
    \abs{\mathcal{N}_{\epsilon}(\vec{p})}\leq 2^{d}
  \]
  then $\epsilon\leq\frac{D}2$.
\end{restatable*}
\begin{proof}[Proof Sketch]
    If $\epsilon>\frac{D}2$, we fix an arbitrary member $X$ of $\P$, and shift the hypercube $[-\epsilon, \epsilon]^d$ so that $X$ is completely contained in the interior. Then we observe that the $2^d$-many corners of the shifted $[-\epsilon, \epsilon]^d$ are distance greater than $D$ apart (in $d_{max}$), so they each belong to a distinct member of $\P$, and none of them belong to $X$. Taking $\vec{p}$ to be the center of the shifted $[-\epsilon, \epsilon]^d$, we have for contradiction that     $\abs{\mathcal{N}_{\epsilon}(\vec{p})}\geq 2^{d}+1 > 2^{d}+1$.
\end{proof}

By considering the specific value $d=1$, we get the following corollary which states that $\epsilon=\frac{D}{2d}=\frac{1}{2d}$ is indeed the optimal value for the class of partitions with strict pairwise bound of $D=1$ for dimension $d=1$.

\begin{restatable*}[Optimal Tolerance ($\epsilon$) in $\R^1$]{corollary}{RestatableOptimalToleranceROne}\label{:optimal-tolerance-R1}
  If $d=1$ and $\P$ is a partition of $\R^{d}=\R^{1}$, and there exists $D\in(0,\infty)$ such that for all $X\in\P$, $D$ is a strict pairwise bound for $X$, and if $\epsilon\in(0,\infty)$ is such that for all $\vec{p}\in\R^{d}$,
  \[
    \abs{\mathcal{N}_{\epsilon}(\vec{p})}\leq d+1=2
  \]
  then $\epsilon\leq\frac{D}2=\frac{D}{2d}$.
\end{restatable*}
\begin{proof}
Since $D$ is a strict pairwise bound for each member of $\P$, then every member of $\P$ has diameter at most $D$. Apply \Autoref{:optimal-diam-d1} with $d=1$.
\end{proof}

We can also show that the value of $\epsilon=\frac{D}{2d}$ is optimal for $d=2$, but \Autoref{:optimal-diam-d1} evaluated with $d=2$ does not give the correct value, so it requires a separate proof.

\begin{restatable*}[Optimal Tolerance ($\epsilon$) in $\R^2$]{proposition}{RestatableOptimalToleranceInRTwo}\label{:optimal-diam-d2}
  If $d=2$ and $\P$ is a partition of $\R^{d}=\R^{2}$, and there exists $D\in(0,\infty)$ such that for all $X\in\P$, $D$ is a strict pairwise bound for $X$, and if $\epsilon\in(0,\infty)$ is such that for all $\vec{p}\in\R^{d}$,
  \[
    \abs{\mathcal{N}_{\epsilon}(\vec{p})}\leq d+1=3
  \]
  then $\epsilon\leq\frac{D}4=\frac{D}{2d}$.
\end{restatable*}

For other dimensions, we do not know if $\epsilon=\frac{D}{2d}$ is optimal (though we conjecture that it is up to a constant factor; see \Autoref{:linear-conjecture}). The next result at least gives an upper bound that for any $d\in\N$, $\epsilon\leq\frac{D}{2\sqrt{d}}$\footnote{There are two useful ways to interpret this upper bound. First, the value of $\epsilon=\frac{1}{2d}$ that we achieve is within a factor of $\frac{1}{\sqrt{d}}$ of the optimal value $\epsilon'\leq\frac{1}{2\sqrt{d}}$ which gives a bound on the quality in terms of the dimension. Second, the value of $\epsilon=\frac{1}{2d}$ that we achieve is correct to within a square: $\epsilon=\frac{1}{2d}> \left(\frac{1}{2\sqrt{d}}\right)^2 \geq (\epsilon')^2$ which gives a dimension invariant way to view it.}.

\begin{restatable*}[Universal Tolerance ($\epsilon$) Bound]{theorem}{RestatableUniversalToleranceBound}\label{:sperner-upper-bound}
  If $d\in\N$, and $\P$ is a partition of $\R^{d}$, and there exists $D\in(0,\infty)$ such that for all $X\in\P$, it holds that $D$ is a strict pairwise bound for $X$, and if there exists $\epsilon\in(0,\infty)$ such that for all $\vec{p}\in\R^{d}$,
  \[
    \abs{\mathcal{N}_{\epsilon}(\vec{p})}\leq d+1,
  \]
  then $\epsilon\leq\frac{D}{2(\sqrt[d]{\sperner(d)}-1)}$ (if $d>1$). In particular, $\epsilon\leq\frac{D}{2\sqrt{d}}$ (for any $d$). 
\end{restatable*}

The $\sperner(d)$ quantity in the theorem statement is something that will be discussed in depth prior to proving the theorem. It will be sufficient for now to note that $\sperner(d)$ is bounded below by the minimum number of simplices needed in any dissection of the $d$-cube, and the ``in particular'' claim in the theorem uses a lower bound of $(d+1)^{\frac{d-1}2}$ for the dissection number. Any improvements that are made to the dissection number lower bound will translate directly to improvements of our $\epsilon$ upper bound.

\begin{proof}[Proof Methods of \Autoref*{:sperner-upper-bound}]
We apply a stronger version of Sperner's lemma (or, equivalently, a stronger version of the KKM lemma) to find multiple points within a fixed volume that are near $d+1$ different members of the partition. It is then argued that the $\epsilon$-balls around these points must be disjoint, and the value of $\epsilon$ is bounded from above using a volume/measure argument.
\end{proof}

\section{Related Work in Mathematics}
\label{sec:related}


We showed in \Autoref{sec:introduction} how our work is related to rounding schemes used in computational complexity theory, and in this section we connect our work to the rich history of and continued interest in partitions of $\R^{d}$, and in particular, partitions by unit hypercubes. Note that in essence, a partition by unit hypercubes is the same as a tiling or a packing of unit hypercubes. 
The only distinction is that in tilings and packings, the boundaries are ignored and in a partition they are not. Further, it is common in the literature to refer to $[0,1)^{d}$ and $[0,1]^{d}$ as a ``cube'', ``$d$-cube'', or ``hypercube''. Below, we discuss a number of questions that have been investigated in the literature regarding tilings and unit hypercubes. Most of these results deal with translated unit cubes (i.e. no rotation or the higher dimensional analogs of rotation).

The purpose of these examples is fourfold: (1) to demonstrate broad interest in hypercubes and hypercube partitions, (2) to show that even though questions about hypercubes may seem very simple, there remains active research in this area, (3) to preview a few questions related to some of our results, and (4) to demonstrate that there are many results which hold in $\R^{1}$, $\R^{2}$, and $\R^{3}$ but which may not hold in higher dimensions.

\vspace{-4mm}
\paragraph{Minkowski lattice cube-tiling conjecture (1907)}
Minkowski's conjecture \cite{minkowski_diophantische_1907} states that in any lattice tiling of $\R^{d}$ by translated unit cubes there exists a pair of cubes whose intersection is an entire $(d-1)$-dimensional face (e.g. in $\R^{2}$ there would be a pair of squares with a common edge, and in $\R^{3}$ there would be a pair of cubes sharing a common square side). A lattice tiling is one in which the centers of all of the cubes form a critical lattice (see Related Problems in \cite{noauthor_kellers_2021} and \cite{noauthor_lattice_2021} for definitions). It was proven true in 1941 by Haj\'os \cite{hajos_uber_1942}.

\paragraph{Keller's conjecture (1930)}
Keller's conjecture \cite{keller_uber_1930} is a generalization of Minkowski's conjecture, which relaxes the assumption that the cubes form a lattice. Thus, it states that in any tiling of $\R^{d}$ by translated unit cubes, there exists a pair of cubes whose intersection is an entire $(d-1)$-dimensional face. The complete resolution of this conjecture took substantial effort only being completely resolved in 2020. In 1940 Perron \cite{perron_uber_1940, perron_uber_1940-1} showed it was true for $d\leq6$. Szab\'{o} \cite{szabo_reduction_1986} recast the question in terms of periodic tilings in 1986 and then introduced the so-called Keller graphs along with Corr\'{a}di in 1990 \cite{corradi_combinatorial_1990}. In 1992, Lagarias and Shor \cite{lagarias_kellers_1992} used the Keller graphs to show that the conjecture is false for all $d\geq10$. This bound was refined by Mackey in 2002 \cite{mackey_cube_2002} showing that the conjecture is false for $d\geq8$. Progress on the only remaining case of $d=7$ was made by Debroni, Eblen, Langston, Myrvold, Shor, and Weerapurage in 2011 \cite{debroni_complete_2011}, and by Kisielewicz and {\L}ysakowska in 2014 \cite{kisielewicz_kellers_2014}, and by Kisielewicz in 2017 \cite{kisielewicz_towards_2017} and by {\L}ysakowska in 2018 \cite{noauthor_extended_nodate}. Finally, in 2020, Brakensiek, Heule, Mackey, and Narv\'{a}ez \cite{brakensiek_resolution_2020} determined that the conjecture was true for $d=7$ using automated satisfiability approaches which fully resolved the conjecture.

\vspace{-4mm}
\paragraph{Furtw\"angler's conjecture (1936)} 
Furtw\"angler's conjecture \cite{furtwangler_uber_1936} is another generalization of Minkowski's conjecture where instead of tilings, $k$-fold tilings are considered (a $k$-fold tiling is a collection of positions so that if a hypercube is placed at each position, then every point of $\R^{d}$ either belongs to the boundary of some hypercube, or belongs to exactly $k$ hypercubes). Furtw\"angler's conjecture states that in any $k$-fold lattice tiling of $\R^{d}$, there exists a pair of cubes whose intersection is an entire $(d-1)$-dimensional face, and he proved it for $d\leq3$. However, Haj\'os proved in 1942 \cite{hajos_uber_1942} that the conjecture was false for $d\geq4$. In 1979, Robinson \cite{robinson_multiple_1979} completely characterized the conjecture by proving for exactly which pairs $(k,d)$ the conjecture held and which it did not.

\vspace{-4mm}
\paragraph{Fuglede's set conjecture (1974) and functional analysis}
A set $\Omega\subset\R^{d}$ is called a spectral set if it has positive measure and if there is a basis of certain exponential functions for the space $L^{2}(\Omega)$ of square integrable functions (the set generating the basis is denoted $\Lambda$). Fuglede \cite{fuglede_commuting_1974} conjectured that a set was spectral if and only if it could be used to tile $\R^{d}$. Though this was proved false by Tao in 2004 \cite{tao_fugledes_2004}, earlier work by Lagarias, Reeds, and Wang in 2000 \cite{lagarias_orthonormal_2000} showed something similar for the special case of unit cubes. In particular, they showed that a set $\Lambda$ will generate a basis for $L^{2}([0,1]^{d})$ if and only if $\Lambda$ is the set of center positions of hypercubes in some partition of translated hypercubes. They used this result to show that extending an orthogonal set of functions to a basis is equivalent to extending a packing of hypercubes to a tiling. The ability to extend packings to tilings was also studied by Dutour, Itoh, and Poyarkov in 2018 \cite{dutour_cube_2018}, though in a different context.

\vspace{-4mm}
\paragraph{Countable partitions of $[0,1]$ by closed sets}
A problem solved by Sierpi\'{n}ski \cite{sierpinski_1918, countable_closed_partition_interval} was that there is no (non-trivial) partition of the unit interval into countably many closed sets. This extends trivially to show that there is no (non-trivial) partition the unit hypercube $[0,1]^{d}$ into countably many closed sets. This has some connection to our work. In particular, we state a conjecture which is equivalent to this problem when $d=1$, and thus provides a natural interesting generalization of this problem to higher dimensions.

\vspace{-4mm}
\paragraph{Coverings, dissections, and triangulations of the cube}
A covering of the cube is a set of simplices (using the vertices of the cube) so that the union of the simplices is the entire cube. A dissection is a covering with the additional requirement that the only overlap occurs at the boundary of the simplices. A triangulation is a dissection with the additional requirement that the intersection of any two simplices is either empty or a face of each. It has long been known that there is a triangulation of the $d$-cube $[0,1]^{d}$ using $d!$ simplices. In 1982, Sallee \cite{sallee_triangulation_1982, SALLEE1982211} gave lower bounds for how many simplices are needed in a triangulation, and many others have tried to bound the minimal number of simplices needed for a covering, a dissection, and a triangulation since then. In this paper, we will utilize the dissection number for one of our upper bounds and will utilize the dissection number lower bound of Glazyrin from 2012 \cite{glazyrin_lower_2012}.

\vspace{-4mm}
\paragraph{Unit cubes more broadly}
Many of the above results are discussed in the 2005 survey paper ``What Is Known About Unit Cubes'' by Zong \cite{zong_what_2005} and in his 2006 book \cite{zong_cube-window_2006}. In addition, many other properties of unit cubes are presented. Zong makes the case that despite the apparent simplicity of the $d$-cube,
there is much that remains unknown about it.
\section{Organization}
The remainder of the paper provides a complete description and proofs of the established results in detail.  In \Autoref{sec:notation}, we summarize notation that we will use. In \Autoref{sec:defn-basics} we present some basic definitions and results pertaining to partitions. In \Autoref{sec:reclusive-lattice-partitions} we construct the a class of partitions which we call reclusive partitions and prove the Hypercube Partition Theorem. In \Autoref{sec:clique-optimality} we prove three degree Optimality Theorems to establish that $k=d+1$ is the minimum value of $k$ for a broad range of reasonable partitions. In \Autoref{sec:epsilon} we first prove the optimality of the tolerance parameter for dimensions 1 and 2 (\Autoref{:optimal-diam-d1} and \Autoref{:optimal-diam-d2}).  In this section, we further show the $1/2\sqrt{d}$ upper bound the tolerance for higher dimensions. 
In \Autoref{sec:algorithm} we offer an application of secluded partitions to deterministic rounding and multipseudodeterminism. \Autoref{sec:future} contains concluding remarks and a discussion about future research directions.
\section{Notation}
\label{sec:notation}

We will use the following notation and conventions throughout this paper.
\begin{itemize}
\item We use $\N$ throughout to indicate the strictly positive integers.

\item For any $d\in\N$, we let $[d]$ denote the set of the first $d$ positive integers ($[d]=\set{1,2,\ldots,d}$).

\item If $d\in\N$ and $S\subseteq\R^{d}$ and $\vec{v}\in\R^{d}$, then we use the following notation to indicate shifting/translating the set $S$:
  \[
    \vec{v}+S = S+\vec{v}\defeq\set{\vec{s}+\vec{v}\colon \vec{s}\in S}.
  \]


\item For $d\in\N$ we define the metric $d_{max}$ on $\R^{d}$ which is the metric induced by the $l^{\infty}$ norm\footnote{Technically this is a different metric for each dimension $d$, but it is not necessary to include this in the notation as the dimension will be clear from context.}:
  \[
    d_{max}(\vec{x},\vec{y})\defeq\max_{i\in[d]}\abs{x_{i}-y_{i}}=\norm{\vec{x}-\vec{y}}_{\infty}.
  \]
  While we could use the $\norm{\cdot}_{\infty}$ norm notation throughout the paper, we choose to use the $d_{max}$ metric notation when possible.

\item For any $X\subseteq\R^{d}$ we denote the closure of $X$ with respect to the $d_{max}$ metric by $\clos{X}$. Since the topology induced by the $l^{\infty}$ norm on $\R^{d}$ is the same as the Euclidean topology (induced by the $l^{2}$ norm),
  and the Euclidean topology on $\R^{d}$ is the same as the product topology on $\R^{d}$, the closure is the same when taken with respect to any of these. In particular, if $X_{1},\ldots,X_{d}\subseteq\R$ and $X=\prod_{i=1}^{d}X_{i}$, then $\clos{X}=\prod_{i=1}^{d}\clos{X_{i}}$ (because the product of closures is the same as the closure of the product).

\item For any $\vec{x}\in\R^{d}$ and $\epsilon\in(0,\infty)$ we use $B^{\circ}_{\epsilon}(\vec{x})\defeq\set{\vec{y}\in\R^{d}:d_{max}(\vec{x},\vec{y})<\epsilon}$ to denote the open ball of radius $\epsilon$ around $\vec{x}$, and we use $\closure{B}_{\epsilon}(\vec{x})\defeq\set{\vec{y}\in\R^{d}:d_{max}(\vec{x},\vec{y})\leq\epsilon}$ to denote the closed ball of radius $\epsilon$ around $\vec{x}$, noting that $\closure{B}_{\epsilon}(\vec{x})=\closure{B^{\circ}_{\epsilon}(\vec{x})}$. The standard notation for the open ball is $B_{\epsilon}(\vec{x})$, but we will not use the open ball very often in this paper and prefer to use the circle superscript to emphasize when we do. (Any occurrence in the paper of a ball without the overline or the circle superscript is a typo.)

\item Though mentioned in the introduction, for ease of reference, we again define the $\epsilon$-neighborhoods. When it is understood that we are considering a particular partition $\P$, then for any point $\vec{p}\in\R^{d}$, and any $\epsilon\in(0,\infty)$, we let
\[
	\mathcal{N}_{\epsilon}(\vec{p})\defeq 
	\set{X\in\P:X\cap \closure{B}_{\epsilon}(\vec{p})\not=\emptyset}.
\]
\[
	\mathcal{N}_{\closure{0}}(\vec{p})\defeq
	\set{X\in\P:\closure{X}\ni\vec{p}}.
\]
An alternative expression of the former is as follows. This is justified in \Autoref{:alternate-neighborhood-defn} at the end of this section.
\[
    \mathcal{N}_{\epsilon}(\vec{p})=\set{\member(\vec{x}):\vec{x}\in\closure{B}_{\epsilon}(\vec{p})}.
\]


\item If the partition $\P$ of $\R^{d}$ is understood, we define a function
$\member:\R^{d}\to\P$ mapping each point $\vec{x}\in\R^{d}$ to the
unique member
containing it. Often this is denoted using the equivalence class notation of
$[\vec{x}]$, but in this paper we shall prefer to use the $\member$ function
notation for clarity since we will in general not be thinking of the partition
as being defined by an equivalence relation.

\item When we say ``countable'' we mean finite or countably infinite.

\end{itemize}

\begin{lemma}\label{:alternate-neighborhood-defn}
\[
	\mathcal{N}_{\epsilon}(\vec{p})=
	\set{\member(\vec{x}):\vec{x}\in\closure{B}_{\epsilon}(\vec{p})}.
\]
\end{lemma}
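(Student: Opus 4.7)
The plan is to prove this set equality by double inclusion, and the argument will reduce to unwrapping the definition of $\member$ and the definition of $\mathcal{N}_\epsilon(\vec{p})$. The only conceptual content is the uniqueness of the member of $\P$ containing a given point, which is built into the definition of a partition.

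For the forward inclusion, I would start with an arbitrary $X \in \mathcal{N}_{\epsilon}(\vec{p})$, which by definition means $X \cap \closure{B}_{\epsilon}(\vec{p}) \neq \emptyset$. Pick any witness $\vec{x} \in X \cap \closure{B}_{\epsilon}(\vec{p})$. Since $\vec{x} \in X \in \P$ and $\P$ is a partition, $X$ is the unique member of $\P$ containing $\vec{x}$, so $\member(\vec{x}) = X$. As $\vec{x} \in \closure{B}_{\epsilon}(\vec{p})$, this exhibits $X$ as an element of $\set{\member(\vec{x}):\vec{x}\in\closure{B}_{\epsilon}(\vec{p})}$.

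For the reverse inclusion, I would start with an arbitrary $X = \member(\vec{x})$ where $\vec{x} \in \closure{B}_{\epsilon}(\vec{p})$. By definition of $\member$, $\vec{x} \in X$, and combining with $\vec{x} \in \closure{B}_{\epsilon}(\vec{p})$ gives $\vec{x} \in X \cap \closure{B}_{\epsilon}(\vec{p})$, so this intersection is nonempty, i.e., $X \in \mathcal{N}_{\epsilon}(\vec{p})$.

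There is no real obstacle here: the lemma is essentially a reformulation, trading the ``exists a point of $X$ in the ball'' description for an ``image of $\member$ restricted to the ball'' description. The only mild subtlety worth flagging in the writeup is that the $\member$ function is well-defined precisely because $\P$ is a partition (every point of $\R^d$ lies in exactly one member), which is what licenses the equality $\member(\vec{x}) = X$ in the forward direction.
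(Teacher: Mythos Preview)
Your proof is correct and follows essentially the same double-inclusion argument as the paper's own proof, with the same witness-picking in each direction.
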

\begin{proof}
If $X\in\mathcal{N}_{\epsilon}(\vec{p})$, then $X\cap\closure{B}_{\epsilon}(\vec{p})\not=\emptyset$, so let $\vec{x}_{0}\in X\cap\closure{B}_{\epsilon}(\vec{p})$ which means $\member(\vec{x}_{0})=X$, so $X\in\set{\member(\vec{x}):\vec{x}\in\closure{B}_{\epsilon}(\vec{p})}$. Conversely, if $X\in\set{\member(\vec{x}):\vec{x}\in\closure{B}_{\epsilon}(\vec{p})}$ then there exists $\vec{x}_{0}\in\closure{B}_{\epsilon}(\vec{p})$ such that $X=\member(\vec{x}_{0})$, so in particular $\vec{x}_{0}\in X$ and thus $X\cap\closure{B}_{\epsilon}(\vec{p})\not=\emptyset$ so $X\in\mathcal{N}_{\epsilon}(\vec{p})$.
\end{proof}
\section{Definitions and Basic Partition Results}
\label{sec:defn-basics}

In this section, we will provide a variety of fairly basic claims about hypercube partitions that we will need later. We also introduce an example of a reclusive partition in order to present some geometric intuition before moving into the linear algebra perspective in the next section.


We first define a natural notion of adjacency for any partition of $\R^{d}$.
By definition, no two members of a partition contain the same element, but if we consider the closures of the members, then this need not hold.

\begin{definition}[Adjacent]\label{:defn-adj}
  Let $d\in\N$, and $\P$ be a partition of $\R^{d}$, and $X,Y\in\P$.
  We say that $X$ and $Y$ are adjacent if the intersection of their closures is non-empty.
  That is, we say $X$ and $Y$ are adjacent if $\clos{X}\cap\clos{Y}\not=\emptyset$.
  We denote this relation as $X\adj Y$.
\end{definition}


  As defined, adjacency is a reflexive relation---for any partition $\P$, for all $X\in\P$, we have $X\adj X$. Also note that despite the notation, adjacency is not an equivalence relation as it is not transitive in general.

Every partition of $\R^{d}$ naturally induces a graph in which adjacency in the graph aligns with adjacency in the partition.

\begin{definition}[Partition Graph]\label{:partition-graph}
  Let $d\in\N$ and $\P$ be a partition of $\R^{d}$. The partition graph of $\P$ is the graph $G$ whose vertex set is the set $\P$, and whose edge set contains the edge $\set{X,Y}$ if and only if $X\adj Y$ in the partition.
\end{definition}

Because adjacency is reflexive, every vertex in $G$ has a self-loop. However, this will not be important for us. We will usually not talk explicitly of this graph and will instead identify the partition and its graph structure.
We will frequently need to prove that two members of a partition are not adjacent, and we will use the following simple corollary.

\begin{corollary}[Non-Adjacency Corollary]\label{:non-adjacency}
  Let $d\in\N$, and $\P$ be a partition of $\R^{d}$, and $X,Y\in\P$.
  If for all $\vec{x}\in\clos{X}$ and $\vec{y}\in\clos{Y}$ it holds that $d_{max}(\vec{x},\vec{y})>0$ then $X$ and $Y$ are not adjacent.
\end{corollary}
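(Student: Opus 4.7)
The plan is to prove the contrapositive: assume $X$ and $Y$ are adjacent and produce a pair $(\vec{x},\vec{y})\in\clos{X}\times\clos{Y}$ with $d_{max}(\vec{x},\vec{y})=0$, contradicting the hypothesis. This should be essentially an immediate unpacking of \Autoref{:defn-adj} together with the fact that $d_{max}$ is a metric, so this will be a very short proof rather than a substantive argument.

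First I would suppose, toward contradiction, that $X\adj Y$. By \Autoref{:defn-adj} this means $\clos{X}\cap\clos{Y}\neq\emptyset$, so I can fix some point $\vec{z}\in\clos{X}\cap\clos{Y}$. Then setting $\vec{x}=\vec{z}\in\clos{X}$ and $\vec{y}=\vec{z}\in\clos{Y}$, I have a valid pair in $\clos{X}\times\clos{Y}$ with $d_{max}(\vec{x},\vec{y})=d_{max}(\vec{z},\vec{z})=0$. This directly contradicts the hypothesis that every pair $(\vec{x},\vec{y})\in\clos{X}\times\clos{Y}$ satisfies $d_{max}(\vec{x},\vec{y})>0$, so the assumption $X\adj Y$ fails and the corollary follows.

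There is no real obstacle here: the only thing to be careful about is that the quantification in the hypothesis is over all pairs (so in particular it forbids any common point), and that $d_{max}$, being a metric, vanishes exactly on the diagonal. No appeal to the structure of $\R^d$ beyond the metric axioms is needed, and no properties of partitions beyond the definition of adjacency are used. I would therefore present the proof in just two or three sentences inline, rather than as a multi-step argument.
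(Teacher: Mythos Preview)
Your proof is correct and essentially identical to the paper's: both argue by contrapositive, take a point $\vec{z}\in\clos{X}\cap\clos{Y}$ guaranteed by \Autoref{:defn-adj}, and set $\vec{x}=\vec{y}=\vec{z}$ to obtain $d_{max}(\vec{x},\vec{y})=0$.
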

\begin{proof}
  For proof by contrapositive, if $X$ and $Y$ are adjacent, then there exists $\vec{z}\in\clos{X}\cap\clos{Y}$, so let $\vec{x}=\vec{y}=\vec{z}$ and we have $d_{max}(\vec{x},\vec{y})=0$.
\end{proof}



We will be particularly interested in partitions of translated unit hypercubes as defined below.

\begin{definition}[Unit Hypercubes and Unit Hypercube Partitions]\label{:defn-unit-hypercube-partition}
  Let $d\in\N$.
  A set $X\subseteq\R^{d}$ is called a unit hypercube if there exists $\vec{a}\in\R^{d}$ such that $X$ can be expressed as
  \[
    X = \vec{a}+[0,1)^d.
  \]
  If so, this expression is unique, and we call $\vec{a}$ the representative corner of $X$ and denote it by $\rep(X)$.
  A partition $\P$ of $\R^{d}$ is called a unit hypercube partition if each member of $\P$ is a unit hypercube.
\end{definition}

It is worth noting that $\rep(X)\in X$.

  The terminology of our definition may be somewhat deceptive because what we call a unit hypercube partition requires these hypercubes to be translations of $[0,1)^d$.
  For example, if $\R^{2}$ is partitioned by a unit grid and the partition is rotated, then it would not be considered a unit hypercube partition by our definition. As mentioned before, some other sources
consider the closed unit hypercube
$\vec{a}+[0,1]^{d}$ or rotations of this set. If
we need a closed unit hypercube, we mention so
explicitly.

The following simple lemma shows that shifting a unit hypercube changes the representative corner in the natural way.

\begin{lemma}[Representative Corner Shift Lemma]\label{:rep-corner-shift}
  Let $d\in\N$, and $X\subset\R^{d}$ be a unit hypercube.
  Then for any vector $\vec{v}\in\R^{d}$,
  \[
    \rep(X+\vec{v})=\rep(X)+\vec{v}.
  \]
\end{lemma}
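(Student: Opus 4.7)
The plan is to unfold the definition of the representative corner, translate, and invoke the uniqueness clause built into \Autoref{:defn-unit-hypercube-partition}.

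First I would note that by definition of $\rep(X)$, we have the identity $X = \rep(X) + [0,1)^d$. Adding $\vec{v}$ to both sides (using the obvious fact that set translation is associative, i.e.\ $(S+\vec{u})+\vec{w} = S+(\vec{u}+\vec{w})$ for any $S\subseteq\R^d$ and $\vec{u},\vec{w}\in\R^d$), I would obtain
\[
X + \vec{v} = \bigl(\rep(X) + [0,1)^d\bigr) + \vec{v} = \bigl(\rep(X) + \vec{v}\bigr) + [0,1)^d.
\]
This exhibits $X+\vec{v}$ in the form $\vec{a} + [0,1)^d$ with $\vec{a} = \rep(X) + \vec{v}$, which in particular confirms (en passant) that $X+\vec{v}$ is itself a unit hypercube. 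By the uniqueness clause of \Autoref{:defn-unit-hypercube-partition}, the vector $\vec{a}$ appearing in such an expression is precisely $\rep(X+\vec{v})$, giving $\rep(X+\vec{v}) = \rep(X) + \vec{v}$, as desired.

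There is no real obstacle here; the entire content of the lemma is that translation commutes with the ``+ $[0,1)^d$'' construction, and the uniqueness of the representative corner does the rest. The only minor point worth being explicit about is the associativity of set translation, but this follows immediately from the definition $\vec{v}+S = \{\vec{s}+\vec{v} : \vec{s}\in S\}$ given in \Autoref{sec:notation}.
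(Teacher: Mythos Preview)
Your proposal is correct and follows exactly the same approach as the paper: write $X = \rep(X) + [0,1)^d$, translate by $\vec{v}$, and read off the representative corner by uniqueness. The paper's proof is a one-line version of what you wrote.
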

\begin{proof}
  This holds because if $X=\vec{a}+[0,1)^d$, then $X+\vec{v} = (\vec{a}+\vec{v})+[0,1)^d$.
\end{proof}

While the defined notion of adjacency may be the most natural, it will not be the easiest to work with, so the next two lemmas together give an equivalent characterization of adjacency of members of a unit hypercube partition.
The first lemma states that in a unit hypercube partition it is impossible to have two distinct unit hypercubes $X$ and $Y$ with representative corners closer than a distance of $1$ of each other, because otherwise the hypercubes would overlap.

\begin{lemma}[Non-Overlapping Lemma]\label{:non-overlapping} 
  Let $d\in\N$, and $\P$ be a unit hypercube partition of $\R^{d}$, and $X,Y\in\P$.
  Then either $X=Y$ or $d_{max}(\rep(X),\rep(Y))\geq1$.
\end{lemma}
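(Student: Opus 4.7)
The plan is to prove the contrapositive by directly exhibiting a common point of $X$ and $Y$ when their representative corners are close. Since $\P$ is a partition, any two members are either equal or disjoint, so finding a point in $X \cap Y$ forces $X = Y$.

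First I would set $\vec{a} = \rep(X)$ and $\vec{b} = \rep(Y)$, and assume $d_{max}(\vec{a}, \vec{b}) < 1$, which means $|a_i - b_i| < 1$ for every coordinate $i \in [d]$. Next I would construct a candidate witness point by taking the coordinate-wise maximum, $\vec{c} \in \R^d$ defined by $c_i = \max(a_i, b_i)$, and verify that $\vec{c} \in X \cap Y$.

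For the verification that $\vec{c} \in X = \vec{a} + [0,1)^d$, I need to check that $c_i - a_i \in [0,1)$ for each $i$. If $a_i \geq b_i$ then $c_i - a_i = 0$, which lies in $[0,1)$; otherwise $c_i - a_i = b_i - a_i$, which is strictly positive and strictly less than $1$ by the assumption $|a_i - b_i| < 1$, so again it lies in $[0,1)$. The argument for $\vec{c} \in Y$ is symmetric. Hence $X \cap Y \neq \emptyset$, and since members of a partition are disjoint unless equal, $X = Y$.

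The main obstacle here is really just bookkeeping rather than a conceptual hurdle: one must be careful that the half-open nature of $[0,1)^d$ plays nicely with the strict inequality $|a_i - b_i| < 1$, which is exactly why choosing the coordinate-wise maximum (rather than some midpoint or coordinate-wise minimum in the other hypercube) is the right construction — it ensures each coordinate difference lands in $[0,1)$ rather than accidentally hitting the excluded endpoint $1$.
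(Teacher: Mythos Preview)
Your proof is correct and essentially the same as the paper's: both argue the contrapositive by showing that $d_{max}(\rep(X),\rep(Y))<1$ forces $X\cap Y\neq\emptyset$, hence $X=Y$. The paper argues abstractly that each pair of coordinate intervals $[\rep(X)_i,\rep(X)_i+1)$ and $[\rep(Y)_i,\rep(Y)_i+1)$ intersects, while you make this explicit by exhibiting the coordinate-wise maximum as a witness point---which is exactly the point the paper's case analysis would produce.
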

\begin{proof}
  If $d_{max}(\rep(X),\rep(Y))<1$ then for all $i\in[d]$, $\abs{\rep(X)_{i}-\rep(Y)_{i}}<1$.
  Thus, for all $i\in[d]$, either
  \[
    \rep(X)_{i}\in[\rep(Y)_{i},\rep(Y)_{i}+1)\tag{if $\rep(X)_{i}\geq\rep(Y)_{i}$}
  \]
  or
  \[
    \rep(Y)_{i}\in[\rep(X)_{i},\rep(X)_{i}+1)\tag{if $\rep(X)_{i}\leq\rep(Y)_{i}$}.
  \]
  In either case, this implies
  \[
    [\rep(X)_{i},\rep(X)_{i}+1)\cap[\rep(Y)_{i},\rep(Y)_{i}+1)\not=\emptyset.
  \]
  Note that by definition of $\rep(X)$ and $\rep(Y)$, we can express $X$ and $Y$ as
  \begin{align*}
    X &= \prod_{i=1}^{d}[\rep(X)_{i},\rep(X)_{i}+1)\\
    Y &= \prod_{i=1}^{d}[\rep(Y)_{i},\rep(Y)_{i}+1).
  \end{align*}
  This implies that $X\cap Y\not=\emptyset$, and because $\P$ is a partition, this implies $X=Y$.
\end{proof}

The next lemma shows that in the case of of adjacent hypercubes, the above lower bound on the distance between the representative corners is tight.
The intuition is that if $X$ and $Y$ are unit hypercubes, then because the side lengths of the hypercubes are all 1, it is the case that adjacency of $X$ and $Y$ is equivalent to the representatives of $X$ and $Y$ being within a distance of 1 (in the $d_{max}$ metric, which measures the furthest coordinate).

\begin{lemma}[Adjacency by Representatives Lemma]\label{:adjacency-and-representatives} 
  Let $d\in\N$, and $\P$ be a unit hypercube partition of $\R^{d}$, and $X,Y\in\P$.
  Then $X\adj Y$ if and only if $d_{max}(\rep(X),\rep(Y))\leq1$.
\end{lemma}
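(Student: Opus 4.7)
The plan is to prove both directions by using the explicit product representation of the hypercubes and their closures. Since $X = \rep(X) + [0,1)^d$ and $Y = \rep(Y) + [0,1)^d$, the closures (taken coordinatewise, using the fact noted in the notation section that the closure of a product is the product of closures) are $\clos{X} = \rep(X) + [0,1]^d$ and $\clos{Y} = \rep(Y) + [0,1]^d$. The condition $X \adj Y$ means $\clos{X} \cap \clos{Y} \neq \emptyset$, and the condition $d_{max}(\rep(X),\rep(Y)) \leq 1$ says that each coordinate of $\rep(X) - \rep(Y)$ lies in $[-1,1]$. The proof reduces to understanding when two closed unit intervals on the line overlap.

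For the forward direction, I would assume $X \adj Y$ and pick any $\vec{z} \in \clos{X} \cap \clos{Y}$. Then $\vec{z} = \rep(X) + \vec{a} = \rep(Y) + \vec{b}$ for some $\vec{a}, \vec{b} \in [0,1]^d$, so $\rep(X) - \rep(Y) = \vec{b} - \vec{a}$. Each coordinate of the right-hand side lies in $[-1,1]$, so $d_{max}(\rep(X),\rep(Y)) \leq 1$.

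For the backward direction, I would assume $d_{max}(\rep(X),\rep(Y)) \leq 1$ and explicitly construct a point $\vec{z} \in \clos{X} \cap \clos{Y}$ coordinate by coordinate. For each $i \in [d]$, set $z_i = \max(\rep(X)_i, \rep(Y)_i)$. The inequality $|\rep(X)_i - \rep(Y)_i| \leq 1$ guarantees that the smaller of the two representatives is within $1$ of the larger, so $z_i$ lies in both $[\rep(X)_i, \rep(X)_i + 1]$ and $[\rep(Y)_i, \rep(Y)_i + 1]$. Hence $\vec{z}$ lies in $\clos{X} \cap \clos{Y}$, so the two members are adjacent.

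There is essentially no main obstacle here: the proof is a straightforward unpacking of definitions combined with an elementary interval overlap observation. The only thing to be careful about is making sure to use the closed interval $[0,1]^d$ (not the half-open one $[0,1)^d$) when writing out the closure, so that the boundary points where the two hypercubes meet are actually available; this is precisely why using closures rather than the hypercubes themselves matters for defining adjacency.
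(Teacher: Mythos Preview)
Your proof is correct and follows essentially the same approach as the paper: both arguments express the closures as products of closed unit intervals and reduce the adjacency question to coordinatewise interval overlap. The only cosmetic differences are that the paper proves the forward direction by contrapositive (showing that $d_{max}>1$ forces some coordinate's closed intervals to miss), whereas you argue directly from a witness point $\vec{z}$; and for the reverse direction the paper simply asserts the closed intervals intersect, while you exhibit the explicit witness $z_i=\max(\rep(X)_i,\rep(Y)_i)$.
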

\begin{proof}
  We begin with the reverse direction.
  If $d_{max}(\rep(X),\rep(Y))\leq1$ then for all $i\in[d]$, $\abs{\rep(X)_{j}-\rep(Y)_{j}}\leq1$.
  By similar reasoning as the previous lemma (but using the closed intervals instead of the half open intervals), we have for all $i\in[d]$,
  \[
    [\rep(X)_{i},\rep(X)_{i}+1]\cap[\rep(Y)_{i},\rep(Y)_{i}+1]\not=\emptyset.
  \]
  Since $X=\prod_{i=1}^{d}[\rep(X)_{i},\rep(X)_{i}+1)$ (and similarly for $Y$), we can express the closures as
  \begin{align*}
    \clos{X} &= \prod_{i=1}^{d}[\rep(X)_{i},\rep(X)_{i}+1]\\
    \clos{Y} &= \prod_{i=1}^{d}[\rep(Y)_{i},\rep(Y)_{i}+1].
  \end{align*}
  Thus $\clos{X}$ and $\clos{Y}$ intersect in each coordinate, so $\clos{X}\cap\clos{Y}\not=\emptyset$.
  Thus, by definition, $X\adj Y$.

  Now we prove the forward direction by contrapositive.
  Suppose $d_{max}(\rep(X),\rep(Y))>1$.
  This implies the existence of some $j\in[d]$ such that $\abs{\rep(X)_{j}-\rep(Y)_{j}}>1$.
  This means
  \[
    [\rep(X)_{j},\rep(X)_{j}+1]\cap[\rep(Y)_{j},\rep(Y)_{j}+1]=\emptyset
  \]
  thus $\clos{X}\cap\clos{Y}=\emptyset$, so $X$ and $Y$ are not adjacent.
\end{proof}

The previous two lemmas give rise to a number of equivalent notions of adjacency in a unit hypercube partition which we now formalize.

\begin{corollary}[Equivalent Definitions of Adjacency]\label{:equiv-defn-adjacency}
  Let $d\in\N$, and $\P$ be a unit hypercube partition of $\R^{d}$, and $X,Y\in\P$.
  The following statements are all equivalent:
  \begin{enumerate}
  \item\label{adj} $X\adj Y$
  \item\label{eq} $X=Y$ or $d_{max}(\rep(X),\rep(Y))=1$
  \item\label{zero-one} $d_{max}(\rep(X),\rep(Y))=0$ or $d_{max}(\rep(X),\rep(Y))=1$
  \item\label{leq} $d_{max}(\rep(X),\rep(Y))\leq1$
  \end{enumerate}

\end{corollary}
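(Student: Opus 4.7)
The plan is to establish the cycle $(1) \Leftrightarrow (4) \Rightarrow (3) \Rightarrow (2) \Rightarrow (4)$ (or an equivalent closed chain), leaning almost entirely on the two lemmas just proved. The equivalence $(1) \Leftrightarrow (4)$ is literally the statement of the Adjacency by Representatives Lemma, so nothing more is needed there. For the equivalence between $(2)$ and $(3)$, the key observation is that $d_{max}(\rep(X), \rep(Y)) = 0$ is equivalent to $\rep(X) = \rep(Y)$, which by the definition of a unit hypercube ($X = \rep(X) + [0,1)^d$) is equivalent to $X = Y$. So $(2)$ and $(3)$ are just notational variants of each other.

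Next I would establish $(4) \Rightarrow (3)$, which is the one place the Non-Overlapping Lemma does real work. Assuming $d_{max}(\rep(X), \rep(Y)) \leq 1$, the Non-Overlapping Lemma forces $X = Y$ or $d_{max}(\rep(X), \rep(Y)) \geq 1$. In the first case $\rep(X) = \rep(Y)$ so the distance is $0$; in the second, combining with $\leq 1$ pins the distance at exactly $1$. The reverse implication $(3) \Rightarrow (4)$ is trivial since $0 \leq 1$ and $1 \leq 1$.

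Putting the pieces together: $(1) \Leftrightarrow (4)$ by the Adjacency by Representatives Lemma, $(4) \Leftrightarrow (3)$ by the above paragraph (the nontrivial direction using the Non-Overlapping Lemma), and $(3) \Leftrightarrow (2)$ by the $\rep(X) = \rep(Y) \iff X = Y$ observation. There is no real obstacle here; the proof is a bookkeeping exercise that packages the Non-Overlapping Lemma and the Adjacency by Representatives Lemma into a single four-way equivalence. The only thing to be careful about is not forgetting to invoke Non-Overlapping when going from $\leq 1$ to the dichotomy ``$0$ or $1$,'' since without it one could only conclude $d_{max}(\rep(X),\rep(Y)) \in [0,1]$ rather than the discrete two-valued statement.
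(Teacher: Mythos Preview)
Your proposal is correct and follows essentially the same approach as the paper's proof: both establish $(1)\Leftrightarrow(4)$ via the Adjacency by Representatives Lemma, $(2)\Leftrightarrow(3)$ via the observation that $d_{max}(\rep(X),\rep(Y))=0$ iff $X=Y$, and $(3)\Leftrightarrow(4)$ with the nontrivial direction $(4)\Rightarrow(3)$ handled by the Non-Overlapping Lemma exactly as you describe.
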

\begin{proof}
  That (\ref{adj}) and (\ref{leq}) are equivalent is the statement of the previous lemma.
  Statements (\ref{eq}) and (\ref{zero-one}) are equivalent because $X=Y$ if and only if $\rep(X)=\rep(Y)$ if and only if $d_{max}(\rep(X),\rep(Y))=0$.
  Thus, it suffices to show that (\ref{zero-one}) and (\ref{leq}) are equivalent.

  Trivially, (\ref{zero-one}) implies (\ref{leq}).
  To show that (\ref{leq}) implies (\ref{zero-one}), assume $d_{max}(\rep(X),\rep(Y))\leq1$; so by \Autoref{:non-overlapping}, either $X=Y$ implying $d_{max}(\rep(X),\rep(Y))=0$ implying (\ref{zero-one}), or $d_{max}(\rep(X),\rep(Y))\geq1$ implying $d_{max}(\rep(X),\rep(Y))=1$ (because $d_{max}(\rep(X),\rep(Y))\leq1$ by hypothesis) implying (\ref{zero-one}).



\end{proof}

The following proposition shows that in a hypercube partition, if a collection of hypercubes is a clique in the partition graph (i.e. they are all pairwise adjacent), then that actually implies that there is a point $\vec{p}$ which is at the closure of all of them, and so for any choice of $\epsilon\in(0,\infty)$, $\clos{B}_{\epsilon}(\vec{p})$ will intersect every one of these hypercubes. This does not hold for more general partitions. The usefulness of this result is that it is possible to show that a unit hypercube partition is not $(k,\epsilon$-secluded by finding a clique of size $k+1$. 

\begin{proposition}[Hypercube Cliques]\label{:hypercube-cliques-have-clique-points}
  Let $d,n\in\N$, and $\P$ be a unit hypercube partition of $\R^{d}$, and $X_{1},\ldots,X_{n}\in\P$ such that for all $i,j\in[n]$ $X_{i}\adj X_{j}$. Then there is a point $\vec{p}\in\R^{d}$ such that for all $i\in[n]$, $\vec{p}\in\clos{X_{i}}$.
\end{proposition}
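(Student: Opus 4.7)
The plan is to reduce the statement to a coordinate-wise claim about intervals using the characterization of adjacency via representative corners. By \Autoref{:adjacency-and-representatives}, the pairwise adjacency hypothesis gives
\[
    d_{max}(\rep(X_i),\rep(X_j))\leq 1 \qquad \text{for all } i,j\in[n],
\]
which, since $d_{max}$ is the max over coordinates, means that for every coordinate $k\in[d]$ and every pair $i,j$, one has $\abs{\rep(X_i)_k - \rep(X_j)_k}\leq 1$. So in each coordinate, the finite set $S_k\defeq\set{\rep(X_i)_k:i\in[n]}\subseteq\R$ has diameter at most $1$, i.e.\ $\max(S_k)-\min(S_k)\leq 1$.

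Next, I will exhibit the candidate point $\vec{p}$ coordinate-wise. Define
\[
    p_k \defeq \max_{i\in[n]}\rep(X_i)_k \qquad (k\in[d]).
\]
For any fixed $i\in[n]$ and $k\in[d]$, the bound on the diameter of $S_k$ gives
\[
    \rep(X_i)_k \;\leq\; p_k \;\leq\; \min_{j\in[n]}\rep(X_j)_k + 1 \;\leq\; \rep(X_i)_k + 1,
\]
so $p_k\in[\rep(X_i)_k,\rep(X_i)_k+1]$. Recalling (as computed in the proof of \Autoref{:adjacency-and-representatives}) that
\[
    \clos{X_i}=\prod_{k=1}^{d}[\rep(X_i)_k,\rep(X_i)_k+1],
\]
this coordinate-wise containment yields $\vec{p}\in\clos{X_i}$, and since $i$ was arbitrary, $\vec{p}\in\bigcap_{i=1}^{n}\clos{X_i}$, as required.

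Essentially no obstacle arises here: the whole argument is just the observation that a finite family of unit-length closed intervals on $\R$ with pairwise intersection has common intersection (Helly's theorem in dimension $1$, which for unit intervals is trivial), applied separately in each of the $d$ coordinates, together with the fact that closures of translated half-open unit hypercubes factor as products of closed unit intervals. The only modest subtlety is being explicit that the coordinate-wise choice $p_k=\max_i\rep(X_i)_k$ simultaneously lies in all $n$ intervals in that coordinate — but this is immediate from the diameter-$\leq 1$ bound on $S_k$ derived from pairwise adjacency.
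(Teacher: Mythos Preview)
Your proof is correct and follows essentially the same approach as the paper's: both reduce to a coordinate-wise argument using that pairwise adjacency forces the reference points (corners in your case, centers in the paper's) to lie within $d_{max}$-distance $1$ of each other, and then pick a common point coordinate by coordinate. The only cosmetic difference is that the paper works with the centers $\vec{x}^{(i)}$ of the closed hypercubes (so that $\clos{X_i}=\clos{B}_{1/2}(\vec{x}^{(i)})$) and appeals to the general fact that a pairwise-close set in $d_{max}$ has a common near point, whereas you work directly with the representative corners and choose $p_k=\max_i\rep(X_i)_k$, which makes the verification slightly more explicit.
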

\begin{proof}
Let $X_{1},\ldots,X_{n}$ as stated. Each $\clos{X_{i}}$ can be expresses as $\clos{X_{i}}=\clos{B}_{\frac12}(\vec{x}^{(i)})$ for some $\vec{x}^{(i)}\in\R^{d}$. Because each $X_{i}$ and $X_{j}$ are adjacent, by definition $\clos{X_{i}}\cap\clos{X_{j}}\not=\emptyset$, and because of each these sets is a metric ball of radius $\frac12$, it follows that $d_{max}(\vec{x}^{(i)},\vec{x}^{(j)})\leq1$. It is a property of the $d_{max}$ metric that if a set of vectors are all pairwise within distance $1$ of each other, then there is a point $\vec{p}$ which is within distance $\frac12$ of all of them\footnote{If the set of vectors is $S$, then the point $\vec{p}$ can defined coordinate-wise as $p_{i}=\frac12 (\sup_{\vec{x}\in S}x_{i} - \inf_{\vec{x}\in S}x_{i})$.}, so we can find such a $\vec{p}$ for the set $\vec{x}^{(i)},\ldots,\vec{x}^{(n)}$. Since $\vec{p}$ is distance at most $\frac12$ from each $\vec{x}^{(n)}$, that means $\vec{p}\in\clos{B}_{\frac12}(\vec{x}^{(n)})=\clos{X_{i}}$.
\end{proof}

\section{Reclusive Partitions}
\label{sec:reclusive-lattice-partitions}

In this section we define what we call reclusive partitions. The word reclusive is a synonym of secluded, but we will have a different technical definition. We have used, and will continue to use the term secluded to discuss generic partitions. We will use the term reclusive partition to talk about unit hypercube partitions which  have the very specific linear algebra structure that we develop in this section.

\subsection{An Example Reclusive Partition}
We will shortly be working with partitions of $\R^{d}$ from a linear algebraic perspective because that allows us to state the results very generally.
However, this makes the intuition of the geometry more difficult.
As a partial remedy for this, we first introduce a very specific reclusive partition of $\R^{d}$ for each $d\in\N$---these are the partitions that we first studied, and they are mathematically very convenient to work with.
We will not be interested in them as anything more than an example because the parameter value of $\epsilon$ that they achieve is only $\frac{1}{2^{d-1}}$
and as we have mentioned, some of the reclusive partitions will achieve as large as $\frac{1}{2d}$.
Nonetheless, these partitions capture the essential geometric idea of the construction of the more general reclusive partitions.

The following defines for all $d\in\N$ a partition of $\R^{d}$ which consists solely of half-open/half-closed unit hypercubes.
After presenting the definition, we elaborate on how to interpret it geometrically.

\begin{definition}[$\P_{d}$]\label{:example-partition}
  For each $d\in\N$, define\footnote{By this definition, in the case that $d=1$, $v_{d}=\langle 1 \rangle$.} $\vec{v_{d}}\defeq\langle\frac{1}{2^{d-1}},\ldots,\frac{1}{2^{d-1}},1\rangle$, the vector whose last entry is $1$ and all other $d-1$ entries are $\frac{1}{2^{d-1}}$.

  Define $\P_{1}$, which is a partition of $\R^{1}$, as follows:
  \[
    \P_{1}\defeq\set{[0,1)+n\colon n\in\Z}=\set{[0,1)+n\cdot\vec{v_{1}}\colon n\in\Z}.
  \]
  Then define $\P_{d}$, which is a partition of $\R^{d}$, inductively for all $d\in\N$ with $d>1$ as follows:
  \[
    \P_{d}\defeq\set{B\times[0,1)+n\cdot\vec{v_{d}}\colon n\in\Z,\; B\in\P_{d-1}}
  \]
\end{definition}

The following discussion motivates why we are interested in this partition and how to understand it geometrically.
The first partition ($d=1$) breaks up $\R^{d}=\R^{1}$ into unit intervals which are half open.
This partition has the property that for any fixed point $x\in\R^{1}$, if you consider all points with (Euclidean) distance less than or equal to $1/2^{d}=1/2$ from $x$, all such points belong to at most $d+1=2$ members of the partition $\P_{1}$ (see \Autoref{fig:simple-partition}).

Then consider how the second partition $\P_{2}$ is constructed by first examining only the members constructed when $n=0$.
In this case, each member is $B\times[0,1)$ for some $B\in\P_{1}$.
We think of this as extruding each member of the previous partition one unit into the newest dimension.
Restricted to $n=0$, this would partition $\R^{1}\times[0,1)$, so to capture all elements of $\R^{2}$, we need to make shifts not just for $n=0$ but for every integer.
The last index of $\vec{v_{d}}$ is $1$ to get integer shifts in the newest dimension so that for an arbitrary value of $n$ we get a partition of $\R^{1}\times[n,n+1)$.

Why is it that $\vec{v_{d}}$ is defined as it is?
If we had taken $\vec{v_{d}}=\langle 0,\ldots,0,1\rangle$ for example (so that $\vec{v_{2}}=\langle 0,1\rangle$), the definition above would still produce a partition.
However, it would not have the desired property that for any point $\vec{x}\in\R^{2}$ the points within a distance $1/2^{d}=1/4$ belong to at most $d+1=3$ members of $\P_{2}$.
For example (see \Autoref{fig:simple-partition}), the point $(1,1)$ would be arbitrarily close to points of the following four members: $[0,1)\times[0,1)$, $[0,1)\times[1,2)$, $[1,2)\times[0,1)$, $[1,2)\times[1,2)$.

To get this property, we shift the extrusions by a ``little bit'' in all of the other dimensions too in order to offset the ``seams'' or boundaries between members of the partition.
With each new partition we build from an old one, the amount of shift in each dimension is reduced (the entries in the vector $\vec{v_{d}}$ decrease as $d$ increases) so that shifts aren't ``undone'' by shifting too much and cycling back.
The construction of $\P_{3}$ has a similar intuition, and beyond that, we find it difficult to visualize.
For completeness, we next prove that the $\P_{d}$ are indeed partitions.
The proof will probably not lend insight to the rest of the paper and should be freely skipped.

\begin{claim}\label{:example-partition-proof}
  For each $d\in\N$, $\P_{d}$ is a partition of $\R^{d}$.
\end{claim}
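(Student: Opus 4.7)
The plan is to proceed by induction on $d$, using the fact that the last coordinate of the shift vector $\vec{v_d}$ is $1$ while all other entries are smaller than $1$ (in fact $\tfrac{1}{2^{d-1}} < 1$ for $d \geq 2$). This will let me decouple the choice of the integer $n$ (governed by the last coordinate) from the choice of $B \in \P_{d-1}$ (governed by the first $d-1$ coordinates after a deterministic shift).

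The base case $d = 1$ is immediate: the intervals $[n, n+1)$ for $n \in \Z$ cover $\R$ and are pairwise disjoint. For the inductive step, assume $\P_{d-1}$ partitions $\R^{d-1}$ and let $\vec{x} = (x_1,\ldots,x_d) \in \R^d$ be arbitrary. A member $B \times [0,1) + n\cdot \vec{v_d}$ consists of the points whose last coordinate lies in $[n, n+1)$ and whose first $d-1$ coordinates, after subtracting $\tfrac{n}{2^{d-1}}$ componentwise, lie in $B$. Thus the condition $\vec{x} \in B \times [0,1) + n\cdot \vec{v_d}$ forces $n = \lfloor x_d \rfloor$, and then the remaining condition becomes
\[
    \bigl(x_1 - \tfrac{n}{2^{d-1}},\, \ldots,\, x_{d-1} - \tfrac{n}{2^{d-1}}\bigr) \in B.
\]
By the inductive hypothesis applied to the point on the left side of this membership, there is a unique $B \in \P_{d-1}$ for which this holds. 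Hence $\vec{x}$ lies in exactly one member of $\P_d$, which simultaneously establishes coverage and pairwise disjointness.

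The main (very mild) obstacle is just being careful that the pair $(n, B)$ parametrizing members of $\P_d$ is itself an honest indexing, i.e.\ that two distinct pairs $(n, B)$ and $(n', B')$ really do produce distinct members (so that ``exactly one pair works for $\vec{x}$'' really means ``exactly one member contains $\vec{x}$''). This is automatic from the argument above: if a common point existed, the last-coordinate analysis would force $n = n'$, and then translating back by $-n\cdot \vec{v_d}$ would force $B \cap B' \neq \emptyset$ in $\P_{d-1}$, hence $B = B'$ by the inductive hypothesis.
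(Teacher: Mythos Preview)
Your proof is correct and follows essentially the same approach as the paper: induction on $d$, with the last coordinate forcing $n=\lfloor x_d\rfloor$ and the inductive hypothesis then uniquely determining $B$ from the shifted first $d-1$ coordinates. Your added paragraph verifying that distinct pairs $(n,B)$ yield distinct members is a nice bit of care that the paper leaves implicit.
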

\begin{proof}
  If $d=1$ (for an inductive base case), let $x\in\R^{1}$ be arbitrary and let $n=\floor{x}$ so $x\in[n,n+1)$ and $x\not\in[m,m+1)$ for any $m\not=n$, so $\P_{1}$ partitions $\R^{1}$.

  The inductive case follows similarly.
  Let $\vec{x}=\langle x_{i}\rangle_{i=1}^{d}\in\R^{d}$ be arbitrary.
  We want to prove the existence of unique $n\in\Z$ and $B\in\P_{d-1}$ such that $\vec{x}\in B\times[0,1)+n\cdot\vec{v_{d}}$.
  Note that by necessity $n=\floor{x_{d}}$ so that $x_{d}\in[0,1)+n\cdot 1=[n,n+1)$ (recall that the last coordinate of $\vec{v_{d}}$ is $v_{d_{d}}=1$).
  Then we see that $\vec{x}\in B\times[0,1)+n\cdot\vec{v_{d}}$ if and only if $\vec{x}-n\cdot\vec{v_{d}}\in B\times[0,1)$, and since we have already established the value of $n$, this holds if and only if $\langle x_{i}-n\cdot v_{d_{i}}\rangle_{i=1}^{d-1}\in B$.
  By the inductive hypothesis, there exists a unique $B\in\P_{d-1}$ such that this holds.
  Thus $\P_{d}$ partitions $\R^{d}$.
\end{proof}

Based on our discussion above, we hope we have provided the intuition that each member of any $\P_{d}$ is a unit hypercube with some amount of shift.
The $\P_{d}$ example partitions will be useful to keep in mind as we work with more general unit hypercube partitions.

\subsection{Motivating Properties}
\label{subsec:motivating-reclusive-properties}

While the inductive definition of the running example partitions $\P_{d}$ is useful, it is also useful to consider unit hypercube partitions from another perspective.
One can note that in the partition $\P_{d}$, the representative corner of each unit hypercube is an integer linear combination of the vectors $\vec{v_{1}},\ldots,\vec{v_{d}}$ as defined in \Autoref{:example-partition} (padded with zeros in the trailing entries as necessary which correspond to the higher dimensions).
The set of all integer linear combinations of a set of basis vectors for the vector space $\R^{d}$ is known as a lattice group (it is a group under vector addition).
Viewing the hypercube representatives as points within the lattice group will be useful.
In particular, this gives motivation to look at certain regularly structured unit hypercube partitions by examining a matrix associated with a set of basis vectors of $\R^{d}$.
For example, consider the example partition $\P_{d}$ for $d=5$.
If we embed the vectors $\vec{v_{1}},\ldots,\vec{v_{5}}$ from \Autoref{:example-partition} into $\R^{5}$ (by padding), and use those vectors as the columns of a matrix, then the matrix would be as follows (e.g. the first column is $\vec{v_{1}}$, the second column is $\vec{v_{2}}$, and so on with zeros padded at the end as necessary).
\[
  A=\left[
    \begin{array}{ccccc}
      1 & \frac{1}{2} & \frac{1}{4} & \frac{1}{8} & \frac{1}{16} \\
      0 & 1           & \frac{1}{4} & \frac{1}{8} & \frac{1}{16} \\
      0 & 0           & 1           & \frac{1}{8} & \frac{1}{16} \\
      0 & 0           & 0           & 1           & \frac{1}{16} \\
      0 & 0           & 0           & 0           & 1
    \end{array}
  \right]
\]
In fact, we could equivalently have defined $\P_5$ (and similarly for all $\P_{d}$ from the running example) to be the set of unit hypercubes whose representatives were integer linear combinations of the columns of this matrix; in other words, $\P_{5}$ could have been defined as the set of unit hypercubes whose representatives are given by $A\vec{n}$ for some $\vec{n}\in\Z^{d}$.

In light of this, we shall define a more structured version of unit hypercube partitions by defining them in terms of a matrix.
Observe the following four structural properties of the example matrix $A$ above:
\begin{enumerate}
\item The matrix $A$ for $\P_{5}$ above explicitly contains the structure of the partitions $\P_{d}$ for $d\leq5$ in the sense that the submatrix consisting of the first 4 rows and first 4 columns is the matrix associated with the partition $\P_{4}$. Similarly the submatrix consisting of the first 3 rows and first 3 columns is the matrix associated with the partition $\P_{3}$, and so on.
\item The matrix is upper triangular. The reason is that in the inductive definition of $\P_{d}$, the vector $\vec{v_{d}}$ is in $\R^{d}$, so the lengths of these vectors grows by one with each iteration of the construction, and each iteration of the construction adds a row and column.
\item The diagonal of the matrix is all 1's. This is because in the inductive definition of $\P_{d}$, there has to be a unit shift in the current dimension to accommodate that the members of the partition are unit hypercubes. For example, in the definition of $\P_{1}$, the members are $[0,1)+n\langle 1\rangle$ for all $n\in\Z$. If the vector $\langle 1\rangle$ was anything else, this would not be a partition. In the case of $\P_{2}$, each element is of the form $B\times[0,1)+n\langle\frac{1}{2},1\rangle$. Again, the last index of this vector must be 1 because we are extruding the elements of the previous partition by $[0,1)$. This holds in each dimension of the construction.
\item The entries in each row are strictly decreasing. This is because in the inductive definition of $\P_{d}$, we wanted to offset the ``seams'' of the hypercubes to prevent points in $\R^{n}$ from being ``close'' to too many hypercubes. For example, if we had defined $\P_{d}$ by taking each $\vec{v_{d}}$ to be the vector of all 0's aside from the last entry being 1, then the associated matrix would be the identity matrix. This would indeed generate a unit hypercube partition, but it would not have the property we ultimately desire of limiting the adjacencies (in fact, this is the partition in \Autoref{fig:simple-partition}).

\end{enumerate}

As mentioned earlier, while the $\P_{d}$ serve as nice examples, the exponentially decreasing nature of the shifts ($\frac{1}{2}, \frac{1}{4}, \frac{1}{8}, \frac{1}{16}, \ldots$) leads to a need for exponentially decreasing $\epsilon$ parameter.
Instead, we would like the shifts to change linearly and work with partitions that have matrices more like one of the ones below:

\[
  \left[
    \begin{array}{ccccc}
      1 & \frac{4}{5} & \frac{3}{5} & \frac{2}{5} & \frac{1}{5} \\
      0 & 1           & \frac{3}{5} & \frac{2}{5} & \frac{1}{5} \\
      0 & 0           & 1           & \frac{2}{5} & \frac{1}{5} \\
      0 & 0           & 0           & 1           & \frac{1}{5} \\
      0 & 0           & 0           & 0           & 1
    \end{array}
  \right]
  \qquad\text{or}\qquad
  \left[
    \begin{array}{ccccc}
      1 & \frac{4}{5} & \frac{3}{5} & \frac{2}{5} & \frac{1}{5} \\
      0 & 1           & \frac{4}{5} & \frac{3}{5} & \frac{2}{5} \\
      0 & 0           & 1           & \frac{4}{5} & \frac{3}{5} \\
      0 & 0           & 0           & 1           & \frac{4}{5} \\
      0 & 0           & 0           & 0           & 1
    \end{array}
  \right]
\]

In fact, we will define the matrices of interest in a general enough way to include all of the matrices shown so far.
Notice that both of these two matrices have the same structural properties mentioned above.


We now define the type of matrices that we will be interested in based on the ideas just discussed.
These matrices will be used to define partitions which have the property that hypercubes will be adjacent to a small number of other hypercubes; in the language of graph theory, the members will have few neighbors, so we call them reclusive matrices and reclusive partitions. Stated again for emphasis, we will have a connection between minimizing the parameter $k$ in the motivating question in the introduction, and minimizing the sizes of cliques in the partition graph.

\subsection{Construction}

\begin{definition}[Reclusive Matrix]
  Informally, a square matrix $A$ will be called a reclusive matrix if it is upper triangular, has only $1$'s on the main diagonal, and is strictly decreasing with positive entries in each row starting at the main diagonal.

  More formally, a square $n\times n$ matrix $A=(a_{ij})$ will be called a reclusive matrix if all of the following hold:
  \begin{enumerate}
  \item For all $i>j\in[n]$, $a_{ij}=0$.\hfill(Upper Triangular)
  \item For all $i\in[n]$, $a_{ii}=1$.\hfill($1$'s on Diagonal)
  \item For all $i\leq j<k\in[n]$, $a_{ij}>a_{ik}>0$.\hfill(Decreasing and Non-zero after Diagonal)
  \end{enumerate}
\end{definition}

\begin{remark}
  Reclusive matrices are invertible because they are upper triangular so the determinant is equal to the product of the diagonal entries which are all 1.
\end{remark}

We view $d\times d$ reclusive matrices as a natural way to build partitions of $\R^{d}$ as just discussed.
Before defining these partitions, we formalize the lattice group structure mentioned in the motivating discussion.

\begin{definition}[Lattice Group]\label{:lattice-group}
  For any invertible $d\times d$ matrix $A$, define the set $L_{A}\defeq\set{A\vec{v}\colon \vec{v}\in\Z^{d}}$.
\end{definition}

The set $L_{A}$ is a group under vector addition.
Since $A$ is an invertible linear map, it is actually an isomorphism between $L_{A}$ and $\Z^{d}$.

\begin{definition}[Reclusive Partition]
  If $A$ is a $d\times d$ reclusive matrix, then we associate to it a partition $\P_{A}$,
  called the reclusive partition for $A$, where
  \[
    \P_{A}\defeq\set{\vec{a}+[0,1)^{d}\colon\vec{a}\in L_{A}}.
  \]
\end{definition}

\begin{remark}
Some sources use the notation $L_{A}+[0,1)^{d}$ to indicate the set $\P_{A}$, but we will elect to not do so here.
\end{remark}

Notice that every member of $\P_{A}$ is a unit hypercube shifted by an element of the lattice group so that for all $X\in\P_{A}$, $\rep(X)\in L_{A}$ and conversely, for every $\vec{a}\in L_{A}$ there is a unit hypercube $X$ whose representative is $\vec{a}$.
The proof that this is a partition will not be presented as it is a direct consequence of \Autoref{:efficent-computation-of-representatives} which shows that for any element $\vec{x}\in\R^{d}$ there is a unique $X\in\P$ such that $\vec{x}\in X$ by explicitly computing the representative $\rep(X)$.

We will have use for yet another equivalent notion of adjacency, but to do so we must introduce two definitions for types of finite sequences (which we will apply to vectors). These defined sequences will play the essential role in proving that the reclusive partitions have the properties that we are looking for.

\begin{definition}[Alt-1 and Weak-Alt-1 Sequences]
  A finite sequence $\langle c_{i}\rangle_{i=1}^{n}$ is called alt-1 (alternating sequence of magnitude 1) if $\langle c_{i}\rangle_{i=1}^{n}=\langle (-1)^{i} \rangle_{i=1}^{n}$ or $\langle c_{i}\rangle_{i=1}^{n}=\langle (-1)^{i+1} \rangle_{i=1}^{n}$.

  A finite sequence $\langle c_{i}\rangle_{i=1}^{n}$ is called weak-alt-1 (weakly alternating sequence of magnitude 1) if all terms are $-1$, $0$, or $1$, and the subsequence $\langle c_{i_{j}} \rangle_{j=1}^{k}$ of non-zero terms is an alt-1 sequence.
\end{definition}

\begin{remark}
We consider the empty sequence to vacuously satisfy these definitions, so in
particular, a finite sequence of all zeros is considered weak-alt-1.
\end{remark}

The next lemma is more or less an adaption of the alternating sequence convergence theorem from calculus (c.f. \cite{Johnsonbaugh1981FoundationsOM}).
If we take a dot product of an alt-1 sequence and a strictly monotonic positive sequence, then we know what the sign of that dot product will be and can bound the magnitude.

\begin{lemma}\label{:alt-1-sums}
  Let $\langle c_{i}\rangle_{i=1}^{n}$ be an alt-1 sequence.
  Let $\langle a_{i} \rangle_{i=1}^{n}$ be a strictly decreasing (resp. strictly increasing) positive sequence.
  Letting $s=\sum_{i=1}^{n}c_{i}a_{i}$, the following hold:
  \begin{enumerate}
  \item\label{eq:first} If $n=1$ then $\abs{s}=a_{1}$ (resp. $\abs{s}=a_{n}$)
  \item\label{eq:second} If $n\geq 2$ then $a_{1}-a_{2}\leq\abs{s}\leq a_{1}$ (resp. $a_{n}-a_{n-1}\leq\abs{s}\leq a_{n}$)
  \item\label{eq:fourth} $\abs{s}>0$
  \item\label{eq:third} $\sign(s)=\sign(c_{1})$ (resp. $\sign(s)=\sign(c_{n})$)
  \end{enumerate}
\end{lemma}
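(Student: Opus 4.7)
The plan is to reduce the statement to a single canonical case via two symmetries and then prove that case by the classical alternating-series pairing argument. First, observe that replacing $c_i$ by $-c_i$ preserves the alt-1 property, negates $s$ (leaving $|s|$ fixed), and simultaneously flips both sides of the sign equations $\sign(s)=\sign(c_1)$ and $\sign(s)=\sign(c_n)$; so in the decreasing case I may assume without loss of generality that $c_i=(-1)^{i+1}$, for which $c_1=1$ and the sign claim becomes $s>0$. Second, the strictly increasing case reduces to the decreasing case by reversing indices: set $b_i=a_{n+1-i}$ and $d_i=c_{n+1-i}$, so $\langle b_i\rangle$ is strictly decreasing positive, $\langle d_i\rangle$ is alt-1, and $\sum_i b_i d_i=s$. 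Applying the decreasing-case conclusions and using the identifications $a_n=b_1$, $a_{n-1}=b_2$, and $\sign(c_n)=\sign(d_1)$ then gives the four increasing-case claims.

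It remains to prove the decreasing case with $c_i=(-1)^{i+1}$, so $s=a_1-a_2+a_3-a_4+\cdots$. For $n=1$ the sum is $a_1>0$, giving all four claims immediately. For $n\geq 2$ I would obtain the two bounds on $|s|=s$ via two different groupings of the terms. Grouping from the left, $s=(a_1-a_2)+(a_3-a_4)+\cdots$, with a possible trailing term $+a_n$ when $n$ is odd; since $\langle a_i\rangle$ is strictly decreasing and positive, each parenthesized pair is strictly positive and the trailing term (when present) is positive, so $s\geq a_1-a_2>0$, establishing both the lower bound in (\ref{eq:second}) and the positivity claim (\ref{eq:fourth}). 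Grouping from the right, $s=a_1-(a_2-a_3)-(a_4-a_5)-\cdots$, with either a trailing $-a_n$ when $n$ is even or a complete collection of positive paired differences; in every case the total quantity subtracted from $a_1$ is positive, so $s<a_1$, giving the upper bound in (\ref{eq:second}). Combining these inequalities with $\sign(s)=1=\sign(c_1)$ completes the decreasing case.

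There is no substantial obstacle in this argument; the main thing to watch is the parity of $n$, since the leftover term at the end of each of the two groupings depends on whether $n$ is even or odd, and one must verify in each parity that this leftover term has the correct sign to preserve the inequality being proved. Everything else is elementary manipulation of finite sums, mirroring the standard alternating-series-test estimate from calculus to which the authors allude.
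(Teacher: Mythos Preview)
Your proof is correct and complete. Both you and the paper use the index-reversal symmetry to reduce the increasing case to the decreasing one, but the core arguments differ: the paper proves the increasing case by induction on $n$, writing $s = s' + c_n a_n$ and using the inductive hypothesis $\sign(s') = \sign(c_{n-1}) = -c_n$ to derive the algebraic identity $s = c_n(a_n - |s'|)$, from which the bounds and sign follow. You instead use the classical alternating-series pairing argument, extracting the two bounds from two different groupings of the sum. Your approach is arguably more transparent and gives the bounds in one stroke without an inductive unwinding; the paper's inductive identity is perhaps more mechanical to verify but slightly less geometric. Both are entirely elementary and of comparable length.
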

\begin{proof}
  Note that (\ref{eq:fourth}) is implied by (\ref{eq:first}) and implied by (\ref{eq:second}) since $\langle a_{i}\rangle$ is strictly decreasing (resp. strictly increasing), so (\ref{eq:fourth}) need not be proven.

  We prove the ``increasing'' version of the statement because the inductive indexing is cleaner.
  This immediately implies the stated ``decreasing'' version by reversing $\langle a_{i}\rangle$ and reversing $\langle c_{i}\rangle$.

  We prove the ``increasing'' version by induction on $n$.
  If $n=1$, the claim holds trivially.
  Otherwise $n>1$ and assume for inductive hypothesis that the lemma holds for $n-1$.
  Then let
  \[
    s\defeq \sum_{i=1}^{n}c_{i}a_{i}\qquad\text{and}\qquad s'\defeq \sum_{i=1}^{n-1}c_{i}a_{i}\qquad\text{noting that}\qquad s = s' + c_{n}a_{n}.
  \]
  Thus, we have
  \begin{align*}
    s &=s' + c_{n}a_{n}\\
      &=\sign(s')\abs{s'} + c_{n}a_{n}\tag{Decomposition of $s'$}\\
      &=\sign(c_{n-1})\abs{s'} + c_{n}a_{n}\tag{Inductive hypothesis}\\
      &=-c_{n}\abs{s'} + c_{n}a_{n}\tag{$\sign(c_{n-1})=c_{n-1}=-c_{n}$ by alt-1 def'n}\\
      &=c_{n}(a_{n}-\abs{s'}).
  \end{align*}
  Taking the magnitude we have
  \begin{align*}
    \abs{s} &= \abs{c_{n}}\abs{a_{n}-\abs{s'}}\\
            &= \abs{a_{n}-\abs{s'}}\\
            &= a_{n}-\abs{s'}.\tag{$a_{n}$ and $\abs{s'}$ both non-negative}
  \end{align*}
  Since $\abs{s'}$ is non-negative, it follows that $\abs{s}\leq a_{n}$.
  Further, by inductive hypothesis, $\abs{s'}\leq a_{n-1}$ so again by the last line above, $\abs{s}=a_{n}-\abs{s'}\geq a_{n}-a_{n-1}$ which proves (\ref{eq:second}).

  Lastly, noting again that $a_{n}>a_{n-1}\geq\abs{s'}$ we have
  \begin{align*}
    \sign(s) &= \sign(c_{n}(a_{n}-\abs{s'}))\\
             &= \sign(c_{n})\sign(a_{n}-\abs{s'})\\
             &= c_{n}\cdot 1\tag{$c_{n}=\sign(c_{n})$ and $a_{n}-\abs{s'}>0$}
  \end{align*}
  which proves (\ref{eq:third}).
\end{proof}

The reason we required $\langle a_{i}\rangle$ to be strictly monotonic was because otherwise $\sign(c_{n})$ could be $0$.
The above lemma extends very naturally to weak-alt-1 sequences which have at least one non-zero term by applying the lemma to the subsequence of non-zero terms and the corresponding entries of $\langle a_i \rangle$ and we shall sometimes use it as such.



The following proposition will be the key to establishing the (final) equivalent definition of adjacency.
The $A\vec{c}$ in the statement will end up being $\rep(X)-\rep(Y)$ for unit hypercubes $X$ and $Y$, so this proposition will give an equivalent condition for $\norm{\rep(X)-\rep(Y)}_{\infty}\leq1$ (which by \Autoref{:equiv-defn-adjacency} is equivalent to $X$ and $Y$ being adjacent).

\begin{proposition}\label{:adj-norm-equiv}
  Let $A$ be a $d\times d$ reclusive matrix and $\vec{c}\in\Z^{d}$ (emphasis: $\vec{c}$ has integer coordinates).
  Then $\norm{A\vec{c}}_{\infty}\leq1$ if and only if $\vec{c}=\langle c_{j}\rangle_{j=1}^{d}$ is a weak-alt-1 sequence.
\end{proposition}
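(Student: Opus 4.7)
The plan is to prove both directions by leveraging \Autoref{:alt-1-sums} together with the row structure of reclusive matrices. Observe that since $A$ is upper triangular with $1$'s on the diagonal, the $i$-th coordinate of $A\vec{c}$ is
\[
  (A\vec{c})_i \;=\; c_i + \sum_{j=i+1}^{d} a_{ij}\, c_j,
\]
and the coefficients $\langle a_{ij}\rangle_{j=i}^{d}$ form a strictly decreasing positive sequence starting at $a_{ii}=1$. These are exactly the hypotheses needed to apply \Autoref{:alt-1-sums} to any weak-alt-1 suffix of $\vec{c}$.

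For the forward (``if'') direction, I would assume $\vec{c}$ is weak-alt-1 and fix an arbitrary $i\in[d]$. Each suffix of a weak-alt-1 sequence is again weak-alt-1, so $(c_i,c_{i+1},\dots,c_d)$ is weak-alt-1. Passing to the subsequence of non-zero entries and pairing it with the corresponding $a_{ij}$'s (which remain strictly decreasing and positive as a subsequence), \Autoref{:alt-1-sums} applied to this pairing gives $|(A\vec{c})_i|\leq a_{i,j^*}$, where $j^*\geq i$ is the index of the first non-zero $c_j$ (or $(A\vec{c})_i = 0$ if no such $j^*$ exists). Since $a_{i,j^*}\leq a_{ii}=1$, we conclude $|(A\vec{c})_i|\leq 1$, and taking the maximum over $i$ yields $\norm{A\vec{c}}_\infty\leq 1$.

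For the reverse (``only if'') direction, I would do backward induction on $i$, proving that under the assumption $\norm{A\vec{c}}_\infty\leq 1$, the suffix $(c_i,c_{i+1},\dots,c_d)$ is weak-alt-1. The base case $i=d$ is immediate since $(A\vec{c})_d = c_d$ is an integer of magnitude at most $1$, hence in $\{-1,0,1\}$. For the inductive step, assume $(c_{i+1},\dots,c_d)$ is weak-alt-1 and let $S=\sum_{j=i+1}^d a_{ij}c_j$. The forward-direction argument applied to this suffix gives $|S|\leq a_{i,j^*}<a_{ii}=1$ when any $c_j\neq 0$ for $j>i$ (and $S=0$ otherwise), together with $\sign(S)=\sign(c_{j^*})$. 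Combined with $|c_i+S|\leq 1$ and the integrality of $c_i$, casework on the sign of $c_{j^*}$ forces $c_i$ into $\{-1,0\}$ or $\{0,1\}$ with the sign opposite to $c_{j^*}$ whenever $c_i\neq 0$; in the degenerate case $S=0$ we simply get $c_i\in\{-1,0,1\}$. This is exactly the condition needed to prepend $c_i$ to the alternating non-zero subsequence and preserve the weak-alt-1 property.

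The main obstacle is the casework in the inductive step of the reverse direction: one must use the strict inequality $|S|<1$ (which relies on $a_{i,j^*}<1$, itself coming from the strict decrease off the diagonal) together with $\sign(S)=\sign(c_{j^*})$ to simultaneously control the magnitude of $c_i$ and its sign relative to $c_{j^*}$. The forward direction is essentially a direct application of \Autoref{:alt-1-sums}; the subtle point there is only to handle the zero entries of $\vec{c}$ correctly by restricting to the non-zero subsequence before invoking the lemma.
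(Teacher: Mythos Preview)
Your proposal is correct. The forward direction is essentially identical to the paper's. For the reverse direction, you argue directly by backward induction on $i$ that every suffix $(c_i,\dots,c_d)$ is weak-alt-1, whereas the paper argues by contrapositive: assuming $\vec{c}$ is not weak-alt-1, it picks the \emph{largest} $k$ such that $(c_k,\dots,c_d)$ fails to be weak-alt-1 and shows $|(A\vec{c})_k|>1$ via explicit case analysis on $|c_k|$ and on the number of nonzero terms in the remaining suffix. These are two sides of the same coin---your inductive step and the paper's analysis of row $k$ use the same ingredients ($|S|<1$, $\sign(S)=\sign(c_{j^*})$, integrality of $c_i$)---but your organization is arguably cleaner. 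The one thing the paper's contrapositive route buys that yours does not is an explicit quantitative lower bound on $\norm{A\vec{c}}_\infty$ in the failure case (of the form $1+(1-a_{kj_1})$, $1+a_{kj_1}$, or $1+(a_{kj_1}-a_{kj_2})$); this is exactly what feeds into the reclusive distance $\Delta_A$ and \Autoref{:reclusive-dist-cor}, so if you later need that strengthening you would have to revisit the argument.
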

\begin{proof}
  Before proving either direction, let $\vec{x}=A\vec{c}$.
  Note that for any $i\in[d]$
  \begin{align*}
    x_{i} &= \sum_{j=1}^{d}a_{ij}c_{j}\tag{Def'n of matrix multiplication}\\
          &= \sum_{j=i}^{d}a_{ij}c_{j}\tag{If $j<i$ then  $a_{ij}=0$}.
  \end{align*}
  Note that $\langle a_{ij}\rangle_{j=i}^{d}$ is a strictly decreasing positive sequence.

  We begin by proving the reverse direction.
  If $\langle c_{j}\rangle_{j=1}^{d}$ is a weak-alt-1 sequence, then for any $i\in[d]$, the subsequence $\langle c_{j}\rangle_{j=i}^{d}$ is also a weak-alt-1 sequence.
  Then by \Autoref{:alt-1-sums} (applied to the subsequence of non-zero terms) and the expression of $\vec{x}$ above, $\abs{x_{i}}\leq a_{ii}=1$.
  Since this holds for all $i\in[d]$, $\norm{\vec{x}}\leq1$.

  For the reverse direction, assume that $\langle c_{j}\rangle_{j=1}^{d}$ is not a weak-alt-1 sequence, in which case we let $k\in[d]$ be the largest integer such that the subsequence $\langle c_{j}\rangle_{j=k}^{d}$ is not weak-alt-1.
  By the above expression of $\vec{x}$ we have
  \begin{align*}
    \norm{\vec{x}}_{\infty} &= \max_{i\in[d]}\abs{x_{i}}\tag{Def'n}\\
                            &\geq \abs{x_{k}}\\
                            &= \abs{\sum_{j=k}^{d}a_{kj}c_{j}}\tag{By expression of $x_{k}$}\\
                            &= \abs{c_{k} + \sum_{j=k+1}^{d}a_{kj}c_{j}}\tag{$A$ is reclusive, so $a_{kk}=1$}
  \end{align*}
  If $\abs{c_{d}}>1$ then $\langle c_{j}\rangle_{j=d}^{d}$ is not weak-alt-1, so $k=d$.
  Then the the above summation is empty, so this shows $\norm{\vec{x}}_{\infty}\geq \abs{c_{d}}>1$ and we are done in this case.
  Otherwise we may assume $\abs{c_{d}}\leq 1$; further, since $\vec{c}\in\Z^{d}$ this means $c_{d}$ is $-1$, $0$, or $1$ and thus $\langle c_{j}\rangle_{j=d}^{d}$ is trivially weak-alt-1 which implies $k\not=d$.
  Then the sequence $\langle c_{j}\rangle_{j=k+1}^{d}$ is non-empty (because $k\not=d$) and is weak-alt-1 (by design of $k$) and contains at least one non-zero term (because otherwise it would be trivially weak-alt-1)---let $n$ denote the number of non-zero terms.
  Let $\langle j_{i} \rangle_{i=1}^{n}$ be the sequence of non-zero terms of $\langle c_{j}\rangle_{j=k+1}^{d}$.
  Then $\langle c_{j_{i}}\rangle_{i=1}^{n}$ is an alt-1 sequence, and $\langle a_{kj_{i}}\rangle_{i=1}^{n}$ is a strictly decreasing sequence (since it is a subsequence of $\langle a_{kj}\rangle_{j=k+1}^{d}$ which is strictly decreasing because $A$ is reclusive).
  Thus, \Autoref{:alt-1-sums} applies to $s=\sum_{j=k+1}^{d}a_{kj}c_{j}=\sum_{i=1}^{n}a_{kj_{i}}c_{j_{i}}$ (with different cases if $n=1$ or $n\geq2$).
  We complete the proof with cases on the magnitude of $c_{k}$ (and handle the subcases of the value of $n$ as needed).

  Recall that $c_{k}\in\Z$ and note that $\abs{c_{k}}\not=0$ because $\langle c_{j}\rangle_{j=k+1}^{d}$ is weak-alt-1, so if $c_{k}=0$, then $\langle c_{j}\rangle_{j=k}^{d}$ would be weak-alt-1, but it is not by choice of $k$.
  So we consider two cases: $\abs{c_{k}}=1$ and $\abs{c_{k}}\geq2$.

  Case 1: If $\abs{c_{k}}\geq2$, then by \Autoref{:alt-1-sums} $\abs{s}\leq a_{kj_{1}}$ (regardless of the value of $n$).
    This gives the following inequalities:
    \begin{align*}
      \norm{\vec{x}}_{\infty} &\geq \abs{c_{k}+s}\tag{Work above}\\
                              &\geq \abs{c_{k}} - \abs{s}\tag{Triangle inequality}\\
                              &\geq 2 - \abs{s}\tag{Assumption on $\abs{c_{k}}$}\\
                              &\geq 2 - a_{kj_{1}}\tag{\Autoref{:alt-1-sums}}\\
                              &= 1 + (1 - a_{kj_{1}})\\
                              &> 1\tag{$j_{1}\geq k+1$ so $a_{kj_{1}}<1$}
    \end{align*}

    Case 2: If $\abs{c_{k}}=1$, then because $\langle c_{j}\rangle_{j=k}^{d}$ is not weak-alt-1, it implies that $\sign(c_{k})=\sign(c_{j_{1}})$.
    We then get the following inequalities:
    \begin{align*}
      \norm{\vec{x}}_{\infty} &\geq \abs{c_{k}+s}\tag{Work above}\\
                              &= \abs{c_{k}} + \abs{s}\tag{Same sign}\\
                              &= 1 + \abs{s}\tag{Assumption on $\abs{c_{k}}$}
    \end{align*}
    From here we have two cases depending on if $n=1$ or $n\geq2$.
    If $n=1$, then by \Autoref{:alt-1-sums} $\abs{s}=a_{kj_{1}}$, so from the above we have
    \begin{align*}
      \norm{\vec{x}}_{\infty} &\geq 1 + a_{kj_{1}}\\
                              &> 1\tag{$j_{1}\geq k+1$ so $a_{kj_{1}}>0$}
    \end{align*}
    If instead $n\geq2$, then by \Autoref{:alt-1-sums} $\abs{s}\geq a_{kj_{1}}-a_{kj_{2}}$, so from the above we have
    \begin{align*}
      \norm{\vec{x}}_{\infty} &\geq 1 + (a_{kj_{1}}-a_{kj_{2}})\\
                              &> 1\tag{$j_{2}>j_{1}\geq k+1$ so $a_{kj_{1}}-a_{kj_{2}}>0$}
    \end{align*}
\end{proof}

Upon inspection, one may note that we can improve upon the statement of the prior proposition.
In the proof above, if $\vec{c}$ was not a weak-alt-1 sequence, then not only was $\norm{A\vec{c}}_{\infty}>1$, but it could be bounded away from $1$.
This should not be surprising since $\set{A\vec{v}\colon \vec{v}\in\Z^{d}}$ is isomorphic to $\Z^{d}$ (as stated in the discussion of \Autoref{:lattice-group}).
Specifically, if $\vec{c}$ was not weak-alt-1, the above proof showed that one of the following four equations held:
\begin{align*}
  \norm{A\vec{c}}_{\infty}&\geq1 + 1\tag{$\abs{c_{d}}>1$}\\
  \norm{A\vec{c}}_{\infty}&\geq1 + (1-a_{kj_{1}})\tag{$\abs{c_{d}}\leq1$ and $\abs{c_{k}}\geq2$}\\
  \norm{A\vec{c}}_{\infty}&\geq1 + (a_{kj_{1}})\tag{$\abs{c_{d}}\leq1$ and $\abs{c_{k}}=1$ and $n=1$}\\
  \norm{A\vec{c}}_{\infty}&\geq1 + (a_{kj_{1}}-a_{kj_{2}})\tag{$\abs{c_{d}}\leq1$ and $\abs{c_{k}}=1$ and $n\geq2$}
\end{align*}
If we ignore the specifics of how $k$, $j_{1}$, and $j_{2}$ were found, and minimize over all possibilities, it leads to the following definition and corollary.

\begin{definition}[Reclusive Distance]\label{:reclusive-dist-def}
  If $A$ is a $d\times d$ reclusive matrix, define $\Delta_{A}$, the reclusive distance of $A$, as follows (taking $\min\emptyset = \infty$)
  \begin{align*}
    \delta_{1} &\defeq 1\\
    \delta_{2} &\defeq \min_{k\in[d]}\min_{k<j\leq d}(1-a_{kj})\\
    \delta_{3} &\defeq \min_{k\in[d]}\min_{k<j\leq d}(a_{kj})\\
    \delta_{4} &\defeq \min_{k\in[d]}\min_{k<j<j'\leq d}(a_{kj}-a_{kj'})\\\\
    \Delta_{A} &\defeq \min\set{\delta_{1},\delta_{2},\delta_{3},\delta_{4}}
  \end{align*}
\end{definition}

Observe that because $A$ is reclusive, $\Delta_{A}>0$.

\begin{corollary}\label{:reclusive-dist-cor}
  Let $A$ be a $d\times d$ reclusive matrix and $\vec{c}\in\Z^{d}$ (emphasis: $\vec{c}$ has integer coordinates).
  Then $\norm{A\vec{c}}_{\infty}\leq1$ if and only if $\vec{c}=\langle c_{j}\rangle_{j=1}^{d}$ is a weak-alt-1 sequence.
  Further, if $\norm{A\vec{c}}_{\infty}>1$, then $\norm{A\vec{c}}_{\infty}\geq1+\Delta_{A}$.
\end{corollary}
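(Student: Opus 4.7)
The plan is to observe that the first assertion (the iff) is nothing new: it is exactly the content of \Autoref{:adj-norm-equiv}, which we would simply restate. All the real work is in the ``further'' clause, and for that we would revisit the proof of \Autoref{:adj-norm-equiv} rather than prove anything genuinely new, since that proof already produced a strict inequality $\norm{A\vec{c}}_\infty > 1$ via one of four explicit lower bounds. The job of this corollary is to package those four bounds under a single uniform constant $\Delta_A$.

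Concretely, I would suppose $\vec{c}\in\Z^d$ with $\norm{A\vec{c}}_\infty > 1$. By \Autoref{:adj-norm-equiv}, $\vec{c}$ is not weak-alt-1, so the index $k$ constructed in that proof (the largest $k$ for which $\langle c_j\rangle_{j=k}^d$ fails to be weak-alt-1) is well-defined. Retracing the case analysis on $|c_d|$, on $|c_k|$, and on the number $n$ of nonzero terms in $\langle c_j\rangle_{j=k+1}^d$ exactly as before, I would record that one of the following must hold:
\begin{align*}
\norm{A\vec{c}}_\infty &\geq 1 + 1 = 1 + \delta_1,\\
\norm{A\vec{c}}_\infty &\geq 1 + (1 - a_{kj_1}) \geq 1 + \delta_2,\\
\norm{A\vec{c}}_\infty &\geq 1 + a_{kj_1} \geq 1 + \delta_3,\\
\norm{A\vec{c}}_\infty &\geq 1 + (a_{kj_1} - a_{kj_2}) \geq 1 + \delta_4,
\end{align*}
where the $\delta_i$ are as in \Autoref{:reclusive-dist-def} and each inequality on the right follows because the relevant indices $k, j_1, j_2$ lie in the range over which $\delta_i$ is minimized. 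Since $\Delta_A = \min\{\delta_1,\delta_2,\delta_3,\delta_4\}$, each of the four cases yields $\norm{A\vec{c}}_\infty \geq 1 + \Delta_A$, which is the desired conclusion.

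Rather than re-deriving the four bounds from scratch, I would phrase the proof as an appeal to the internals of the proof of \Autoref{:adj-norm-equiv} and then do the one-line bookkeeping that matches each bound to its corresponding $\delta_i$. There is no real obstacle here, since the hard geometric content lives in \Autoref{:alt-1-sums} and \Autoref{:adj-norm-equiv}; the only thing to be careful about is making sure the quantifier structure in \Autoref{:reclusive-dist-def} really does cover the indices that arise (in particular that $k<j_1<j_2\le d$ lies in the range being minimized over for $\delta_4$, and that the empty-minimum convention $\min\emptyset = \infty$ cleanly handles low-dimensional degenerate cases where some $\delta_i$ is vacuous).
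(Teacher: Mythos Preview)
Your proposal is correct and is essentially the same approach as the paper's: the paper's own proof is the single sentence ``The proof is implicit in the proof of \Autoref{:adj-norm-equiv},'' and the four case-by-case lower bounds you extract are exactly the ones the paper displays in the discussion immediately preceding \Autoref{:reclusive-dist-def}. Your additional care about the index ranges for the $\delta_i$ and the $\min\emptyset=\infty$ convention is a welcome elaboration but does not diverge from the paper's argument.
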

\begin{proof}
  The proof is implicit in the proof of \Autoref{:adj-norm-equiv}.
\end{proof}


A simple application of this corollary shows that if $X$ and $Y$ are non-adjacent hypercubes in a reclusive partition, then the distance between the representatives of the two hypercubes are separated by one plus the reclusive distance of the partition.

\begin{lemma}[Adjacent or Far Lemma (Representatives)]\label{:adjacent-or-far-rep}
  Let $d\in\N$, and $A$ be a $d\times d$ reclusive matrix, and $\P_{A}$ its reclusive partition, and $\Delta_{A}$ its reclusive distance.
  Let $X,Y\in\P_{A}$ such that $X$ and $Y$ are not adjacent.
  Then $d_{max}(\rep(X),\rep{Y})\geq 1+\Delta_{A}$.
\end{lemma}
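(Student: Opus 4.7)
The plan is to reduce this to a direct application of \Autoref{:reclusive-dist-cor}. Since $X,Y\in\P_A$, by construction of a reclusive partition we have $\rep(X),\rep(Y)\in L_A$, so there exist $\vec{n},\vec{m}\in\Z^d$ with $\rep(X)=A\vec{n}$ and $\rep(Y)=A\vec{m}$. Setting $\vec{c}=\vec{n}-\vec{m}\in\Z^d$, the difference of representatives is $\rep(X)-\rep(Y)=A\vec{c}$, which is the expression to which \Autoref{:reclusive-dist-cor} applies.

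Next, I would translate the non-adjacency hypothesis into an $\ell^\infty$ statement. By \Autoref{:equiv-defn-adjacency} (condition (\ref{leq})), $X\adj Y$ holds if and only if $d_{max}(\rep(X),\rep(Y))\leq 1$. Since $X$ and $Y$ are assumed non-adjacent, we conclude
\[
\norm{A\vec{c}}_\infty = d_{max}(\rep(X),\rep(Y)) > 1.
\]

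Finally, I would invoke the second part of \Autoref{:reclusive-dist-cor}: whenever $\vec{c}\in\Z^d$ and $\norm{A\vec{c}}_\infty>1$, one has $\norm{A\vec{c}}_\infty\geq 1+\Delta_A$. Combining this with the previous display yields
\[
d_{max}(\rep(X),\rep(Y)) = \norm{A\vec{c}}_\infty \geq 1+\Delta_A,
\]
which is the desired inequality. There is no real obstacle here; all the work has already been done in establishing \Autoref{:adj-norm-equiv} and its quantitative refinement \Autoref{:reclusive-dist-cor}. The only subtle point to verify is that $\vec{c}$ indeed lies in $\Z^d$ (so that the corollary applies), but this is immediate from the fact that both $\rep(X)$ and $\rep(Y)$ are integer combinations of the columns of $A$ and that $A$ is invertible.
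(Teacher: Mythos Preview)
Your proposal is correct and follows essentially the same approach as the paper's proof: write the representatives as $A\vec{n}$ and $A\vec{m}$ for integer vectors, use \Autoref{:equiv-defn-adjacency} to convert non-adjacency into $\norm{A(\vec{n}-\vec{m})}_\infty>1$, and then invoke \Autoref{:reclusive-dist-cor} to upgrade this to $\geq 1+\Delta_A$.
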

\begin{proof}
  By definition of $\P_{A}$, there exists $\vec{m},\vec{n}\in\Z^{d}$ such that $A\vec{m}=\rep(X)$ and $A\vec{n}=\rep(Y)$ (in particular $\vec{m}=A^{-1}\rep(X)$ and $\vec{n}=A^{-1}\rep(Y)$).
  By \Autoref{:equiv-defn-adjacency}, because $X$ and $Y$ are not adjacent, $d_{max}(\rep(X),\rep(Y))>1$, so
  \begin{align*}
    1 &< d_{max}(\rep(X),\rep(Y))\\
      &= \norm{\rep(X)-\rep(Y)}_{\infty}\\
      &= \norm{A\vec{m}-A\vec{n}}_{\infty}\\
      &= \norm{A(\vec{m}-\vec{n})}_{\infty}.
  \end{align*}
  By \Autoref{:reclusive-dist-cor}, since $\norm{A(\vec{m}-\vec{n})}>1$ it must be that $\norm{A(\vec{m}-\vec{n})}\geq1+\Delta_{A}$.
\end{proof}

A similar result holds when considering general elements $\vec{x},\vec{y}$ of non-adjacent hypercubes $X$ and $Y$.
We can get this result because the location of $\vec{x}$ relative to $\rep(X)$ is similar to the location of $\vec{y}$ relative to $\rep(Y)$.

\begin{lemma}[Adjacent or Far Lemma (Points)]\label{:adjacent-or-far}
  Let $d\in\N$, and $A$ be a $d\times d$ reclusive matrix, and $\P_{A}$ its reclusive partition, and $\Delta_{A}$ its reclusive distance.
  Let $X,Y\in\P_{A}$ such that $X$ and $Y$ are not adjacent.
  For all $\vec{x}\in X$ and $\vec{y}\in Y$, it holds that $d_{max}(\vec{x},\vec{y})>\Delta_{A}$.
\end{lemma}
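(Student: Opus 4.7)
The plan is to leverage the previous lemma (Adjacent or Far Lemma for Representatives) which gives us $d_{max}(\rep(X),\rep(Y)) \geq 1 + \Delta_A$, and combine it with a reverse triangle inequality argument that exploits the fact that within any unit hypercube, points cannot be too far from the representative corner.

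First I would write $\vec{x}$ and $\vec{y}$ in terms of the representative corners. Since $X = \rep(X) + [0,1)^d$, there exists $\vec{a} \in [0,1)^d$ with $\vec{x} = \rep(X) + \vec{a}$, and similarly $\vec{y} = \rep(Y) + \vec{b}$ for some $\vec{b} \in [0,1)^d$. Then
\[
\vec{x} - \vec{y} = \bigl(\rep(X) - \rep(Y)\bigr) + (\vec{a} - \vec{b}).
\]
Since each coordinate of $\vec{a}-\vec{b}$ lies in the open interval $(-1,1)$, we have $\norm{\vec{a}-\vec{b}}_\infty < 1$.

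Next I would apply the reverse triangle inequality in the $\ell^\infty$ norm:
\[
\norm{\vec{x}-\vec{y}}_\infty \geq \norm{\rep(X)-\rep(Y)}_\infty - \norm{\vec{a}-\vec{b}}_\infty.
\]
Using the Adjacent or Far Lemma for Representatives (\Autoref{:adjacent-or-far-rep}) on the first term and the strict bound on the second, we get
\[
d_{max}(\vec{x},\vec{y}) = \norm{\vec{x}-\vec{y}}_\infty > (1 + \Delta_A) - 1 = \Delta_A,
\]
which is the desired conclusion.

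There isn't really a main obstacle here — the heavy lifting was done in the previous lemma, which already extracted the $1+\Delta_A$ gap between representative corners of non-adjacent hypercubes via the reclusive-distance analysis. The only subtlety worth highlighting is that the strict inequality $\norm{\vec{a}-\vec{b}}_\infty < 1$ (rather than $\leq 1$) comes from the half-open nature of the unit hypercubes in the definition of a unit hypercube partition, and this strict inequality is what yields the strict inequality $d_{max}(\vec{x},\vec{y}) > \Delta_A$ in the conclusion rather than just $\geq \Delta_A$.
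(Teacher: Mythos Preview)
Your proof is correct and follows essentially the same approach as the paper: decompose $\vec{x}$ and $\vec{y}$ as representative corner plus offset in $[0,1)^{d}$, bound $\norm{\vec{a}-\vec{b}}_\infty<1$ from the half-open cube, apply the reverse triangle inequality, and invoke \Autoref{:adjacent-or-far-rep} for the $1+\Delta_A$ bound on the representatives. Your remark about the strict inequality coming from the half-open hypercubes matches exactly the point the paper emphasizes.
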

\begin{proof}
  By the definition of unit hypercubes, $\vec{x}\in\rep(X)+[0,1)^{d}$, so let $\vec{\alpha}\in[0,1)^{d}$ such that $\vec{x}=\rep(X)+\vec{\alpha}$.
  Similarly, let $\vec{\beta}\in[0,1)^{d}$ such that $\vec{y}=\rep(Y)+\vec{\beta}$.
  As in the proof of the prior lemma, let $\vec{m}=A^{-1}\rep(X)$ and $\vec{n}=A^{-1}\rep(Y)$.

  Since $\vec{\alpha},\vec{\beta}\in[0,1)^{d}$, it follows that $\vec{\beta}-\vec{\alpha}\in(-1,1)^{d}$ so $\norm{\vec{\beta}-\vec{\alpha}}_{\infty}<1$.
  Then we have the following:
  \begin{align*}
    d_{max}(\vec{x},\vec{y}) &= \norm{\vec{x}-\vec{y}}_{\infty}\\
                             &= \norm{(A\vec{m}+\vec{\alpha}) - (A\vec{n}+\vec{\beta})}_{\infty}\\
                             &= \norm{(A(\vec{m}+\vec{n})) - (\vec{\beta}-\vec{\alpha})}_{\infty}\\
                             &\geq \norm{(A(\vec{m}+\vec{n}))}_{\infty} - \norm{\vec{\beta}-\vec{\alpha}}_{\infty}\tag{Triangle inequality}\\
                             &\geq (1+\Delta_{A}) - \norm{(\vec{\beta}-\vec{\alpha})}_{\infty}\tag{By previous lemma}\\
                             &> (1+\Delta_{A}) - 1\tag{$\norm{\vec{\beta}-\vec{\alpha}}_{\infty}<1$}\\
                             &= \Delta_{A}
  \end{align*}
  Noting the strict inequality in the second to last line completes the proof.
\end{proof}

This lemma will be important later when we need a fixed bound on the distances between non-adjacent partition members.


Now that we have given a bound on how close non-adjacent hypercubes can be, we want to turn our attention to how many hypercubes can be pairwise adjacent.
In other words, we want to show a bound on the size of the largest clique in the partition graph of a reclusive partition.
We actually do something stronger and give an explicit coloring of the graph (an explicit coloring of the hypercubes).
If a graph can be properly colored with $n$ colors, then the size of the largest clique in the graph is at most $n$.




\begin{theorem}[Coloring Reclusive Partitions]\label{:coloring}
  Let $d\in\N$ and let $A$ be a $d\times d$ reclusive matrix. The graph of the reclusive partition $\P_{A}$ can be properly $(d+1)$-colored.
\end{theorem}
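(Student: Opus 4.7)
The plan is to construct an explicit proper $(d+1)$-coloring of $\P_A$ using a linear functional modulo $d+1$. Every member $X \in \P_A$ has $\rep(X) \in L_A$, so there is a unique integer vector $\vec{v}_X \defeq A^{-1}\rep(X) \in \Z^{d}$. I would define the color
\[
\chi(X) \defeq \sum_{i=1}^{d} i \cdot (\vec{v}_X)_i \pmod{d+1},
\]
which takes values in $\{0, 1, \ldots, d\}$. The whole task then reduces to showing that $\chi(X) \neq \chi(Y)$ whenever $X \adj Y$ and $X \neq Y$.

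Given such $X \neq Y$ with $X \adj Y$, set $\vec{c} \defeq \vec{v}_X - \vec{v}_Y \in \Z^{d} \setminus \{\vec{0}\}$, so $A\vec{c} = \rep(X) - \rep(Y)$. By \Autoref{:equiv-defn-adjacency} we have $\|A\vec{c}\|_\infty \leq 1$, and then \Autoref{:adj-norm-equiv} forces $\vec{c}$ to be a weak-alt-1 sequence. Because $\chi(X) - \chi(Y) \equiv S \pmod{d+1}$ where $S \defeq \sum_{i=1}^{d} i c_i$, the proof reduces to verifying that $S \not\equiv 0 \pmod{d+1}$ for every nonzero weak-alt-1 $\vec{c} \in \Z^{d}$.

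To verify this, let $i_1 < i_2 < \cdots < i_k$ (with $k \geq 1$) be the indices where $\vec{c}$ is nonzero. By the weak-alt-1 condition, there is $\epsilon \in \{-1, 1\}$ with $c_{i_j} = \epsilon(-1)^{j+1}$, and therefore
\[
|S| = \left| \sum_{j=1}^{k} (-1)^{j+1} i_j \right|.
\]
The sequence $\langle i_j \rangle_{j=1}^{k}$ is strictly increasing and positive, and $\langle (-1)^{j+1}\rangle_{j=1}^{k}$ is an alt-1 sequence, so \Autoref{:alt-1-sums} (in its strictly-increasing form) gives $|S| = i_1$ when $k = 1$ and $i_k - i_{k-1} \leq |S| \leq i_k$ when $k \geq 2$. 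In every case $1 \leq |S| \leq i_k \leq d$, so $S \not\equiv 0 \pmod{d+1}$, which shows $\chi(X) \neq \chi(Y)$ and completes the coloring argument.

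The conceptual hurdle here is identifying the right coloring; the modulus $d+1$ is then essentially forced by the alt-1 bound, which places $|S|$ precisely in the range $\{1, \ldots, d\}$ of residues that need to be separated from $0$. Notice that the argument uses only the weak-alt-1 characterization of adjacency and the elementary alternating-sum bound, so no structure of $A$ beyond being reclusive is invoked—the coloring is the same for every reclusive matrix of the given dimension.
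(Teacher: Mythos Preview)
Your proof is correct and essentially identical to the paper's: the paper defines the same coloring $\mycolor(X)=\chi\,A^{-1}\rep(X)\bmod(d+1)$ with $\chi=[1,2,\ldots,d]$, reduces to the weak-alt-1 characterization via \Autoref{:adj-norm-equiv}, and then invokes \Autoref{:alt-1-sums} on the increasing sequence $\langle 1,\ldots,d\rangle$ to pin $|S|$ strictly between $0$ and $d+1$.
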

\begin{proof}
  Let $\chi=\left[1,2,\ldots,d\right]$ be a $1\times d$ matrix.
  Define the coloring function on the hypercubes as follows: 
  \begin{align*}
    &\mycolor:\P_{A}\to\Z_{d+1}\\
    &\mycolor(X)=\chi\;A^{-1}\;\rep(X)\mod(d+1).
  \end{align*}
  Recall that if $X\in\P_{A}$, then $\rep(X)=A\vec{m}$ for some $\vec{m}\in\Z^{d}$ (by definition of the reclusive partition $\P_{A}$), and $\vec{m}$ is unique because $A$ is invertible, so the definition of $\mycolor$ has an appropriate codomain because $(A^{-1}(A\vec{m}))=\vec{m}\in\Z^{d}$, and taken as a column vector it can be multiplied with $\chi$ to obtain an integer.

  Let $X$ and $Y$ be distinct hypercubes in $\P_{A}$ such that $X\adj Y$.
  We must show that $\mycolor(X)\not=\mycolor(Y)$.
  We do so by looking at the differences of the colors (and explicitly emphasize that it is being done $\mod (d+1)$).
  Let $\vec{m}=A^{-1}\rep(X)$ and $\vec{n}=A^{-1}\rep(Y)$.
  \begin{align*}
    \mycolor(X) - \mycolor(Y) \mod (d+1) &= \big[ (\chi\vec{m} \mod (d+1)) - (\chi\vec{n} \mod (d+1)) \big] \mod (d+1)\tag{definition}\\
                                         &= \big[ \chi\vec{m} - \chi\vec{n} \big] \mod (d+1)\tag{property of modular arithmetic}\\
                                         &= \big[ \chi(\vec{m}-\vec{n}) \big] \mod (d+1)\tag{linearity of $\chi$}
  \end{align*}
  Observe that by \Autoref{:adj-norm-equiv}, because $X\adj Y$ it follows that $(\vec{m}-\vec{n})$ is a weak-alt-1 sequence/vector.
  Further, $(\vec{m}-\vec{n})$ has at least one non-zero term (if it did not, then $\vec{m}=\vec{n}$ so $\rep(X)=\rep(Y)$ but we assumed $X$ and $Y$ were distinct).
  Note also that the matrix product of $\chi$ with $(\vec{m}-\vec{n})$ is really a dot product of the increasing positive sequence $\langle 1,2,\ldots,d\rangle$ with a weak-alt-1 sequence with a non-zero term. Thus, by \Autoref{:alt-1-sums} (applied to the (non-empty) subsequence of non-zero terms), we have
  \[
    0<\abs{\chi(\vec{m}-\vec{n})}\leq d.
  \]
  Thus, $\chi(\vec{m}-\vec{n}) \mod (d+1) \not= 0$ which proves that $\mycolor(X)\not=\mycolor(Y)$, so this is a proper coloring.
\end{proof}

In fact, the coloring above is tight---the chromatic number (the smallest number of colors that can be used to color the graph) is $d+1$.
To prove this, it suffices to show that $\P_{A}$ has a $(d+1)$-clique.

The following result actually follows as a corollary to the Optimality
Theorem for Unit Hypercube Partitions (\Autoref{:hypercube-partition-thm}), so the following proof is not strictly necessary; however, \Autoref{:hypercube-partition-thm} uses far more machinery, and it is fairly simple to find an explicit clique in the
partition.

\begin{proposition}[Chromatic Number]\label{:max-clique-tight}
  Let $d\in\N$, and $A$ be a $d\times d$ reclusive matrix, and $\P_{A}$ its reclusive partition.
  There exists a clique in $\P_{A}$ of size $d+1$.
\end{proposition}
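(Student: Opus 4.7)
The plan is to exhibit an explicit $(d+1)$-clique by producing lattice coordinates $\vec{m}_0, \vec{m}_1, \ldots, \vec{m}_d \in \Z^d$ whose pairwise differences are all weak-alt-1, and then invoking \Autoref{:adj-norm-equiv} to conclude adjacency of the associated hypercubes. The key idea is that the partial alternating sums of $\vec{e}_1, -\vec{e}_2, \vec{e}_3, \ldots$ form a chain in $\Z^d$ with the property that every difference is a contiguous alternating block, which is precisely the shape of a weak-alt-1 sequence.

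More concretely, for each $k\in\set{0,1,\ldots,d}$ I would define
\[
    \vec{m}_k \defeq \sum_{i=1}^{k}(-1)^{i-1}\vec{e}_i \in \Z^d,
\]
so that $\vec{m}_0 = \vec{0}$, $\vec{m}_1 = \vec{e}_1$, $\vec{m}_2 = \vec{e}_1 - \vec{e}_2$, and so on. Associate to each $\vec{m}_k$ the hypercube $X_k \defeq A\vec{m}_k + [0,1)^d \in \P_A$; these are $d+1$ distinct members of $\P_A$ because $A$ is invertible and the $\vec{m}_k$ are pairwise distinct.

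Next, for any $0\le k < j \le d$, compute
\[
    \vec{m}_j - \vec{m}_k = \sum_{i=k+1}^{j}(-1)^{i-1}\vec{e}_i,
\]
which has value $0$ in coordinates $1,\ldots,k$ and $j+1,\ldots,d$, and takes the alternating values $(-1)^{k},(-1)^{k+1},\ldots,(-1)^{j-1}$ in coordinates $k+1,\ldots,j$. By definition this is a weak-alt-1 sequence (its nonzero subsequence has terms of magnitude $1$ with alternating signs), and the same holds for $\vec{m}_k - \vec{m}_j$ by negation, and trivially when $j=k$. Therefore $\norm{A(\vec{m}_j - \vec{m}_k)}_\infty \leq 1$ by \Autoref{:adj-norm-equiv}, which means $d_{max}(\rep(X_j), \rep(X_k))\leq 1$, which by \Autoref{:equiv-defn-adjacency} gives $X_j \adj X_k$.

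Since adjacency holds for every pair among $X_0, X_1, \ldots, X_d$, the collection $\set{X_0, X_1, \ldots, X_d}$ is a $(d+1)$-clique in the partition graph of $\P_A$, completing the proof. The only potentially tricky step is identifying the construction $\vec{m}_k = \sum_{i=1}^{k}(-1)^{i-1}\vec{e}_i$; once this is in hand, the verification that all pairwise differences are weak-alt-1 is immediate from telescoping, and the adjacency conclusion is a direct application of the characterization already proved.
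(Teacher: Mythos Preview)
Your proof is correct, but your explicit clique is different from the paper's. The paper instead takes the simpler set $V=\{\vec{0}\}\cup\{\vec{e}_i:i\in[d]\}$ and observes directly that every pairwise difference in $V$ has at most two nonzero entries of opposite sign (or at most one nonzero entry), which is trivially weak-alt-1; the remainder of the argument via \Autoref{:adj-norm-equiv} and \Autoref{:equiv-defn-adjacency} is identical to yours. Your partial-alternating-sum construction $\vec{m}_k=\sum_{i=1}^{k}(-1)^{i-1}\vec{e}_i$ also works and yields differences that are contiguous alternating blocks, but it is a bit more machinery than needed---the paper's choice makes the weak-alt-1 verification a one-liner since only two coordinates can be nonzero. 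On the other hand, your clique has the pleasant feature that the differences exercise the full ``alternating'' shape of the weak-alt-1 definition rather than just the degenerate two-term case, which arguably makes the connection to \Autoref{:adj-norm-equiv} feel tighter.
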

\begin{proof}
  For each $i\in[d]$, let $\vec{e_{i}}\in\Z^{d}$ denote the $i$th standard basis vector (i.e. all zeros except for the $i$th term which is $1$).
  Let $V=\set{\vec{0}}\cup\set{\vec{e_{i}}\colon i\in[d]}$.
  Note that for any two vectors $\vec{v},\vec{w}\in V$, that $\vec{v}-\vec{w}$ is a weak-alt-1 vector.
  This is obvious if one of the vectors is $\vec{0}$ because then the difference is a standard basis vector which is all zeros except for a single term which is $1$, so it is weak-alt-1.
  Otherwise, $\vec{v}=\vec{e_{i}}$ and $\vec{w}=\vec{e_{j}}$ for some $i,j\in[d]$.
  If $i=j$, then the difference is $\vec{0}$ which is trivially weak-alt-1, and if $i\not=j$, then the difference is all zeros except for a term which is $+1$ and a term which is $-1$; such a vector must be a weak-alt-1 sequence.

  Let $R = \set{A\vec{v}\colon \vec{v}\in V}$ denote a set of representatives, and let $C = \set{X\in \P_{A}\colon\rep(X)\in R}$ be the set of unit hypercubes in $\P_{A}$ whose representatives are in $R$.
  Then $C$ (and $V$) have cardinality $d+1$, and we claim that $C$ is a clique.

  Consider hypercubes $X,Y\in C$, so $\rep(X),\rep(Y)\in R$, so $A^{-1}\rep(X), A^{-1}\rep(Y)\in V$.
  Let $\vec{m}=A^{-1}\rep(X)$ and $\vec{n}=A^{-1}\rep(Y)$.
  As described, since $\vec{m},\vec{n}\in V$, we have that $\vec{m}-\vec{n}$ is a weak-alt-1 sequence.
  By \Autoref{:adj-norm-equiv} this implies that $\norm{A(\vec{m}-\vec{n})}_{\infty}\leq1$.
  Thus,
  \begin{align*}
    d_{max}(\rep(X),\rep(Y)) &= \norm{\rep(X)-\rep(Y)}_{\infty}\\
                             &= \norm{A(\vec{m}-\vec{n})}_{\infty}\\
                             &\leq1
  \end{align*}
  and by \Autoref{:equiv-defn-adjacency}, $X$ and $Y$ are adjacent.
  Since this holds for any two hypercubes in $C$, it must be that $C$ is a $(d+1)$-clique.
\end{proof}


The following theorem is the main result of this section showing that
we have constructed partitions satisfying the motivating question with
$k=d+1$ and $\epsilon=\frac{\Delta_{A}}{2}$. Then the Hypercube Parition
Theorem (\Autoref{:hypercube-partition-thm}) mentioned in 
\Autoref{sec:results} will be a simple corollary of
this by demonstrating the existence of a reclusive partition with
$\Delta_{A}=\frac{1}{d}$.

\begin{theorem}[Partition Theorem]\label{:partition-thm}
  Let $d\in\N$, and $A$ be a $d\times d$ reclusive matrix, and $\P_{A}$ its reclusive partition, and $\Delta_{A}$ its reclusive distance.
  Then for any $\vec{p}\in\R^{d}$,
  \[
    \abs{\mathcal{N}_{\Delta_{A}/2}(\vec{p})}\leq d+1.
  \]
\end{theorem}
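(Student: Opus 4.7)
The plan is to combine the two main tools of this section: the Adjacent or Far Lemma (\Autoref{:adjacent-or-far}) and the coloring result (\Autoref{:coloring}). Fix an arbitrary $\vec{p}\in\R^{d}$ and write $\mathcal{N}\defeq\mathcal{N}_{\Delta_{A}/2}(\vec{p})$. The strategy is to first show that $\mathcal{N}$ is a clique in the partition graph of $\P_{A}$, and then invoke the $(d+1)$-coloring to conclude $\abs{\mathcal{N}}\leq d+1$.

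For the first step, pick any $X,Y\in\mathcal{N}$. By definition of $\mathcal{N}_{\Delta_{A}/2}(\vec{p})$, there exist $\vec{x}\in X\cap\closure{B}_{\Delta_{A}/2}(\vec{p})$ and $\vec{y}\in Y\cap\closure{B}_{\Delta_{A}/2}(\vec{p})$. The triangle inequality for $d_{max}$ then yields
\[
d_{max}(\vec{x},\vec{y})\leq d_{max}(\vec{x},\vec{p})+d_{max}(\vec{p},\vec{y})\leq\tfrac{\Delta_{A}}{2}+\tfrac{\Delta_{A}}{2}=\Delta_{A}.
\]
Taking the contrapositive of \Autoref{:adjacent-or-far}, the existence of such $\vec{x}\in X$ and $\vec{y}\in Y$ with $d_{max}(\vec{x},\vec{y})\leq\Delta_{A}$ forces $X$ and $Y$ to be adjacent. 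Since $X,Y\in\mathcal{N}$ were arbitrary, every pair of members in $\mathcal{N}$ is adjacent, i.e., $\mathcal{N}$ is a clique in the partition graph.

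For the second step, \Autoref{:coloring} gives a proper $(d+1)$-coloring of the partition graph of $\P_{A}$. In any proper coloring, members of a clique must receive distinct colors, so $\abs{\mathcal{N}}\leq d+1$, which is the desired bound.

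I do not foresee a genuine obstacle here: both ingredients are already in hand, and the only content is the triangle inequality argument that turns an $\epsilon$-neighborhood bound into an adjacency statement. The choice $\epsilon=\Delta_{A}/2$ is precisely the largest radius that lets the triangle inequality absorb two copies of $\epsilon$ into the threshold $\Delta_{A}$ from the Adjacent or Far Lemma, so this cutoff is natural and tight for the proof strategy.
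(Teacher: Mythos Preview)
Your proof is correct and follows essentially the same approach as the paper: show that $\mathcal{N}_{\Delta_{A}/2}(\vec{p})$ is a clique via the triangle inequality and the contrapositive of the Adjacent or Far Lemma, then bound its size using the $(d+1)$-coloring from \Autoref{:coloring}. The argument and the cited lemmas match the paper's proof almost verbatim.
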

\begin{proof}
  It suffices to prove that $\mathcal{N}_{\Delta_{A}/2}(\vec{p})$ is a clique since by \Autoref{:coloring}, any clique contains at most $d+1$ hypercubes.
  Let $X,Y\in\mathcal{N}_{\Delta_{A}/2}(\vec{p})$ be arbitrary.
  By definition of $\mathcal{N}_{\Delta_{A}/2}(\vec{p})$, there exists $\vec{x}\in X$ such that $d_{max}(\vec{p},\vec{x})\leq\frac{\Delta_{A}}{2}$ and similarly there exists $\vec{y}\in Y$ such that $d_{max}(\vec{p},\vec{y})\leq\frac{\Delta_{A}}{2}$.
  By the triangle inequality of metrics,
  \[
    d_{max}(\vec{x},\vec{y}) \leq d_{max}(\vec{p},\vec{x}) + d_{max}(\vec{p},\vec{y}) \leq \frac{\Delta_{A}}{2} + \frac{\Delta_{A}}{2} = \Delta_{A}
  \]
  By the contrapositive of the Adjacent or Far Lemma (\Autoref{:adjacent-or-far}), since there exists $\vec{x}\in X$ and $\vec{y}\in Y$ such that $d_{max}(\vec{x},\vec{y})\leq\Delta_{A}$, it must be that $X$ and $Y$ are adjacent.
\end{proof}

In order to state the Hypercube Parition Theorem (\Autoref{:hypercube-partition-thm}) as promised, we first need a simple lemma. To motivate the following choice, consider again the definition of reclusive distance (\Autoref{:reclusive-dist-def}) and the discussion leading up to it.
To make the most of the Partition Theorem, we want to have a large reclusive distance, and that is accomplished by keeping three key quantities in a reclusive matrix large---for each row $k$ and arbitrary entries $j<j'$ within that row, we want the following to be large: $(1-a_{kj})$, $(a_{kj})$, and $(a_{kj}-a_{kj'})$.
The first discourages using matrix entries greater than $1$ (which is partially why reclusive matrices were defined to not allow that) and encourages using small entries; the second encourages using large entries; the third seems to encourage ``even spacing'' of the terms in a given row.

Based on these considerations, we make all entries in the matrix multiples of a common denominator.

\begin{lemma}\label{:d-reclusive}
  Let $d\in\N$, and $A$ be a $d\times d$ reclusive matrix, and $\Delta_{A}$ its reclusive distance.
  If for all $i,j\in[d]$ it holds that $a_{ij}$ is a multiple of $\frac{1}{k}$ (including $0$), then $\Delta_{A}\geq\frac{1}{k}$.
  Further, such reclusive matrices exist for $k\geq d$.
\end{lemma}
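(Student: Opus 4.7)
The plan is to handle the two claims of the lemma separately. To avoid clashing with the row index $k$ that appears in \Autoref{:reclusive-dist-def}, I would rename the denominator in the statement to $K$. The first claim is a direct verification against the four quantities $\delta_1,\delta_2,\delta_3,\delta_4$ defining $\Delta_A$, and the second claim is an explicit construction.

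For the lower bound $\Delta_A\geq 1/K$, I would observe that every quantity appearing inside a $\min$ in \Autoref{:reclusive-dist-def} is a nonnegative multiple of $1/K$ under the hypothesis on $A$, and in fact each is \emph{strictly} positive by the defining properties of a reclusive matrix. Specifically: $\delta_1=1\geq 1/K$ since $K\geq 1$; each summand $1-a_{kj}$ inside $\delta_2$ is positive because $a_{kj}<a_{kk}=1$ (the row is strictly decreasing after the diagonal), and it is a multiple of $1/K$, so $1-a_{kj}\geq 1/K$; each summand $a_{kj}$ inside $\delta_3$ is positive (the ``non-zero after diagonal'' property) and a multiple of $1/K$, hence $\geq 1/K$; each summand $a_{kj}-a_{kj'}$ inside $\delta_4$ is positive (strict decrease, since $j<j'$) and a multiple of $1/K$, hence $\geq 1/K$. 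Taking minima preserves the bound, so $\Delta_A\geq 1/K$.

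For the existence claim for each $K\geq d$, I would exhibit the Toeplitz-style matrix
\[
  a_{ij} \;=\; \begin{cases} \dfrac{K-(j-i)}{K} & \text{if } j\geq i,\\[2pt] 0 & \text{if } j<i,\end{cases}
\]
which generalizes the second example matrix displayed earlier in \Autoref{sec:reclusive-lattice-partitions} (that display is exactly the case $d=5$, $K=5$). I would verify the three defining conditions of a reclusive matrix in turn: it is upper triangular by definition; on the diagonal $a_{ii}=K/K=1$; and in any row $i$, the entries from the diagonal rightward are $1,\frac{K-1}{K},\frac{K-2}{K},\ldots,\frac{K-(d-i)}{K}$, which is strictly decreasing, and the smallest of these (namely $a_{1d}=\frac{K-(d-1)}{K}$) is strictly positive precisely because $K\geq d$. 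All entries are manifestly multiples of $1/K$, completing the construction.

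There is no real obstacle here: the argument is essentially the observation that a positive integer multiple of $1/K$ must be at least $1/K$, applied four times, together with one explicit example matrix. The only thing requiring care is notational (disambiguating the two uses of the letter $k$) and the edge case in the construction, where one should check that the condition $K\geq d$ is exactly what is needed to keep $a_{1d}>0$ in a $d\times d$ matrix whose row-$1$ entries after the diagonal must consist of $d-1$ strictly decreasing positive multiples of $1/K$ drawn from $\{1/K,2/K,\ldots,(K-1)/K\}$.
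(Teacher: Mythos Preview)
Your proof is correct and follows essentially the same approach as the paper: the first claim is verified directly against the four quantities in \Autoref{:reclusive-dist-def} (the paper merely says it ``follows from the definitions''; your verification is a more explicit version of the same idea), and the second claim is handled by an explicit example. The only difference is the particular example chosen: the paper uses the column-constant matrix $a_{ij}=\frac{d-j+1}{k}$ (the $A'$-type from \Autoref{sec:fundamental-reclusive-property}), whereas you use the Toeplitz-type matrix $a_{ij}=\frac{K-(j-i)}{K}$ (the second displayed matrix in \Autoref{subsec:motivating-reclusive-properties}); both are valid reclusive matrices with entries in $\tfrac{1}{K}\Z$, and both appear earlier in the paper as motivating examples.
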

\begin{proof}
  The first claim follows from the definitions of reclusive matrix and reclusive distance.
  For the second claim, let $k\geq d$ and consider the $d\times d$ reclusive matrix $A$ which has $1$'s on the main diagonal and $0$'s in the lower triangle (as required for all reclusive matrices) and for any other entry, $a_{ij}=\frac{d-i+1}{k}$.
 \[
  A=\left[
    \begin{array}{ccccccc}
      1      & \frac{d-1}{k} & \frac{d-2}{k} & \frac{d-3}{k} & \cdots & \frac{2}{k} & \frac{1}{k} \\
      0      & 1             & \frac{d-2}{k} & \frac{d-3}{k} & \cdots & \frac{2}{k} & \frac{1}{k} \\
      0      & 0             & 1             & \frac{d-3}{k} & \cdots & \frac{2}{k} & \frac{1}{k} \\
      0      & 0             & 0             & 1             & \cdots & \frac{2}{k} & \frac{1}{k} \\
      \vdots & \vdots        & \vdots        & \vdots        & \ddots & \vdots      & \vdots \\
      0      & 0             & 0             & 0             & \cdots & 1           & \frac{1}{k} \\
      0      & 0             & 0             & 0             & \cdots & 0           & 1
    \end{array}
  \right]
\]
It is easily verified that this is in fact a reclusive matrix.
\end{proof}

We shall be particularly interested in reclusive $d\times d$ matrices $A$ as in the lemma with $k=d$ in which case the lemma says that $\Delta_{A}\geq\frac{1}{d}$, and by examination of the first row $\Delta_{A}\leq\frac{1}{d}$ giving equality.

\RestatableHypercubePartitionThm
\begin{proof}
 As just discussed, there exists a reclusive partition $\P_{A}$ with with reclusive distance $\Delta_{A}=\frac{1}{d}$, so the conclusion follows from \Autoref{:partition-thm}.
\end{proof}

\section{Fundamental Property of the Reclusive Definition}
\label{sec:fundamental-reclusive-property}

It would be a fair question for one to ask, ``Why bother developing this general notion of reclusive partitions if you end up only using one specific example?'' The initial reason for why we defined reclusive partitions is that we found ourselves looking to generalize the example of exponentially decaying shifts that we presented at the beginning of the section in order to move from $\epsilon$ with the denominator growing exponentially to $\epsilon$ with the denominator growing linearly. The generalization followed by noting the properties in Subsection~\ref{subsec:motivating-reclusive-properties}.

Initially, we expected this to be a much stronger generalization than we needed and expected that the order in which the shifts were applied would not really matter. However, as we will demonstrate in this subsection, we were incorrect, and the definition of a reclusive matrix really seems to capture something that seems fundamental to the properties we are interested in---particularly the property of considering partitions with minimal possible clique sizes.

Consider, for example, the following two reclusive matrices $A'$ and $A''$ (the shorthand notation denotes the value only of entries in the strict upper diagonal since the diagonal must be $1$'s and the subdiagonal must be $0$'s). 

\[
  A'=\left(\frac{d-j+1}{d}\right)=
  \left[
    \begin{array}{ccccccc}
      1      & \frac{d-1}{d} & \frac{d-2}{d} & \frac{d-3}{d} & \cdots & \frac{2}{d} & \frac{1}{d} \\
      0      & 1             & \frac{d-2}{d} & \frac{d-3}{d} & \cdots & \frac{2}{d} & \frac{1}{d} \\
      0      & 0             & 1             & \frac{d-3}{d} & \cdots & \frac{2}{d} & \frac{1}{d} \\
      0      & 0             & 0             & 1             & \cdots & \frac{2}{d} & \frac{1}{d} \\
      \vdots & \vdots        & \vdots        & \vdots        & \ddots & \vdots      & \vdots      \\
      0      & 0             & 0             & 0             & \cdots & 1           & \frac{1}{d} \\
      0      & 0             & 0             & 0             & \cdots & 0           & 1
    \end{array}
  \right]
\qquad\text{Example, $d=5$:}
  \left[
    \begin{array}{ccccccc}
      1      & \frac{4}{5}   & \frac{3}{5}   & \frac{2}{5}   & \frac{1}{5} \\
      0      & 1             & \frac{3}{5}   & \frac{2}{5}   & \frac{1}{5} \\
      0      & 0             & 1             & \frac{2}{5}   & \frac{1}{5} \\
      0      & 0             & 0             & 1             & \frac{1}{5} \\
      0      & 0             & 0             & 0             & 1
    \end{array}
  \right]
\]
\[
  A''=\left(\frac{d-j+1}{d-i+1}\right)=
  \left[
    \begin{array}{ccccccc}
      1      & \frac{d-1}{d} & \frac{d-2}{d}   & \frac{d-3}{d}   & \cdots & \frac{2}{d}   & \frac{1}{d}   \\
      0      & 1             & \frac{d-2}{d-1} & \frac{d-3}{d-1} & \cdots & \frac{2}{d-1} & \frac{1}{d-1} \\
      0      & 0             & 1               & \frac{d-3}{d-2} & \cdots & \frac{2}{d-2} & \frac{1}{d-2} \\
      0      & 0             & 0               & 1               & \cdots & \frac{2}{d-3} & \frac{1}{d-3} \\
      \vdots & \vdots        & \vdots          & \vdots          & \ddots & \vdots        & \vdots        \\
      0      & 0             & 0               & 0               & \cdots & 1             & \frac{1}{2}   \\
      0      & 0             & 0               & 0               & \cdots & 0             & 1
    \end{array}
  \right]
\qquad\text{Example, $d=5$:}
  \left[
    \begin{array}{ccccccc}
      1      & \frac{4}{5}   & \frac{3}{5}   & \frac{2}{5}   & \frac{1}{5} \\
      0      & 1             & \frac{3}{4}   & \frac{2}{4}   & \frac{1}{4} \\
      0      & 0             & 1             & \frac{2}{3}   & \frac{1}{3} \\
      0      & 0             & 0             & 1             & \frac{1}{2} \\
      0      & 0             & 0             & 0             & 1
    \end{array}
  \right]
\]

Then consider the matrices $B'$ and $B''$. 

\[
  B'=\left(\frac{j-1}{d}\right)=
  \left[
    \begin{array}{ccccccc}
      1      & \frac{1}{d}   & \frac{2}{d}   & \frac{3}{d}   & \cdots & \frac{d-2}{d} & \frac{d-1}{d} \\
      0      & 1             & \frac{2}{d}   & \frac{3}{d}   & \cdots & \frac{d-2}{d} & \frac{d-1}{d} \\
      0      & 0             & 1             & \frac{3}{d}   & \cdots & \frac{d-2}{d} & \frac{d-1}{d} \\
      0      & 0             & 0             & 1             & \cdots & \frac{d-2}{d} & \frac{d-1}{d} \\
      \vdots & \vdots        & \vdots        & \vdots        & \ddots & \vdots        & \vdots        \\
      0      & 0             & 0             & 0             & \cdots & 1             & \frac{d-1}{d} \\
      0      & 0             & 0             & 0             & \cdots & 0             & 1
    \end{array}
  \right]
\qquad\text{Example, $d=5$:}
  \left[
    \begin{array}{ccccccc}
      1      & \frac{1}{5}   & \frac{2}{5}   & \frac{3}{5}   & \frac{4}{5} \\
      0      & 1             & \frac{2}{5}   & \frac{3}{5}   & \frac{4}{5} \\
      0      & 0             & 1             & \frac{3}{5}   & \frac{4}{5} \\
      0      & 0             & 0             & 1             & \frac{4}{5} \\
      0      & 0             & 0             & 0             & 1
    \end{array}
  \right]
\]

\[
  B''=\left(\frac{i-j}{d-i+1}\right)=
  \left[
    \begin{array}{ccccccc}
      1      & \frac{1}{d}   & \frac{2}{d}   & \frac{3}{d}   & \cdots & \frac{d-2}{d}   & \frac{d-1}{d}  \\
      0      & 1             & \frac{1}{d-1} & \frac{2}{d-1} & \cdots & \frac{d-3}{d-1} & \frac{d-2}{d-1}\\
      0      & 0             & 1             & \frac{1}{d-2} & \cdots & \frac{d-4}{d-2} & \frac{d-3}{d-2}\\
      0      & 0             & 0             & 1             & \cdots & \frac{d-5}{d-3} & \frac{d-4}{d-3}\\
      \vdots & \vdots        & \vdots        & \vdots        & \ddots & \vdots          & \vdots         \\
      0      & 0             & 0             & 0             & \cdots & 1               & \frac{1}{2}    \\
      0      & 0             & 0             & 0             & \cdots & 0               & 1
    \end{array}
  \right]
\qquad\text{Example, $d=5$:}
  \left[
    \begin{array}{ccccccc}
      1      & \frac{1}{5}   & \frac{2}{5}   & \frac{3}{5}   & \frac{4}{5} \\
      0      & 1             & \frac{1}{4}   & \frac{2}{4}   & \frac{3}{4} \\
      0      & 0             & 1             & \frac{1}{3}   & \frac{2}{3} \\
      0      & 0             & 0             & 1             & \frac{1}{2} \\
      0      & 0             & 0             & 0             & 1
    \end{array}
  \right]
\]

Clearly $B'$ and $B''$ are not reclusive matrices\footnote{Well, they are for $d=1$ and $d=2$, but not for any $d\geq3$.} since the entries in the rows are not decreasing. Nonetheless, $B'$ and $B''$ define partitions of $\R^{d}$ in the same fashion as a reclusive matrix\footnote{The partition is the set of all unit hypercubes with representatives in the set $\set{B'\vec{v}\colon\vec{v}\in\Z^{d}}$ (resp. $\set{B''\vec{v}\colon\vec{v}\in\Z^{d}}$).}. For example, the partition associated with $B'$ in dimension $5$ is the one constructed by partitioning $\R^{1}$ into unit intervals, then extruding those intervals into unit squares in $\R^{2}$, copying the extrusion to multiple layers, and shifting each layer by $1/5$ of a unit more to the right than the previous layer; then extruding this partition of $\R^{2}$ into $\R^{3}$ and shifting each layer by $2/5$ of a unit more than the previous layer. In this sense, this partition is constructed in a {\em very} similar way to the partition for $A'$: layers are still offset by multiples of $1/5$, and the only difference is that smaller shifts happen first.

By Theorem~\ref{:coloring}, the partitions $\P_{A'}$ and $\P_{A''}$ do not contain any cliques of size $d+2$, so a natural question is whether the partitions associated with $B'$ and $B''$ also have this property since their constructions are so similar. Our intuition was that the answer to this question would ``yes'', but this is not the case. For $d\leq4$, the partitions for both $B'$ and $B''$ do not have cliques of size $d+2$, but for $d=5$ they do. 

We found this by exhaustive computer search\footnote{Because of the repetitive structure of the partition (since the underlying structure is a lattice), it suffices to check a sufficiently large but finite subset of the partition for cliques to determine the size of the largest clique in the whole partition.}, but our claim that for $d=5$, $B'$ and $B''$ each have a $d+2=7$ clique can easily be verified. Consider the following set of $7$ vectors in $\Z^{5}$:

\newcommand{\colvec}[5]{\left[\begin{array}{c}#1\\#2\\#3\\#4\\#5\end{array}\right]}
\[
  \colvec{ 0}{ 1}{-1}{ 0}{ 1},
  \colvec{ 0}{ 0}{-1}{ 0}{ 1},
  \colvec{ 0}{ 1}{ 0}{ 0}{ 0},
  \colvec{-1}{ 1}{-1}{ 0}{ 1},
  \colvec{-1}{ 0}{ 0}{ 0}{ 1},
  \colvec{-1}{ 0}{ 0}{ 1}{ 0},
  \colvec{ 0}{ 0}{ 0}{ 1}{ 0}
\]

Multiplying $B'$ with each vector results in the representative corner of the hypercubes associated with each vector (one can similarly compute them for $B''$):

\[
  \colvec{3/5}{7/5}{-1/5}{4/5}{1},
  \colvec{2/5}{2/5}{-1/5}{4/5}{1},
  \colvec{1/5}{1}{0}{0}{0},
  \colvec{-2/5}{7/5}{-1/5}{4/5}{1},
  \colvec{-1/5}{4/5}{4/5}{4/5}{1},
  \colvec{-2/5}{3/5}{3/5}{1}{0},
  \colvec{3/5}{3/5}{3/5}{1}{0}
\]

To see that the hypercubes with representative corners at these $7$ locations form a clique, check that the $d_{max}$ distance between any pair is exactly $1$ (and apply Corollary~\ref{:equiv-defn-adjacency}). By Proposition~\ref{:hypercube-cliques-have-clique-points}, and the discussion preceeding it, this shows that this partition has points where any ball of any radius centered at that point intersect all $7$ of these hypercubes.

It could be that there is some sufficiently large dimension $d$ such that $B'$ and $B''$ have cliques of size at most $d+1$, but we conjecture that this is not the case.

\begin{conjecture}[Maximum Clique Sizes in Non-Reclusive Partitions $B'$ and $B''$]\label{:max-clique-sizes}
  Based on our computations, we conjecture that the maximum clique size in the partitions associated with $B'$ and $B''$ is greater than $d+1$ for all $d\geq5$, and we know the maximum clique sizes for dimensions given in Table~\ref{tab:max-clique-sizes}.

    \begin{table}[H]
        \centering
        \begin{tabular}{ccccccccccccc}
          d   & 1  & 2  & 3  & 4  & 5  & 6  & 7  & 8  & 9  & 10 & 11 & 12 \\\hline
          B'  & 2  & 3  & 4  & 5  & 7  & 9  & 12 & 16 & 22 & 30 & 39 & 51 \\
          B'' & 2  & 3  & 4  & 5  & 7  & 9  & 11 & 16 & 21 & 28 & 36 & 47 \\
        \end{tabular}
        \caption{Maximum clique sizes in non-reclusive partitions $B'$ and $B''$ for various dimensions.}
        \label{tab:max-clique-sizes}
    \end{table}
\end{conjecture}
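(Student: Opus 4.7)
The plan is to establish the conjecture by exhibiting, for every $d\geq 5$, an explicit clique of size strictly greater than $d+1$ in the partition graph associated with $B'$ (and analogously for $B''$); the specific table values for $d\leq 12$ are already verified computationally by exhaustive search as the authors note, so the real content of the conjecture is the infinite family. By \Autoref{:equiv-defn-adjacency}, the task reduces to finding a set $V\subseteq\Z^d$ of size at least $d+2$ such that $\norm{B'(\vec{v}-\vec{w})}_\infty\leq 1$ for every pair $\vec{v},\vec{w}\in V$. The central difficulty, and the reason the reclusive-partition analysis of \Autoref{sec:reclusive-lattice-partitions} does not carry over, is that the analog of \Autoref{:adj-norm-equiv} fails for $B'$ and $B''$: the set of integer $\vec{c}$ with $\norm{B'\vec{c}}_\infty\leq 1$ is strictly larger than the weak-alt-1 sequences, as already witnessed by the difference $(1,1,0,-1,0)\in\Z^5$ appearing among the authors' seven vectors.

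The construction I would pursue is inductive. Taking the authors' $7$-element set in $\Z^5$ as a base case, I would try to lift a clique $V_d\subseteq\Z^d$ of size $\geq d+2$ to a clique $V_{d+1}\subseteq\Z^{d+1}$ of size $\geq d+3$: embed each $\vec{v}\in V_d$ into $\Z^{d+1}$ by appending a zero, then add at least one new vector whose last coordinate is nonzero and whose difference with every lifted element still satisfies the pairwise norm bound for $B'_{d+1}$. The previously-verified adjacencies survive the lift because the off-diagonal entries in row $i$ of $B'_{d+1}$ are a strict rescaling $\tfrac{j-1}{d+1}<\tfrac{j-1}{d}$ of those of $B'_d$ for columns $j\leq d$, so applying $B'_{d+1}$ to a difference vector whose final coordinate is zero gives entries no larger in magnitude than the corresponding computation for $B'_d$. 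The nontrivial step is choosing the additional vectors; natural candidates inspired by the base-case example are patterns of the form $-\vec{e_1}+\vec{e_{d+1}}$ or $\vec{e_d}-\vec{e_1}$, and the verification would be a direct computation of $B'_{d+1}$ applied to the new differences against each lifted $\vec{v}$.

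The principal obstacle is that the adjacency structure for $B'$ and $B''$ lacks the clean weak-alt-1 characterization that drives the reclusive proofs, and establishing a single inductive step that works uniformly in $d$ requires understanding how newly admissible non-alternating differences interact across rows with varying row-scaling (especially severe for $B''$, whose denominators $d-i+1$ shrink from $d$ down to $2$). A cleaner, non-inductive route worth attempting in parallel is to identify a structured family of at least $d+2$ lattice points---perhaps vectors supported only on the first few and last few coordinates and with entries in $\{-1,0,1\}$---for which every pairwise difference yields row-wise cancellation down to norm at most $1$; the shape of the $d=5$ example, where all seven vectors have entries only in $\{-1,0,1\}$, strongly suggests such a family exists. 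If neither approach yields the full conjecture, both should at least give explicit lower bounds on clique size that exceed $d+1$ and grow with $d$, which is the qualitative content the authors are asserting.
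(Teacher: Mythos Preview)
The statement you are addressing is labeled a \emph{Conjecture} in the paper, and the paper does not supply a proof. The authors give only computational evidence: the table of maximum clique sizes for $d\leq 12$ obtained by exhaustive search, together with the explicit $7$-element clique in $\Z^5$ that you cite. So there is no ``paper's own proof'' against which to compare your proposal; you are attempting to settle something the authors leave open.

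As for your sketch itself, there is a genuine gap. Your lift step---embed the $d$-dimensional clique into $\Z^{d+1}$ by appending a zero---does preserve the pairwise adjacencies for the specific $d=5$ base clique, but not for the reason you state. Writing $(B'_d\vec{c})_i=c_i+\tfrac{1}{d}S_i$ with $S_i=\sum_{j>i}(j-1)c_j$, the comparison with $(B'_{d+1}(\vec{c},0))_i=c_i+\tfrac{1}{d+1}S_i$ does \emph{not} give ``no larger in magnitude'' in general: if $c_i=2$ and $S_i=-d$ then $|c_i+S_i/d|=1$ but $|c_i+S_i/(d+1)|=(d+2)/(d+1)>1$. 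What saves you in the base case is that every pairwise difference among the seven vectors happens to have entries in $\{-1,0,1\}$, and for $|c_i|\leq 1$ the rescaling does go through; you would need to enforce and maintain that constraint inductively.

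More seriously, the lift by itself only transports a clique of size $7$ from $d=5$ into $d=6$, whereas you need size $\geq d+3=8$ there. You acknowledge that ``the nontrivial step is choosing the additional vectors,'' but you do not exhibit any such vector, nor argue that one exists uniformly in $d$. That is precisely the heart of the conjecture: the obstruction to a clean characterization of adjacency (the failure of the weak-alt-1 equivalence) is exactly what makes it unclear whether the needed extra vectors can always be found. Without producing them, your proposal is a plan of attack, not a proof---which is consistent with the paper's own position that the statement remains conjectural.
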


The above example demonstrates that our definition of reclusive partitions is not an arbitrary one and captures a certain structure of lattice based partitions that is sufficient to ensure that no large cliques exist. The key is that because our definition of reclusive partitions demands that the terms be decreasing in each row, we get an equivalent definition of adjacency in terms of weak-alt-1 sequences. The above example shows that this equivalence does not hold if we relax the decreasing requirement.
\section{Optimality of the Degree Parameter ($k$)}
\label{sec:clique-optimality}

Given our results that we can find a uniformly-sized neighborhood around every point that intersects at most $d+1$ members of the partition, it is natural to ask if this is optimal---can this value be improved from $d+1$ to $d$ or even smaller? In other words, is there a partition of $\R^{d}$ and a neighborhood around each point in $\R^{d}$ such that each of these neighborhoods intersects at most $d$ members of the partition? The answer is basically no; our partitions are optimal, and in this section we will discuss the technicalities of what we mean by ``basically'' and then prove the optimality formally.

It is obvious that this question above is only worth asking if some restrictions are added, because otherwise the partition of $\R^{d}$ which contains just one member ($\P=\set{\R^{d}}$) along with any neighborhoods trivially has the property that each neighborhood intersects only one member of the partition (because there is only one). The context above suggests that we are interested in partitions similar to tilings and packings, and so we will restrict to the context in which the size of the partition members is uniformly bounded in some way.

Two natural ways to bound the size of the members are by bounding the diameter and by bounding the measure. Note that bounding the diameter is a strictly stronger condition (assuming all members are measurable), because if we insist that each member of the partition has diameter at most $D$ in the $d_{max}$ metric, then the measure of each member is at most $(2D)^{d}$ (fix some point in the member, and all other points are within $D$ of it, so the member is contained in a ball of radius $D$ in the $d_{max}$ metric, and this ball has measure $(2D)^{d}$ because this ball is a $d$-dimensional hypercube with side length $2D$).

Thus, if we can answer this question by just restricting the measure, that would be ideal since it is a weaker hypothesis. Unfortunately, we will see that we cannot get the strongest version of the conclusion that we want just by bounding the measure, and so we will also consider diameter bounds. It will occur frequently in our theorems, though, that to get nicely generalized statements, diameter is not quite what we want. For example, consider the sets $[0,1)$ and $[0,1]$. Both sets have diameter $1$, but the the former has the stronger property that all pairs of points have distance strictly less than the diameter. In other words, diameter\footnote{In general $\diam(X)\defeq\sup\set{\textrm{distance}(x,y):x,y\in X}$} is defined as a supremum of distances, and the former set does not attain the supremum while the latter does. We want to distinguish between these cases, so we give the following definition.

\begin{definition}[Strict Pairwise Bound]\label{:strict-pairwise-bound}
  If $X$ is a subset of a metric space and $D\in(0,\infty)$ is a constant such that for all $x,y\in X$ it holds that $\mathrm{distance}(x,y)<D$, then we call $D$ a strict pairwise bound of $X$.
\end{definition}

\begin{remark}
  Unbounded sets do not have any strict pairwise bounds, and bounded sets have infinitely many strict pairwise bounds (if $D$ is a strict pairwise bound, then so is $D+\epsilon$ for any $\epsilon\in[0,\infty)$). 
  Also, we have the following implications:
  \[
    \diam(X)<D\quad\Longrightarrow\quad D\text{ is a strict pairwise bound of }X\quad\Longrightarrow\quad\diam(X)\leq D
  \]
  In general, the reverse implications do not hold which is exemplified by the sets $[0,D)$ and $[0,D]$.
\end{remark}

Because of the distinction just mentioned between a
strict bound on diameter, a strict pairwise bound, and a nonstrict bound on
diameter, we don't want to restrict to unit diameter partition members in this
section as we did in the last section, and instead will state many results for
a general diameter (or strict pairwise bound) $D\in(0,\infty)$. It is worth
noting that 
all of the results of the previous section used unit hypercubes, but if we
allow for $D$ diameter hypercubes (i.e., translations of $[0,D)^{d}$) then the
results also scale so that we can still attain the parameter $k=d+1$ and
$\epsilon=\frac{D}{2d}$.

In this section, we will prove the three different Optimality Theorems presented in \Autoref{sec:results}. 
 We begin with a more in depth discussion (and restatement) of these theorems and some of the corollaries that will follow. We will then introduce the necessary tools (primarily a variant of Sperner's lemma) for proving the Optimality Theorems, and prove them. Lastly, we will prove that there are ``gaps'' between each of these theorems which justifies the need for multiple variants.

\subsection{The Optimality Theorems}
\label{subsec:optimality-theorems}

In the statements below, let $m$ denote the Lebesgue measure of a set. For those unfamiliar with measure theory, sets that are ``not too bizarre'' are called Lebesgue measurable, and for any Lebesgue measurable set $X\subseteq\R^{d}$, $m(X)$ is a generalization of the volume of $X$. Sets that are not Lebesgue measurable (ones that are ``too bizarre'') don't have a well-defined volume/measure.

We will state each of the three theorems and then give the interpretation of each.

\RestatableFirstOptimalityThm

The conclusion of the First Optimality Theorem is that for partitions satisfying the hypothesis, there is no way to put uniformly sized neighborhoods (each a ball of radius $\epsilon$) centered at every point of $\R^{d}$ each intersecting only $d$ members of the partition. For any chosen $\epsilon$, there will be a point in the space where its neighborhood intersects at least $d+1$ partition members.

\RestatableSecondOptimalityThm

The $\epsilon$ function above should be viewed as some fixed $\epsilon$ for each point of $\R^{d}$. The conclusion of the Second Optimality Theorem is that for partitions satisfying the hypothesis, there is no way to put neighborhoods (each a ball of some radius) centered at every point of $\R^{d}$ each intersecting only $d$ members of the partition even if we allow the radius of the neighborhood to depend on which point it is centered at. There will be a point in the space where its neighborhood intersects at least $d+1$ partition members. This is a stronger conclusion than the First Optimality Theorem, but the hypothesis is also stronger (assuming all members are measurable) as discussed at the beginning of this section.

\RestatableThirdOptimalityThm

The conclusion of the Third Optimality Theorem is that for partitions satisfying the hypothesis, we don't even need to consider neighborhoods, because there is a single point in the closure of at least $d+1$ members of the partition, and thus for this specific point, any sized neighborhood with intersect $d+1$ members of the partition. This further implies that the partition has a $(d+1)$-clique (the intersection of closures of these $d+1$ members is non-empty because it contains the point $\vec{p}$). Again, this conclusion is stronger than the conclusions of the Second Optimality Theorem, but the hypothesis is also stronger.

\begin{remark}
Observe that in the First (resp. Second) Optimality Theorem, the strict inequality on the measure (resp. diameter) could be replaced with a non-strict inequality, and the statement would be equivalent since the actual value of $M$ (resp. $D$) is not used in the conclusion. However, this is not the case with the Third Optimality Theorem. Consider for example if $\P=\set{[0,1]^{d}}\cup\set{\set{\vec{x}}:\vec{x}\not\in[0,1]^{d}}$ (i.e. the partition where one member is the hypercube $[0,1]^{d}$, and every other member is a singleton set). Then, it holds for $D=1$ that for all $X\in\P$, $\diam(X)\leq1$ and taking $\vec{\alpha}=\vec{0}$, the set $\vec{\alpha}+[0,1]^{d}$ only intersects one member of the partition (and in fact intersects the closure of only one member of the partition since all members are already closed sets). However, because all members are closed, for any $\vec{p}\in\R^{d}$, the set $\mathcal{N}_{\closure{0}}(\vec{p})=\set{X\in\P:\closure{X}\ni\vec{p}}$ has cardinality $1<d+1$ because it contains only the set $\member(\vec{p})$. Thus, the use of the strict inequality of the statement will be necessary.
\end{remark}


The hypothesis added in the Third Optimality Theorem is that there is a local finiteness somewhere in the partition. Essentially this limits the resolution so that we can examine some portion of the partition and not deal with infinitely small sets. This finiteness condition arises in the natural context below.

\begin{corollary}\label{:diam-measure-cor}
  Let $d\in\N$ and $\P$ a partition of $\R^{d}$. If there exists $D\in(0,\infty)$ and $\mu\in(0,\infty)$ such that for all $X\in\P$, it holds that $X$ is Lebesgue measurable, and $\mu<m(X)$, and $\diam(X)<D$, then there exists $\vec{p}\in\R^{d}$ such that
  \[
    \abs{\mathcal{N}_{\epsilon}(\vec{p})}\geq d+1.
  \]
   Furthermore, $\P$ contains a $(d+1)$-clique.
\end{corollary}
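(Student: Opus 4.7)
The plan is to deduce this corollary directly from the Third Optimality Theorem (\Autoref{:third-optimality-thm}) by verifying that its two hypotheses hold under the stronger assumptions given here. The first hypothesis is that $D$ is a uniform strict pairwise bound on members of $\P$; this is immediate from the corollary's assumption $\diam(X) < D$ for every $X \in \P$, since a strict diameter bound trivially implies a strict pairwise bound. So only the local finiteness condition requires work, namely the existence of some $\vec{\alpha} \in \R^d$ such that $\vec{\alpha} + [0,D]^d$ meets only finitely many members of $\P$.

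For the finiteness condition, the plan is to leverage the measure lower bound $\mu$ via a volume-counting argument. Take $\vec{\alpha} = \vec{0}$, set $H = [0,D]^d$, and let $H' = [-D, 2D]^d$. I first observe that any member $X \in \P$ meeting $H$ is entirely contained in $H'$: fixing any point $\vec{x}_0 \in X \cap H$, the bound $\diam(X) < D$ (relative to the $d_{max}$ metric) forces every other point of $X$ to lie within $d_{max}$-distance $D$ of $\vec{x}_0$, hence in $H'$. Now the members of $\P$ are pairwise disjoint (partition) and each is Lebesgue measurable with measure exceeding $\mu$, so if $X_1, \ldots, X_N$ are distinct members all meeting $H$, countable additivity and monotonicity of Lebesgue measure give
\[
N \mu < \sum_{i=1}^N m(X_i) = m\Bigl(\bigsqcup_{i=1}^N X_i\Bigr) \leq m(H') = (3D)^d,
\]
so $N < (3D)^d / \mu$, which is finite and independent of which members we chose. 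Hence only finitely many members of $\P$ intersect $H$, as required.

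With both hypotheses verified, the Third Optimality Theorem produces a point $\vec{p} \in \R^d$ with $\abs{\mathcal{N}_{\closure{0}}(\vec{p})} \geq d+1$ together with a $(d+1)$-clique in $\P$; this is in fact a strictly stronger statement than what the corollary advertises (any member whose closure contains $\vec{p}$ automatically meets $\closure{B}_\epsilon(\vec{p})$ for every $\epsilon > 0$), so the claimed conclusion follows a fortiori. I do not anticipate any serious obstacle: the argument is entirely a matter of checking hypotheses, with the only mild subtlety being the need to verify that members meeting $H$ actually fit inside a bounded enlargement so that the measure-based counting argument applies.
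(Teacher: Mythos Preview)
Your proposal is correct and follows essentially the same argument as the paper: both verify the hypotheses of the Third Optimality Theorem by using the diameter bound to trap every member meeting $[0,D]^d$ inside the enlarged cube $[-D,2D]^d$, and then bound the count by the volume ratio $(3D)^d/\mu$. The only cosmetic difference is that the paper takes $\vec{\alpha}$ arbitrary while you fix $\vec{\alpha}=\vec{0}$, which is immaterial since only one such $\vec{\alpha}$ is needed.
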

\begin{proof}
  It suffices to prove that the hypotheses of the corollary imply the hypothesis of the Third Optimality Theorem.

  Let $\vec{\alpha}\in\R^{d}$ be arbitrary. Let $B=\vec{\alpha}+[0,D]^{d}$, and $\mathcal{B}=\set{X\in\P:X\cap B\not=\emptyset}$ (the members intersecting $B$). Let $B'=\vec{\alpha}+[0-D,D+D]^{d}$ (which should be viewed as ball containing $B$ and all points within distance $D$ of $B$ in the $d_{max}$ metric). Thus, for any $X\in\mathcal{B}$, because $\diam(X)<D$, we have $X\subseteq B'$, and thus $\bigsqcup_{X\in\mathcal{B}}X\subseteq B'$ (the square cup indicating a disjoint union).

   Consider the following volume argument (see the Appendix for details on the second inequality):
  \[
    (3D)^{d} = m(B') \geq m(\bigsqcup_{X\in\mathcal{B}}X) \geq \sum_{X\in\mathcal{B}}m(X)\geq \abs{\mathcal{B}}\mu.
  \]
  Dividing both sides by $\mu$ gives $\abs{\mathcal{B}}\leq\frac{(3D)^{d}}{\mu}$ which is finite and thus demonstrates that the hypotheses of the Third Optimality Theorem hold and completes the proof.
\end{proof}

The following is also a simple corollary, but we state it as a theorem to emphasize that in the case of unit hypercube partitions, we get the strongest of the results above.

\RestatableOptimalityForUnitHypercubePartitionsThm
\begin{proof}
  All members of $\P$ are measurable and have measure and diameter $1$, so apply \Autoref{:diam-measure-cor}.
\end{proof}

As we have mentioned before, the reason we believe the bounds on measure and diameter of the partition members to be very reasonable is that we are interested in partitions which are something like tilings. In a tiling, there is some finite collection of members used with translation and rotation (and other orthogonal maps). Because all norms on $\R^{d}$ are equivalent. If each member of a finite collection has finite diameter in some norm, then they also have finite diameter in the $\norm{\cdot}_{\infty}$ norm, and because the collection of members is finite, there is some value $D\in(0,\infty)$ so that all of these members have diameter less than $D$ in the $\norm{\cdot}_{\infty}$ norm. Also in a tiling, the members usually have non-zero measure, so again by finiteness there is some $\mu\in(0,\infty)$ which is a lower bound on the measure of each member (and measure is invariant under rotation and translation and other orthogonal maps). So by this reasoning we have the following simple corollary which is a slight generalization of the optimality theorem above for unit hypercubes.


\begin{corollary}[Tiling Corollary]
  Let $d\in\N$, and $\mathcal{F}$ be a finite collection of Lebesgue measureable subsets of $\R^{d}$ each with finite diameter and finite measure. Let $\P$ be a partition of $\R^{d}$ consisting only of translated orthogonal transformations of members of $\mathcal{F}$. Then there exists $\vec{p}\in\R^{d}$ such that
  \[
    \abs{\mathcal{N}_{\closure{0}}(\vec{p})}\geq d+1
  \]
  and $\P$ also contains a $(d+1)$-clique.
\end{corollary}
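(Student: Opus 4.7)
The plan is to reduce the Tiling Corollary to \Autoref{:diam-measure-cor} by extracting uniform diameter and measure bounds on the members of $\P$ from the finiteness of $\mathcal{F}$ and the invariance properties of the allowed transformations. Once those uniform bounds are in hand, the Diameter-Measure Corollary immediately delivers both conclusions.

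First I would establish a uniform upper bound on the $d_{max}$-diameter of members of $\P$. Since $\mathcal{F}$ is a finite collection of sets with finite diameter, the quantity $D_{0}:=\max_{F\in\mathcal{F}}\diam(F)$ (measured in the Euclidean norm) is finite. Every $X\in\P$ has the form $X=T(F)+\vec{v}$ for some $F\in\mathcal{F}$, some orthogonal transformation $T$, and some $\vec{v}\in\R^{d}$. Orthogonal maps are Euclidean isometries and translations preserve all distances, so the Euclidean diameter of $X$ equals that of $F$, which is at most $D_{0}$. The pointwise inequality $\norm{\cdot}_{\infty}\leq\norm{\cdot}_{2}$ then upgrades this to a $d_{max}$-diameter bound of $D_{0}$, and choosing any $D>D_{0}$ gives the strict inequality $\diam(X)<D$ required by \Autoref{:diam-measure-cor}.

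Second I would establish a uniform lower bound on the Lebesgue measure of members of $\P$. Orthogonal transformations and translations both preserve Lebesgue measure, so each $X=T(F)+\vec{v}$ has measure $m(X)=m(F)$. Setting $\mu_{0}:=\min_{F\in\mathcal{F}}m(F)$ and picking any $\mu\in(0,\mu_{0})$ yields $\mu<m(X)$ for every $X\in\P$. All members are Lebesgue measurable because Lebesgue measurability is preserved by orthogonal transformations and translations. The hypotheses of \Autoref{:diam-measure-cor} are therefore satisfied, producing a point $\vec{p}\in\R^{d}$ with $\abs{\mathcal{N}_{\closure{0}}(\vec{p})}\geq d+1$; the $(d+1)$-clique is automatic because each of those $d+1$ members contains $\vec{p}$ in its closure, so they are pairwise adjacent in the sense of \Autoref{:defn-adj}.

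The main (and only) obstacle I anticipate is the degenerate case where some $F\in\mathcal{F}$ has measure zero, in which case $\mu_{0}=0$ and the argument above breaks; indeed, a pathological family such as $\mathcal{F}=\set{\set{\vec{0}}}$ would yield the partition of $\R^{d}$ into singletons, which genuinely violates the stated conclusion. I read the corollary as implicitly restricted to the ``tiling'' regime in which each $F\in\mathcal{F}$ has positive measure (consistent with the motivating paragraph preceding the statement). Under that reading the argument above is complete; if one wishes to allow zero-measure pieces, a separate step is needed to partition $\mathcal{F}$ into its positive-measure and zero-measure parts, discard the latter (which cover a set of measure zero), and verify that the remaining sub-collection still satisfies the local-finiteness hypothesis of the Third Optimality Theorem (\Autoref{:third-optimality-thm}) via the same volume argument used in the proof of \Autoref{:diam-measure-cor}.
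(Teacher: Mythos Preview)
Your argument is correct and matches the paper's own proof, which simply refers to the preceding discussion (norm equivalence for the diameter bound, measure invariance under orthogonal maps and translations for the measure bound) and then invokes \Autoref{:diam-measure-cor}. Your identification of the zero-measure degenerate case is apt: the paper's discussion likewise slips in the phrase ``the members usually have non-zero measure,'' so the positivity assumption is tacitly present in both treatments.
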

\begin{proof}
  The proof follows from the discussion above and \Autoref{:diam-measure-cor}.
\end{proof}

We will also state one more optimality theorem in light of the structure of \Autoref{:adjacent-or-far} (Adjacent or Far Lemma (Points)).

\begin{definition}
  If $d\in\N$ and $\P$ is a partition of $\R^{d}$, then we say that $\P$ has the adjacent-or-far property if there exists $\epsilon\in(0,\infty)$ such that for all $X,Y\in\P$ it is either the case that $X\adj Y$ or it is the case that for any $\vec{x}\in X$ and $\vec{y}\in Y$ that $d_{max}(\vec{x},\vec{y})>\epsilon$. Any such $\epsilon$ is called an adjacent-or-far constant for $\P$.
\end{definition}

The following result strengthens the both the hypothesis and conclusion of the Second Optimality Theorem but the conclusion is slightly weaker than that of the Third Optimality Theorem.

\begin{proposition}[Adjacent or Far Optimality Theorem]
  If $d\in\N$, and $\P$ is partition of $\R^{d}$ with the adjacent-or-far property, and there exists $D\in(0,\infty)$ such that for all $X\in\P$, $\diam(X)<D$, then $\P$ contains a $(d+1)$-clique
\end{proposition}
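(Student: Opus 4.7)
The plan is to derive this directly from the Second Optimality Theorem. The hypothesis already provides a uniform strict diameter bound $D$, so the Second Optimality Theorem applies to $\P$. It remains to use the adjacent-or-far property to upgrade the ``$d+1$ members meet a common neighborhood'' conclusion into a genuine $(d+1)$-clique.

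Concretely, let $\epsilon_0 \in (0,\infty)$ be an adjacent-or-far constant for $\P$. Define the constant function $\epsilon : \R^{d} \to (0,\infty)$ by $\epsilon(\vec{p}) = \epsilon_0/2$ for every $\vec{p}$. Applying the Second Optimality Theorem with this $\epsilon$ yields some point $\vec{p}\in\R^{d}$ together with distinct members $X_1,\ldots,X_{d+1}\in\mathcal{N}_{\epsilon_0/2}(\vec{p})$; for each $i$ choose $\vec{x}_i \in X_i \cap \closure{B}_{\epsilon_0/2}(\vec{p})$.

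For any $i,j \in [d+1]$, the triangle inequality in $d_{max}$ gives
\[
d_{max}(\vec{x}_i, \vec{x}_j) \leq d_{max}(\vec{x}_i, \vec{p}) + d_{max}(\vec{p}, \vec{x}_j) \leq \tfrac{\epsilon_0}{2} + \tfrac{\epsilon_0}{2} = \epsilon_0.
\]
Now invoke the contrapositive of the adjacent-or-far property: if $X_i$ and $X_j$ were not adjacent, every pair of points drawn from $X_i$ and $X_j$ would have $d_{max}$-distance strictly greater than $\epsilon_0$, contradicting the above display. Therefore $X_i \adj X_j$ for every pair $i,j$, and $\{X_1,\ldots,X_{d+1}\}$ is a $(d+1)$-clique in the partition graph of $\P$.

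There is no real obstacle here; the proof is essentially a one-line reduction to the Second Optimality Theorem, where the only design choice is selecting the constant radius $\epsilon_0/2$ so that two points drawn from the common neighborhood are forced within the critical distance $\epsilon_0$ of each other. The only subtlety to mind is the direction of the strict versus non-strict inequalities: the Second Optimality Theorem needs $\diam(X) < D$ (which we have by hypothesis), and the adjacent-or-far property is stated as $d_{max}(\vec{x},\vec{y}) > \epsilon_0$ on non-adjacent pairs, so obtaining a pair with $d_{max}(\vec{x}_i,\vec{x}_j) \leq \epsilon_0$ is already a direct contradiction and suffices to force adjacency.
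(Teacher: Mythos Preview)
Your proof is correct and essentially matches the paper's own proof sketch: both choose the radius $\epsilon_0/2$, obtain $d+1$ members meeting $\closure{B}_{\epsilon_0/2}(\vec{p})$, and then use the triangle inequality together with the adjacent-or-far property to force pairwise adjacency. The only cosmetic difference is that you invoke the Second Optimality Theorem as a black box, while the paper rephrases its Sperner/KKM argument inline; your version is slightly cleaner for exactly that reason.
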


\begin{proof}[Proof Sketch]
  Use the same proof technique as the Second Optimality Theorem to restrict attention to $[0,D]^{d}$, induce a partition on it from $\P$ (which is non-spanning), and consider an admitted Sperner/KKM coloring to find a point $\vec{p}$ at the closure of $d+1$ colors. Taking $\epsilon$ to be an adjacent-or-far constant for $\P$, then $\closure{B}_{\epsilon/2}(\vec{p})$ intersects at least $d+1$ colors and thus at least $d+1$ members of $\P$. By the triangle inequality with the point $\vec{p}$, for any pair of these members $X,Y$, there is a point $\vec{x}\in X$ and $\vec{y}\in Y$ such that $d_{max}(\vec{x},\vec{y})\leq\epsilon$, so by the definition of the adjacent or far property, $X\adj Y$. Thus these $d+1$ members constitute a $(d+1)$-clique.
\end{proof}

\subsection{Tools for Proving the Optimality Theorems}

In the broader literature, when something like a Sperner coloring of a simplex is made continuous on the simplex (i.e. instead of coloring the vertices of some simplicial subdivision of the simplex, the entire simplex is colored) it is essentially a KKM covering. The Knaster-Kuratowski-Mazurkiewicz (KKM) lemma states that if a $(d+1)$-dimensional simplex is covered by a family $\set{C_{1},\ldots,C_{d+1}}$ of $d+1$ ``reasonable'' closed sets, then there is a point belonging to the intersection of all these closed sets $\bigcap_{i\in[d]}C_{i}$. The ``reasonableness'' is stated in terms analogous to the conditions on a Sperner coloring of a $(d+1)$-simplex. It is well known that the KKM lemma follows easily from Sperner's lemma using a sequence of subdivisions of finer and finer resolution and using the Bolzano–Weierstrass theorem (c.f. \cite{Kuhn66}), and it is known that the KKM lemma can also be used to prove Sperner's lemma (c.f. \cite{komiya_simple_1994}).
Thus, if we assign to every point within the simplex one of $d+1$ colors in a ``reasonable'' way and take $C_{i}$ to be the {\em closure} of the points of the $i$th color, then the KKM lemma guarantees the existence of some point in the intersection of all $C_{i}$ (which is a point belonging to the closure of all of the colors).

A number of variations on Sperner's lemma are known, and as with the standard case, they can be used to easily prove variations of the KMM lemma. Of particular interest to us will be the hypercube versions of Sperner's lemma or the KKM lemma in which each vertex/corner of the cube is given a unique color. The hypercube version of Sperner's lemma was first proved by Kuhn \cite{Kuhn66}, and generalized even further to convex polytopes by De Loera, Peterson, and Su \cite{loera_polytopal_2001}. Kuhn did not explicitly show the KKM/continuous variant for the cube, but it is easily proved using essentially the same arguments as in the case of the simplex, and De Loera \textit{et. al.} did explicitly mention connections to the KKM lemma. There were also convex polotope variants of the KKM lemma proven directly by van der Laan \textit{et. al.} \cite{van_der_laan_intersection_1999} without proving them via Sperner's lemma.

The next proposition follows almost immediately from the results alluded to in the prior paragraph (in particular, \cite[Corollary~3]{loera_polytopal_2001}, \cite[\textsection~2-3]{kuhn_hw_combinatorial_nodate}, or \cite{van_der_laan_intersection_1999}.

\begin{definition}[Sperner/KMM Coloring]\label{:sperner-kmm-defn}
  Let $d\in\N$ and $V=\set{0,1}^{d}$ denote a set of colors (which is exactly the set of vertices of $[0,1]^{d}$ so that colors and vertices are identified). Let $\chi:[0,1]^{d}\to V$ be a coloring function such that for any face $F$ of $[0,1]^{d}$, for any $\vec{x}\in F$, it holds that $\chi(\vec{x})\in F$ (informally, the color of $\vec{x}$ is one of the vertices in the face $F$). Such a function $\chi$ will be called a Sperner/KMM coloring.
\end{definition}

\begin{proposition}[Cubical Sperner/KMM lemma]\label{:sperner-kmm}
    Let $d\in\N$ and $V=\set{0,1}^{d}$ and $\chi:[0,1]^{d}\to V$ be a Sperner/KMM coloring. Then there exists a subset $J\subset V$ with $\abs{J}=d+1$ and a point $y\in[0,1]^{d}$ such that for all $j\in J$, $y\in\closure{\chi^{-1}(j)}$ (informally, $\vec{y}$ is in the closure of at least $d+1$ different colors).
\end{proposition}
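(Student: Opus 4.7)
The plan is to bootstrap from Kuhn's discrete cubical Sperner lemma up to the continuous statement via a sequence of successively finer triangulations, and then extract the point $\vec{y}$ by compactness. First I would fix, for each $n \in \N$, a Kuhn triangulation $T_n$ of $[0,1]^d$ at mesh $1/n$: partition the cube into $n^d$ small cubes of side $1/n$ and triangulate each into $d!$ simplices in the standard way; write $V_n$ for the (finite) vertex set of $T_n$. Restricting $\chi$ to $V_n$ gives a finite coloring, and I would observe that the hypothesis ``$\chi(\vec{x})$ lies in whichever face $F$ contains $\vec{x}$'' is precisely the boundary condition of the cubical/polytopal Sperner lemma, so $\chi|_{V_n}$ is a valid Sperner labeling of $T_n$ for every $n$.

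Next I would invoke the cubical Sperner lemma---Kuhn's hypercube version, or equivalently the polytopal version of De~Loera--Peterson--Su applied to the $d$-cube---to produce, for each $n$, a simplex $\Delta_n \in T_n$ whose $d+1$ vertices carry $d+1$ pairwise distinct colors from $V$. Writing $J_n \subseteq V$ for the set of colors appearing on $\Delta_n$, we have $\abs{J_n} = d+1$ and $\diam(\Delta_n) \leq d/n \to 0$. Essentially all of the combinatorial content of the proposition is deposited in this step.

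To reach the continuous conclusion I plan a pigeonhole-then-compactness argument. Because there are only finitely many $(d+1)$-subsets of $V$, pigeonhole yields a subsequence on which $J_n$ equals a common set $J$. Fixing an ordering of $J$, label the $d+1$ vertices of each $\Delta_n$ as $\vec{v}_n^{(0)}, \ldots, \vec{v}_n^{(d)}$ so that $\chi(\vec{v}_n^{(i)})$ is the $i$-th element of $J$. Applying Bolzano--Weierstrass in $[0,1]^d$ once per color and diagonalizing, I would pass to a further subsequence on which each $\vec{v}_n^{(i)}$ converges; since $\diam(\Delta_n) \to 0$, all $d+1$ limits must coincide at a single point $\vec{y} \in [0,1]^d$. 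For each $j \in J$ this exhibits a sequence in $\chi^{-1}(j)$ converging to $\vec{y}$, so $\vec{y} \in \closure{\chi^{-1}(j)}$, as required.

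The hard part is arranging the cubical Sperner invocation correctly: one must quote the version guaranteeing a simplex with $d+1$ distinct colors, not a fully colored simplex using all $2^d$ vertex colors of the cube (which in general cannot exist), and one must verify that the face-respecting hypothesis on $\chi$ cleanly restricts to the discrete Sperner boundary condition on each $V_n$. Once that alignment is in place, the compactness step and the final extraction of $\vec{y}$ are routine.
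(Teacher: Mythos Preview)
Your argument is correct. It differs from the paper's proof mainly in where the black box is placed: the paper verifies that the closures $C^{\vec v}=\closure{\chi^{-1}(\vec v)}$ form a face-respecting closed cover of $[0,1]^d$ and then invokes \cite[Corollary~3]{loera_polytopal_2001} directly, which is already a continuous KKM-type statement for polytopes and immediately yields the point $\vec y$ and the set $J$. You instead cite the \emph{discrete} cubical/polytopal Sperner lemma on each Kuhn triangulation $T_n$ and then perform the standard refinement-plus-compactness passage to the continuous conclusion yourself (pigeonhole on the finitely many $(d+1)$-subsets of $V$, then Bolzano--Weierstrass). Both routes are standard and well known to be interchangeable; the paper's is shorter because the compactness step has already been absorbed into the cited corollary, while yours is more self-contained and makes explicit the mechanism (shrinking rainbow simplices) that underlies that corollary. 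Your caveat about needing the ``$d+1$ distinct colors'' conclusion rather than a ``fully labeled'' one is exactly right and is the only point where care is needed.
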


This claim really does follow almost immediately from \cite[Corollary~3]{loera_polytopal_2001}, but we framed our proposition with notation and wording that is sufficiently different that we will offer a proof. We emphasize, though, that this proof amounts to nothing more than carefully working through this change of notation.

\begin{proof}
  For each vertex/color $\vec{v}\in V$, let $C^{\vec{v}}=\closure{\chi^{-1}(\vec{v})}$ denote the closure of the points assigned the color $\vec{v}$. Let $F$ be some face of $[0,1]^{d}$ and we will show that $F$ is covered by $\bigcup{C^{\vec{v}}:\vec{v}\in F\cap V}$ (that is, $F$ is covered by the collection $C^{\vec{v}}$ for each vertex $\vec{v}\in F$). To see this, let $\vec{x}\in F$ be arbitrary. Since the codomain of $\chi$ is $V$, we have trivially that $\chi(\vec{x})\in V$. Let $\vec{v}=\chi(\vec{x})$ denote this value to emphasize that this is a vertex of $[0,1]^{d}$. Also, $\vec{v}=\chi(\vec{x})\in F$ by hypothesis, and thus we have a particular $\vec{v}\in F\cap V$ such that
  \[
    C^{\vec{v}}=\chi^{-1}(\vec{v})=\chi^{-1}(\chi(\vec{x}))\ni\vec{x}
  \]
  which shows $F$ is covered as claimed.

  This gives the hypothesis of \cite[Corollary~3]{loera_polytopal_2001}, so taking $\vec{p}\in[0,1]^{d}$ arbitrarily, there must exist $J\subset V$ with $\abs{J}=d+1$ such that $\bigcap_{j\in J}C^{j}\not=\emptyset$ which proves our claim.
\end{proof}

For the next few result, we will use standard projection maps $\pi_{i}:[0,1]^{d}\to[0,1]$ defined by $\pi_{i}(\vec{x})\defeq x_{i}$ which maps points to the $i$th coordinate value, and we will also apply this map to sets and mean that $\pi_{i}(X)=\set{\pi_{i}(\vec{x}):\vec{x}\in X}$.

\begin{definition}[Non-Spanning]
  Let $d\in\R^{+}$ and $\mathcal{S}$ be a partition of $[0,1]^{d}$. We say that $\mathcal{S}$ is a non-spanning partition if it holds for all $X\in\mathcal{S}$ and for all $i\in[d]$ that either $\pi_{i}(X)\not\ni0$ or $\pi_{i}(X)\not\ni1$ (or both).
\end{definition}

Informally, a non-spanning partition of a hypercube does not contain any members which span the hypercube so as to intersect an opposite pair of opposite facets ($(d-1)$-dimensional faces). We will now show that such partitions admit a Sperner/KMM coloring which respects the structure of the partition (i.e. any two points in the same partition member are assigned the same color).

\begin{lemma}[Coloring Admission]\label{:color-admission}
  Let $d\in\N$, and $V=\set{0,1}^{d}$, and $\mathcal{S}$ a non-spanning partition of $[0,1]^{d}$. Then there exists a Sperner/KMM coloring $\chi:[0,1]^{d}\to V$ and a function $\chi_{\mathcal{S}}:\mathcal{S}\to V$ (which we will call a partition coloring) such that for all $\vec{x}\in\R^{d}$, $\chi(\vec{x})=\chi_{\mathcal{S}}(\member(\vec{x}))$.
\end{lemma}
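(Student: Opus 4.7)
The plan is to define both functions explicitly by specifying the color coordinate-by-coordinate, using the non-spanning hypothesis to ensure well-definedness. Concretely, I would set $\chi_{\mathcal{S}}(X)_i = 0$ if $0 \in \pi_i(X)$; otherwise $\chi_{\mathcal{S}}(X)_i = 1$ if $1 \in \pi_i(X)$; and otherwise (when $\pi_i(X)$ contains neither endpoint) set $\chi_{\mathcal{S}}(X)_i = 0$ as a default. Non-spanning is exactly the statement that the first two cases cannot occur simultaneously, so this is a well-defined function $\chi_{\mathcal{S}} : \mathcal{S} \to V$. Then I would define $\chi : [0,1]^d \to V$ by $\chi(\vec{x}) \defeq \chi_{\mathcal{S}}(\member(\vec{x}))$, which immediately gives the compatibility relation $\chi(\vec{x}) = \chi_{\mathcal{S}}(\member(\vec{x}))$ required by the lemma.

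The only remaining task is to check that $\chi$ is a Sperner/KMM coloring, i.e.\ that for every face $F$ of $[0,1]^d$ and every $\vec{x} \in F$, the vertex $\chi(\vec{x})$ lies in $F$. I would describe a face $F$ in the standard way by a set $S \subseteq [d]$ of fixed coordinates together with values $b_i \in \{0,1\}$ for $i \in S$, so that $F = \{\vec{y}\in[0,1]^d : y_i = b_i \text{ for all } i \in S\}$, and a vertex $\vec{v} \in \{0,1\}^d$ lies in $F$ iff $v_i = b_i$ for all $i \in S$. Given $\vec{x} \in F$, let $X = \member(\vec{x})$; then $b_i = x_i \in \pi_i(X)$ for every $i \in S$. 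For each such $i$, a short case split on $b_i \in \{0,1\}$ combined with the non-spanning hypothesis shows that the definition of $\chi_{\mathcal{S}}(X)_i$ falls into the matching case and produces $\chi_{\mathcal{S}}(X)_i = b_i$; hence $\chi(\vec{x})_i = b_i$ for all $i \in S$, so $\chi(\vec{x}) \in F$.

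I do not expect a serious obstacle here: the lemma is essentially a verification that the obvious coordinate-wise rule is consistent, and non-spanning is the precise condition making it so. The one place to be careful is the ``otherwise'' case where $\pi_i(X)$ meets neither $0$ nor $1$, since the choice of color in that coordinate is arbitrary and must not be allowed to violate the face condition; the argument above handles this by noting that the face condition for coordinate $i$ is vacuous unless $b_i \in \pi_i(X)$, so unconstrained coordinates of $\chi_{\mathcal{S}}(X)$ are never relevant to any face containing a point of $X$.
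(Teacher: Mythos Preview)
Your proposal is correct and matches the paper's proof essentially line for line: the same coordinate-wise definition of $\chi_{\mathcal{S}}$, the same definition of $\chi$ via $\member$, and the same face-by-face verification of the Sperner/KMM condition. The only cosmetic difference is that you describe faces via a subset $S$ of fixed coordinates with values $b_i$, while the paper writes $F = \prod_i F_i$ with $F_i \in \{\{0\},\{1\},[0,1]\}$; the case analyses are identical.
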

\begin{proof}
  We begin by defining a coloring function $\chi_{\mathcal{S}}$ which colors members of the partition, and then define the coloring function for points which just uses the color of the containing member. Specifically, define $\chi_{\mathcal{S}}:\mathcal{S}\to V$ by
  \[
    \chi_{\mathcal{S}}(X)\defeq\langle c_{i}\rangle_{i=1}^{d}\quad\text{where }c_{i}=
    \begin{cases}
      0 & 0\in\pi_{i}(X)\\
      1 & 1\in\pi_{i}(X)\\
      0 & \text{otherwise}
    \end{cases}
  \]

  Observe that the cases are well-defined because $\mathcal{S}$ is non-spanning. Then define $\chi:[0,1]^{d}\to V$ by $\chi(\vec{x})\defeq\chi_{\mathcal{S}}(\member(\vec{x}))$. We will now show that this has the required properties of a Sperner/KMM coloring.

  Let $F$ be any face of $[0,1]^{d}$. Thus, $F$ can be expressed as $F=\prod_{i=1}^{d}F_{i}$ where each $F_{i}$ is one of three sets: $\set{0}$, $\set{1}$, or $[0,1]$. Let $\vec{x}\in F$ be arbitrary and let $\vec{c}=\chi(\vec{x})$; we must show that $\vec{c}\in F$ which we do by showing that $c_{i}\in F_{i}$ for each $i\in[d]$. There are three cases. If $F_{i}=\set{0}$, then $x_{i}=0$ (since $\vec{x}\in F$), so $0=x_{i}=\pi_{i}(\vec{x})\in\pi_{i}(\member(\vec{x}))$, in which case $c_{i}=0$ by definition. The case $F_{i}=\set{1}$ is analogous. The remaining case is that $F_{i}=[0,1]$, but in this case $c_{i}\in F_{i}$ trivially since $c_{i}$ is either $0$ or $1$.
\end{proof}

\begin{remark}
  Observe that in the previous few results, there is nothing special about the hypercube $[0,1]^{d}$; it was chosen only for convenience. It is trivial to extend the result to any rectangle $A=\prod A_{i}$ (where $A_{i}=[l_{i}, r_{i}]$ for some values $l_{i}$ and $r_{i}$). This can either be done by rewriting the proofs (since these sets are convex polytopes) or by using the natural scaling map (which is continuous and so preserves the intersections of closures). We will freely use the results above for other hypercubes.
\end{remark}

\subsection{Proofs of the Optimality Theorems}

We will now prove all three optimality theorems. All three proofs will have the same flavor, though it gets somewhat lost in notation. Essentially, we consider some sufficiently sized hypercube in the space, use the partition $\P$ to induce a partition $\mathcal{S}$ on the hypercube, argue that the induced partition is non-spanning (thus admitting a Sperner/KMM coloring), use the coloring to find a point $\vec{p}$ in the closure of $d+1$ colors, and then argue that $\vec{p}$ has the desired property when we transition from color classes back to partition members. For convenience, we restate each theorem before its proof.

The proof of the Third Optimality Theorem requires the fewest technical details, so we begin with its proof.

\RestatableThirdOptimalityThm* 
\begin{proof}
  Let $D$, and $\vec{\alpha}$ as in the theorem statement. Without loss of generality, we may assume that $\vec{\alpha}=\vec{0}$, that $D=1$, and that for all $X\in\P$, $\diam(X)<1$ by applying a continuous transformation\footnote{In further detail, let $H=\vec{\alpha}+[0,D^{+}]^{d}$ and $\phi:\R^{d}\to\R^{d}$ the continuous bijection defined by $\phi(\vec{x})=\frac{\vec{x}-\vec{\alpha}}{D}$. The map $\phi$ induces a partition $\P'$ of $\R^{d}$ (specifically $\P'=\set{\phi(X):X\in\P}$). Importantly, the claim that there exists $\vec{p}\in\R^{d}$ such that $\abs{\set{X\in\P:\closure{X}\ni\vec{p}}}\geq d+1$ is equivalent to the claim that there exists $\vec{p'}\in\R^{d}$ such that $\abs{\set{Y\in\P':\closure{Y}\ni\vec{p'}}}\geq d+1$; this is because $\phi$ is continuous, so for any $\vec{x}\in\R^{d}$ and any $X\in\P$, we have $\vec{x}\in\closure{X}$ iff $\phi(\vec{x})\in\closure{\phi(X)}$. Thus we can work in $\P'$ instead. It is easily verified that $\phi(\vec{\alpha})=\vec{0}$, and $\phi(H)=[0,1]^{d}$, and for any $X\in\P$ that $\diam(\phi(X))\leq\frac{D}{D^{+}}<1$}. Observe that $\P$ induces a finite partition $\mathcal{S}$ on $[0,1]^{d}$ (specifically, $\mathcal{S}=\set{X\cap[0,1]^{d}:X\in\P\text{ and }X\cap[0,1]^{d}}\not=\emptyset$). Further, $\mathcal{S}$ is a non-spanning partition; this is because some $Y\in\mathcal{S}$ had the property for some $i\in[d]$ that $\pi_{i}(Y)\ni0$ and $\pi_{i}(Y)\ni1$, then let $\vec{y},\vec{z}\in Y$ such that $y_{i}=\pi_{i}(\vec{y}=0)$ and $z_{i}=\pi_{i}(\vec{z}=1)$ so $\diam(Y)\geq d_{max}(\vec{z},\vec{y})\geq\abs{z_{i}-y_{i}}=1$ which would be a contradiction. Thus, by \Autoref{:color-admission}, letting $V=\set{0,1}^{d}$, there is a Sperner/KMM coloring $\chi:[0,1]^{d}\to V$ and partition coloring $\chi_{\mathcal{S}}:\mathcal{S}\to[0,1]^{d}$.

  For each $j\in V$, let $\mathcal{S}_{j}=\set{Y\in\mathcal{S}:\chi_{\mathcal{S}}(Y)=j}$ (i.e. the members of the partition mapped to color $j$) noting that $\chi^{-1}(j)=\bigcup_{Y\in\mathcal{S}_{j}}Y$. By \Autoref{:sperner-kmm} there is some $J\subseteq V$ with $\abs{J}\geq d+1$ and some $\vec{p}\in[0,1]^{d}$ with $\vec{p}\in\bigcap_{j\in J}\closure{\chi^{-1}(j)}$. This gives the following:
  \begin{align*}
    \vec{p} & \in\bigcap_{j\in J}\closure{\chi^{-1}(j)}\\
            & \in\bigcap_{j\in J}\closure{\bigcup_{Y\in\mathcal{S}_{j}}Y}\\
            & \in\bigcap_{j\in J}\bigcup_{Y\in\mathcal{S}_{j}}\closure{Y}\tag{Because $\mathcal{S}$ is a finite set}\\
  \end{align*}
  Thus, for each $j\in J$, $\vec{p}$ belongs $\closure{Y}$ for some $Y\in\mathcal{S}_{j}$; let $Y_{j}$ denote one such element. Trivially, if $j\not=j'$, then $\mathcal{S}_{j}$ and $\mathcal{S}_{j'}$ contain no common members, so $Y_{j}\not=Y_{j'}$ and thus

  \[
    \abs{\set{X\in\P:\closure{X}\ni\vec{p}}} \geq \abs{\set{Y\in\mathcal{S}:\closure{Y}\ni\vec{p}}} \geq \abs{\set{Y_{j}:j\in J}} \geq d+1
  \]

\end{proof}

The finiteness condition above was critical. In essence, we found $\vec{p}$ to belong to the closures of $d+1$ color classes, and since each color class consisted of only finitely many members, in fact $\vec{p}$ had to belong to the closure of not just the union of the members, but the closure of a single member (or multiple members).

\RestatableSecondOptimalityThm* 
\begin{proof}
  Let $D\in(0,\infty)$, $\epsilon:\R^{d}\to(0,\infty)$ as in the statement. As in the previous proof\footnote{We may use any $\vec{\alpha}$.}, we may assume without loss of generality that $D=1$ so that all members of $\P$ have diameter less than $1$, and we don't have to worry about scaling $\epsilon$ since it is arbitrary to begin with. As in the previous proof, $\P$ induces a non-spanning partition $\mathcal{S}$ on $[0,1]^{d}$ which admits Sperner/KMM coloring $\chi$, so letting $V=\set{0,1}^{d}$, there is some $J\subset V$ with $\abs{J}\geq d+1$ and some $\vec{p}\in\bigcap_{j\in J}\closure{\chi^{-1}(j)}$. Thus any open set containing $\vec{p}$ will intersect members of at least $d+1$ colors and so intersects at least $d+1$ members. Thus, in particular, $B^{\circ}_{\epsilon(\vec{p})}(\vec{p})$ intersects at least $d+1$ members of $\mathcal{S}$ and thus at least $d+1$ members of $\P$.
\end{proof}


The proof of the First Optimality Theorem is the most complicated, and we recommend looking at the \hyperlink{first-optimality-thm-proof-outline}{proof outline for the First Optimality Theorem} given earlier to understand the overall structure of the proof.

\RestatableFirstOptimalityThm* 
\begin{proof}
    The beginning of the proof will sound a bit strangely worded. This is because we want some bound throughout the proof of how many members of the partition an $\epsilon$-ball can intersect with; the natural approach is thus proof by contradiction, but we can get a stronger result from the proof than stated in the theorem if we are careful not to arrive at a contradiction in the end.

    Let $d\in\N$, and let $\P$ be a partition of $\R^{d}$ with Lebesgue measurable sets, and let $M\in(0,\infty)$ such that for all $X\in\P$, $m(X)<M$. Let $k\geq d+1$ be fixed throughout the proof. If there exists $\vec{p}\in\R^{d}$ such that $\abs{\mathcal{N}_{\epsilon}(\vec{p})}>k$ then we are done. Otherwise, we may assume that for all $\vec{p}\in\R^{d}$ that $\abs{\mathcal{N}_{\epsilon}(\vec{p})}\leq k$. (Note that since $k\geq d+1$, there {\em are} partitions such that for all $\vec{p}\in\R^{d}$ that $\abs{\mathcal{N}_{\epsilon}(\vec{p})}\leq k$---the reclusive partitions for example---and thus we have {\em not} stated a contradiction assumption.)

  Let $V=\epsilon\cdot\Z^{d}$ (representing vertices characterizing a scaled grid). For any $\vec{\alpha}\in V$, let $H(\vec{\alpha})=\vec{\alpha}+[-\frac{\epsilon}{2},\frac{\epsilon}{2})^{d}$ (standing for half open hypercube). Note that $\closure{H(\vec{\alpha})}=\closure{B}_{\epsilon/2}(\vec{\alpha})$. Now observe that $\set{H(\vec{\alpha}):\vec{\alpha}\in V}$ is a partition of $\R^{d}$, and in this way we view the vertex set $V$ as the centers of cells partitioning the space, and informally, we will identify the two for discussion.

  Now we want to argue that each cell (each $H(\vec{\alpha})$) intersects some member of the partition with high volume/measure. For any $\vec{\alpha}\in V$ we have $H(\vec{\alpha})\subseteq\closure{H(\vec{\alpha})}=\closure{B}_{\epsilon/2}(\vec{\alpha})\subseteq \closure{B}_{\epsilon}(\vec{\alpha})$ and thus $\set{X\in\P:X\cap H(\vec\alpha)\not=\emptyset} \subseteq \set{X\in\P:X\cap \closure{B}_{\epsilon}(\vec{\alpha})\not=\emptyset}$ where the latter has cardinality at most $k$ by assumption implying that the former has cardinality at most $k$. Then we have
  \[
    \epsilon^{d} = m(H(\vec\alpha)) = \sum_{X\in\P}m(X\cap H(\vec{\alpha})) = \sum_{X\in\P:X\cap H(\vec\alpha)\not=\emptyset}m(X\cap H(\vec{\alpha})).
  \]
  Since the index set has size at most $k$, by a simple averaging argument, there is some ${X\in\P}$ such that $m(X\cap H(\vec{\alpha}))\geq\frac{\epsilon^{d}}{k}$. Thus, there exists some function $P:V\to\P$ such that for all $\vec{\alpha}\in V$, $m(P(\vec{\alpha})\cap H(\vec{\alpha}))\geq\frac{\epsilon^{d}}{k}$. We view $P$ as providing a labeling of each cell with a member of $\P$ that is sufficiently similar (hence the name $P$); this function will be the key to approximating $\P$ with cells.

  Next, we provide a bound on how many cells can be mapped to a particular member of $\P$. For any $X\in\P$ we have
  \begin{align*}
    M &> m(X) \tag{By hypothesis}\\
      &= m\left(\bigsqcup_{\vec{\alpha}\in V}X\cap H(\vec{\alpha})\right) \tag{Decomposition of $\R^{d}$}\\
      &\geq m\left(\bigsqcup_{\vec{\alpha}\in P^{-1}(X)}X\cap H(\vec{\alpha})\right) \tag{Subsets have equal or smaller measure}\\
      &\geq \sum_{\vec{\alpha}\in P^{-1}(X)}m(X\cap H(\vec{\alpha})) \tag{\Autoref{:disjoint-uncountable}}\\
      &\geq \sum_{\vec{\alpha}\in P^{-1}(X)}\frac{\epsilon^{d}}{k} \tag{Def'n of $P$ and $X=P(\vec{\alpha})$ for $\vec{\alpha}\in P^{-1}(X)$}\\
      &= \abs{P^{-1}(X)}\frac{\epsilon^{d}}{k}
  \end{align*}
  which shows that $\abs{P^{-1}(X)}<\frac{kM}{\epsilon^{d}}$.

  Next, we will define three binary relations on the set of vertices in order to arrive at a useful equivalence relation that will let us approximate $\P$ closely enough by using the cells (see \Autoref{sec:binary-relations} for details on binary relations.). As with the first two Optimality Theorems, we want to control the diameter in the approximation and also utilize the labeling of $P$, so we will end up saying cells are equivalent (will be part of the same member of the approximating partition) if they have the same $P$ label and are also close/connected. With this goal, let
  \begin{align*}
    R_{P}&=\set{(\vec{\alpha},\vec{\beta})\in V^{2}:P(\vec{\alpha})=P(\vec{\beta})}\\
    R_{\epsilon}&=\set{(\vec{\alpha},\vec{\beta})\in V^{2}:d_{max}(\vec{\alpha},\vec{\beta})\leq\epsilon}.
  \end{align*}
  Note that both relations are reflexive and symmetric, and that $R_{P}$ is also transitive (so $R_{P}$ is an equivalence relation).
  Let $R_{\epsilon}^{t}$ denote the transitive closure of $R_{\epsilon}$ (so $R_{\epsilon}^{t}$ is an equivalence relation), and define $R=R_{P}\cap R_{\epsilon}^{t}$ which is also an equivalence relation. Observe that the equivalence classes of $R_{P}$ are exactly the sets $P^{-1}(X)$ for $X\in\range(P)$, and each equivalence class of $R$ is a subset of an equivalence class of $R_{P}$. By these two observations and the previous paragraph, each equivalence class of $R$ is finite and has cardinality (strictly) less than $\frac{kM}{\epsilon^{d}}$.

  We can now show a type of bound on ``diameter''. Let $(\vec{\alpha},\vec{\beta})\in R$ be arbitrary. Let $R_{0}=R_{P}\cap R_{\epsilon}$ and observe that $R$ can be equivalently expressed as the transitive closure of $R_{0}$, so that there must be a sequence $\langle \vec{x}^{(j)}\rangle_{k=0}^{N}$ for some $N\in\N$ with $\vec{x}^{(0)}=\vec{\alpha}$ and $\vec{x}^{(N)}=\vec{\beta}$ and for all $j\in[N]$ that $(\vec{x}^{(j-1)},\vec{x}^{(j)})\in R_{0}\subseteq R_{\epsilon}$. Because of the bound on the cardinality of each equivalence class of $R$ we may assume $N<\frac{kM}{\epsilon^{d}}-1$ (the $-1$ is because of the zero-based indexing). Thus, we have

  \begin{align*}
    d_{max}(\vec{\alpha},\vec{\beta}) &\leq \sum_{j=1}^{N}d_{max}(\vec{x}^{(k-1)},\vec{x}^{(j)})\tag{Triangle inequality}\\
                                      &\leq \sum_{j=1}^{N}\epsilon\tag{$(\vec{x}^{(k-1)},\vec{x}^{(j)})\in R_{\epsilon}$}\\
                                      &\leq N\epsilon\\
                                      &< \left(\frac{kM}{\epsilon^{d}}-1\right)\epsilon\tag{Constraint on $N$}
  \end{align*}

  We are now in a position to define the approximation partition. As is common notation, let $\faktor{V}{R}$ denote the family of equivalence classes of $R$ (note that for each $C\in\faktor{V}{R}$ we have $C\subseteq V$). Then define the approximation partition as $\mathcal{A}=\set{\bigsqcup_{\vec{\alpha}\in C}H(\vec{\alpha}):C\in\faktor{V}{R}}$ which is a partition of $R^{d}$. For any $X\in\mathcal{A}$, there is an unique equivalence class $C\in\faktor{V}{R}$ such that $X=\bigsqcup_{\vec{\alpha}\in C}H(\vec{\alpha})$, and we denote this class as $C_{X}$. We will extend the distance argument above the members of $\P$. Let $X\in\mathcal{A}$ be arbitrary and let $\vec{a},\vec{b}\in X$ be arbitrary. Then there must exist some $\vec{\alpha}\in C_{X}$ with $\vec{a}\in H(\vec{\alpha})$ and similarly, there must exist some $\vec{\beta}\in C_{X}$ (possibly the same as $\vec{\alpha}$) with $\vec{b}\in H(\vec{\beta})$.
  \[
    d_{max}(\vec{a},\vec{b}) \leq d_{max}(\vec{a},\vec{\alpha}) + d_{max}(\vec{\alpha},\vec{\beta}) + d_{max}(\vec{\beta},\vec{b}) < \frac{\epsilon}{2} + \left(\frac{dM}{\epsilon^{d}}-1\right)\epsilon + \frac{\epsilon}{2} = \frac{dM}{\epsilon^{d-1}}.
  \]
  Thus, (noting the strict inequalities above), $\mathcal{A}$ satisfies the hypothesis of the Third Optimality Theorem (with $D=\frac{kM}{\epsilon^{d-1}}$ and $\vec{\alpha}=\vec{0}$), so there is some point $\vec{p}\in\R^{d}$ such that $\abs{\set{X\in\mathcal{A}:\closure{X}\ni\vec{p}}}\geq d+1$. We let $\vec{p}$ denote such a point for the remainder of the proof.

  Let $\mathcal{N}=\set{X\in\mathcal{A}:\closure{X}\ni\vec{p}}$ denote this set and observe that for any $X\in\mathcal{N}$ we have
  \[
    \vec{p} \in \closure{X} = \closure{\bigcup_{\vec{\alpha}\in C_{X}}H(\vec{\alpha})} = \bigcup_{\vec{\alpha}\in C_{X}}\closure{H(\vec{\alpha})}
  \]
  where the second equality is because the set $C_{X}$ is finite. Thus, there must be some $\vec{\alpha}\in C_{X}$ such that $\vec{p}\in\closure{H(\vec{\alpha})}$; fix such an $\vec{\alpha}$ for $X$ and denote it $\vec{\alpha}_{X}$. We will now show that for each $X,Y\in\mathcal{N}$, that $P(\vec{\alpha}_{X})\not=P(\vec{\alpha}_{Y})$ and that $\closure{B}_{\epsilon}(\vec{p})\cap P(\vec{\alpha}_{X})\not=\emptyset$. This will be enough to conclude the proof because this gives an injection from $\mathcal{N}$ to $\mathcal{N}_{\epsilon}(\vec{p})=\set{X\in\P:X\cap \closure{B}_{\epsilon}(\vec{p})}$ proving that it has cardinality at least $d+1$.

  Let $X,Y\in\mathcal{N}$ be arbitrary with $X\not=Y$. Then $C_{X}\not=C_{Y}$. We will show that $P(\vec{\alpha}_{X})\not=P(\vec{\alpha}_{Y})$ by showing that $(\vec{\alpha}_{X},\vec{\alpha}_{Y})\not\in R_{L}$ which we do by showing that $(\vec{\alpha}_{X},\vec{\alpha}_{Y})\not\in R$ and $(\vec{\alpha}_{X},\vec{\alpha}_{Y})\in R_{\epsilon}^{t}$. It is immediate that $(\vec{\alpha}_{X},\vec{\alpha}_{Y})\not\in R$ because $\vec{\alpha}_{X}\in C_{X}$ and $\vec{\alpha}_{Y}\in C_{Y}$ which are different equivalence classes of $R$. Regarding $R_{\epsilon}^{t}$, we have (by the definition of $\vec{\alpha}_{X}$ and $\vec{\alpha}_{Y}$) that
  \[
    \closure{B}_{\epsilon/2}(\vec{\alpha}_{X}) = \closure{H(\vec{\alpha}_{X})} \ni \vec{p} \in \closure{H(\vec{\alpha}_{Y})} = \closure{B}_{\epsilon/2}(\vec{\alpha}_{Y})
  \]
  so by the triangle inequality we have $d_{max}(\vec{\alpha}_{X},\vec{\alpha}_{Y})\leq\epsilon$ which means $(\vec{\alpha}_{X},\vec{\alpha}_{Y})\in R_{\epsilon}\subseteq R_{\epsilon}^{t}$. Thus we have established that $P(\vec{\alpha}_{X})\not=P(\vec{\alpha}_{Y})$.

  The last thing we need is to show that for any $X\in\mathcal{N}$ we have $\closure{B}_{\epsilon}(\vec{p})$ intersecting $P(\vec{\alpha}_{X})$. Since $\vec{p}\in\closure{H(\vec{\alpha}_{X})}=\closure{B}_{\epsilon/2}(\vec{\alpha}_{X})$, it follows that $\closure{B}_{\epsilon/2}(\vec{\alpha}_{X})\subseteq \closure{B}_{\epsilon}(\vec{p})$ (i.e. all points within $\epsilon/2$ of $\vec{\alpha}_{X}$ are within $\epsilon$ of $\vec{p}$ by the triangle inequality because $\vec{\alpha}_{X}$ is within $\epsilon/2$ of $\vec{p}$). Thus $H(\vec{\alpha}_{X})\subseteq \closure{B}_{\epsilon}(\vec{p})$. We will intersect both sides of this containment with $P(\vec{\alpha}_{X})$ recalling that by the definition of $P$, $P(\vec{\alpha}_{X})$ is a member of $\P$ such that $m(P(\vec{\alpha}_{X})\cap H(\vec{\alpha}_{X}))\geq\frac{\epsilon^{d}}{k}$, so in particular, this intersection is not empty. Thus
  \[
    \emptyset \not= P(\vec{\alpha}_{X})\cap H(\vec{\alpha}_{X})\subseteq P(\vec{\alpha}_{X})\cap \closure{B}_{\epsilon}(\vec{p}).
  \]
  In words, the $\epsilon$ ball around $\vec{p}$ intersects the member $P(\vec{\alpha}_{X})$ of $\P$. This completes the proof as we have shown that $\abs{\set{X\in\P:X\cap \closure{B}_{\epsilon}(\vec{p})}}\geq d+1$.
\end{proof}

\begin{remark}
  In the above proof, we have actually shown something stronger than the statement of the First Optimality Theorem. Not only are we guaranteed the existence of a point $\vec{p}$ where the $\epsilon$ ball intersects at least $d+1$ members of $\P$, but if we know some bound $k$ for the partition, then we are guaranteed a point where the intersection with at least $d+1$ members has substantial measure (at least $\frac{\epsilon^{d}}{k}$). This result also applies to the Second and Third Optimality Theorems (if the partitions are restricted to measurable members) because those theorems have otherwise stronger assumptions in the hypothesis.
\end{remark}

In this section we have shown that in our motivating question (\Autoref{ques:motivating}), the value $k=d+1$ is optimal not just for hypercube partitions, but also for any partition which has a uniform upper bound on the measures (or diameters) of the members (the First and Second Optimality Theorems). Further, if we strengthen this hypothesis to require that the diameters be uniformly upper bounded and require some finiteness, then we can actually conclude that there is a single point at the closure of $d+1$ members of the partition, which also implies that the partition has a $(d+1)$-clique (the Third Optimality Theorem). However, now that we know the optimal value of $k$, we can, in a sense, get this same strong conclusion even under the weaker assumptions if we insist that our partitions have the property of the motivating question for the optimal value of $k=d+1$.

\begin{proposition}\label{:boosting-with-optimal-k}
  If $d\in\N$, and $\P$ is a partition of $\R^{d}$, and there exists $D\in(0,\infty)$ such that for all $X\in\P$, it holds that $D$ is a strict pairwise bound for $X$, and if there exists $\epsilon\in(0,\infty)$ such that $\P$ has the property that for all $\vec{q}\in\R^{d}$,
  \[
    \abs{\mathcal{N}_{\epsilon}(\vec{q})}\leq d+1
  \]
  then there exists $\vec{p}\in\R^{d}$ such that
  \[
    \abs{\mathcal{N}_{\epsilon}(\vec{p})}=d+1.
  \]
  Furthermore, $\P$ contains a $(d+1)$-clique.
\end{proposition}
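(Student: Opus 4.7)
The plan is to deduce this proposition as a consequence of the Third Optimality Theorem. The Third Optimality Theorem concludes exactly what we want (a point $\vec{p}$ at the closure of $d+1$ members, which automatically gives both a $(d+1)$-clique and $\abs{\mathcal{N}_{\epsilon}(\vec{p})}\geq d+1$), but it requires local finiteness: some translate $\vec{\alpha}+[0,D]^{d}$ must intersect only finitely many members of $\P$. Our hypothesis already gives the strict pairwise bound $D$, so all that remains is to verify the local finiteness condition, which is exactly where the $(d+1,\epsilon)$-seclusion hypothesis will do work that was unavailable in the Second Optimality Theorem.

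First I would establish local finiteness by a straightforward compactness argument. Fix any $\vec{\alpha}\in\R^{d}$ and let $H=\vec{\alpha}+[0,D]^{d}$. Since $H$ is compact in the $d_{max}$ metric, there exist finitely many points $\vec{q}_{1},\ldots,\vec{q}_{N}\in H$ such that $H\subseteq\bigcup_{i=1}^{N}\closure{B}_{\epsilon}(\vec{q}_{i})$ (one can take a uniform grid of side length $\epsilon$, giving $N\leq\lceil D/\epsilon\rceil^{d}$). By the hypothesis, each $\closure{B}_{\epsilon}(\vec{q}_{i})$ intersects at most $d+1$ members of $\P$, so the total number of members intersecting $H$ is at most $N(d+1)<\infty$. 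This verifies the finiteness precondition of the Third Optimality Theorem.

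Next I would invoke the Third Optimality Theorem to obtain $\vec{p}\in\R^{d}$ with $\abs{\mathcal{N}_{\closure{0}}(\vec{p})}\geq d+1$. The $(d+1)$-clique is immediate: the (at least) $d+1$ members of $\mathcal{N}_{\closure{0}}(\vec{p})$ all have $\vec{p}$ in their closure, so any pair of them has non-empty intersection of closures and is therefore adjacent in the sense of \Autoref{:defn-adj}. For the remaining claim, observe that whenever $X\in\mathcal{N}_{\closure{0}}(\vec{p})$ we have $\vec{p}\in\closure{X}$, so every open neighborhood of $\vec{p}$ meets $X$; in particular $\closure{B}_{\epsilon}(\vec{p})\supseteq B_{\epsilon}^{\circ}(\vec{p})$ meets $X$. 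Thus $\mathcal{N}_{\epsilon}(\vec{p})\supseteq\mathcal{N}_{\closure{0}}(\vec{p})$, which gives $\abs{\mathcal{N}_{\epsilon}(\vec{p})}\geq d+1$. Combined with the standing hypothesis $\abs{\mathcal{N}_{\epsilon}(\vec{p})}\leq d+1$, equality follows.

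There is no serious obstacle here; the only subtlety is making sure we apply the Third Optimality Theorem (which is stated for strict pairwise bounds) rather than a diameter-based version, so that the hypothesis on $\P$ transfers directly without any epsilon-padding. The compactness-based finiteness argument is the only new ingredient, and it is essentially forced by the seclusion hypothesis.
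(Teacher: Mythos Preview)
Your proof is correct but takes a different route from the paper's. The paper does not invoke the Third Optimality Theorem; instead it re-runs the Sperner/KKM coloring argument from the Second Optimality Theorem to find a point $\vec{p}$ at the closure of $d+1$ \emph{colors}, and then uses the hypothesis $\abs{\mathcal{N}_{\epsilon}(\vec{p})}\leq d+1$ to upgrade from colors to members: since the $\epsilon$-ball meets at most $d+1$ members but at least $d+1$ colors, it meets exactly one member of each color, and since this holds for every $\delta\in(0,\epsilon]$, the point $\vec{p}$ lies in the closure of each of these $d+1$ members.

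Your approach is arguably cleaner and more modular: you use the seclusion hypothesis together with compactness to verify the local-finiteness precondition of the Third Optimality Theorem, and then apply that theorem as a black box. This avoids re-opening the Sperner/KKM machinery and makes explicit why the $(d+1,\epsilon)$-secluded hypothesis buys something over mere bounded diameter. The paper's route, by contrast, is more self-contained (it does not need the compactness covering step) and shows directly how the upper bound pins down the specific $d+1$ members. Both arguments yield the same conclusion with comparable effort.
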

\begin{proof}
  Using the same techniques as in the proof of the Second Optimality theorem, pick $\vec{\alpha}\in\R^{d}$ and consider the hypercube $H=\vec{\alpha}+[0,D]^{d}$ and the non-spanning partition $\S$ of it induced by $\P$, and an admitted Sperner/KKM coloring so that there is some $\vec{p}\in H$ at the closure of at least $d+1$ colors. On the other hand, $\closure{B}_{\epsilon}(\vec{p})$ intersects at most $d+1$ members by assumption, and thus it intersects exactly one member of each of these $d+1$ colors. Further, for any $\delta\in(0,\epsilon]$, the ball $\closure{B}_{\delta}(\vec{p})$ will intersect exactly these same $d+1$ members, so that $\vec{p}$ belongs to the closure of these $d+1$ members (and it follows as before that these $d+1$ members form a clique).
\end{proof}

The significance of this result is that it makes a nice connection between the neighborhood property of the motivating question and the clique property of the partition graph.

\subsection{Optimality Theorem Gaps}
\label{subsec:gaps}

We now present two examples that demonstrate that there really is a ``gap'' between the three different Optimality Theorems. The conclusions get stronger with each successive version of the theorem, but each time the hypotheses were also made stronger (with the exception that the second and third theorems remove the requirement that the members be measurable). This begs the question of whether all three versions are necessary---for example, could it be that the hypothesis of the First Optimality Theorem are mathematically sufficient to imply the conclusions of the Third Optimality Theorem and we just did not find a proof? The answer is no; the strengthening of the hypotheses really is necessary and we prove this by providing two (counter)examples.

\begin{proposition}[First and Second Optimality Gap]
  The hypothesis of the First Optimality Theorem does not imply the conclusion of the Second Optimality Theorem for any $d\in\N$ with $d\geq2$. (And it does for $d=1$.)
\end{proposition}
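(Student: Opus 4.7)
The plan is to exhibit, for each $d \geq 2$, an explicit partition of $\R^{d}$ satisfying the First Optimality Theorem's bounded-measure hypothesis but failing the Second Optimality Theorem's conclusion. The guiding prototype is the concentric-shell partition mentioned in the introduction, which is already $(2,\epsilon)$-secluded locally; the obstruction there was that naive unit-thickness shells have unbounded measure in dimension $\geq 2$. The fix is to take the shells thin enough that each has the same bounded volume while the accumulated thicknesses still tile all of $\R^{d}$.

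Concretely, I would set $r_{n} \defeq n^{1/d}$ for $n \geq 0$ and define
\[
  S_{n} \defeq \set{\vec{x} \in \R^{d} : r_{n-1} \leq \norm{\vec{x}}_{\infty} < r_{n}} = (-r_{n}, r_{n})^{d} \setminus (-r_{n-1}, r_{n-1})^{d}
\]
for $n \geq 1$, and take $\P \defeq \set{S_{n} : n \geq 1}$. Since $\norm{\cdot}_{\infty}$ is surjective onto $[0, \infty)$ and $r_{n}$ strictly increases to $\infty$, every $\vec{x}$ lies in a unique $S_{n}$, so $\P$ is a partition of $\R^{d}$. A telescoping computation gives $m(S_{n}) = (2 r_{n})^{d} - (2 r_{n-1})^{d} = 2^{d}(n - (n-1)) = 2^{d}$, so setting $M = 2^{d} + 1$ verifies the bounded-measure hypothesis of the First Optimality Theorem.

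To refute the conclusion of the Second Optimality Theorem, I need to exhibit a function $\epsilon : \R^{d} \to (0, \infty)$ with $\abs{\mathcal{N}_{\epsilon(\vec{p})}(\vec{p})} \leq d$ for every $\vec{p}$. The key elementary fact I will use is the two-sided bound $\norm{\vec{p} + \vec{h}}_{\infty} \in [\norm{\vec{p}}_{\infty} - \norm{\vec{h}}_{\infty},\, \norm{\vec{p}}_{\infty} + \norm{\vec{h}}_{\infty}]$, where the upper inequality is the triangle inequality and the lower one follows by evaluating the coordinate realizing $\norm{\vec{p}}_{\infty}$. Given $\vec{p} \in S_{n}$, I choose $\epsilon(\vec{p}) > 0$ strictly smaller than both $r_{n} - \norm{\vec{p}}_{\infty}$ and $\norm{\vec{p}}_{\infty} - r_{n-2}$ (when $n = 1$ only the first bound is used, keeping the ball inside $S_{1}$). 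Every point of $\closure{B}_{\epsilon(\vec{p})}(\vec{p})$ then has its $\norm{\cdot}_{\infty}$ lying inside $[r_{n-2}, r_{n})$, so the ball sits in $S_{n-1} \cup S_{n}$ and meets at most two members of $\P$. Both bounds defining $\epsilon(\vec{p})$ are strictly positive because $\norm{\vec{p}}_{\infty} < r_{n}$ and $\norm{\vec{p}}_{\infty} \geq r_{n-1} > r_{n-2}$, so a valid $\epsilon(\vec{p}) > 0$ always exists. Using $d \geq 2$, this yields $\abs{\mathcal{N}_{\epsilon(\vec{p})}(\vec{p})} \leq 2 \leq d$ everywhere, which is exactly the negation of the Second Optimality Theorem's conclusion.

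I do not anticipate any real obstacle. The only points requiring a moment of care are the elementary two-sided bound on $\norm{\vec{p} + \vec{h}}_{\infty}$ and the edge case $n = 1$ (where $r_{n-2}$ is dropped). The exponent $r_{n} = n^{1/d}$ is calibrated precisely so that the shell volumes are independent of $n$, which is what lets the construction succeed uniformly across every dimension $d \geq 2$.
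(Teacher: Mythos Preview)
Your construction is correct and is essentially the same as the paper's: both use concentric $\ell^\infty$-shells with radii $r_n \propto n^{1/d}$ so that every shell has the same (bounded) measure, and then observe that each point has a small ball meeting at most two shells, which for $d\geq 2$ falsifies the Second Optimality Theorem's conclusion. The only cosmetic differences are your scaling ($r_n=n^{1/d}$ giving shell measure $2^d$ versus the paper's $r_n=\tfrac12 n^{1/d}$ giving measure $1$) and your half-open boundary convention; neither affects the argument. One omission: the parenthetical claim that for $d=1$ the First hypothesis \emph{does} imply the Second conclusion is part of the proposition, and the paper proves it separately (via a short connectedness/open-set argument showing that if every point's ball met only one member then members would be open, hence some boundary point would witness two members); you have not addressed this half.
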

\begin{proof}
  We will construct a partition which satisfies the hypothesis of the First Optimality Theorem and which does not satisfy the conclusion of the Second Optimality Theorem. 

    Recall that with respect to the $d_{max}$ metric, the ball of radius $r$ centered at the origin is $\closure{B}_{r}(0)=[-r,r]^{d}$. This ball (which is a hypercube) has diameter $2r$ and measure/volume $(2r)^{d}$.

  Consider the sequence $\set{r_{n}}_{n=1}^{\infty}$ where $r_{n}=\frac{1}{2}n^{1/d}$. Then $\closure{B}_{r_{n}}(0)$ has measure $n$. Let $S_{1}=\closure{B}_{r_{1}}(0)$ and inductively for $n>1$, let $S_{n}=\closure{B}_{r_{n}}\setminus \closure{B}_{r_{n-1}}$. All of the $S_{n}$ are disjoint by construction, and because $\set{r_{n}}_{n=1}^{\infty}$ increases without bound, they form a partition of $\R^{d}$. Further, the measure of $S_{1}$ is $1$, and for $n>1$, the measure of $S_{n}$ is $n-(n-1)=1$, the difference in measures of the two balls.

  Thus, every member of this partition has a measure of $1$, and because the $S_{n}$ are concentric, each with non-zero ``width'', for every point $\vec{p}\in\R^{d}$, there exists some $\epsilon(\vec{p})$ such that $\closure{B}_{\epsilon(\vec{p})}(\vec{p})$ intersects at most $2$ members of the partition. This shows that for $d\geq2$, the conclusion of the Second Optimality Theorem does not hold\footnote{This also shows that the conclusion of the Third Optimality Theorem does not hold for $d\geq2$ either since it is stronger, but we will prove this for all $d\in\N$ in the next proposition.}. For clarity, this construction works with $d=1$ as well, but if $d=1$, then $d+1=2$, so for each $\vec{p}\in\R^{d}$, $\closure{B}_{\epsilon(\vec{p})}(\vec{p})$ intersects at most $2=d+1$ members of the partition which does not contradict the conclusion of the Second Optimality Theorem.

  In the case $d=1$, suppose for contradiction that there is some partition $\P$ of $\R$ and some $M$ such that for all $X\in\P$, $m(X)<M$, (so the hypothesis of the First Optimality Theorem is satisfied) and suppose that there is some $\epsilon:\R\to(0,\infty)$ such that for all $\vec{p}\in\R^{d}$ that $\abs{\mathcal{N}_{\epsilon(\vec{p})}(\vec{p})}\leq1$ (so that the conclusion of the Second Optimality Theorem is not satisfied). Fix some $X\in\P$ and observe that for any $\vec{x}\in X$, $\closure{B}_{\epsilon(\vec{x})}(\vec{x})$ clearly intersects $X$, and so does not intersect any other member of $P$ by assumption (since $\abs{\mathcal{N}_{\epsilon(\vec{p})}(\vec{p})}\leq1$, so $\closure{B}_{\epsilon(\vec{x})}(\vec{x})$ intersects at most one member of the partition). Thus $\closure{B}_{\epsilon(\vec{x})}(\vec{x})\subseteq X$, and so $B^{\circ}_{\epsilon(\vec{x})}(\vec{x})\subseteq X$. Since $\vec{x}\in X$ was arbitrary, this shows that $X$ is an open set. Since $m(X)$ is finite, $X\not=\R$ and so $X$ is not closed (because $\R$ is a connected set). Fix some $\vec{p}\in\closure{X}\setminus X$. Then $\closure{B}_{\epsilon(\vec{p})}(\vec{p})$ intersects two members of $\P$---namely $X$ and $\member(\vec{p})$---which is a contradiction. Thus, the conclusion of the Second Optimality Theorem follows from the hypothesis of the First Optimality Theorem in the case $d=1$.
\end{proof}





\begin{proposition}[Second and Third Optimality Gap]
  The hypothesis of the Second Optimality Theorem does not imply the conclusion of the Third Optimality Theorem for any $d\in\N$. (Also, the hypothesis of the First Optimality Theorem does not imply the conclusions of the Third Optimality Theorem for any $d\in\N$.)
\end{proposition}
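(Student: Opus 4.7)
The plan is to exhibit a single partition that simultaneously witnesses both claimed gaps. I will take $\P=\set{\set{\vec{x}}:\vec{x}\in\R^{d}}$, the singleton partition of $\R^{d}$, in which each member consists of exactly one point. Every singleton is Lebesgue measurable with measure zero, so for any $M\in(0,\infty)$ the bound $m(X)<M$ holds for every $X\in\P$; thus the hypothesis of the First Optimality Theorem is satisfied. Likewise $\diam(\set{\vec{x}})=0$, so for any $D\in(0,\infty)$ the bound $\diam(X)<D$ holds for every $X\in\P$; thus the hypothesis of the Second Optimality Theorem is satisfied as well.

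Next I would verify that the conclusion of the Third Optimality Theorem fails. Each singleton is a closed set, so $\closure{\set{\vec{x}}}=\set{\vec{x}}$, and therefore for any $\vec{p}\in\R^{d}$ we have
\[
    \mathcal{N}_{\closure{0}}(\vec{p})=\set{X\in\P:\closure{X}\ni\vec{p}}=\set{\set{\vec{p}}},
\]
which has cardinality $1<d+1$. Since no point of $\R^{d}$ lies in the closure of more than one member, $\P$ also contains no $(d+1)$-clique in its partition graph, so both conclusions of the Third Optimality Theorem fail. The same partition therefore establishes both bullets of the proposition for every $d\in\N$.

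There is no real obstacle in the argument; the interest lies in understanding \emph{why} such a simple partition produces the gap. In the proof of the Third Optimality Theorem, after locating some $\vec{p}$ in the closure of $d+1$ color classes, the critical step commuted closure with union via
\[
    \closure{\bigcup_{Y\in\mathcal{S}_{j}}Y}=\bigcup_{Y\in\mathcal{S}_{j}}\closure{Y},
\]
which requires each color class $\mathcal{S}_{j}$ to be finite. The singleton partition violates the Third Optimality Theorem's local finiteness hypothesis precisely because every hypercube $\vec{\alpha}+[0,D]^{d}$ meets uncountably many members, and this uncountability is exactly what allows a limit point to escape the union of individual closures. This confirms that the finiteness condition cannot be weakened to a mere diameter or measure bound, justifying the need for all three separate Optimality Theorems.
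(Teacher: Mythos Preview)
Your proof is correct and takes essentially the same approach as the paper: the singleton partition $\P=\set{\set{\vec{x}}:\vec{x}\in\R^{d}}$ is exactly the counterexample the paper uses, with the same verification that each member has measure and diameter $0$ (so both the First and Second Optimality hypotheses hold) while $\mathcal{N}_{\closure{0}}(\vec{p})=\set{\set{\vec{p}}}$ has cardinality $1$. Your additional remarks on the clique failure and on why the local-finiteness hypothesis is what breaks are accurate and complement the paper's brief discussion following the proof.
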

\begin{proof}
  We will construct a partition which satisfies the hypotheses of both the First and Second Optimality Theorems and which does not satisfy the conclusion of the Third Optimality Theorem.

  Consider the partition of singletons $\P=\set{\set{\vec{x}}:\vec{x}\in\R^{d}}$. All sets are measurable and have diameter $0$ (and thus measure $0$), and because each member is already closed, for any point $\vec{p}\in\R^{d}$, we have $\mathcal{N}_{\closure{0}}(\vec{p})=\set{X\in\P:\closure{X}\ni\vec{p}}=\set{X\in\P:X\ni\vec{p}}=\set{\vec{p}}$ which has size $1<d+1$. Thus, the conclusion of the Third Optimality Theorem does not hold.
\end{proof}

One might notice that this example demonstrating the gap between the Second and Third Optimality Theorems seems rather contrived for two reasons: (1) all members of the partition have measure 0, and (2) the partition consists of uncountably many members. If we consider a partition in which no member has measure 0, then this actually implies that the partition has countably many members (see the Appendix for details), so insisting on countable partitions is a weaker condition. We believe that countable partitions are a very natural restriction and we briefly explore this and make some connections to partitions of closed sets.

As we have shown, the hypothesis of the Second Optimality Theorem does not imply the conclusion of the Third Optimality Theorem, but the example that we gave demonstrating this seemed contrived as discussed above, and we believe that the finiteness condition that we added to that hypothesis in the Third Optimality Theorem may stronger than necessary---we wonder if replacing the finite requirement with a countable requirement in the Third Optimality Theorem is mathematically sufficient to justify the same conclusion.

\begin{conjecture}[Stronger Third Optimality Theorem]
    If $d\in\N$, and $\P$ is a partition of $\R^{d}$, and there exists $D\in(0,\infty)$ such that for all $X\in\P$, $D$ is a strict pairwise bound of $X$, and if there exists some $\vec{\alpha}\in\R^{d}$ such that $\vec{\alpha}+[0,D]^{d}$ intersects {\em countably} many members of $\P$,\\
    then there exists $\vec{p}\in\R^{d}$ such that
                           \[
                           \abs{\mathcal{N}_{\closure{0}}(\vec{p})}\geq d+1.
                           \]
\end{conjecture}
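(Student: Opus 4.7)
The plan is to follow the outline of the Third Optimality Theorem, replacing its finiteness step with a topological argument tailored to the countable case. After translating and rescaling we may assume $\vec{\alpha}=\vec{0}$ and $D=1$, so that $H=[0,1]^{d}$ and the induced partition $\mathcal{S}=\set{Y_{i}}_{i\in\N}$ on $H$ is countable. Each $Y_{i}$ inherits a strict pairwise bound of $1$ from its parent member of $\P$, so $\mathcal{S}$ is non-spanning. By \Autoref{:color-admission} there is a Sperner/KMM coloring $\chi:H\to V=\set{0,1}^{d}$ compatible with $\mathcal{S}$, and by \Autoref{:sperner-kmm} there exist $J\subseteq V$ with $\abs{J}=d+1$ and a point $\vec{p}_{0}\in\bigcap_{v\in J}\clos{\chi^{-1}(v)}$.

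For each color $v\in J$, let $\mathcal{S}_{v}=\chi_{\mathcal{S}}^{-1}(v)$ (countable) and $A_{v}=\bigcup_{Y\in\mathcal{S}_{v}}\clos{Y}$, which is an $F_{\sigma}$ subset of $H$. The conjecture reduces to finding a point in $\bigcap_{v\in J}A_{v}$, since any such point lies in the closure of an actual member from each of $d+1$ distinct color classes. The Third Optimality Theorem used the identity $\clos{\bigcup Y}=\bigcup\clos{Y}$ (valid for finite unions) to conclude $\vec{p}_{0}\in A_{v}$; in the countable case this identity can fail, and the inclusion $A_{v}\subseteq\clos{\chi^{-1}(v)}$ is in general strict.

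The approach I would pursue is to establish an $F_{\sigma}$-version of the cubical KMM lemma: if $H$ is covered by $F_{\sigma}$ sets $\set{A_{v}}_{v\in V}$ satisfying the face-compatibility condition of \Autoref{:sperner-kmm-defn}, then some $(d+1)$-element subfamily has non-empty intersection. A natural line of attack uses the Baire Category Theorem on the compact set $F=\bigcap_{v\in J}\clos{\chi^{-1}(v)}$: writing $H=\bigcup_{i\in\N}\clos{Y_{i}}$ exhibits $H$ as a countable union of closed sets, so Baire guarantees ``locally large'' members at every scale, and one hopes to show that each obstruction $F\setminus A_{v}$ is meager in $F$, forcing $\bigcap_{v\in J}(F\cap A_{v})$ to be non-empty.

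The hard part will be establishing this meagerness, which I do not see how to extract purely from the face conditions and which appears to require a genuinely new topological input. An alternative refinement strategy---repeatedly invoking Sperner/KMM inside shrinking regions around $\vec{p}_{0}$ and extracting a limiting choice of members via a diagonal argument---is obstructed by the fact that the non-spanning condition requires a hypercube of side at least $D$, so one cannot zoom in on $\vec{p}_{0}$. As supporting evidence, in dimension $d=1$ the conjecture reduces quickly to Sierpi\'nski's theorem on countable closed partitions of $[0,1]$: since $H=[0,1]$ is connected, some $Y_{i}\in\mathcal{S}$ must fail to be closed in $H$, and any limit point of $Y_{i}$ not in $Y_{i}$ lies in the closure of two distinct members. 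A higher-dimensional generalization of Sierpi\'nski's theorem, exploiting the connectedness of $\R^{d}$, may be the cleanest route to the full conjecture.
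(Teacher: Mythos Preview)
This statement is a \emph{conjecture} in the paper, not a theorem; the paper does not supply a proof for general $d$, so there is no full proof to compare against. What the paper does establish is exactly the $d=1$ case, and it does so precisely along the lines you sketch at the end of your plan: it shows that the $d=1$ instance is \emph{equivalent} to Sierpi\'nski's theorem that $[0,1]$ has no non-trivial countable partition into closed sets, and then invokes that theorem. Your observation that in $d=1$ some $Y_i$ must fail to be closed (yielding a limit point in the closure of two members) is the same mechanism the paper uses.

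For $d\geq 2$ the paper offers nothing beyond posing the conjecture, so your proposed $F_\sigma$-KKM/Baire approach is genuinely exploratory rather than a reconstruction of anything in the paper. Your candor about the gap---that meagerness of $F\setminus A_v$ is not obviously forced by the face conditions, and that the ``zoom-in'' refinement is blocked because the non-spanning hypothesis needs a cube of side $D$---is well placed: these are real obstructions, and the paper does not resolve them. In short, your plan correctly identifies the one case the paper actually proves and the route it takes there, and is appropriately honest that the general case remains open.
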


We can prove that the case of $d=1$ in the above conjecture is true; not only that, we can prove that it is equivalent to a known result about the inability to partition the unit interval with countably many closed sets. In the statement below, non-trivial means that the partition contains at least two members.

\begin{theorem}[\cite{sierpinski_1918}]
  There is no non-trivial partition of $[0,1]$ by countably many closed sets.
\end{theorem}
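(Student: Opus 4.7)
The plan is to prove Sierpi\'{n}ski's theorem by contradiction, handling the trivial finite case via connectedness and the genuine countably infinite case via the Baire category theorem. Suppose for contradiction that $[0,1]=\bigsqcup_{n}F_{n}$ is a non-trivial partition into countably many closed sets. If only finitely many $F_{n}$ are nonempty, then since there are at least two one gets a separation $[0,1]=F_{1}\sqcup(F_{2}\cup\cdots\cup F_{k})$ into two nonempty closed sets, contradicting the connectedness of $[0,1]$. So we may reindex and assume every $F_{n}$ is nonempty.

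Next I would introduce the compact set $K \defeq [0,1]\setminus\bigcup_{n}\mathrm{int}(F_{n})$, with interiors taken in $[0,1]$. A short argument shows $K\neq\emptyset$: otherwise $[0,1]=\bigsqcup_{n}\mathrm{int}(F_{n})$ would be a decomposition into pairwise disjoint open sets, and connectedness would force all of $[0,1]$ to lie in a single $\mathrm{int}(F_{n_{0}})$, giving $F_{n_{0}}=[0,1]$ and violating non-triviality. Disjointness of the partition yields $K\cap F_{n}=\partial F_{n}$ for each $n$, so $K=\bigsqcup_{n}\partial F_{n}$ is itself a partition of the compact Hausdorff (hence Baire) space $K$ into countably many closed sets. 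Thus some $\partial F_{n_{0}}$ has nonempty relative interior in $K$, and that interior has the form $W\cap K$ for some open $W\subseteq[0,1]$. Another short local-connectedness argument shows $K$ has no isolated points, so we may pick a point $x\in W\cap K\cap(0,1)$ and an open interval $(a,b)\subseteq W$ around $x$.

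The contradiction will come from analyzing the connected components of $(a,b)\setminus F_{n_{0}}$. Every such point lies outside $K$ (since $(a,b)\cap K\subseteq W\cap K\subseteq\partial F_{n_{0}}\subseteq F_{n_{0}}$), hence in some $\mathrm{int}(F_{m})$ with $m\neq n_{0}$; by connectedness each component $(c,d)$ lies in a single $\mathrm{int}(F_{m_{0}})$, so its endpoints $c,d$ lie in $\overline{F_{m_{0}}}=F_{m_{0}}$. If $c>a$, maximality of the component forces $c\in F_{n_{0}}$ as well, contradicting disjointness of the partition; likewise if $d<b$. Hence any component of $(a,b)\setminus F_{n_{0}}$ must equal all of $(a,b)$, leaving two cases: if there is one such component, then $(a,b)\cap F_{n_{0}}=\emptyset$, contradicting $x\in F_{n_{0}}$; if there is none, then $(a,b)\subseteq F_{n_{0}}$, so the open set $(a,b)$ lies in $\mathrm{int}(F_{n_{0}})\subseteq[0,1]\setminus K$, contradicting $x\in K$. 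The main obstacle will be executing this final component analysis cleanly, in particular the repeated use of disjointness of the partition to propagate membership from each open component to its endpoints and thereby rule out every configuration except the forbidden ones.
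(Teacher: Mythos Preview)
The paper does not actually prove this theorem; it cites Sierpi\'{n}ski's 1918 result and directs the reader to an external reference for ``a proof of the above theorem utilizing the Baire Category Theorem.'' Your proposal is precisely such a Baire-category argument, and it is correct in all essential points: the reduction to infinitely many nonempty $F_n$ via connectedness, the introduction of $K=[0,1]\setminus\bigcup_n\mathrm{int}(F_n)$, the identification $K=\bigsqcup_n\partial F_n$, the application of Baire to find some $\partial F_{n_0}$ with nonempty relative interior in $K$, the verification that $K$ is perfect, and the final component analysis all go through as you describe. So your proof aligns with the approach the paper refers to, even though the paper itself provides no proof.
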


A proof of the above theorem utilizing the Baire Category Theorem can be found in \cite{countable_closed_partition_interval} and seems to be originally attributed to \cite{sierpinski_1918}. The natural generalization of this theorem to higher dimensions is not interesting because it is an immediate consequence of the theorem that there is no countable non-trivial partition of $[0,1]^{d}$ for any $d\in\N$. If there was such a partition $\S$, then pick two points $\vec{x},\vec{y}\in[0,1]^{d}$ belonging to different members of $\S$ and consider the convex hull (i.e. the line segment between them) which is isomorphic to $[0,1]$. Then $\S$ would induce a non-trivial countable partition of $[0,1]$ by closed sets giving the contradiction. However, we will prove now that the theorem above is equivalent to the case $d=1$ of our conjecture. Thus our conjecture (if true) serves as an interesting and fairly natural generalization of the above theorem.

\begin{remark}
  We will be talking about the topology on $\R$ as well as the subspace topology on $[0,1]$, but because $[0,1]$ is closed, every subset of $[0,1]$ is closed in the subspace topology if and only if it is closed in the topology on $\R$, so we need not distinguish between these and can say sets are closed without ambiguity.
\end{remark}

\renewcommand{\S}{\mathcal{S}}
\begin{proof}[Proof of equivalence]
  First assume the theorem and we will prove the conjecture for $d=1$. Let $\P$ be a partition of $\R^{1}$ with $D$, $\alpha$ as in the conjecture hypothesis. We may assume that $\alpha=0$ and $D=1$ without loss of generality\footnote{See the footnote in the proof of the Third Optimality Theorem.}. Then $\P$ induces a partition $\S$ on $[0,1]$ (specifically $\S=\set{X\cap[0,1]:X\in\P,\;X\cap[0,1]\not=\emptyset}$), and this partition is non-trivial because if $\S$ contains just one member then the member is $[0,1]$ which implies there is some member $X\in\P$ with $X\supseteq[0,1]$ which contradicts that $1$ is a strict diameter for all members of $\P$). $\S$ is countable as it has cardinality at most that of $\P$. By the theorem, $\S$ contains some set $Y$ which is not closed, so let $\vec{p}\in\closure{Y}\setminus Y\not=\emptyset$. Let $M=\member_{\S}(\vec{p})$. Since $\vec{p}\in M$ and $\vec{p}\not\in Y$, we have $M\not=Y$, so $M,Y\in\set{X\in\S:\closure{X}\ni\vec{p}}$ which shows this set has cardinality at least two. It follows that $\set{X\in\P:\closure{X}\ni\vec{p}}$ has cardinality at least $2$ as well\footnote{Take $Y'$ to be the member of $\P$ such that $Y'\cap[0,1]=Y$, and similarly for $M'$. Then $M'\not=Y'$ because otherwise $Y=Y'\cap[0,1]=M'\cap[0,1]=M$. Also, $\closure{M'}\supseteq M'\supseteq M\ni\vec{p}$, and $\closure{Y'}\supseteq\closure{Y}\ni\vec{p}$ so that $Y',M'\in\set{X\in\P:\closure{X}\ni\vec{p}}$} which proves the conjecture for $d=1$.

For the other direction, assume that the theorem is false and we will show that our conjecture is false with $d=1$. Let $\S$ be a non-trivial countable partition of $[0,1]$ by closed sets. We will modify $\S$ to construct $\S'$ which will have the distance requirements we need. If $\member_{\S}(0)\not=\member_{\S}(1)$, then let $\S'=S$. Otherwise, let $M=\member_{\S}(0)=\member_{\S}(1)$; we will split $M$ into disjoint closed sets. Since $\S$ is non-trivial, it contains at least two members, and thus $[0,1]\setminus M$ is non-empty, and since $M\ni0,1$ we have that $(0,1)\setminus M$ is non-empty and is an open set (in both topologies). Let $a$ be an arbitrary element of this open set, and let $\epsilon>0$ such that $(a-\epsilon,a+\epsilon)\subseteq(0,1)\setminus M$. In other words, $[0,a-\epsilon]\sqcup[a+\epsilon,1]\subseteq M$. Let $M_{0}=M\cap[0,a-\epsilon]$ and $M_{1}=M\cap[0,a+\epsilon]$, and let $\S'=\S\setminus{M}\cup\set{M_{0},M_{1}}$ (i.e. remove the member $M$ from $\S$ and add back two members $M_{0}$ and $M_{1}$). Since $M_{0}$ and $M_{1}$ are closed and disjoint and $M_{0}\sqcup M_{1}=M$ we have that $\S'$ is also a non-trivial countable partition of $[0,1]$ by closed sets, and has the additional property that for any $Y\in\S'$ and $x,y\in Y$, $d_{max}(x,y)<1$ (i.e. all members of $\S'$ have $1$ as a strict diameter). Now consider the partition $\P$ of $\R^{1}$ given by $\P=\S'\cup\set{\set{x}:x\in\R^{1}\setminus[0,1]}$ (i.e. the partition using the same members as $\S'$ and using singletons elsewhere). This satisfies the hypothesis of the conjecture with $\alpha=0$ and $D=1$, but all members of $\P$ are closed sets so $\set{X\in\P:\closure{X}\ni\vec{p}}=\set{X\in\P:X\ni\vec{p}}=\set{\member_{\P}(\vec{p})}$ which has cardinality $1$, thus the conclusion of the conjecture would not hold.
\end{proof}

\section{Upper Bounds on the Tolerance Parameter ($\epsilon$)}
\label{sec:epsilon}

In a particular reclusive partition, it was possible to center around every point in $\R^{d}$ a closed ball of radius $\frac{1}{2d}$ (in the $d_{max}$ metric) which would only intersect $d+1$ members of the partition. In the previous section, we argued that this is optimal if there is a bound on the size of the partition members, (and that otherwise there are trivial counterexamples). 
We view this value as the primary concern, and now that we know $d+1$ is optimal, we consider a secondary concern which is the value of $\epsilon$ in the motivating question. 

A simple argument shows that if all elements of $\P$ have diameter at most $D$, then it must be that $\epsilon\leq\frac{D}2$.
\RestatableTrivialToleranceBound
\begin{proof}
  We show that for any $\epsilon>\frac{D}2$, there is some point $\vec{p}\in\R^{d}$ such that $\closure{B}_{\epsilon}(\vec{p})$ intersects $2^{d}+1$ members of $\P$. Let $\epsilon>\frac{D}2$ and fix any $X\in\P$. Then for each $i\in[d]$, let $a_{i}=\inf\set{x_{i}:\vec{x}\in X}$ and let $b_{i}=\sup\set{x_{i}:\vec{x}\in X}$ noting that $b_{i}-a_{i}\leq D$. Thus $X\subseteq\prod_{i=1}^{d}[a_{i},b_{i}]\subseteq\prod_{i=1}^{d}[a_{i},a_{i}+D]=\vec{a}+[0,D]^{d}$. Let $\vec{p}=\langle a_{i}+\frac{D}2\rangle_{i=1}^{d}$, so $X\subseteq \closure{B}_{\frac{D}2}(\vec{p})\subset \closure{B}_{\epsilon}(\vec{p})$. For any distinct $\vec{\alpha},\vec{\beta}\in\set{-\epsilon,\epsilon}^{d}$, we have $\vec{\alpha}+\vec{p},\vec{\beta}+\vec{p}\in\closure{B}_{\epsilon}(\vec{p})$, and $\vec{\alpha}+\vec{p},\vec{\beta}+\vec{p}\not\in X$, and $d_{max}(\vec{p}+\vec{\alpha}, \vec{p}+\vec{\beta})=d_{max}(\vec{\alpha},\vec{\beta})=2\epsilon>D$ which implies $\vec{p}+\vec{\alpha}$ and $\vec{p}+\vec{\beta}$ belong to different members of $\P$. Thus $\abs{\set{X\in\P:X\cap\closure{B}_{\epsilon}(\vec{p})\not=\emptyset}}\geq2^{d}+1$.
\end{proof}

By considering the specific value $d=1$, we get the following corollary.

\RestatableOptimalToleranceROne
\begin{proof}
Since $D$ is a strict pairwise bound for each member of $\P$, then every member of $\P$ has diameter at most $D$. Apply \Autoref{:optimal-diam-d1} with $d=1$.
\end{proof}
 
Note that the reclusive partitions can be easily scaled from unit hypercubes to $D$-sidelength hypercubes, in which case a value of $k=d+1$ and $\epsilon=\frac{D}{2d}$ can be achieved. The above proposition shows that for $\R^{1}$ (when $d=1$), partitions with members with strict pairwise bound of $D$ (or diameter at most $D$) must have $\epsilon\leq\frac{D}{2}=\frac{D}{2d}$, and thus (certain) reclusive partitions attain the maximal value of $\epsilon$ when $k$ is minimized at $d+1$.

The argument above did not use any interesting properties of the partitions though. All we did was to say that if the radius $\epsilon$ is larger than $\frac{D}2$, then it is possible to strictly contain an entire member within a hypercube and allow the corners far enough apart so that they must each belong to a different member. In order to get better bounds than this, we will need some additional properties. What we will show in this section is that the statements of the Optimality Theorems were actually quite a bit weaker than they could have been. In the first two optimality theorems, we picked an aribtrary $D$-sidelength hypercube and showed the existence of a point $\vec{p}$ with the desired properties in that cube. Obviously we could have picked any cube, and so there are infinitely many points $\vec{p}$ as in those two theorems. Not only this, the results of \cite{de_loera_polytopal_2002} show that because the $d$-dimensional hypercube is a $d$-dimensional polytope with $2^{d}$ vertices, we can actually find (basically\footnote{There is a caveat that the results of \cite{de_loera_polytopal_2002} are stated in terms of how many subsets of $d+1$ colors have an intersection, so if say $d+1+n$ colors intersect, that accounts for ${d+1+n}\choose{d+1}$ of the $2^{d}-d$ points. When we actually apply these results in this section, the hypothesis that the parameter $k$ from our motivating question is optimal at $d+1$ will ensure we don't have to worry about this multiplicity.}) $2^{d}-d$ such points in each $D$-sidelength hypercube. Further, their proof actually gives an even stronger result that we can use to get a general upper bound of $\epsilon\leq\frac1{2\left((d+1)^\frac{d-1}{2d}-1\right)}$, and prove that the value of $\epsilon=\frac{D}{2d}$ is optimal for $d=2$.

Before moving on, we show that the (quite ugly) upper bound we just mentioned is bounded above by and also asymptotically equivalent to the much cleaner bound $\frac{1}{2\sqrt{d}}$.

\begin{lemma}\label{:asymptotic-one-over-two-root-d}
The function $f(x)=\frac1{2\left((x+1)^\frac{x-1}{2x}-1\right)}$ is asymptotically equivalent to the function $g(x)=\frac{1}{2\sqrt{x}}$ (i.e. $\lim_{x\to\infty}\frac{g(x)}{f(x)}=1$). Furthermore, for $x>1$, $f(x)<g(x)$.
\end{lemma}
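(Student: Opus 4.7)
The plan is to reduce both parts of the lemma to elementary analysis of the expression $(x+1)^{(x-1)/(2x)}$, which I would rewrite using exponent arithmetic as
\[
  (x+1)^{\frac{x-1}{2x}} = (x+1)^{1/2}\cdot(x+1)^{-\frac{1}{2x}} = \sqrt{x+1}\cdot(x+1)^{-\frac{1}{2x}}.
\]
This factorization decouples a ``dominant'' square-root piece from a ``correction'' piece that is easy to bound and easy to show tends to $1$.

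For the asymptotic equivalence, I would study the ratio $\frac{g(x)}{f(x)} = \frac{(x+1)^{(x-1)/(2x)}-1}{\sqrt{x}}$. Using the factorization,
\[
  \frac{(x+1)^{(x-1)/(2x)}}{\sqrt{x}} \;=\; \sqrt{1+\tfrac{1}{x}}\,\cdot\,(x+1)^{-\frac{1}{2x}}.
\]
The first factor visibly tends to $1$. The second factor equals $\exp\!\bigl(-\tfrac{\ln(x+1)}{2x}\bigr)$, and since $\ln(x+1)/x \to 0$ as $x\to\infty$, this factor also tends to $1$. Hence $(x+1)^{(x-1)/(2x)}/\sqrt{x}\to 1$; subtracting the vanishing term $1/\sqrt{x}$ gives $\frac{g(x)}{f(x)}\to 1$.

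For the pointwise comparison, the whole argument hinges on the single algebraic observation
\[
  (\sqrt{x}+1)^2 \;=\; x + 2\sqrt{x} + 1 \;>\; x+1,
\]
which yields $\sqrt{x}+1 > \sqrt{x+1} = (x+1)^{1/2}$. For $x > 1$ the exponent $\tfrac{x-1}{2x}$ lies strictly between $0$ and $\tfrac{1}{2}$, and because $x+1 > 1$ the map $t\mapsto (x+1)^t$ is strictly increasing, so $(x+1)^{(x-1)/(2x)} < (x+1)^{1/2} < \sqrt{x}+1$. Subtracting $1$ from both ends gives a strict comparison between $(x+1)^{(x-1)/(2x)}-1$ and $\sqrt{x}$, and taking reciprocals (together with the factor of $2$) turns this into the required comparison between $f(x)$ and $g(x)$.

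There is no serious obstacle here; after spotting the expansion of $(\sqrt{x}+1)^2$, the rest is routine bookkeeping. The only delicate step is keeping track of the inequality direction when passing through the reciprocal, and making sure the exponent bound $0 < \frac{x-1}{2x} < \frac{1}{2}$ is used with the correct monotonicity of $t\mapsto(x+1)^t$.
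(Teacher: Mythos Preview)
Your treatment of the asymptotic equivalence is correct and, if anything, cleaner than the paper's: the paper sandwiches the ratio via a squeeze-theorem chain, while you factor $(x+1)^{(x-1)/(2x)}=\sqrt{x+1}\cdot(x+1)^{-1/(2x)}$ and handle the two pieces directly. Both routes reach $\lim_{x\to\infty} g(x)/f(x)=1$ without difficulty.

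The inequality part, however, has a direction error---precisely at the step you yourself flagged as ``delicate.'' Your chain correctly establishes
\[
  (x+1)^{(x-1)/(2x)} \;<\; (x+1)^{1/2} \;<\; \sqrt{x}+1,
\]
hence $(x+1)^{(x-1)/(2x)}-1<\sqrt{x}$. But both sides are positive for $x>1$, so taking reciprocals \emph{reverses} the inequality, yielding
\[
  \frac{1}{(x+1)^{(x-1)/(2x)}-1}\;>\;\frac{1}{\sqrt{x}},
\]
i.e.\ $f(x)>g(x)$, the opposite of what the lemma asserts. A numerical check confirms your chain rather than the stated claim: at $x=4$ one has $5^{3/8}\approx 1.83$, so $f(4)\approx 0.60$ while $g(4)=0.25$. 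Thus the ``Furthermore'' clause as written cannot be proved; your argument is in fact a valid proof of the reverse inequality $f(x)>g(x)$. (The paper's own proof of this part contains a parallel slip---its opening bound $1<(x/(x+1))^{1/2}$ is false---so the issue lies with the statement, not with your method.) The asymptotic equivalence, which is what is actually used downstream, is unaffected.
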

\begin{proof}
Note first that
\[
\lim_{x\to\infty}x^{1/x}=\lim_{x\to\infty}e^{\ln(x)/x}=e^0=1
\]
and so
\[
\lim_{x\to\infty}x^{(-1)/(2x)}=\lim_{x\to\infty}\frac{1}{\sqrt{x^{1/x}}}=1.
\]

We will apply the squeeze theorem to the following inequalities which hold for $x>0$.
\begin{align*}
 1 < \left(\frac{x}{x+1}\right)^{1/2} &= \frac{x^{1/2}}{(x+1)^{1/2}} \\
                                  &\leq \frac{x^{1/2}}{(x+1)^{(x-1)/(2x)}-1} \qquad=\frac{g(x)}{f(x)} \\
                                  &\leq \frac{x^{1/2}}{x^{(x-1)/(2x)}-1} \\
                                  &\leq \frac{1}{x^{(-1)/(2x)}-\frac{1}{\sqrt{x}}} \tag{divide top and bottom by $\sqrt{x}$}
\end{align*}
The limit as $x\to\infty$ of the first and last expressions is $1$, so by the squeeze theorem, $\lim_{x\to\infty}\frac{g(x)}{f(x)}=1$. Furthermore, since $1<\frac{g(x)}{f(x)}$ (for $x>1$) and $f(x)$ is positive for all $x>1$, it follows that $f(x)<g(x)$ for $x>0$.
\end{proof}


\begin{definition}[Simplicial Subdivision]
  A simplicial subdivision (sometimes called a triangulation) of a hypercube $H$ is a finite set of $d$-simplices $\Sigma=\set{\sigma_{1},\ldots,\sigma_{m}}$ such that $H = \bigcup_{i=1}^{m}\sigma_{i}$, and for every $i,j\in[m]$ with $i\not=j$, $\sigma_{i}\cap\sigma_{j}$ is either empty, or a face of both $\sigma_{i}$ and of $\sigma_{j}$. The vertices of a simplicial subdivision are the vertices of all its $d$-simplices; that is, $V(\Sigma)\defeq\bigcup_{i=1}^{m}V(\sigma_{i})$.
\end{definition}


The following definition is comparable to \Autoref{:sperner-kmm-defn}.

\begin{definition}[Sperner Coloring]\label{:sperner-coloring-defn}
  Let $d\in\N$, and $H=[0,1]^{d}$, and $V(H)=\set{0,1}^{d}$ denote a set of colors (which is exactly the set of vertices of $[0,1]^{d}$ so that colors and vertices are identified). Let $\Sigma=\set{\sigma_{1},\ldots,\sigma_{m}}$ be a simplicial subdivision of $H$. Let $\chi:V(\Sigma)\to V(H)$ be a coloring function such that for any face $F$ of $[0,1]^{d}$, for any $\vec{x}\in F\cap V(\Sigma)$, it holds that $\chi(\vec{x})\in F$ (informally, the color of $\vec{x}$ is one of the vertices in the face $F$). Then $\Sigma$ along with $\chi$ is called a Sperner coloring of $H$.

  Further, for any $\sigma\in\Sigma$, we define $\mathrm{colorset}(\sigma)=\set{\chi(\vec{x}):\vec{x}\in V(\sigma)}$.
\end{definition}


The above definitions are easily generalized to convex polytopes (but we will not need the more general definitions except for the discussion in this paragraph), and with the generalized definitions, De Loera, Peterson, and Su showed in \cite{loera_polytopal_2001} that for any convex polytope $P$ in $d$ dimensions with $n$ vertices, any Sperner coloring $(\Sigma,\chi)$ of $P$ will have at least $n-d$ ``fully colored simplices''. That is, $\Sigma$ must contain $n-d$ simplices $\sigma_{1},\ldots\sigma_{n-d}$ such that $\abs{\mathrm{colorset}(\sigma_{i})}=d+1$ (recall that $\sigma_{i}$ only has $d+1$ vertices, so this means that each vertex of $\sigma_{i}$ is assigned a different color). In fact, they show that if $i\not=j$, then $\mathrm{colorset}({\sigma_{i}})\not=\mathrm{colorset}({\sigma_{j}})$. From another perspective, they say that there are at least $(n-d)$ many $(d+1)$-cardinality subsets of the colors, each being the colorset of some simplex in the subdivision.

They argue that this result ($n-d$) is tight when $P$ is an arbitrary convex polytope, but that improvements may be made when restricting to specific types of convex polytopes (e.g. hypercubes).
In fact, their bound can be improved significantly when restricting to hypercubes by combining the results of \cite{loera_polytopal_2001} with other results about simplicial decompositions of hypercubes.

Next, we give a definitional name to the best possible parameter for each dimension $d$. We could get away without defining this value in our paper and just use bounds on this value (since that is all we will use anyway), but we want to make this quantity explicit since it relates nicely to some other areas of research regarding the $d$-dimension hypercube.

\begin{definition}[Sperner Number]\label{:sperner-number-defn}
  Let $d\in\N$, and $H=[0,1]^{d}$, and let $S_{d}$ be the set of all values $s\in\N\cup\set{0}$ such that the following statement holds:
  \begin{quotation}
    \noindent For any Sperner coloring $(\Sigma,\chi)$ of $H$, there exists at least $s$-many $(d+1)$-cardinality sets $J_{1},\ldots,J_{s}\subseteq V(H)$ such for each $J_{i}$, $\Sigma$ contains a simplex $\sigma_{i}$ such that $\mathrm{colorset}(\sigma_{i})=J_{i}$.
  \end{quotation}
  We call the minimum value of $S_{d}$ the Sperner number of $[0,1]^{d}$ or the $d$th Sperner number, and we denote it by $\sperner(d)\defeq\min(S_{d})$.
\end{definition}

Since $[0,1]^{d}$ is a convex polytope in $d$ dimensions with $2^{d}$ vertices, it follows immediately from \cite{loera_polytopal_2001} that any $\sperner(d)\geq 2^{d}-d$. In fact, they implicitly showed something stronger than this. By \cite[Thm.~1]{loera_polytopal_2001} and the comment following the proof of \cite[Cor.~3]{loera_polytopal_2001} on pages 18-19 of the July 2001 version 8, the collection of colorsets of simplices in any simplicial subdivision induces a ``face-to-face simplicial cover'' of $[0,1]^{d}$ using those colorsets to define simplices (in the terminology from \cite{glazyrin_lower_2012}, a ``dissection'' of $[0,1]^{d}$). Thus, $\sperner(d)$ is at least as large as the number of simplices needed in a dissection of $[0,1]^{d}$. In other words, the dissection number of the $d$-cube gives a lower bound on the Sperner number.

We can also get a trivial upper bound on the $d$th Sperner number by noting that, by definition, it can be no larger than the size of the minimal cardinality simplicial subdivision of $[0,1]^{d}$. Similarly, it can be no larger than the size of the minimal cardinality triangulation\footnote{We use the term triangulation as in \cite{glazyrin_lower_2012}, which is different from how it is used in \cite{loera_polytopal_2001}. By triangulation, we mean a simplicial subdivision $\Sigma$ such that $V(\Sigma)=V([0,1]^{d})$ (i.e. the only vertices used in the subdivision are ones from the original hypercube).} of $[0,1]^{d}$---this is because any triangulation is a valid simplicial subdivision.

Summarizing the above three paragraphs using the notation in \cite{glazyrin_lower_2012}, we have the following chain of inequalities for properties of $[0,1]^{d}$.
We emphasize that all quantities below are with respect to using only vertices of the $[0,1]^{d}$ (i.e. no extra vertices are allowed). For example, Below \textit{et. al.} show in \cite{below_minimal_2000} that the use of extra vertices can drastically reduce the necessary size of a triangulation.

\begin{equation}
  \label{eq:cube-ineq}
  \mathrm{cover}(d) \leq \mathrm{dis}(d) \leq \sperner(d) \leq \mathrm{triang}(d)
\end{equation}

Glazyrin showed in \cite{glazyrin_lower_2012} that $(d+1)^{\frac{d-1}{2}}\leq\mathrm{dis}(d)$ and Orden and Santos showed in \cite{orden_asymptotically_2003} that $\mathrm{triang}(d)\in O(0.816^{d}d!)$ which are the best known asymptotic bounds to date, and they provide upper and lower\footnote{For $d<5$, the lower bound of $\sperner(d)\geq 2^{d}-d$ can be used instead of $\sperner(d)\geq(d+1)^{\frac{d-1}{2}}$, however, noting that $\sperner(d)$ is an integer, this actually only gives an improved lower bound in the case $d=3$, providing a bound of $5$ instead of $4$.} bounds on the $d$th Sperner number.

While it is known for general polytopes that the dissection number and the triangulation number are not equal (see an example in \cite{below_minimal_2000}), for hypercubes it is still an open question if $\mathrm{dis}(d)$ equals $\mathrm{triang}(d)$ or not. We provide the first few values of $\sperner(d)$ which are exactly the values where $\mathrm{triang}(d)$ is know to equal $\mathrm{dis}(d)$ (see \cite[Table~1]{glazyrin_lower_2012}).

\begin{table}[h]
  \centering
  \begin{tabular}{|c|c|}
    d&$\sperner(d)$\\\hline
    1&1\\
    2&2\\
    3&5\\
    4&16\\
  \end{tabular}
  \caption{Known values of $\sperner(d)$.}
\end{table}


The next lemma is really the result that we want regarding the quantity $\sperner(d)$ because we have interest not in colorings of simplicial decompositions, but colorings of the entire hypercube as in \Autoref{:sperner-kmm-defn}. Due to the compactness of $[0,1]^{d}$, we can transfer the defining property of $\sperner(d)$ from Sperner colorings of simplicial subdivisions to Sperner/KKM colorings. The technique to do so is the same one that is used in many proofs that use variations of Sperner's lemma to prove variations of the KKM lemma.

\begin{lemma}\label{:polytope-kmm}
  Let $d\in\N$ and $H=[0,1]^{d}$ so $V(H)=\set{0,1}^{d}$. In any Sperner/KKM coloring $\chi:[0,1]^{d}\to V(H)$, there are $\sperner(d)$ distinct $(d+1)$-cardinality sets $J_{1},\ldots,J_{\sperner(d)}\subseteq V(H)$ such that $\closure{\cap_{v\in V}\chi^{-1}(v)}\not=\emptyset$.
\end{lemma}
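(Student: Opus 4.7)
The plan is to transfer the combinatorial statement defining $\sperner(d)$ from Sperner colorings of simplicial subdivisions to the KKM-style coloring $\chi$ via a compactness/limit argument, in exactly the same spirit as the classical derivation of the KKM lemma from Sperner's lemma. I read the conclusion as asserting that for each $i$, $\closure{\bigcap_{v\in J_i}\chi^{-1}(v)}\not=\emptyset$ (equivalently $\bigcap_{v\in J_i}\closure{\chi^{-1}(v)}\not=\emptyset$), which is the standard continuous analogue of the discrete Sperner conclusion.

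First I would fix, for each $n\in\N$, a simplicial subdivision $\Sigma_n$ of $H=[0,1]^d$ whose mesh tends to $0$ (e.g.\ iterated barycentric subdivisions, or an $n$-fold Kuhn triangulation). Restricting $\chi$ to $V(\Sigma_n)$ yields a vertex coloring, and the Sperner/KKM property of $\chi$ (the color of any $\vec{x}\in F$ lies in $F$) automatically makes the restriction a Sperner coloring of $H$ in the sense of \Autoref{:sperner-coloring-defn}. Hence, by the definition of $\sperner(d)$, there exist $\sperner(d)$ distinct $(d{+}1)$-cardinality sets $J^{(n)}_1,\ldots,J^{(n)}_{\sperner(d)}\subseteq V(H)$ and simplices $\sigma^{(n)}_1,\ldots,\sigma^{(n)}_{\sperner(d)}\in\Sigma_n$ with $\mathrm{colorset}(\sigma^{(n)}_i)=J^{(n)}_i$ for each $i$.

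Next I would apply pigeonhole: the number of $\sperner(d)$-element collections of $(d{+}1)$-subsets of $V(H)$ is finite (at most $\binom{\binom{2^d}{d+1}}{\sperner(d)}$), so after passing to a subsequence I may assume the collection $\{J^{(n)}_1,\ldots,J^{(n)}_{\sperner(d)}\}$ is constant in $n$ and, relabeling, that $J^{(n)}_i=J_i$ for every $n$ and $i$. Then, iteratively applying Bolzano--Weierstrass to each of the $\sperner(d)$ indices $i$ (a standard finite diagonal extraction, unproblematic because $H$ is compact), I obtain a further subsequence along which, for each $i$, an arbitrarily chosen vertex sequence from $\sigma^{(n)}_i$ converges to some limit $\vec{p}_i\in H$. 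Because $\mathrm{diam}(\sigma^{(n)}_i)\to 0$, \emph{every} vertex of $\sigma^{(n)}_i$ converges to $\vec{p}_i$.

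Finally, for each fixed $v\in J_i$, the simplex $\sigma^{(n)}_i$ has a vertex $\vec{w}^{(n)}_{i,v}$ colored $v$, i.e.\ $\vec{w}^{(n)}_{i,v}\in \chi^{-1}(v)$, and $\vec{w}^{(n)}_{i,v}\to\vec{p}_i$ by the previous paragraph. Thus $\vec{p}_i\in\closure{\chi^{-1}(v)}$, and since this holds for every $v\in J_i$, we get $\vec{p}_i\in\bigcap_{v\in J_i}\closure{\chi^{-1}(v)}\not=\emptyset$, which is the desired conclusion; the $J_i$ are distinct by construction. The only delicate step is the twofold extraction (first making the family of colorsets stable, then making the vertex sequences convergent), but each part is a standard finite pigeonhole/diagonal argument and does not require any machinery beyond compactness of $H$ and the definition of $\sperner(d)$. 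I do not anticipate any serious obstacle beyond book-keeping.
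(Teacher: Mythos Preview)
Your proposal is correct and matches the paper's approach: the paper's proof simply refers the reader to the proof of \cite[Cor.~3]{loera_polytopal_2001}, noting only that each subdivision in the sequence yields $\sperner(d)$ distinct colorsets rather than merely $c(P)$; the argument you have written out (restrict $\chi$ to finer and finer subdivisions, pigeonhole to stabilize the family of colorsets, then extract convergent subsequences by compactness) is precisely that standard Sperner-to-KKM argument. Your reading of the intended conclusion as $\bigcap_{v\in J_i}\closure{\chi^{-1}(v)}\not=\emptyset$ is also the correct one.
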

\begin{proof}
  The proof is identical to the proof of \cite[Cor.~3]{loera_polytopal_2001} with the exception that we know each simplicial subdivision in the sequence (what they call triangulations) contains not just (in their notation) ``$c(P)$'' different colorsets, but in fact $\sperner(d)$ different colorsets.
\end{proof}

With this lemma, we can now show that the value of $\epsilon=\frac{D}{2d}$ is optimal for $d=2$.

\RestatableOptimalToleranceInRTwo
\begin{proof}
  We show that for any $\epsilon>\frac{D}4$, there is some point $\vec{p}\in\R^{d}$ such that $\closure{B}_{\epsilon}$ intersects at least $d+2$ members of $\P$. Let $\epsilon>\frac{D}4$. By the Second Optimality Theorem there is some $\vec{p_{0}}\in\R^{d}$ such that $\abs{\set{X\in\P:X\cap \closure{B}_{\epsilon}(\vec{p_{0}})\not=\emptyset}}\geq d+1=3$ (so along with the hypothesis we actually have cardinality exactly $d+1=3$). Consider the hypercube $H=\closure{B}_{\frac{D}2}(\vec{p_{0}})$, and let $\S$ be the partition of $H$ induced by $\P$, and let $\chi$ be an admitted Sperner/KKM coloring. By \Autoref{:boosting-with-optimal-k}, since $\sperner(2)=2$, there exists distinct points $\vec{x},\vec{y}\in H$ (one of them possibly equal to $\vec{p_{0}}$) each belonging to the closure of $d+1$ members of $\S$ (and thus $\P$), denoted $\mathcal{N}_{\vec{x}}$ and $\mathcal{N}_{\vec{y}}$, and $\mathcal{N}_{\vec{x}}\not=\mathcal{N}_{\vec{y}}$. Thus, one of these points (wlog $\vec{x}$) is at the closure of a different set of $d+1$ members of $\S$ than $\vec{p_{0}}$ is. Since $\vec{x}\in H$, $d_{max}(\vec{x},\vec{p_{0}})\leq\frac{D}2$. Consider the midpoint $\vec{c}=\frac{\vec{x}+\vec{p_{0}}}{2}$ observing that $d_{max}(\vec{x},\vec{c})\leq\frac{D}4$ and $d_{max}(\vec{p_{0}},\vec{c})\leq\frac{D}4$, and since $\epsilon>\frac{D}4$, $\closure{B}_{\epsilon}(\vec{c})$ contains an open set around $\vec{x}$ and an open set around $\vec{p_{0}}$, and thus intersects the $d+1$ members containing $\vec{x}$ in their closure and also the $d+1$ members containing $\vec{p_{0}}$ in their closure. Because these sets are not the same, $\closure{B}_{\epsilon}(\vec{c})$ intersects at least $d+2$ members of $\S$ and thus $\P$.
\end{proof}

Thus, we have shown that value of $\epsilon$ we are able to achieve with the reclusive partitions is optimal in $\R^{1}$ and $\R^{2}$, but these proofs were both special cases that resulted from the $\epsilon$ values in question being large enough relative to the diameter to easily argue about. We will now provide an upper bound on $\epsilon$ for all dimension $d\geq2$, and while the technique will be more general than the technique for $\R^{2}$, it will use the same basic idea of locating points at the closures of $d+1$ members, arguing about distances between these points, and limiting the number of occurences of such points based on $\epsilon$.

\begin{lemma}\label{:upper-bound-lemma}
  If $d\in\N$, and $\P$ is a partition of $\R^{d}$, and there exists $D\in(0,\infty)$ such that for all $X\in\P$, it holds that $D$ is a strict pairwise bound for $X$, and if there exists $\epsilon\in(0,\infty)$ such that for all $\vec{p}\in\R^{d}$,
  \[
    \abs{\mathcal{N}_{\epsilon}(\vec{p})}\leq d+1
  \]
  then for any $\vec{\alpha}\in\R^{d}$, the hypercube $H=\vec{\alpha}+[0,D]^{d}$ contains at least $\sperner(d)$ distinct points $\vec{s}_{1},\ldots\vec{s}_{\sperner(d)}\in H$ such that for each $i,j\in[\sperner(d)]$,
  \[
    \abs{\mathcal{N}_{\closure{0}}(\vec{s}_{i})}=d+1,
  \]
  and if $i\not=j$, then $\mathcal{N}_{\closure{0}}(\vec{s}_{i})\not=\mathcal{N}_{\closure{0}}(\vec{s}_{j})$.
\end{lemma}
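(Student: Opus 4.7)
The plan is to run the same Sperner/KKM machinery used in the proof of the Second Optimality Theorem and \Autoref{:boosting-with-optimal-k}, but to extract many witnessing points at once by invoking the stronger cubical Sperner/KKM bound (\Autoref{:polytope-kmm}) in place of \Autoref{:sperner-kmm}. After an affine change of coordinates so that $\vec{\alpha}=\vec{0}$ and $D=1$ (with $\epsilon$ rescaled accordingly), the partition $\P$ induces a partition $\mathcal{S}$ on $H=[0,1]^{d}$ by intersection, and the hypothesis that $D$ is a strict pairwise bound forces $\mathcal{S}$ to be non-spanning. By \Autoref{:color-admission}, $\mathcal{S}$ then admits a partition coloring $\chi_{\mathcal{S}}:\mathcal{S}\to\{0,1\}^{d}$ together with a Sperner/KKM coloring $\chi:H\to\{0,1\}^{d}$ that is constant on each member of $\mathcal{S}$.

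Applying \Autoref{:polytope-kmm} to $\chi$ yields $\sperner(d)$ distinct $(d+1)$-element color subsets $J_{1},\ldots,J_{\sperner(d)}\subseteq\{0,1\}^{d}$ together with points $\vec{s}_{1},\ldots,\vec{s}_{\sperner(d)}\in H$ satisfying $\vec{s}_{i}\in\closure{\chi^{-1}(v)}$ for every $v\in J_{i}$. The main content is then to convert this ``closure of color classes'' assertion into the desired ``closure of $d+1$ partition members'' assertion. The key observation is that $\vec{s}_{i}\in\closure{\chi^{-1}(v)}$ means every open neighborhood of $\vec{s}_{i}$ meets \emph{some} member of $\mathcal{S}$ of color $v$. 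Taking a sequence of radii $\delta\leq\epsilon$ with $\delta\to 0^{+}$ and using the hypothesis $\abs{\mathcal{N}_{\epsilon}(\vec{s}_{i})}\leq d+1$, I can argue (exactly as in the proof of \Autoref{:boosting-with-optimal-k}) that $\closure{B}_{\delta}(\vec{s}_{i})$ hits one and the same collection of $d+1$ members (one per color in $J_{i}$) for every such $\delta$, so that $\vec{s}_{i}$ lies in the closure of each of these $d+1$ members and hence $\abs{\mathcal{N}_{\closure{0}}(\vec{s}_{i})}=d+1$.

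For the distinctness claim, the $d+1$ members of $\mathcal{N}_{\closure{0}}(\vec{s}_{i})$ each meet $H$ (again by the shrinking-$\delta$-ball argument) and so correspond bijectively, under $X\mapsto X\cap H$, to $d+1$ members of $\mathcal{S}$; these $\mathcal{S}$-members carry exactly the $d+1$ colors of $J_{i}$ under $\chi_{\mathcal{S}}$. Consequently $\mathcal{N}_{\closure{0}}(\vec{s}_{i})$ determines $J_{i}$, and since the $J_{i}$ were chosen pairwise distinct, so are the sets $\mathcal{N}_{\closure{0}}(\vec{s}_{i})$ (and a fortiori the points $\vec{s}_{i}$ themselves). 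The step I expect to require the most care is this last color-to-member translation, particularly for points $\vec{s}_{i}$ that fall on the boundary of $H$: there the $\epsilon$-neighborhood bound needs to be invoked on $\P$ rather than just on $\mathcal{S}$, so as to rule out the possibility that some stray member of $\P$ lying entirely outside $H$ (but with $\vec{s}_{i}$ in its closure) slips into $\mathcal{N}_{\closure{0}}(\vec{s}_{i})$ and ruins the correspondence with $J_{i}$.
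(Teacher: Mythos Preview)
Your proposal is correct and follows essentially the same route as the paper: induce a non-spanning partition on $H$, apply \Autoref{:color-admission} and then \Autoref{:polytope-kmm} to obtain $\sperner(d)$ distinct color sets $J_i$ with witness points $\vec{s}_i$, invoke the argument of \Autoref{:boosting-with-optimal-k} to upgrade ``closure of $d+1$ color classes'' to ``closure of exactly $d+1$ members of $\P$,'' and use the distinctness of the $J_i$ to deduce distinctness of the $\mathcal{N}_{\closure{0}}(\vec{s}_i)$. Your explicit remark about boundary points of $H$ is a fair concern, but (as you anticipate) it is handled by the same counting: since $\closure{B}_{\epsilon}(\vec{s}_i)$ meets at most $d+1$ members of $\P$ and must meet at least one member of $\S$ for each of the $d+1$ colors in $J_i$, those $d+1$ members of $\P$ are forced to be exactly the ones that intersect $H$, so no stray exterior member can appear.
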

\begin{proof}
  As in other proofs, consider the set $H=\vec{\alpha}+[0,D]^{d}$, consider the partition $\S$ induced by $\P$ which is non-spanning and admits a Sperner/KKM coloring. So by \Autoref{:polytope-kmm}, there are at least $\sperner(d)$ distinct $(d+1)$-cardinality sets $J_{1},\ldots,J_{\sperner(d)}\subseteq V(H)$ such that $\closure{\cap_{v\in V}\chi^{-1}(v)}\not=\emptyset$ (i.e. for each set $J_{i}$ of $d+1$ colors, there is some point in the closure of all colors in $J_{i}$). For each $i$, let $\vec{s}_{i}\in\closure{\cap_{v\in V}\chi^{-1}(v)}$ be arbitrary (i.e. $\vec{s}_{i}$ is in the closure of all colors in $J_{i}$).

  We now justify that if $i\not=j$, the $\vec{s}_{i}\not=\vec{s_{j}}$. By exactly the same argument as \Autoref{:boosting-with-optimal-k}, we have $\abs{\set{X\in\P:\closure{X}\ni\vec{s}_{i}}}=d+1$. It follows immediately that $\set{X\in\P:\closure{X}\ni\vec{s}_{i}}\not=\set{X\in\P:\closure{X}\ni\vec{s}_{j}}$ because $J_{i}\not=J_{j}$ and the set $\set{X\in\P:\closure{X}\ni\vec{s}_{i}}$ contains exactly one member of each color in $J_{i}$ (and similarly for $J_{j}$).
\end{proof}

\RestatableUniversalToleranceBound
\begin{proof}
  For any $\vec{\alpha}\in\R^{d}$, consider the hypercube $H=\vec{\alpha}+[0,D]^{d}$, and (as in many of our other proofs) an admitted Sperner/KKM coloring of an induced partition. Then let $\vec{s}_{1},\ldots\vec{s}_{\sperner(d)}\in H$ be points as in \Autoref{:upper-bound-lemma}. It follows that for any $i\not=j$ that $d_{max}(\vec{s}_{i},\vec{s}_{j})\leq\epsilon$ because if they were not, then the midpoint $\vec{c}=\frac{\vec{s}_{i}+\vec{s}_{j}}{2}$ would have the property that $\closure{B}_{\epsilon}(\vec{c})$ contains an open set around $\vec{s}_{i}$ and around $\vec{s}_{i}$, and thus $\closure{B}_{\epsilon}(\vec{c})$ would intersect strictly more than $d+1$ members of $\P$ which would contradict the hypothesis.

  Thus, the points $\vec{s}_{i}$ give rise to closed balls $\closure{B}_{\epsilon}(\vec{s}_{i})$ which are pairwise disjoint. Further, because $\vec{s}_{i}\in H=\vec{\alpha}+[0,D]^{d}$, it follows that $\closure{B}_{\epsilon}(\vec{s}_{i})\subseteq \vec{\alpha}+[-\epsilon,D+\epsilon]^{d}$. This gives the following volume/measure comparison argument:
  \[
    \sperner(d)\cdot(2\epsilon)^{d} = m\left(\bigsqcup_{i=1}^{\sperner(d)}\closure{B}_{\epsilon}(\vec{s}_{i})\right) \leq m(\vec{\alpha}+[-\epsilon,D+\epsilon]^{d}) = (D+2\epsilon)^{d}.
  \]
  Taking $d$th roots of both sides and manipulating the equations, we have the stated inequality for $\epsilon$ for $d\geq2$ (with $d=1$, there would be a division by $0$ because $\sperner(1)=1$).

  To show the ``in particular'' statement, for $d\geq2$, increase the bound by replacing $\sperner(d)$ with the lower bound $(d+1)^{\frac{d-1}{2}}\leq\sperner(d)$ (\Autoref{eq:cube-ineq} and \cite{glazyrin_lower_2012}) (note that for $d=1$ this would give division by zero). Lastly then apply \Autoref{:asymptotic-one-over-two-root-d}. 
  
  To show the ``in particular'' statement, for $d=1$, note that $d=\sqrt{d}$, so apply \Autoref{:optimal-tolerance-R1}.
\end{proof}

Note that it is not an issue that the above bound does not hold for $d=1$ because we already have exact bounds for $d=1$ and $d=2$ in \Autoref{:optimal-diam-d1} and \Autoref{:optimal-diam-d2}. Nonetheless, we don't believe that the $-1$ term in the above expression is necessary, but we have not yet come up with an argument that removes it.

Since the first four values of $\sperner(d)$ are known, it is the first bound in the theorem above that gives the better bound. For convenience, we summarize the bounds on $\epsilon$ for $d\in\set{1,2,3,4}$.

\renewcommand{\arraystretch}{1.5}
\begin{table}[h]
  \centering
  \begin{tabular}{|c|c|c|c|}\hline
    d & $\sperner(d)$ & $\epsilon$ upper bound                             & Reason \\\hline
    1 & 1             & $\frac{D}2$                                           & \Autoref{:optimal-diam-d1} \\\hline
    2 & 2             & $\frac{D}{4}$                                         & \Autoref{:optimal-diam-d2} \\\hline
    3 & 5             & $\frac{D}{2(\sqrt[3]{5}-1)}\approx\frac{D}{1.419952}$ & \Autoref{:sperner-upper-bound} \\\hline
    4 & 16            & $\frac{D}{2(\sqrt[4]{16}-1)}=\frac{D}2$               & \Autoref{:sperner-upper-bound} \\\hline
  \end{tabular}
  \caption{Summary of upper bounds on $\epsilon$.}
\end{table}

While the above gives an upper bound whose denominator grows only as the square root of $d$, we conjecture that the true maximal value of $\epsilon$ has a tighter bound with the denominator growing linearly in $d$. If this conjecture is true, that means that reclusive partitions can achieve within a constant factor of the maximum possible value of $\epsilon$.

\begin{conjecture}[Linear Universal Tolerance ($\epsilon$) Conjecture]\label{:linear-conjecture}
  If $d\in\N$, and $\P$ is a partition of $\R^{d}$, and there exists $D\in(0,\infty)$ such that for all $X\in\P$, $\diam(X)\leq D$, and if $\epsilon\in(0,\infty)$ such that for all $\vec{p}\in\R^{d}$,
  \[
    \abs{\mathcal{N}_{\epsilon}(\vec{p})}\leq d+1
  \]
  then $\epsilon=\frac{D}{\Omega(d)}$.
\end{conjecture}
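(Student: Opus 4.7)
The plan is to tighten the Sperner/KKM argument used in \Autoref{:sperner-upper-bound}. That proof locates $\sperner(d)$-many clique points in a $D$-cube, observes that any two clique points with distinct $(d+1)$-colorsets must be separated by more than $\epsilon$ (else their midpoint's $\epsilon$-ball would meet $\geq d+2$ members), and then a volume comparison yields $\epsilon(\sperner(d)^{1/d}-1)\leq D$. The conjectured linear-in-$d$ denominator requires $\sperner(d)^{1/d}=\Omega(d)$, i.e.\ $\sperner(d) = \Omega((cd)^d)$ for some constant $c > 0$. The known gap between Glazyrin's lower bound $\mathrm{dis}(d) \geq (d+1)^{(d-1)/2}$ and the Orden--Santos upper bound $\mathrm{triang}(d) \leq C\cdot 0.816^d\cdot d!$ leaves ample room for this, and the ambient supply of possible colorsets $\binom{2^d}{d+1}$ is vastly larger, so there is no a priori obstruction.

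The main line of attack is to improve the lower bound on the dissection number $\mathrm{dis}(d)$, which by \Autoref{eq:cube-ineq} translates directly into an improvement on $\sperner(d)$ and hence on $\epsilon$. I would try to refine Glazyrin's hyperbolic-volume argument by constructing a witness function that registers the combinatorial complexity of the cube faces touched by each simplex, rather than only a single scalar feature, with a target lower bound of order $(cd)^d$. A complementary line of attack is to bypass simplicial subdivisions entirely and inflate the count of clique points directly: iterate the Sperner/KKM machinery of \Autoref{:polytope-kmm} across overlapping $D$-cubes, or restrict it to thin slabs spanning pairs of previously-found clique points. Since any two clique points with distinct colorsets are automatically $\epsilon$-separated by the midpoint argument, the only new task is to show such an inflation realizes $\Omega((cd)^d)$ distinct colorsets globally, after which the same volume comparison as in \Autoref{:sperner-upper-bound} delivers the linear bound.

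The hardest part, I expect, is arguing that enough distinct colorsets are actually realized. Any proof must be compatible with the reclusive partitions of \Autoref{:hypercube-partition-thm}, which already achieve $\epsilon = D/(2d)$ and in which the colorsets realized in any single $D$-cube are both finite and highly structured. I expect the key ingredient to be a local rigidity statement: the hypothesis $|\mathcal{N}_\epsilon(\vec{p})|\leq d+1$ should force the $d+1$ members meeting at a clique point to sit in an almost-reclusive local configuration, and an inductive count over possible boundary configurations would then deliver the requisite number of distinct colorsets. Identifying the correct rigidity statement, and proving it without inadvertently excluding the reclusive examples themselves, is to my mind the crux of the problem.
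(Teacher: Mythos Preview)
The statement you are attempting is a \emph{conjecture}, and the paper does not contain a proof of it. Immediately after stating the conjecture the paper observes exactly what you observe: if $\sperner(d)\geq (d/c')^{d}$ for some constant $c'$, then plugging this into the bound of \Autoref{:sperner-upper-bound} yields $\epsilon\leq D/\Omega(d)$. So your first line of attack is precisely the conditional reduction the paper already records, and your proposal does not go beyond it to an actual argument.

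What you have written is a research outline, not a proof. You correctly identify that the crux is a lower bound of the form $\sperner(d)=\Omega((cd)^d)$ (equivalently, closing the gap between Glazyrin's $(d+1)^{(d-1)/2}$ and the Orden--Santos upper bound up to the exponential factor), but you do not supply any mechanism for obtaining it. The suggestion to ``refine Glazyrin's hyperbolic-volume argument'' is a hope, not a step; likewise the ``iterate Sperner/KKM across overlapping $D$-cubes'' idea does not obviously produce new distinct colorsets, since overlapping cubes can and typically will see the same clique points and the same colorsets. Your final paragraph concedes that the essential rigidity lemma is both unidentified and unproved. In short, nothing here constitutes a gap in the sense of an error, because nothing here constitutes a proof: the conjecture remains open, your proposal is consonant with the paper's own remarks on what would suffice, and the hard part you flag is exactly the part neither you nor the paper can currently do.
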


In fact, if it holds that $\sperner(d)\geq\frac{d!}{c^{d}}$ for some constant $c\in(0,\infty)$, or even if $\sperner(d)\geq\left(\frac{d}{c'}\right)^{d}$, then the conjecture above holds. This is consistent with the current bounds mentioned earlier in this section due to \cite{glazyrin_lower_2012} and \cite{orden_asymptotically_2003}:
\[
  (d+1)^{\frac{d-1}{2}}\leq\mathrm{dis}(d) \leq \sperner(d) \leq \mathrm{triang}(d)\in O(0.816^{d}d!).
\]
To see that the conjecture would hold in this case, note that we can use Stirling's approximation as a lower bound for the factorial case above:
\[
  \left(\frac{d}{ce}\right)^{d} \leq \frac{\sqrt{2\pi d}\left(\frac{d}{e}\right)^{d}}{c^{d}} \leq \frac{d!}{c^{d}} \leq \sperner(d).
\]
So the first case above reduces to the second case.

So assuming $\sperner(d)\geq\left(\frac{d}{c'}\right)^{d}$ for some constant $c'\in(0,\infty)$, and using this lower bound in the previous upper bound for $\epsilon$, we have
\[
  \epsilon \leq \frac{D}{2(\sqrt[d]{\sperner(d)}-1)} \leq \frac{D}{2(\frac{d}{c}-1)}
\]
showing that under these assumptions $\epsilon=\frac{D}{\Omega(d)}$.
\section{Application of Reclusive Partitions to Deterministic Rounding}
\label{sec:algorithm}
\renewcommand{\P}{\mathcal{P}}

Throughout this section, let $A$ denote a $d\times d$ reclusive matrix. Other notation from \Autoref{sec:reclusive-lattice-partitions} will be used as well. Our first goal of this section will be to show that given $\vec{x}\in\R^{d}$ we can efficiently compute $\rep(X)$ for the unique $X\in\P_{A}$ such that $\vec{x}\in X$.
In other words, we can quickly map points of $\R^{d}$ to the much sparser set of representative points $\set{\rep(X)\colon X\in\P_{A}}=L_{A}$.
First observe a simple fact.

\begin{fact}\label{:floor}
  Let $a\in\R$ and $b\in[0,1)$ such that $a-b\in\Z$.
  Then $a-b=\floor{a}$.
\end{fact}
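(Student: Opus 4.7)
The plan is to give a short direct argument using the defining characterization of the floor function, namely that $\floor{a}$ is the unique integer $n \in \Z$ satisfying $n \leq a < n+1$, or equivalently $a - 1 < n \leq a$.

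First I would set $n \defeq a - b$, which by hypothesis is an integer. The goal then reduces to verifying that $n$ satisfies the two defining inequalities $a - 1 < n \leq a$. Since $b \in [0, 1)$, we have $b \geq 0$, which gives $n = a - b \leq a$; and we have $b < 1$, which gives $n = a - b > a - 1$. These two inequalities together say that $n$ is an integer in the half-open interval $(a-1, a]$, and since this interval contains exactly one integer, namely $\floor{a}$, we conclude $n = \floor{a}$, i.e.\ $a - b = \floor{a}$.

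There is essentially no obstacle here; the only choice to make is between the characterization of $\floor{a}$ as the unique integer in $(a-1, a]$ versus writing $a = n + b$ and appealing to $\floor{n + b} = n$ for $b \in [0,1)$ by integer-translation invariance of the floor. Either route is a two-line verification, so I would pick the interval-based version for self-containedness.
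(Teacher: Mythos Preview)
Your proof is correct and uses essentially the same underlying idea as the paper: both rely on the characterization of $\floor{a}$ as the unique integer in $(a-1,a]$. The only structural difference is that the paper argues by trichotomy/contradiction (ruling out $a-b>\floor{a}$ and $a-b<\floor{a}$ separately), whereas you give the direct verification that $n=a-b$ lies in $(a-1,a]$; your version is slightly cleaner but the content is the same.
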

\begin{proof}
  If $a-b>\floor{a}$ then $a-b\geq\floor{a}+1$, so $a-\floor{a}\geq1+b\geq1$ which is a contradiction.
  If $a-b<\floor{a}$ then $a-b\leq\floor{a}-1$, so $a-\floor{a}\leq b-1<0$ implying that $a<\floor{a}$ which is a contradiction.
  Thus $a-b=\floor{a}$.
\end{proof}

\begin{proposition}[Efficient Computation of Representatives]\label{:efficent-computation-of-representatives}
  Let $d\in\N$, and $A$ be a $d\times d$ reclusive matrix, and $\P_{A}$ its reclusive partition.
  For any $\vec{x}\in\R^{d}$, let $X\in\P_{A}$ be the unique hypercube such that $\vec{x}\in X$.
  Then $\rep(X)$ can be efficiently\footnote{By efficient, we mean that the computation can be done with $O(d)$ matrix multiplications with matrices of size $d\times d$.} computed in terms of $\vec{x}$ and $A$.\\
\end{proposition}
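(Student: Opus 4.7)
The plan is to reduce computing $\rep(X)$ to determining the unique integer vector $\vec{n}\in\Z^{d}$ satisfying $\vec{x}-A\vec{n}\in[0,1)^{d}$, and then to exploit the upper-triangular unit-diagonal structure of $A$ to solve for the coordinates $n_{d},n_{d-1},\ldots,n_{1}$ by back-substitution. The membership condition $\vec{x}\in X$ is equivalent to $\vec{x}-\rep(X)\in[0,1)^{d}$, and since every representative in $\P_{A}$ lies in $L_{A}$ and $A$ is invertible, representatives are in bijection with $\Z^{d}$ via $\vec{n}\mapsto A\vec{n}$. Thus finding the hypercube $X$ containing $\vec{x}$ amounts to finding $\vec{n}\in\Z^{d}$ with $\vec{x}-A\vec{n}\in[0,1)^{d}$; proving existence and uniqueness of such an $\vec{n}$ simultaneously verifies that $\P_{A}$ really is a partition (as promised earlier in the paper).

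The key observation is that because $A$ is upper triangular with $a_{ii}=1$,
\[
(A\vec{n})_{i} = n_{i} + \sum_{j=i+1}^{d} a_{ij}\,n_{j},
\]
so the $i$-th coordinate of the membership condition only involves $n_{i},n_{i+1},\ldots,n_{d}$. I would begin with $i=d$: the requirement $x_{d}-n_{d}\in[0,1)$ together with $n_{d}\in\Z$ forces $n_{d}=\floor{x_{d}}$ by \Autoref{:floor}. Then, proceeding inductively for $i=d-1,d-2,\ldots,1$, once $n_{i+1},\ldots,n_{d}$ have been determined, set $s_{i}\defeq x_{i}-\sum_{j=i+1}^{d}a_{ij}n_{j}$ and apply \Autoref{:floor} to $s_{i}$ and $s_{i}-n_{i}\in[0,1)$, which uniquely determines $n_{i}=\floor{s_{i}}$. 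This produces an explicit $\vec{n}\in\Z^{d}$, and then $\rep(X)=A\vec{n}$.

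The back-substitution runs $d$ rounds, each of which computes a dot product of length at most $d$ and takes a floor, so the overall cost is comfortably within the efficiency bound stated in the footnote. There is no genuine obstacle: the argument is simply the recognition that reclusive matrices permit classical triangular back-substitution, with \Autoref{:floor} supplying existence and uniqueness at each step. The only subtlety worth emphasizing in the write-up is that existence of $\vec{n}$ at each step must be verified (the value $\floor{s_{i}}$ is in $\Z$ and satisfies $s_{i}-\floor{s_{i}}\in[0,1)$ by definition of the floor function), and uniqueness at each step follows because \Autoref{:floor} asserts that the integer part is determined from knowing a real number differs from it by a value in $[0,1)$.
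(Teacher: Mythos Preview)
Your proof is correct and follows essentially the same approach as the paper: both exploit the upper-triangular unit-diagonal structure of $A$ to perform back-substitution, computing $n_{d},n_{d-1},\ldots,n_{1}$ in turn via $n_{i}=\floor{x_{i}-\sum_{j>i}a_{ij}n_{j}}$ (invoking \Autoref{:floor}), and then returning $A\vec{n}$. Your additional remark that this argument simultaneously establishes existence and uniqueness of the containing hypercube---and hence that $\P_{A}$ is a partition---matches exactly what the paper uses this proposition for.
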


\begin{remark}
  For additional intuition of the following proof, see the proof of \Autoref{:example-partition-proof} which is an inductive proof instead of an algebraic one.
\end{remark}

\begin{proof}
  Recall that $\rep(X)=A\vec{m}$ for a unique $\vec{m}\in\Z^{d}$, so it will suffice to compute $\vec{m}$ because $\rep(X)$ can then be computed via a single matrix multiplication.
  We show by induction that if we have computed $m_{i}$ for all $i>k$, then we can compute $m_{k}$.
  The main reason that we can do this is that $A$ is a triangular matrix, so the technique for computing each $m_{i}$ has the flavor of Gaussian elimination.
  The inductive base case is that we have not computed any $m_{i}$.


  By the definition of $\P_{A}$ and $\rep(X)$, we have that $\vec{x}\in\rep(X)+[0,1)^{d}$, so let $\vec{\alpha}=\vec{x}-\rep(X)$ so $\vec{\alpha}\in[0,1)^{d}$.
  Now we consider just the $k$th coordinate.
  \begin{align*}
    x_{k} &= \alpha_{k}+\rep(X)_{k}\\
          &= \alpha_{k}+(A\vec{m})_{k}\\
          &= \alpha_{k}+\sum_{i=1}^{d}a_{ki}m_{i}\tag{Def'n of matrix multiplication}\\
          &= \alpha_{k}+\sum_{i=k}^{d}a_{ki}m_{i}\tag{$A$ is reclusive, so $a_{ki}=0$ for $i<k$}\\
          &= \alpha_{k}+m_{k}+\sum_{i=k+1}^{d}a_{ki}m_{i}\tag{$A$ is reclusive, so $a_{kk}=1$. Summation might be empty}\\
  \end{align*}
  We now reformulate in terms of $m_{k}$.
  \begin{align*}
    m_{k} &= x_{k}-\sum_{i=k+1}^{d}a_{ki}m_{i}-\alpha_{k}\tag{Solve for $m_{k}$}\\
          &= \floor{x_{k}-\sum_{i=k+1}^{d}a_{ki}m_{i}}\tag{\Autoref{:floor}}
  \end{align*}
  Thus, $m_{k}$ can be computed as a floor in terms of $A$, $\vec{x}$, and the already known $m_{i}$ for $i>k$.
  As mentioned, we can return vector $\rep(X)=A\vec{m}$.

  Altogether, this computation requires $O(d^{2})$ additions and multiplications. We need $O(d)$ to compute each $m_{i}$ and $i$ takes $d$ many values, and we need $O(d^{2})$ operations to compute $A\vec{m}$.
\end{proof}

Having shown that we can efficiently compute the representative corners, we turn to an application in pseudodeterministic computations. Consider a multi-valued function such as $f(1^n)=$the set of $n$-bit primes. We know a probabilistic polynomial-time for this function.  A deficiency of this algorithm is that two different runs of the algorithm may produce two different valid outputs. Is there a probabilistic algorithm that outputs a canonical prime number? I.e, most of the random choices of the algorithm will produce the same output. Motivated by this, Gat and Goldwasser~\cite{gat_probabilistic_2011} defined the notion of {\em pseudodeterministic algorithms}. 

\begin{definition}
Let $f$ be a total multi-valued function\footnote{A multi-valued function maps inputs to non-empty sets of outputs.}. We say that $f$ admits a {\em polynomial-time, pseudodeterministic algorithm}, if there is a probabilistic polynomial-time algorithm $A$ such that for every $x$, there exists $v \in f(x)$ such that $\Pr[A(x) = v] \geq 2/3$.
\end{definition}

Goldreich~\cite{Goldreich19} generalized the notion to $k$-pseudodeterministic algorithms.

\begin{definition}
Let $f$ be total multi-valued function. We say that $f$ admits a {\em polynomial-time, $k$-pseudodeterministic algorithm}, if there is a probabilistic polynomial-time algorithm $A$ such that for every $x$, there exists a set $S_x \subseteq f(x)$, and $\Pr[A(x) \in S] \geq \frac{k+1}{k+2}$.
\end{definition}

Suppose there is some function $f:\bitst\to\R^{d}$ (for some $d\in\N$) that admits a $(\epsilon, \delta)$ additive approximation algorithm $M$. I.e, 
\[\Pr[|M(x) - f(x)|_{\infty} \leq \epsilon] \geq 1-\delta\]
A natural question that arises is whether we can make such algorithms pseudodeterministic. Using reclusive partitions,  we show $M$ can be we can arrive a $(d+1)$ modified to $M'$ such that $M'$ is a $(d+1)$-pseudodeterministic algorithm (with the same probability guarantee of $1-\delta$) at a small loss on approximation guarantee. The algorithm $M'$ is describe below.

\begin{algorithm}
  \caption{$(d+1)$-Pseudodeterministically approximating $f\colon\bitst\to\R^{d}$}  \label{alg:pd}
  ~\\
  Let $d\in\N$, and $A$ be a $d\times d$ reclusive matrix, $\P_{A}$ its reclusive partition, and $\Delta_{A}$ its reclusive distance.\\
  Let $f\colon\bitst\to\R^{d}$ be any function, and let $\epsilon\in[0,\infty)$ and $\delta\in[0,1]$, and let $M$ be an efficient $(\epsilon,\delta)$-approximation algorithm for $f$ with respect to the $d_{max}$ metric (i.e. for all $x\in\bitst$, $\pr[d_{max}(A(x),f(x))\leq\epsilon]\geq1-\delta$).\\
  Let $\phi\colon\R^{d}\to\R^{d}$ be the bijection defined by $\phi(\vec{x})\defeq\dfrac{\Delta_{A}}{\epsilon}\cdot\vec{x}$.\\
  Let $M'$ be the following algorithm (which takes an input in $\bitst$).
  ~\\
  \begin{algorithmic}[1]
    \Procedure{M'(x)}{}
    \State $\vec{\alpha}\gets M(x)$\Comment{Run $M$ just once on the input}
    \State Determine the unique $X\in\P_{A}$ such that $\phi(\vec{\alpha})\in X$\Comment{This does not happen explicitly}
    \State $\vec{r}\gets\rep(X)$\Comment{Computed as in \Autoref{:efficent-computation-of-representatives}}
    \State\Return $\phi^{-1}(\vec{r})$\Comment{$\phi^{-1}$ is just multiplication}
    \EndProcedure
  \end{algorithmic}
\end{algorithm}

Note that in \Autoref{alg:pd}, we really do allow $\epsilon$ and $\delta$ to be as general as stated.
However, in practice, because $M$ must be an $(\epsilon,\delta)$-approximation algorithm, both $\epsilon$ and $\delta$ will be small.
The domain is chosen as $\bitst$ since this is the formal input of algorithms, but in practice the domain of the algorithm could be a different set.

\begin{proposition}[Algorithm Guarantees]\label{:algorithm-guarantees}
  \Autoref{alg:pd} is an efficient $(d+1)$-pseudodeterministic $(\epsilon',\delta)$-approximation algorithm for $f$ where $\epsilon'=\epsilon(\frac{1}{\Delta_{A}}+1)$.
\end{proposition}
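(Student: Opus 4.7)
The plan is to verify the three guarantees of \Autoref{alg:pd} separately. Efficiency is essentially immediate: $M'$ performs one invocation of $M$, applies the scalar bijection $\phi$ and its inverse, and invokes the representative-computation routine of \Autoref{:efficent-computation-of-representatives} at a cost of $O(d^{2})$ arithmetic operations, so the overhead beyond $M$ is polynomial in $d$.

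For the $(d+1)$-pseudodeterminism claim, I would fix an input $x$ and set $\vec{p}\defeq\phi(f(x))$. The Partition Theorem (\Autoref{:partition-thm}) gives $\abs{\mathcal{N}_{\Delta_{A}/2}(\vec{p})}\leq d+1$, so take the canonical target set to be
\[
  S_{x}\defeq\set{\phi^{-1}(\rep(X)):X\in\mathcal{N}_{\Delta_{A}/2}(\vec{p})},\qquad\abs{S_{x}}\leq d+1.
\]
It remains to show that on the event of probability at least $1-\delta$ that $M$ returns $\vec\alpha$ with $d_{max}(\vec\alpha,f(x))\leq\epsilon$, the containing hypercube $X=\member(\phi(\vec\alpha))$ lies in $\mathcal{N}_{\Delta_{A}/2}(\vec{p})$, forcing $M'(x)=\phi^{-1}(\rep(X))\in S_{x}$. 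This reduces to a bookkeeping check on how $\phi$ transports the $\epsilon$-error of $M$ into the $\Delta_{A}/2$-neighborhood used by the Partition Theorem.

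For the $(\epsilon',\delta)$-approximation guarantee, I would condition on the same good event. Applying $\phi^{-1}$ contracts distances by $\epsilon/\Delta_{A}$, so
\[
  d_{max}(M'(x),f(x))=\frac{\epsilon}{\Delta_{A}}\,d_{max}(\rep(X),\vec{p}).
\]
The triangle inequality gives $d_{max}(\rep(X),\vec{p})\leq d_{max}(\rep(X),\phi(\vec\alpha))+d_{max}(\phi(\vec\alpha),\vec{p})\leq 1+\Delta_{A}$, since $\rep(X)$ and $\phi(\vec\alpha)$ both lie in the unit hypercube $X$ (whose $d_{max}$-diameter is at most $1$), and the $\phi$-scaling carries the $\epsilon$-bound on $\vec\alpha$ to a $\Delta_{A}$-bound on the second summand. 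Combining yields $d_{max}(M'(x),f(x))\leq\epsilon(1/\Delta_{A}+1)=\epsilon'$, as claimed.

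The main subtlety is the calibration of $\phi$: its scaling factor must simultaneously be fine enough that scaled errors fall within the $\Delta_{A}/2$-neighborhood required by the Partition Theorem (keeping $|S_{x}|\leq d+1$) and coarse enough that, after inverse scaling, the unit $d_{max}$-diameter of partition members rescales to precisely the claimed additive loss $\epsilon(1/\Delta_A+1)$. Once this calibration is tracked through both computations, each of the three claims reduces to the triangle inequality combined with the Partition Theorem and \Autoref{:efficent-computation-of-representatives}.
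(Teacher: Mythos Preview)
Your approach mirrors the paper's proof almost exactly: both isolate the scaling identity $d_{max}(\vec{a},\vec{b})=\frac{\epsilon}{\Delta_A}d_{max}(\phi(\vec{a}),\phi(\vec{b}))$, bound the approximation error via the triangle inequality through $\phi(\vec{\alpha})$ to obtain $\frac{\epsilon}{\Delta_A}(1+\Delta_A)$, and invoke the Partition Theorem for the pseudodeterminism claim. The efficiency and $(\epsilon',\delta)$-approximation parts are handled correctly and in the same way as the paper.

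There is, however, a genuine gap in your pseudodeterminism step---and it is precisely the ``bookkeeping check'' you defer. With $\phi(\vec{x})=\frac{\Delta_A}{\epsilon}\vec{x}$, the event $d_{max}(\vec{\alpha},f(x))\leq\epsilon$ translates to $d_{max}(\phi(\vec{\alpha}),\vec{p})\leq\Delta_A$, \emph{not} $\leq\Delta_A/2$. Hence the containing hypercube $X=\member(\phi(\vec{\alpha}))$ is only guaranteed to lie in $\mathcal{N}_{\Delta_A}(\vec{p})$, while the Partition Theorem (\Autoref{:partition-thm}) bounds only $\abs{\mathcal{N}_{\Delta_A/2}(\vec{p})}\leq d+1$. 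So the calibration you flag as ``the main subtlety'' does not actually close: the scaled error lands in a ball of twice the radius the Partition Theorem controls. (You should be aware that the paper's own proof is loose at exactly this point---it appeals to notation $\vec{c},N,\mathcal{S}$ absent from the stated Partition Theorem and implicitly uses a $\Delta_A$-neighborhood---so the factor-of-two slack is inherited from the source rather than introduced by you. A clean fix would rescale $\phi$ by $\frac{\Delta_A}{2\epsilon}$, at the cost of changing $\epsilon'$ to $\epsilon(\frac{2}{\Delta_A}+1)$.)
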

\begin{proof}
  Let all notation be as in \Autoref{alg:pd}.
  Note that for any $\vec{a},\vec{b}\in\R^{d}$ we have $d_{max}(\vec{a},\vec{b})=\frac{\epsilon}{\Delta_{A}}d_{max}(\phi(\vec{a}),\phi(\vec{b}))$ via linearity of norms and the linearity of $\phi$:
  \begin{align*}
    d_{max}(\vec{a},\vec{b}) &= \norm{\vec{a}-\vec{b}}_{\infty}\\
                             &= \norm{\frac{\epsilon}{\Delta_{A}}\left(\frac{\Delta_{A}}{\epsilon}\vec{a}-\frac{\Delta_{A}}{\epsilon}\vec{b}\right)}_{\infty}\\
                             &= \frac{\epsilon}{\Delta_{A}}\norm{\phi(\vec{a})-\phi(\vec{b})}_{\infty}\\
                             &= \frac{\epsilon}{\Delta_{A}}d_{max}(\phi(\vec{a}),\phi(\vec{b}))
  \end{align*}

  By the defining property of $M$, for any $x\in\bitst$, with probability at least $1-\delta$, we have $d_{max}(\vec{\alpha},f(x))\leq\epsilon$ which, by the prior comment, holds if and only if $d_{max}(\phi(\vec{\alpha}),\phi(f(x)))\leq\Delta_{A}$.
  In this case we can bound the error of the approximation:
  \begin{align*}
    \text{error} &= d_{max}(\phi^{-1}(\vec{r},\;f(x))\\
                 &= \frac{\epsilon}{\Delta_{A}}d_{max}(\phi(\phi^{-1}(\vec{r})),\;\phi(f(x)))\tag{By the opening comment}\\
                 &= \frac{\epsilon}{\Delta_{A}}d_{max}(\vec{r},\;\phi(f(x)))\tag{$\phi\circ\phi^{-1}$ is identity map}\\
                 &\leq \frac{\epsilon}{\Delta_{A}}\left[d_{max}(\vec{r},\;\phi(\vec{\alpha})) + d_{max}(\phi(\vec{\alpha}),\;\phi(f(x)))\right]\tag{Triangle Inequality}\\
                 &\leq \frac{\epsilon}{\Delta_{A}}\left[d_{max}(\vec{r},\;\phi(\vec{\alpha})) + \Delta_{A} \right]\tag{By the second comment above}\\
                 &\leq \frac{\epsilon}{\Delta_{A}}\left[1 + \Delta_{A} \right]\tag{$\vec{r}\in X$ and $\phi(\vec{\alpha})\in X$, and $X$ is a unit hypercube}\\
                 &=\epsilon\left(\frac{1}{\Delta_{A}}+1\right)
  \end{align*}
  In other words, on any input $x\in\bitst$, with probability at least $(1-\delta)$, the value $\vec{r}$ returned by $M'$ will be within a distance $\epsilon'=\epsilon\left(\frac{1}{\Delta_{A}}+1\right)$ of the true function value $f(x)$.
  This proves that $M'$ is an $(\epsilon',\delta)$-approximation of $f$.

  That $M'$ is $(d+1)$-pseudodeterministic follows from the Partition Theorem (\Autoref{:partition-thm}).
  Using the notation $\vec{c}$, $N$, and $\mathcal{S}$ from that theorem, let $\vec{c}=\phi(f(x))$.
  As stated above, with probability at least $1-\delta$ we have $d_{max}(\phi(\vec{\alpha}),\phi(f(x)))\leq\Delta_{A}$, in which case $\phi(\vec{\alpha})\in N$, so the unique hypercube $X\in\P_{d}$ containing $\phi(\vec{\alpha})$ intersects $N$, and so $X\in\mathcal{S}$.
  Thus, with probability at least $1-\delta$, $\phi^{-1}\left(\vec{r}\right)=\phi^{-1}\left(\rep(X)\right)$ is one of at most $d+1$ values.
  This proves that $M'$ is $(d+1)$-pseudodeterministic\footnote{To meet the technical definition of $k$-pseudodeterminism of Goldreich, it should be required that $1-\delta\geq\frac{k+1}{k+2}$, so in this case with $k=d+1$, it should be required that $1-\delta\geq\frac{d+2}{d+3}$}.

  The efficiency of $M'$ follows from the efficiency of $M$ as well as \Autoref{:efficent-computation-of-representatives}.

\end{proof}

We show another application in the context of sample complexity of pseudodeterministic algorithms. Let $f_1, \ldots, f_d$ be functions from $\{0, 1\}^n$ to $[0, 1]$. 
Consider algorithms that have a blackbox access to these functions. I.e, the algorithm can generate a query $q$ ask for the values of $f_i(q)$ for some $1 \leq i \leq d$. The goal of the algorithm is to obtain a $(\epsilon, \delta)$-approximation to the vector $\langle\ex(f_{i})\rangle_{i=1}^{d}$. 
with respect to $d_{max}$ metric. This means that with probability at least $1-\delta$, the vector $\vec{\alpha}\in\R^{d}$ returned by the algorithm is such that $d_{max}(\vec{\alpha}, \langle\ex(f_{i})\rangle_{i=1}^{d})\leq\epsilon$ which is equivalent to saying that with probability at least $1-\delta$, it holds for all $i$ that vector $\abs{\alpha_{i}-\ex(f_{i})}\leq\epsilon$.

Goldreich proved that there is $(d+1)$-pseudodeterministic algorithm for this task, and this algorithm has a sample complexity of $\widetilde{O}(d^{4})$ . 
Using reclusive partitions, we show that the sample complexity can be improved to  $\widetilde{O}(d^{2})$ samples.

\RestatableApplicationToPseudodeterministicAlgorithms
\begin{proof}
  First define an algorithm $M$ as follows.
  $M$ selects uniformly at random $m=O\left(\frac{(d+1)^{2}}{\epsilon^{2}}\cdot\log\left(\frac{d}{\delta}\right)\right)$ points $x_{1},\ldots,x_{m}\in\nbits$, queries each function $f_{i}$ on all of those values and outputs the vector $\vec{v}$ where $v_{i}=\frac{1}{m}\sum_{j=1}^{m}f(x_{j})$ is the observed sample average.
  Then we have for each $i\in[d]$ that
  \[
    \pr\left[\abs{v_{i}-\ex f_{i}}>\frac{\epsilon}{d+1}\right]<\frac{\delta}{d}
  \]
  (this follows because $O(\frac{1}{\epsilon'^{2}}\cdot \log(1/\delta'))$ samples is sufficient for $(\epsilon', \delta')$-approximating the average of a single function $f:\nbits\to[0,1]$).
  Thus, by a union bound, we have
  \[
    \pr\left[\exists i\text{ s.t. }\abs{v_{i}-\ex f_{i}}>\frac{\epsilon}{d+1}\right]<\delta
  \]
  and taking the complement we have the desired result that
  \[
    \pr\left[d_{max}\left(\vec{v},\langle \ex f_{i}\rangle_{i=1}^{d}\right)\leq\frac{\epsilon}{d+1}\right]>1-\delta.
  \]
  Thus, $M$ will $(\frac{\epsilon}{d+1}, \delta)$-approximate $\langle\ex(f_{i})\rangle_{i=1}^{d}$ with respect to the $d_{max}$ metric.

  To complete the proof of the claim, let $A$ be a $d\times d$ reclusive matrix as in \Autoref{:d-reclusive} so that the reclusive distance is $\Delta_{A}=\frac{1}{d}$.
  Apply \Autoref{alg:pd} with $M$ and $A$ to obtain an algorithm $M'$ which $(d+1)$-pseudodeterministically $(\epsilon, \delta)$-approximates $\langle\ex(f_{i})\rangle_{i=1}^{d}$.
\end{proof}

Both Goldreich's algorithm and our own begin by taking samples to obtain an $(\epsilon',\delta)$-approximation of the averages, and then apply a rounding technique to add pseudodeterminism.
The difference is that Goldreich's rounding technique requires $\epsilon' = \frac{\epsilon}{d^{2}}$ whereas our algorithm only requires $\epsilon'=\frac{\epsilon}{d+1}$.
Goldreich's requirement for a better initial approximation requires more samples than ours does.
The reason for this could be that Goldreich's rounding technique is randomized, while ours is deterministic.



\section{Conclusions and Future Work}\label{sec:future}



One remaining open question from this work is to either prove \Autoref{:linear-conjecture} (possibly by improving the bound in \Autoref{:sperner-upper-bound}) or to offer a construction with better tolerance ($\epsilon$). In particular, improvements of \cite[Thm.~1]{loera_polytopal_2001} (along with the extra properties they proved that we discussed after our \Autoref{:sperner-number-defn}) would decrease our upper bound on the tolerance by increasing the lower bound on $\sperner(d)$ in \Autoref{:sperner-number-defn}. In fact, improvement of \cite[Thm.~1]{loera_polytopal_2001} just for the case that the polytope $P$ is a hypercube would yield improvements on our upper bounds for generic partitions, because we only every applied their results to hypercubes (because balls in the $d_{max}$ metric are hypercubes).
As a simplification of this open question, one might ask if the conjectured bound on the tolerance is at least optimal for partitions of unit hypercubes. 

A second open question that is of interest to us and is relevant to rounding schemes is the following: if we consider secluded partitions with degree $k=\poly(d)$ instead of exactly $k=d+1$, what improvements are possible for the tolerance $\epsilon$? For example, in the results of Hoza and Klivans, a degree of $\poly(d)$ would have been fine. Viewed another way, one might consider the best possible degree as a function of the tolerance---given some tolerance $\epsilon$, what is the smallest $k$ such that there exists a $(k,\epsilon)$-secluded partition with members of diameter at most $1$. A simple result is that for a tolerance of $\epsilon>\frac12$, $k\geq 2^d$ because for any point $\vec{p}$, the $2^d$ corners of $\clos{B}_\epsilon(\vec{p})$ are all $d_{max}$ distance more than $1$ apart, so because the partition has members of diameter at most $1$, each corner must belong to different member of the partition, so $\abs{\mathcal{N}_\epsilon(\vec{p})}\geq 2^d$. Thus, our results in this work show that for $\epsilon\leq\frac{1}{2d}$ one can achieve degree $k=d+1=\poly(d)$, and that is optimal, but for $\epsilon>\frac12$, $k\geq 2^d=\exponential(d)$, and we think it would be interesting to know the best value of $k$ for $\epsilon\in(\frac{1}{2d},\frac12)$ or at least know the largest $\epsilon$ such that $k=\poly(d)$.

A third area that is open is to better understand which lattice partitions are $(d+1,\epsilon)$-secluded (for some $\epsilon$). We initially defined the notion of a reclusive matrix and reclusive partition in order to construct some partition that was $(d+1,\frac{1}{O(d)})$-secluded, and we expected that our definition imposed far more requirements on the matrix than were necessary (i.e. that there would be lots of partitions that did not meet our definition, but had similar structural properties). However, we discussed in \Autoref{sec:fundamental-reclusive-property} our conceptual understanding of the construction of these partitions by successive extrusions and shifts and expected that this would be robust to re-ordering the entries in a row of the matrix; however, we then showed by a specific example that doing so resulted in partitions that were not $(d+1,\epsilon)$-secluded because there were points on the boundary of more than $d+1$ members of the partition for dimensions greater than $5$. This suggests to us that there is something deeper going on with these constructions than we initially realized, and it would be interesting to understand exactly what sets of basis vectors have the property that the integer linear combinations give the positions of hypercubes in a $(d+1,\epsilon)$-secluded partition (for some $\epsilon$).
\newpage
\appendix

\section{Rounding Schemes in Prior Work}
\label{sec:rounding-schemes-in-prior-work}

In this section, we will discuss in some detail how rounding is used in a number of publications and what properties of the rounding schemes are important in each of these papers. Not all of them benefit from our main construction and bounds, but we mention them nonetheless to highlight that there are a variety of perspectives one may reasonably take on what constitutes a good rounding scheme. Further, we think viewing each of these schemes as a partition (or distribution of partitions) highlights which publications have common goals in designing their rounding schemes. We begin by looking at a very simple rounding scheme, but though it is very simple, it shows up as a significant part of numerous publications. We have found that each of these publications independently walks through the construction, and we hope to demonstrate that each of these is doing the same thing under a different guise.

\subsection{A Very Simple Rounding Scheme}

Recalling that a deterministic rounding scheme for $\R$ is just a function $f:\R\to\R$, arguably, the most basic deterministic rounding scheme in $\R$ is the floor function, $\floor{\cdot}$, which maps every real number to the largest integer that is not larger that it\footnote{One could also consider the ceiling function, but floor tends to be used more often in practice as we shall see.}. If one considers the partition induced by this deterministic rounding scheme, it is the partition of half-open unit intervals $\P_{\floor{\cdot}}=\set{[n,n+1):n\in\Z}$. There are three simple modifications one might wish to make to this rounding scheme. 

First, one may want a ``scaled'' version. In the floor scheme, values might be rounded by as much as $1$, but one might wish to have values rounded by at most $\alpha$ for some $\alpha\in(0,\infty)$. This can be accomplished by a modified floor function $\floor{\cdot}_{\alpha}:\R\to\R$ defined by $\floor{x}_{\alpha}\defeq\alpha\floor{x/\alpha}$. This function maps every real number to the largest integer multiple of $\alpha$ that is not larger than it. The partition induced by $\floor{x}_{\alpha}$ is $\P_{\floor{\cdot}_{\alpha}}=\set{[\alpha n,\alpha(n+1)):n\in\Z}$.

The second modification that one might want is a ``shift'' of the floor scheme. For example, maybe it is desirable that $0.98$ and $1.213$ are rounded to the same value, and so one could (for example) choose the function $f:\R\to\R$ defined by $f(x)=\floor{x-0.3}$ so that $f(0.98)=f(1.213)=0$. More generally, one could pick any $\beta\in\R$ to shift by. This shift can be combined with a scaling $\alpha\in(0,\infty)$ to define the deterministic rounding scheme $\floor{\cdot}_{\alpha,\beta}$ given by $\floor{x}_{\alpha,\beta}\defeq\floor{x-\beta}_{\alpha}+\beta=\alpha\floor{(x-\beta)/\alpha)}$. The partition induced from this rounding scheme is $\P_{\floor{\cdot}_{\alpha,\beta}}=\set{[\alpha n+\beta,\alpha(n+1)+\beta):n\in\Z}$
\footnote{
  To see this, observe that $\P_{\floor{\cdot}_{\alpha,\beta}}$ is in fact a partition of $\R$ and that for any $n\in\Z$, if $x\in[\alpha n+\beta,\alpha(n+1)+\beta)$, then $(x-\beta)/\alpha\in[n,n+1)$ so $\floor{(x-\beta)/\alpha}=n$ so $\floor{x}_{\alpha,\beta}=\alpha n$. Thus all points in any member of $\P_{\floor{\cdot}_{\alpha,\beta}}$ map to the same value, and points in two different members map to different values (i.e. $n$ and $n'$).
}.
Note, that by this definition, the difference between $x$ and $\floor{x}_{\alpha,\beta}$ will become relatively large as $\beta$ is taken to be large, so typically $\beta$ will only take values in $[0,\alpha)$.

The third modification that one might want to make to the floor scheme is to have a ``different representative''. In the floor function, each value in the interval $[n,n+1)$ is mapped/rounded to $n$, but it might make sense to map/round these values to some other point in the interval such as the midpoint (or it might even be desirable to map/round them to a point not in the interval). We can combine this with the scaling and shifting. Let $\alpha$, $\beta$ as before and $\gamma\in\R$ (it will be typical that $\gamma$ is small, and to round to the midpoint we will let $\gamma=\alpha/2$). Define the deterministic rounding scheme $\floor{\cdot}_{\alpha,\beta,\gamma}$ by $\floor{x}_{\alpha,\beta,\gamma}\defeq\floor{x}_{\alpha,\beta}+\gamma=\alpha\floor{(x-\beta)/\alpha)}+\gamma$. The partition induced by this rounding scheme is the same as $\P_{\floor{\cdot}_{\alpha,\beta}}$ because changing the value assigned to each member does not change the member.

The shift modification discussed above is typically most useful when applied in the context of a randomized rounding scheme (a distribution of functions) rather than a deterministic rounding scheme (a single function). The idea is that it is often desirable that for any fixed pair of points $x,y\in\R$ which are ``sufficiently close'', then it holds with ``sufficiently high probability'' (over the selection of function $f$ from the distribution) that $f(x)=f(y)$. For example, fix some $\alpha\in(0,\infty)$ and consider the set of functions $\set{\floor{\cdot}_{\alpha,\beta}:\beta\in[0,\alpha)}$ with distribution corresponding to $\beta$ being distributed uniformly over $[0,\alpha)$. This gives a randomized rounding scheme with the following property: For any $\epsilon\in[0,\alpha)$, for any $x,y\in\R$ with $\abs{x-y}\leq\epsilon$, the probability that $\floor{x}_{\alpha,\beta}=\floor{y}_{\alpha,\beta}$ is greater than or equal to $\epsilon/\alpha$.

Intuitively this is because $x$ and $y$ end up in different members of the partition $\P_{\floor{\cdot}_{\alpha,\beta}}$ if and only if one of the boundaries of that partition separate $x$ and $y$ which happens with probability $\abs{x-y}/\alpha\leq \epsilon/\alpha$.

We view the randomized rounding scheme above (for any distribution of $\beta\in\R$) as a distribution of partitions of $\R$ by half-open $\alpha$-length intervals, and a value $x\in\R$ is randomly rounded by randomly obtaining a partition in the distribution, determining which member/interval of that partition contains $x$, and then returning the minimum value of that member/interval.

The ideas above easily generalizes to $\R^{d}$ for any $d\in\N$. One can view this generalization as being the above in each coordinate or (equivalently) as a vector version: for $\alpha\in(0,\infty)$ and a vector $\vec{\beta}=\langle{\beta_{i}}\rangle_{i=1}^{d}$, and a vector $\vec{\gamma}=\langle{\gamma_{i}}\rangle_{i=1}^{d}$ define $\floor{\cdot}_{\alpha,\vec{\beta},\vec{\gamma}}:\R^{d}\to\R^{d}$ coordinatewise in the expected way: $\floor{\vec{x}}_{\alpha,\vec{\beta},\vec{\gamma}}\defeq\langle\floor{x_{i}}_{\alpha, \beta_{i}, \gamma_{i}}\rangle_{i=1}^{d}$. If $\vec{\gamma}$ is not specified, it will be assumed to be $\vec{0}$. The partition induced by this scheme is $\P_{\floor{\cdot}_{\alpha,\vec{\beta},\vec{\gamma}}}=\set{\vec{\gamma}+[\alpha n,\alpha(n+1))^{d}:n\in\Z}$. In other words, the partition induced by rounding each coordinate is a grid of unit hypercubes with some shift applied to the grid.


\subsection{The Randomized Rounding Scheme of Saks and Zhou}
\label{sec:saks-zhou}

\newcommand{\opnorm}[1]{\norm{#1}_{\infty-\mathrm{op}}}
\newcommand{\entrynorm}[1]{\norm{#1}_{\infty-\mathrm{entry}}}

The rounding scheme used by Saks and Zhou is the basic rounding scheme just introduced\footnote{There is a small caveat that they consider only rounding matrices in $[0,1]^{d\times d}$ and requiring them to be rounded to a value in $[0,1]^{d\times d}$, but they just ensure everything is rounded down in each coordinate and then take $0$ if the value was negative.}. We briefly state the parameters of their scheme.

Let $d\in\N$ and $t=O(\log d)$ and $D=O(\log d)$. Let $\epsilon=d\cdot2^{-t}\cdot2^{-D}$. This will not be of much importance in this paper, but we want to highlight that this basic rounding scheme is used in multiple papers, so we briefly mention the parameters of the scheme for Saks and Zhou. Let $\alpha=2^{-t}$ so $\alpha=1/\poly(d)$. Let $S=\set{\frac{0}{2^{D}},\frac{1}{2^{D}},\ldots,\frac{2^{D}-2}{2^{D}},\frac{2^{D}-1}{2^{D}}}$ and let $\beta$ be uniformly distributed over $S$ and let $\vec{\beta}$ be the $d^{2}$ length vector in which every entry is $\beta$ (i.e. $\vec{\beta}=\langle \beta \rangle_{i=1}^{d^{2}}$). Let $\vec{n}\in\R^{d^2}$. Then with probability at least $1-\frac{O(d^3)}{2^{D}}$ (over the choice of $\beta$) it holds for all $\vec{m}\in B^{\circ}_{\epsilon}$ (w.r.t. the $l^{\infty}$ norm/$d_{max}$ metric) that $\floor{\vec{n}}_{\alpha,\vec{\beta}}=\floor{\vec{m}}_{\alpha,\vec{\beta}}$. In other words, with high probability, the entire $\epsilon$-ball of vectors around $\vec{n}$ are rounded to the same value. The reason is that for any coordinate $i\in[d]$, there are at most $O(d)$ values of $\beta\in S$ such that $\floor{n_i}_{\alpha, \beta}\not=\floor{m_i}_{\alpha, \beta}$ so the result holds by a union bound over the $d^2$ coordinates.

The notion of distance that Saks and Zhou were interested in, though, is not the $l^\infty$ norm, but the operator norm on matrices induced by the $l^{\infty}$ norm on vectors. This norm can be defined in either of these two well-known equivalent ways. Let $M$ be an $d\times d$ matrix:
\[
    \opnorm{M}\defeq\sup\set{\norm{M\vec{x}}_{\infty}:\vec{x}\in\R^{d}\text{ and }\norm{\vec{x}}_{\infty}=1}
\]
or
\[
    \opnorm{M}\defeq\max_{i\in[d]}\sum_{j=1}^{d}\abs{M_{i,j}}.
\]
If $M$ is just viewed as the obvious vector $\vec{m}$ in $\R^{d^2}$, then it is easy to see using the second definition above that
\[
  \norm{\vec{m}}_\infty\leq\opnorm{M}\leq d\norm{\vec{m}}_\infty.
\]
Thus, for any matrix $N$ it holds with the above probability that for all matrices $M$ within distance $\epsilon'=\cdot2^{-t}\cdot2^{-D}=\epsilon/d$ of $N$ w.r.t. the operator norm $M$ is rounded to the same value as $N$ (they are rounded as they would be if they were viewed as $d^2$ length vectors).

\subsection{The Randomized Rounding Scheme of Goldreich}

In \cite[Algorithm~9]{Goldreich19}, Goldreich uses the basic rounding scheme discussed above as well
\footnote{
  Goldreich uses $t$ to denote the dimension that we refer to as $d$, uses $\epsilon$ to denote what we call $\alpha$, and uses $\tau$ to denote what we call $\beta$. Further, Goldreich is is proving the property we are about to discuss in the context of learning the averages of $t$-many functions which is a detail showing up in the proof that is not needed for how we will state this property.
}.
However, unlike Saks and Zhou, Goldreich's goals in using the partition are very relevant to our work in this paper. For an arbitrary $\epsilon\in(0,\infty)$ let $\alpha=\epsilon$. Goldreich selects $\beta$ uniformly at random from the set $\set{-(j-0.5)\cdot\frac{\alpha}{10d^2}}$ and takes $\vec{\beta}$ to be the vector of length $d$ in which every coordinate is $\beta$ (i.e. $\vec{\beta}=\langle\beta\rangle_{i=1}^d$) and then applies the function $\floor{\cdot}_{\alpha,\vec{\beta}}:\R^{d}\to\R^{d}$. 

Goldreich shows that this randomized rounding scheme has the following property: For any point $\vec{x}\in\R^{d}$, there is a set $S_{\vec{x}}$ of cardinality at most $d+1$ such that with high probability (at least $1-\frac{1}{d+3}$ for $d>12$) over the choice of $\beta$, it will hold that
\[
\forall\vec{y}\in B_\epsilon(\vec{x}) :\floor{\vec{y}}_{\alpha,\vec{\beta}}\in S_{\vec{x}}
\]
where the $\epsilon$-ball is with respect to $l^\infty$/$d_{max}$. In other words, Goldreich shows that for these parameters of the basic grid rounding scheme, for any $\epsilon$-ball, there is a set of $d+1$ members of the induced partition, and it will hold with high probability that that ball intersects no member other than these. We emphasize the order of quantifiers---for any ball there is a high probability that this occurs, but there is $0$ probability that this occurs for all balls simultaneously because no matter which $\beta$ is chosen, the induced partition is a grid, so the $\epsilon$-ball at the corner of member will intersect $2^d$ different members.

Our work in this paper shows that this property that Goldreich desires can be achieved with a deterministic rounding scheme and that the randomness is not required. In other words, in \Autoref{sec:reclusive-lattice-partitions}, we construct a partition in each dimension $\R^{d}$ (which gives a deterministic rounding scheme) such that {\em every} ball of an appropriate radius $\epsilon$ intersects at most $d+1$ members of the partition.
\subsection{The Deterministic Rounding Scheme of Hoza and Klivans}
\label{sec:hoza-klivans}

In \cite[Section~2]{hoza_preserving_2018}, Hoza and Klivans have the same goal as Goldriech---ensuring that for any $\epsilon$-ball, there are very few values that all points in that ball are rounded to. However, Hoza and Klivans do this with a deterministic rounding scheme, and the induced partition of this rounding scheme has quite good parameters regarding our motivating question. The analysis of the rounding scheme in their paper is somewhat obscured by other technical aspects that were relevant to other ideas they were discussing but are not necessary for the analysis of partition. For this reason, we will present their scheme here doing our best to preserve the notation that they used (so one can compare our presentation with their paper if desired) while also casting it in a way that is consistent with the perspective we take; we will then prove that the induced partition of $\R^d$ can be scaled to a $(d+1, \frac{1}{6(d+1)})$-secluded partition with all members having diameter at most $1$.

Let $\epsilon\in(0,\infty)$ and $d\in\N$. Let $\mathcal{I}$ be a partition of $\R$ by intervals of length $2\epsilon(d+1)$ which are closed on the left and open on the right. Fix an arbitrary point $x\in\R$ and consider the interval $[x-\epsilon,x+\epsilon]=x+[-\epsilon,\epsilon]$. Because this interval has length $2\epsilon$ (and is closed) and every interval in $\mathcal{I}$ has length $2\epsilon(d+1)$ (and is half open), it follows that there is exactly one value $\Delta\in[d+1]$ such that the interval $[x+(2\epsilon\Delta)-\epsilon,x+(2\epsilon\Delta)+\epsilon]=x+2\epsilon\Delta+[-\epsilon,\epsilon]$ intersects two intervals in $\mathcal{I}$, and for every other $\Delta\in[d+1]$, this interval is a subset of some interval of $\mathcal{I}$ (which interval that is may depend on $\Delta$)
\footnote{
The sketch of the reason for this is that
\[
\bigcup_{\Delta\in[d+1]}[x+(2\epsilon\Delta)-\epsilon,x+(2\epsilon\Delta)+\epsilon]=[x+\epsilon,x+2(d+1)\epsilon+\epsilon]
\]
which is a closed interval of length $2\epsilon\Delta$ and thus intersects exactly 
two intervals of $\mathcal{I}$, say $I_{left}=[\alpha-2\epsilon(d+1),\alpha)$ 
and $I_{right}=[\alpha,\alpha+2\epsilon(d+1))$. 
The point $\alpha$ is either contained in the interior or right boundary of $[x+(2\epsilon\Delta_{0})-\epsilon,x+(2\epsilon\Delta_{0})+\epsilon]$ 
for some $\Delta_{0}$ (if not, then $\bigcup_{\Delta\in[d+1]}[x+(2\epsilon\Delta)-\epsilon,x+(2\epsilon\Delta)+\epsilon]$ 
would not intersect $I_{left}$). This is the unique $\Delta_{0}$ such 
that the interval intersects both $I_{left}$ and $I_{right}$.
}.

Now consider the partition $\mathcal{G}$ of $\R^{d}$ where each member is a $d$-fold product of intervals of $\mathcal{I}$. That is,
\[
\mathcal{G}\defeq\set{\prod_{i=1}^{d}I^{(i)}:I^{(1)},\ldots,I^{(d)}\in\mathcal{I}}.
\]
Each member of $\mathcal{G}$ is a hypercube, and up to translation, the set of centers of these hypercubes is $2\epsilon\Z^d$ (i.e. $\mathcal{G}$ should be interpreted as a grid of hypercubes). Let $\vec{1}$ denote the vector such that every entry is a $1$, and define $\Lambda=\set{(2\epsilon\Delta)\cdot\vec{1}:\Delta\in[d+1]}$. We claim that for any point $\vec{x}\in\R^{d}$, there exists at least one $\vec{\lambda}\in\Lambda$ such that $\clos{B}_{\epsilon}(\vec{x}+\vec{\lambda})$ is a subset of a member of $\mathcal{G}$ (intuitively, $\vec{x}$ can be shifted by one of these values, so that it is $\epsilon$-far into the interior of some member). This is because $\clos{B}_{\epsilon}(\vec{x}+\vec{\lambda})$ (which is a hypercube) is a subset of a member of $\mathcal{G}$ (all of which are hypercubes) if and only if for all $i\in[d]$ it holds that $[x_{i}+\lambda_{i}-\epsilon,x_{i}+\lambda_{i}+\epsilon]=[x_{i}+(2\epsilon\Delta)-\epsilon,x_{i}+(2\epsilon\Delta)+\epsilon]$ is a subset of some member of $\mathcal{G}$. By what we showed, for each coordinate $i\in[d]$, there is exactly one $\Delta\in[d+1]$ (and thus one $\vec{\lambda}\in\Lambda$) such that this does not hold in coordinate $i$, and so there are at most $d$-many $\vec{\lambda}$'s for which this does not hold on some coordinate. Thus, there must be at least one $\vec{\lambda}\in\Lambda$ (i.e. at least one $\Delta\in[d+1]$) for which the containment holds for all coordinates $i\in[d]$.

With these properties established, let $s:\R^{d}\to\Lambda$ be a function mapping each point $\vec{x}$ to one of the $\vec{\lambda}\in\Lambda$ that has the containment property above (e.g. take the smallest length $\vec{\lambda}$ that works). Also, define the representative function $\rep:\mathcal{G}\to\R^{d}$ so that $\rep(X)$ is the midpoint of the hypercube $X$. Then, the deterministic rounding scheme of Hoza and Klivans is the function $f:\R^{d}\to\R^{d}$ defined by $f(\vec{x})=\rep(\member_{\mathcal{G}}(\vec{x}+s(\vec{x}))$ (conceptually, $\vec{x}$ is rounded by first shifting $\vec{x}$ by some amount $\vec{\lambda}$ so that it is $\epsilon$-far in the interior of some member of the partition $\mathcal{G}$, and then returning the center point of that member.

The partition induced by the Hoza-Klivans rounding scheme in $\R^{2}$ is shown in \Autoref{fig:preserving_randomness_hoza-klivans_partition}.

\begin{figure}[h]
\centering
\begin{subfigure}{0.45\textwidth}
    \includegraphics[width=\textwidth]{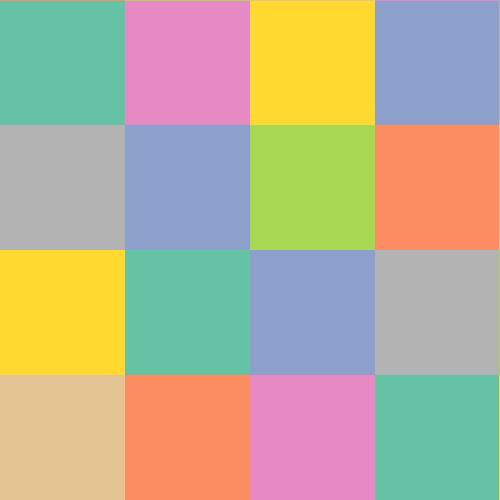}
    \caption{The grid partition $\mathcal{G}$ for $\R^{2}$.}
     \label{fig:preserving_randomness_grid_partition}
\end{subfigure}
\hfill
\begin{subfigure}{0.45\textwidth}
    \includegraphics[width=\textwidth]{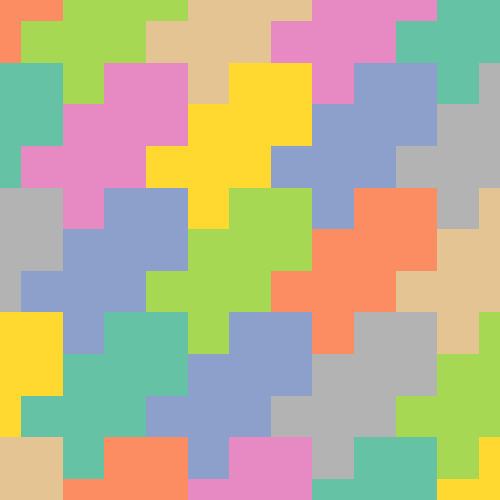}
    \caption{The Hoza-Klivans partition of $\R^{2}$.}
    \label{fig:preserving_randomness_hoza-klivans_partition}
\end{subfigure}
        
\caption{The partitions described in Section~\ref{sec:hoza-klivans}. $\mathcal{G}$ is given arbitrary colors for reference (there are duplicated colors because otherwise colors were hard to distinguish). The color of a member of the Hoza-Klivans partition indicates that all points in the member are rounded to the center of the cube in the grid partition $\mathcal{G}$ of the same color. Note that the Hoza-Klivans partition has the same grid structure as $\mathcal{G}$ despite the members no longer being cubes. Also note, that at most $3$ members of the Hoza-Klivans partition meet at a point.}
\label{fig:preserving_randomness}
\end{figure}

\begin{proposition}\label{:hoza-klivans-secluded}
The partition induced by the rounding scheme described above has the following properties:
\begin{itemize}
    \item Each member has diameter (in the $d_{max}$ metric) at most $6\epsilon(d+1)$
    \item The partition is $(d+1, \epsilon)$-secluded
\end{itemize}
\end{proposition}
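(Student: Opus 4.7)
The plan is to leverage the algebraic structure of the rounding function $f$ together with the defining property of the shift function $s$. Recall that $f(\vec{x}) = \rep(\member_{\mathcal{G}}(\vec{x} + s(\vec{x})))$, where $s(\vec{x}) = 2\epsilon\Delta(\vec{x})\vec{1}$ for some $\Delta(\vec{x}) \in [d+1]$, and that the members of the induced partition $\P_f$ are precisely the level sets of $f$. Both claims will reduce to triangle-inequality arguments once one identifies the correct invariant.

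For the diameter bound, I will fix a member of $\P_f$, that is, the preimage $f^{-1}(\rep(X))$ for some $X \in \mathcal{G}$, and bound the $d_{max}$-distance between any two of its points $\vec{x}_1, \vec{x}_2$. Setting $\vec{\lambda}_j = s(\vec{x}_j)$, both shifted points $\vec{x}_j + \vec{\lambda}_j$ lie in the same hypercube $X$, which has side length $2\epsilon(d+1)$ and thus $d_{max}$-diameter $2\epsilon(d+1)$. The triangle inequality gives
\[
\norm{\vec{x}_1 - \vec{x}_2}_{\infty} \leq \norm{(\vec{x}_1 + \vec{\lambda}_1) - (\vec{x}_2 + \vec{\lambda}_2)}_{\infty} + \norm{\vec{\lambda}_1 - \vec{\lambda}_2}_{\infty},
\]
and since $\vec{\lambda}_j = 2\epsilon\Delta_j\vec{1}$ with $\Delta_j \in [d+1]$, the second term is at most $2\epsilon d$. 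So the diameter is at most $2\epsilon(d+1) + 2\epsilon d = 2\epsilon(2d+1)$, which is bounded above by $6\epsilon(d+1)$.

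For the $(d+1,\epsilon)$-secluded property, I will fix an arbitrary $\vec{p} \in \R^{d}$ and show that $f$ takes at most $d+1$ distinct values on $\closure{B}_{\epsilon}(\vec{p})$, which is equivalent to $\closure{B}_{\epsilon}(\vec{p})$ intersecting at most $d+1$ members of $\P_f$. The key observation is that for any $\vec{x} \in \closure{B}_{\epsilon}(\vec{p})$ with $s(\vec{x}) = 2\epsilon\Delta\vec{1}$, we have
\[
\norm{(\vec{p} + 2\epsilon\Delta\vec{1}) - (\vec{x} + s(\vec{x}))}_{\infty} = \norm{\vec{p} - \vec{x}}_{\infty} \leq \epsilon,
\]
so $\vec{p} + 2\epsilon\Delta\vec{1} \in \closure{B}_{\epsilon}(\vec{x} + s(\vec{x}))$. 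By the defining property of $s$, this latter closed ball lies in a single member of $\mathcal{G}$, and that member is therefore $\member_{\mathcal{G}}(\vec{p} + 2\epsilon\Delta\vec{1})$. Consequently $f(\vec{x}) = \rep(\member_{\mathcal{G}}(\vec{p} + 2\epsilon\Delta\vec{1}))$ depends only on $\Delta \in [d+1]$, and $f$ attains at most $d+1$ values on $\closure{B}_{\epsilon}(\vec{p})$.

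The main (conceptual) obstacle is recognizing the invariant in the second part: that a single shift index $\Delta$ produces the same rounding target for every point of the ball $\closure{B}_{\epsilon}(\vec{p})$, thanks to the fact that the radius $\epsilon$ appearing in the definition of $s$ exactly matches the radius of the ball we are probing. Once this is noticed, the bound $d+1$ falls out immediately from the size of $\Lambda$, and the diameter bound is a routine triangle-inequality calculation using the diameter of a grid cube together with the diameter of $\Lambda$.
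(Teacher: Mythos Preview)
Your proof is correct and follows essentially the same approach as the paper. The paper first abstracts the construction into a general lemma (arbitrary initial partition $\P_{0}$, arbitrary finite shift set $\Lambda$) and then specializes, whereas you work directly with $\mathcal{G}$ and $\Lambda=\{2\epsilon\Delta\vec{1}:\Delta\in[d+1]\}$; the core triangle-inequality arguments for both the diameter and the secludedness are identical. As a small bonus, your two-term triangle inequality $\norm{\vec{x}_1-\vec{x}_2}_\infty\leq\norm{(\vec{x}_1+\vec{\lambda}_1)-(\vec{x}_2+\vec{\lambda}_2)}_\infty+\norm{\vec{\lambda}_1-\vec{\lambda}_2}_\infty$ yields the sharper diameter bound $2\epsilon(2d+1)$, while the paper's three-term version (bounding $\norm{\vec{\lambda}_1}_\infty+\norm{\vec{\lambda}_2}_\infty\leq 2\ell=4\epsilon(d+1)$) gives $6\epsilon(d+1)$.
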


If this partition is scaled by a factor of $6\epsilon(d+1)$, then it trivially becomes a partition in which all members have diameter at most $1$, and it is $(d+1,\frac{1}{6(d+1)})$-secluded. This result was stated in the paper and it restated here.
\RestatableHozaKlivansPartitionThm*

To prove this result, we will abstract this rounding scheme slightly and prove a corresponding version of the result so as to highlight the essential components of this deterministic rounding scheme if one wished to generalize it. In the statement of the following lemma, the only notation change is that $\P_{0}$ can be interpretted as indicting $\mathcal{G}$.

\begin{lemma}
 Let $d\in\N$ and $\epsilon,D\in(0,\infty)$.
 Let $\P_{0}$ be a partition of $\R^{d}$ such that all members have diameter at most $D$. Let $\rep:\P_{0}\to\R^{d}$ be a function such that $\rep(X)\in X$ (conceptually, this function defines a unique representative for each member of the partition). 
 Let $\Lambda$ be a finite set of vectors in $\R^{d}$ (conceptually a finite set of possible shifts). 
 Let $s:\R^{d}\to\Lambda$ be a function such that $\clos{B}_{\epsilon}(\vec{x}+s(\vec{x}))$ is a subset of some member
 \footnote{
   In particular, because $\vec{x}+s(\vec{x})\in\clos{B}_{\epsilon}(\vec{x}+s(\vec{x}))$ the member of $\P_{0}$ containing this ball must also contain $\vec{x}+s(\vec{x})$, and so this member must be $\member_{\P_{0}}(\vec{x}+s(\vec{x}))$.
 } 
 of $\P_{0}$. 
 Let $\ell=\max_{\vec{\lambda}\in\Lambda}\norm{\lambda}_\infty$ (the maximum length of a shift).
 Let $f:\R^{d}\to\R^{d}$ denote the function (i.e. deterministic rounding scheme) defined by $f(\vec{x})=\rep(\member_{\P_{0}}(\vec{x}+s(\vec{x}))))$. 
 
 Then the partition induced by the deterministic rounding scheme $f$ is $(\abs{\Lambda}, \epsilon)$-secluded and has members of diameter at most $2\ell+D$.
\end{lemma}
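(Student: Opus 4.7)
The plan is to prove the two claims of the lemma separately: first that the induced partition $\P_f$ has members of diameter at most $2\ell+D$, and second that the partition is $(\abs{\Lambda},\epsilon)$-secluded. Before starting either argument, I would first note a small observation that will be used in both: since distinct members of $\P_0$ are disjoint and $\rep(X)\in X$ for each $X\in\P_0$, the map $\rep$ is injective; hence if $f(\vec{x}_1)=f(\vec{x}_2)$ then $\member_{\P_0}(\vec{x}_1+s(\vec{x}_1))=\member_{\P_0}(\vec{x}_2+s(\vec{x}_2))$.

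For the secludedness claim, I would fix an arbitrary $\vec{p}\in\R^{d}$ and invoke \Autoref{:rounding-schemes-and-partitions} together with \Autoref{:alternate-neighborhood-defn}, so that it suffices to show $\abs{\{f(\vec{x})\colon\vec{x}\in\closure{B}_{\epsilon}(\vec{p})\}}\leq\abs{\Lambda}$. The key observation is a translation trick: for any $\vec{x}\in\closure{B}_{\epsilon}(\vec{p})$, since $d_{max}$ is translation invariant, $d_{max}(\vec{p}+s(\vec{x}),\vec{x}+s(\vec{x}))=d_{max}(\vec{p},\vec{x})\leq\epsilon$, so $\vec{p}+s(\vec{x})\in\closure{B}_{\epsilon}(\vec{x}+s(\vec{x}))$. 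By hypothesis on $s$, the closed ball on the right is a subset of the member of $\P_0$ containing its center, so $\vec{p}+s(\vec{x})$ and $\vec{x}+s(\vec{x})$ lie in the same member of $\P_0$. Therefore $f(\vec{x})=\rep(\member_{\P_0}(\vec{p}+s(\vec{x})))$, an expression whose value depends on $\vec{x}$ only through $s(\vec{x})\in\Lambda$. Since $\abs{\Lambda}$ is finite, there are at most $\abs{\Lambda}$ possible outputs, completing the secludedness claim.

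For the diameter claim, I would fix any member $X\in\P_f$ and any two points $\vec{x}_1,\vec{x}_2\in X$. By definition of the induced partition, $f(\vec{x}_1)=f(\vec{x}_2)$, so by the injectivity of $\rep$ observed above, $\vec{x}_1+s(\vec{x}_1)$ and $\vec{x}_2+s(\vec{x}_2)$ lie in a common member $M$ of $\P_0$, which has $\diam(M)\leq D$. A triangle inequality calculation then gives
\[
d_{max}(\vec{x}_1,\vec{x}_2)\leq d_{max}(\vec{x}_1+s(\vec{x}_1),\vec{x}_2+s(\vec{x}_2))+\norm{s(\vec{x}_1)}_{\infty}+\norm{s(\vec{x}_2)}_{\infty}\leq D+\ell+\ell=2\ell+D,
\]
using the definition of $\ell$ to bound the shift-norm terms.

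Both arguments are quite direct; there is no truly hard step. The only subtlety, which I would be careful to highlight, is the translation-invariance trick in the secludedness proof: the power of the hypothesis on $s$ is that it guarantees an \emph{entire} $\epsilon$-ball around $\vec{x}+s(\vec{x})$ lives inside one member of $\P_0$, which lets us ``move'' the reference point from $\vec{x}$ to $\vec{p}$ without changing the member of $\P_0$ that governs the rounding. Once this step is in hand, the rest is bookkeeping.
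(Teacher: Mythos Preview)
Your proposal is correct and follows essentially the same approach as the paper's proof: both use the translation-invariance observation that $\vec{p}+s(\vec{x})\in\closure{B}_{\epsilon}(\vec{x}+s(\vec{x}))$ to reduce the secludedness claim to the cardinality of $\Lambda$, and both use the triangle inequality with the diameter bound on $\P_0$ for the second claim. Your explicit upfront remark on the injectivity of $\rep$ is a clean way to phrase what the paper argues more verbosely in its diameter step (namely that equal representatives force equal members of $\P_0$).
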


Once this is proven, \Autoref{:hoza-klivans-secluded} follows as an immediate corollary since in the initial partition $\P_{0}=\mathcal{G}$, all members have diameter $D=2\epsilon(d+1)$, and the longest vector $\vec{\lambda}\in\Lambda$ has length $\ell=2\epsilon(d+1)$ (in the $l^{\infty}$ norm), and $\abs{\Lambda}=d+1$.

\begin{proof}
Let $\P$ denote the partition induced by the deterministic rounding scheme $f$. We first show that $\P$ is $(\abs{\Lambda}, \epsilon)$-secluded (i.e. that for any $\vec{p}\in\R^{d}$ it holds that $\abs{\mathcal{N}_{\epsilon}(\vec{p})}\leq\abs{\Lambda}$)
\footnote{
  The neighborhood notation $\mathcal{N}_{\epsilon}(\vec{p})$ throughout this proof is always relative to the partition $\P$ and never the partition $\P_{0}$.
}.
Let $\vec{p}\in\R^{d}$ be arbitrary.
For any $\vec{x}\in\clos{B}_{\epsilon}(\vec{p})$, since the $d_{max}$ metric arises from a norm, it follows that $\vec{p}+s(\vec{x})\in\clos{B}_{\epsilon}(\vec{x}+s(\vec{x}))$
\footnote{
    \begin{align*}
      \vec{x}\in\clos{B}_{\epsilon}(\vec{p}) &\iff d_{max}(\vec{x},\vec{p})\leq\epsilon\\
                                             &\iff \norm{\vec{x}-\vec{p}}_{\infty}\leq\epsilon\\
                                             &\iff \norm{(\vec{x}+s(\vec{x}))-(\vec{p}+s(\vec{x}))}_{\infty}\leq\epsilon\\
                                             &\iff \vec{p}+s(\vec{x})\in\clos{B}_{\epsilon}(\vec{x}+s(\vec{x}))
    \end{align*}
}.
Since $\clos{B}_{\epsilon}(\vec{x}+s(\vec{x}))\subseteq\member(\vec{x}+s(\vec{x}))$ (by our requirements on $s$), it then follows that $\vec{p}+s(\vec{x})\in\member(\vec{x}+s(\vec{x}))$ and so $\member(\vec{p}+s(\vec{x}))=\member(\vec{x}+s(\vec{x}))$. This allows us to show as follows that $f$ takes on at most $\abs{\Lambda}$ values on the set $\clos{B}_{\epsilon}(\vec{p})$:
\begin{align*}
    \set{f(\vec{x}): \vec{x}\in\clos{B}_{\epsilon}(\vec{p})} &= \set{\rep(\member_{\P_{0}}(\vec{x}+s(\vec{x}))): \vec{x}\in\clos{B}_{\epsilon}(\vec{p})} \tag{Def'n of $f$} \\
    &= \set{\rep(\member_{\P_{0}}(\vec{p}+s(\vec{x}))): \vec{x}\in\clos{B}_{\epsilon}(\vec{p})} \tag{$\member_{\P_{0}}(\vec{p}+s(\vec{x}))=\member_{\P_{0}}(\vec{x}+s(\vec{x}))$} \\
    &\subseteq \set{\rep(\member_{\P_{0}}(\vec{p}+\vec{\lambda})): \vec{\lambda}\in\Lambda} \tag{$s(\vec{x})\in\Lambda$}
\end{align*}
The latter set clearly has cardinality at most $\abs{\Lambda}$ because $\rep(\member(\vec{p}+\vec{\lambda}))$ is a mapping of the elements of $\Lambda$. This is morally why the induced partition $\P$ has the property $\abs{\mathcal{N}_{\epsilon}(\vec{p})}\leq\abs{\Lambda}$; the following formalizes this, but the intuition of the above is somewhat lost in the notation.

\begin{align*}
    \abs{\mathcal{N}_{\epsilon}(\vec{p})} &= \abs{\set{\member_{\P}(\vec{x}):\vec{x}\in\clos{B}_{\epsilon}(\vec{p})}} \tag{\Autoref{:alternate-neighborhood-defn}} \\
    &= \abs{\set{f^{-1}(f(\vec{x})):\vec{x}\in\clos{B}_{\epsilon}(\vec{p})}} \tag{Def'n of the induced partition $\P$} \\
    &= \abs{\set{f^{-1}(\vec{w}):\vec{w}\in\set{f(\vec{x}):\vec{x}\in\clos{B}_{\epsilon}(\vec{p})}}} \tag{Reformat} \\
    &\leq \abs{\set{f^{-1}(\vec{w}):\vec{w}\in\set{\rep(\member_{\P_{0}}(\vec{p}+\vec{\lambda})): \vec{\lambda}\in\Lambda}}} \tag{Prior paragraph} \\
    &\leq \abs{\set{\rep(\member_{\P_{0}}(\vec{p}+\vec{\lambda})): \vec{\lambda}\in\Lambda}} \tag{$f^{-1}$ is a function}  \\
    &\leq \abs{\Lambda} \tag{Prior paragraph}
\end{align*}

We next show that every member of $\P$ has diameter at most $2\ell+D$.
Let $X\in\P$ be arbitrary and let $\vec{x},\vec{y}\in X$. This means that $f(\vec{x})=f(\vec{y})$ so by definition of $f$, this means $\rep(\member_{\P_{0}}(\vec{x}+s(\vec{x}))=\rep(\member_{\P_{0}}(\vec{y}+s(\vec{y}))$. By definition of $\rep$, the left hand side is contained in $\member_{\P_{0}}(\vec{x}+s(\vec{x}))$ and the right hand side is contained in $\member_{\P_{0}}(\vec{y}+s(\vec{y}))$, and since the left and right hand side are the same point, it must be that $\member_{\P_{0}}(\vec{x}+s(\vec{x}))=\member_{\P_{0}}(\vec{y}+s(\vec{y}))$, and because members of $\P_{0}$ have diameter at most $D$, it follows that $d_{max}(\vec{x}+s(\vec{x}), \vec{y}+s(\vec{y}))\leq D$. Now observe that
\begin{align*}
    d_{max}(\vec{x},\vec{y}) &\leq d_{max}(\vec{x}, \vec{x}+s(\vec{x})) + d_{max}(\vec{x}+s(\vec{x}), \vec{y}+s(\vec{y})) + d_{max}(\vec{x}, \vec{x}+s(\vec{x})) \\
    &\leq \norm{s(\vec{x})}_{\infty} + D + \norm{s(\vec{y})}_{\infty} \\
    &\leq \ell + D + \ell \tag{$s(\vec{x}),s(\vec{y})\in\Lambda$}
\end{align*}
so $\diam(X)\leq2\ell+D$.
\end{proof}





\section{Measure Theory}
\label{sec:measure-theory}

Throughout this section, by ``countable'' we mean finite or countably infinite.

\renewcommand{\A}{\mathcal{A}}
\newcommand{\F}{\mathcal{F}}
\renewcommand{\B}{\mathcal{B}}
\renewcommand{\P}{\mathcal{P}}
\begin{fact}\label{:disjoint-uncountable}
  If $m$ is a measure and $\A$ is a (possibly uncountable) family of pairwise disjoint measurable sets, then
  \[
    m(\bigsqcup_{A\in\A}A)\geq\sum_{A\in\A}m(A).
  \]
\end{fact}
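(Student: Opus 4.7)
The plan is to reduce to the case of finite sub-collections, where finite additivity of the measure $m$ applies directly, and then pass to the supremum. The only subtle point is that the right-hand side $\sum_{A\in\A} m(A)$ is a sum indexed by a possibly uncountable set $\A$, so I first need to fix what that expression means: it is defined as the supremum of finite subsums,
\[
    \sum_{A\in\A}m(A) \;\defeq\; \sup\set{\sum_{A\in\F}m(A) \colon \F\subseteq\A,\ \F\text{ finite}}.
\]
With this definition, the statement reduces to a standard supremum argument.

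First I would fix an arbitrary finite subfamily $\F\subseteq\A$. Since the members of $\A$ are pairwise disjoint and measurable, so are the members of $\F$, and finite additivity of $m$ gives
\[
    \sum_{A\in\F}m(A) \;=\; m\Bigl(\bigsqcup_{A\in\F}A\Bigr).
\]
Then, because $\bigsqcup_{A\in\F}A\subseteq\bigsqcup_{A\in\A}A$ and $m$ is monotone on measurable sets, the right side is at most $m\bigl(\bigsqcup_{A\in\A}A\bigr)$. Hence every finite subsum is bounded above by $m\bigl(\bigsqcup_{A\in\A}A\bigr)$.

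Taking the supremum over all finite $\F\subseteq\A$ on the left yields
\[
    \sum_{A\in\A}m(A) \;=\; \sup_{\F}\sum_{A\in\F}m(A) \;\leq\; m\Bigl(\bigsqcup_{A\in\A}A\Bigr),
\]
which is exactly the claim. There is no real obstacle here beyond being careful about the definition of an uncountable sum of nonnegative reals; the only thing worth noting is that in the case where $m\bigl(\bigsqcup_{A\in\A}A\bigr)<\infty$, this argument implicitly shows that at most countably many $A\in\A$ can have $m(A)>0$, but we do not need that refinement to obtain the stated inequality.
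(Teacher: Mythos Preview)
Your proof is correct and essentially identical to the paper's: both invoke the definition of an uncountable sum as the supremum of finite subsums, bound each finite subsum by $m\bigl(\bigsqcup_{A\in\A}A\bigr)$ via finite additivity and monotonicity, and pass to the supremum. Your closing remark about at most countably many $A$ having positive measure is in fact the content of the paper's very next fact.
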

\begin{proof}
  By definition of the arbitrary summation (c.f. \cite[p.~11]{folland_real_1999}) we have
  \[
    \sum_{A\in\A}m(A)\defeq\sup\set{\sum_{A\in\F}:\F\subseteq\A,\;\F\text{ finite}}
  \]
  and for any $\F\subseteq\A$ we have
  \[
    m(\bigsqcup_{A\in\A}A)\geq m(\bigsqcup_{A\in\F}A)=\sum_{A\in\F}m(A).
  \]
  Thus $m(\bigsqcup_{A\in\A}A)$ is an upper bound for the set $\set{\sum_{A\in\F}:\F\subseteq\A,\;\F\text{ finite}}$ and thus greater than or equal to the supremum.
\end{proof}

\begin{fact}
  If $m$ is a measure and $\A$ is a (possibly uncountable) family of pairwise disjoint measurable sets and $m(\bigsqcup_{A\in\A}A)<\infty$, then the set $\set{A\in\A:m(A)>0}$ is countable.
\end{fact}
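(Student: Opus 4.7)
The plan is to stratify the family $\set{A\in\A:m(A)>0}$ according to the magnitude of $m(A)$ and reduce to showing each stratum is finite. Specifically, for each $n\in\N$ let $\A_n\defeq\set{A\in\A:m(A)>\frac{1}{n}}$. Then
\[
    \set{A\in\A:m(A)>0}=\bigcup_{n\in\N}\A_n,
\]
so since a countable union of finite sets is countable, it suffices to show each $\A_n$ is finite.

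To show $\A_n$ is finite, I would argue by contradiction: suppose $\A_n$ is infinite, then it contains an infinite (in fact countably infinite suffices) subfamily $\F\subseteq\A_n$. Applying the previous fact (\Autoref{:disjoint-uncountable}) to $\A_n$ (the sets are pairwise disjoint since $\A_n\subseteq\A$ and $\A$ is pairwise disjoint), together with monotonicity of $m$ on the containment $\bigsqcup_{A\in\A_n}A\subseteq\bigsqcup_{A\in\A}A$, gives
\[
    \sum_{A\in\A_n}m(A)\leq m\Bigl(\bigsqcup_{A\in\A_n}A\Bigr)\leq m\Bigl(\bigsqcup_{A\in\A}A\Bigr)<\infty.
\]
On the other hand, by the defining property of $\A_n$, each term in the sum exceeds $\frac{1}{n}$, so if $\A_n$ is infinite then the sum is $\infty$, a contradiction.

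This plan is essentially routine; the only mildly subtle point is that we must invoke the previous fact (\Autoref{:disjoint-uncountable}) rather than standard countable additivity, because we do not assume $\A_n$ is countable a priori (we are trying to prove it is finite). The inequality $\sum_{A\in\A_n}m(A)\leq m(\bigsqcup_{A\in\A_n}A)$ for an arbitrary pairwise disjoint family is exactly what \Autoref{:disjoint-uncountable} provides, and together with finiteness of $m(\bigsqcup_{A\in\A}A)$, the bound $m(A)>\frac{1}{n}$ forces $\abs{\A_n}<\infty$. I do not anticipate any real obstacle.
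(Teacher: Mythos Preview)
Your proposal is correct and follows essentially the same approach as the paper: stratify $\set{A\in\A:m(A)>0}$ as $\bigcup_{n\in\N}\A_n$ with $\A_n=\set{A\in\A:m(A)>\frac1n}$, then use monotonicity together with \Autoref{:disjoint-uncountable} to bound $\sum_{A\in\A_n}m(A)$ by the finite quantity $m(\bigsqcup_{A\in\A}A)$, forcing each $\A_n$ to be finite. The only cosmetic difference is that the paper writes this as a direct chain of inequalities rather than as a contradiction, and your mention of an infinite subfamily $\F$ is superfluous since you then correctly apply the bound to $\A_n$ itself.
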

\begin{proof}
  Let $\B=\set{A\in\A:m(A)>0}$ denote the set in question, and let $\B_{n}=\set{A\in\A:m(A)>\frac1n}$ so that $\B=\bigcup_{n=1}^{\infty}\B_{n}$. Clearly each $\B_{n}$ is finite since
  \[
    \infty > m(\bigsqcup_{A\in\A}A) \geq m(\bigsqcup_{A\in\B_{n}}A) \geq \sum_{A\in\B_{n}}m(A) \geq \sum_{A\in\B_{n}}\frac1n
  \]
  and $n$ is independent of the summation.

  Thus $\B$ is a countable union of finite families, so $\B$ is countable.
\end{proof}

\begin{fact}
  If $\P$ is a partition of $\R^{d}$, and $m$ is the Lebesgue measure on $R^{d}$, and for all $X\in\P$, $X$ is measurable and $m(x)>0$, then $\P$ is countable.
\end{fact}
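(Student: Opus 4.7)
The plan is to write $\R^{d}$ as a countable union of sets of finite measure and apply the previous fact to each piece. Specifically, for each $n \in \N$, let $B_{n} = [-n,n]^{d}$, so that $m(B_{n}) = (2n)^{d} < \infty$ and $\R^{d} = \bigcup_{n=1}^{\infty} B_{n}$.

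For each fixed $n$, consider the family $\A_{n} = \set{X \cap B_{n} \colon X \in \P}$. Since $\P$ partitions $\R^{d}$, the sets in $\A_{n}$ are pairwise disjoint, each is Lebesgue measurable (intersection of measurables), and $\bigsqcup_{A \in \A_{n}} A = B_{n}$ has finite measure. By the previous fact, the collection
\[
  \P_{n} \defeq \set{X \in \P \colon m(X \cap B_{n}) > 0}
\]
is countable.

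The final step is to observe that $\P = \bigcup_{n=1}^{\infty} \P_{n}$. Indeed, let $X \in \P$ be arbitrary. Since $X = \bigcup_{n=1}^{\infty} (X \cap B_{n})$ is an increasing union, continuity of measure from below gives $m(X) = \lim_{n \to \infty} m(X \cap B_{n})$. Because $m(X) > 0$ by hypothesis, there must exist some $n$ with $m(X \cap B_{n}) > 0$, i.e., $X \in \P_{n}$. Thus $\P$ is a countable union of countable sets and hence countable.

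I don't anticipate any real obstacle here; the statement follows almost immediately from the previous fact once one decomposes $\R^{d}$ into finite-measure pieces. The only small subtlety is remembering to use continuity of measure (or countable subadditivity) to guarantee that every $X \in \P$ has positive measure intersection with at least one $B_{n}$, which is needed to conclude $\P = \bigcup_n \P_n$ rather than merely $\bigcup_n \P_n \subseteq \P$.
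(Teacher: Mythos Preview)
Your proof is correct and follows essentially the same approach as the paper's: decompose $\R^d$ into finite-measure pieces, apply the previous fact to each piece to get countably many positive-measure intersections, and then show every member of $\P$ appears in some $\P_n$. The only cosmetic differences are that the paper uses balls $B_n(\vec{0})$ rather than cubes (which coincide in the $d_{max}$ metric anyway) and invokes countable subadditivity rather than continuity from below to find an $n$ with $m(X\cap B_n)>0$.
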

\begin{proof}
  We first show that for any $n\in\N$, the set $\A_{n}=\set{X\cap B_{n}(\vec{0}):X\in\P,\;m(X\cap B_{n}(\vec{0}))>0}$ is countable. Observe that $\A_{n}$ is pairwise disjoint and $\bigsqcup_{A\in\A}A\subseteq B_{n}(\vec{0})$ so $\infty>m(B_{n}(\vec{0}))\geq m(\bigsqcup_{A\in\A}A)$, so by the previous result, $\A_{n}$ is countable. Observe that $A_{n}$ has the same cardinality as $\P_{n}=\set{X\in\P:m(S\cap B_{n}(\vec{0}))>0}$ (it is easy to inject $\P_{n}$ into $\A_{n}$ via intersection with $B_{n}(\vec{0})$, and it is easy to inject $\A_{n}$ into $\P_{n}$ by mapping $A$ to the unique member of $\P_{n}$ containing $A$). Thus $\P_{n}$ is countable.

  Clearly $\P\subseteq\bigcup_{n=1}^{\infty}\P_{n}$, and we also get the other inclusion because for any $X\in\P$ there is some $n\in\N$ such that $m(X\cap B_{n}(\vec{0}))>0$ (since $0<m(X)=m(\bigcup_{n=1}^{\infty}(X\cap B_{n}(\vec{0})))\leq\sum_{n=1}^{\infty}m(X\cap B_{n}(\vec{0}))$ so some term on the right must be positive). Thus $\P$ is a countable union of countable families, so $\P$ is countable.
\end{proof}

Note that the above proof can be easily generalized from $\R^{d}$ to any (non-empty) $\sigma$-finite measure space by replacing the $B_{n}(\vec{0})$ with a $\sigma$-decomposition of the space.

\section{Binary Relations}
\label{sec:binary-relations}

Let $R$ denote a binary relation on a set $X$, and let $R^{t}$ denote the transitive closure and let $R^{-1}$ denote the inverse relation $R^{-1}=\set{(a,b)\in X^{2}:(b,a)\in R}$. The following are easily verified:
\begin{itemize}
\item $R$ is symmetric if and only if $R=R^{-1}$.
\item $R$ is transitive if and only if $R^{-1}$ is transitive.
\item For another binary relation $S$, we have $R\subseteq S$ if and only if $R^{-1}\subseteq S^{-1}$.
\item If $\mathcal{S}$ is a collection of relations, then $\bigcap_{S\in\mathcal{S}}S^{-1}=\left(\bigcap_{S\in\mathcal{S}}S\right)^{-1}$.
\end{itemize}
From these it follows that $(R^{-1})^{t}=(R^{t})^{-1}$ as shown below.
\begin{align*}
  (R^{-1})^{t} &= \bigcap_{\substack{S\subset X^{2}\\S\text{ transitive}\\S\supseteq R^{-1}}}S\tag{Common alternate definition of transitive closure}\\
               &= \bigcap_{\substack{S\subset X^{2}\\S^{-1}\text{ transitive}\\S^{-1}\supseteq R}}S\tag{Inverse preserves transitivity and subsets}\\
               &= \bigcap_{\substack{T\subset X^{2}\\T\text{ transitive}\\T\supseteq R}}T^{-1}\\
               &= \Big(\bigcap_{\substack{T\subset X^{2}\\T\text{ transitive}\\T\supseteq R}}T\Big)^{-1}\tag{Inverse preserves intersections}\\
                 &= (R^{t})^{-1}\tag{Common alternate definition of transitive closure}
\end{align*}

\begin{fact}
  If $R$ is a reflexive and symmetric relation on $X$, then $R^{t}$ is an equivalence relation.
\end{fact}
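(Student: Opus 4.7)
The plan is to verify the three properties required of an equivalence relation on $R^t$: reflexivity, symmetry, and transitivity. Transitivity is immediate, since the transitive closure of any relation is transitive by definition. So the work lies in reflexivity and symmetry.

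For reflexivity, I would note that the transitive closure of any relation $R$ contains $R$ as a subset. Since $R$ is reflexive by hypothesis, for every $x \in X$ we have $(x,x) \in R \subseteq R^t$, so $R^t$ is reflexive.

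For symmetry, the plan is to invoke the identity $(R^{-1})^t = (R^t)^{-1}$ that the excerpt establishes immediately before the Fact. Since $R$ is symmetric, the first of the bulleted observations gives $R = R^{-1}$. Applying the transitive closure to both sides yields $R^t = (R^{-1})^t = (R^t)^{-1}$, and again using the first bullet (in the reverse direction) this equality of a relation with its inverse is exactly the statement that $R^t$ is symmetric.

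There is no real obstacle in this proof; the work was done in the paragraph preceding the Fact, where the nontrivial identity $(R^{-1})^t = (R^t)^{-1}$ was derived using that inverses commute with intersections and that inversion preserves transitivity and subset containment. Given that lemma, the Fact is a short three-line check, so the proof sketch is essentially to string together these observations in the order: transitivity (free), reflexivity (from $R \subseteq R^t$), symmetry (from the identity together with $R = R^{-1}$).
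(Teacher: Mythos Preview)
Your proposal is correct and matches the paper's proof essentially line for line: reflexivity via $R\subseteq R^t$, symmetry via the identity $(R^{-1})^t=(R^t)^{-1}$ combined with $R=R^{-1}$, and transitivity by definition of the transitive closure. The only difference is the order in which you check the three properties.
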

\begin{proof}
  Since $R$ is reflexive, we have that for all $a\in X$, $(a,a)\in R\subseteq R^{t}$, so $R^{t}$ is reflexive. Since $R$ is symmetric, we have that $R=R^{-1}$, so $R^{t}=(R^{-1})^{t}=(R^{t})^{-1}$ which impies that $R^{t}$ is symmetric since it is equal to its inverse. That $R^{t}$ is transitive follows from the definition of transitive closure. Thus $R^{t}$ is an equivalence relation.
\end{proof}

\begin{fact}
  If $R,S$ are equivalence relations on $X$, and $R\subseteq S$, then each equivalence class of $R$ is a subset of some equivalence class of $S$.
\end{fact}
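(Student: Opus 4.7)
The plan is to prove this by a direct unfolding of definitions: pick a representative of the $R$-class and show that every other element of the class lies in the same $S$-class.

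First I would fix an arbitrary equivalence class $C$ of $R$ and choose some representative $a \in C$ (using that $R$-equivalence classes are non-empty). I would then let $D$ denote the unique $S$-equivalence class containing $a$, which exists and is unique because $S$ is an equivalence relation on $X$ so its equivalence classes partition $X$.

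Next I would show $C \subseteq D$. Let $b \in C$ be arbitrary. Since $a, b \in C$ and $C$ is an $R$-class, we have $(a,b) \in R$. By the hypothesis $R \subseteq S$, this gives $(a,b) \in S$, so $b$ lies in the $S$-equivalence class of $a$, which is $D$. Thus $b \in D$, and since $b$ was arbitrary, $C \subseteq D$ as desired.

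There is no real obstacle here; the argument is a one-line unpacking of what it means to be an equivalence class and what it means for one relation to be contained in another. The only subtle point worth stating explicitly is the (trivial) use of $R$'s reflexivity to guarantee $C$ is non-empty so that the representative $a$ exists, and the fact that $S$'s equivalence classes truly partition $X$ so that $D$ is well-defined.
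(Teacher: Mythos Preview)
Your proposal is correct and follows essentially the same approach as the paper: pick a representative $a$ of an arbitrary $R$-class, and show via $R\subseteq S$ that every element of that class lies in the $S$-class $[a]_S$. The paper's proof is nearly word-for-word the same argument.
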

\begin{proof}
  Let $E_{R}$ denote an arbitrary equivalence class of $R$. Then $E_{R}$ contains some $x\in X$, and we denote $E_{R}$ using the standard notation $[x]_{R}$ which is the equivalence class containing $x$. We will show that $[x]_{R}$ is a subset of $[x]_{S}$. Let $y\in[x]_{R}$ be arbitrary. Then $(x,y)\in R\subseteq S$ which implies $(x,y)\in S$ and thus $y\in[x]_{S}$.
\end{proof}

Still letting $R$ denote a binary relation on $X$, let $R^{t_{0}}=R$, and inductively for all $n\in\N^{+}$, let $R^{t_{n}}=R^{t_{n-1}}\cup\set{(x,y)\in X^{2}:\text{ $\exists z\in X$ with $(x,z)\in R^{t_{n-1}}$ and $(z,y)\in R^{t_{n-1}}$}}$.

\begin{fact}
  If $R$ is a binary relation on $X$, then $R^{t}=\bigcup_{n=0}^{\infty}R^{t_{n}}$.
\end{fact}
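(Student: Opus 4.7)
The plan is to prove this double inclusion by exploiting the defining property of $R^{t}$ as the smallest transitive relation containing $R$, which we know from the alternate characterization used earlier in this appendix: $R^{t}=\bigcap\{S\subseteq X^{2}:S\text{ transitive},\;S\supseteq R\}$. Once I have this in hand, the right-hand side $U:=\bigcup_{n=0}^{\infty}R^{t_{n}}$ must equal $R^{t}$ provided I can show $U$ is transitive, $U\supseteq R$, and every $R^{t_{n}}\subseteq R^{t}$.

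First, I would record the elementary monotonicity fact that $R^{t_{n-1}}\subseteq R^{t_{n}}$ for every $n\geq1$, which is immediate from the inductive definition (the second set in the union is just added to $R^{t_{n-1}}$). This gives me the ``upward compatibility'' I need to amalgamate pairs coming from different levels of the tower.

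Next, I would show $R^{t}\subseteq U$ by proving inductively that $R^{t_{n}}\subseteq R^{t}$ for each $n\geq0$. The base case $R^{t_{0}}=R\subseteq R^{t}$ is immediate. For the inductive step, assuming $R^{t_{n-1}}\subseteq R^{t}$: any new pair $(x,y)\in R^{t_{n}}\setminus R^{t_{n-1}}$ comes with a witness $z\in X$ satisfying $(x,z),(z,y)\in R^{t_{n-1}}\subseteq R^{t}$, so by transitivity of $R^{t}$ we obtain $(x,y)\in R^{t}$. Taking the union over $n$ gives $U\subseteq R^{t}$.

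For the reverse inclusion $R^{t}\subseteq U$, it suffices to show $U$ is transitive and contains $R$; the latter is obvious since $R=R^{t_{0}}\subseteq U$. For transitivity, take $(x,y),(y,z)\in U$, so there exist $m,k\geq0$ with $(x,y)\in R^{t_{m}}$ and $(y,z)\in R^{t_{k}}$. By the monotonicity fact above and letting $N=\max\{m,k\}$, both pairs lie in $R^{t_{N}}$, and then by the very definition of $R^{t_{N+1}}$ (witnessed by $z\mapsto y$, in the naming above) we have $(x,z)\in R^{t_{N+1}}\subseteq U$. This is the only mildly delicate step, and it is not really an obstacle—just a bookkeeping point to make sure both pairs live in a common $R^{t_{N}}$ before invoking the construction. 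Combining both inclusions yields $R^{t}=U$.
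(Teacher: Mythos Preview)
Your proof is correct and follows essentially the same approach as the paper: show the union is transitive (using monotonicity $R^{t_{n-1}}\subseteq R^{t_n}$ to place two given pairs in a common level) to get $R^{t}\subseteq U$, and show by induction that each $R^{t_n}\subseteq R^{t}$ to get $U\subseteq R^{t}$. One minor slip: the sentence beginning ``Next, I would show $R^{t}\subseteq U$\ldots'' should read ``$U\subseteq R^{t}$,'' as you yourself conclude at the end of that paragraph.
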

\begin{proof}
  To show that $R^{t}\subseteq\bigcup_{n=0}^{\infty}R^{t_{n}}$ it suffices to show that $\bigcup_{n=0}^{\infty}R^{t_{n}}$ is transitive. First note that for any $n\in\N_{0}$, $R^{t_{n}}\subseteq R^{t_{n+1}}$. Let $(a,b),(b,c)\in\bigcup_{n=0}^{\infty}R^{t_{n}}$; then there is some $N$ such that $(a,b),(b,c)\in R^{t_{N}}$ which means that $(a,c)\in R^{t_{N+1}}$ and so $\bigcup_{n=0}^{\infty}R^{t_{n}}$ is transitive.

  For the other containment, for an inductive base case note that $R^{t_{0}}=R\subseteq R^{t}$. Then for the inductive case, if $R^{t_{n}}\subseteq R^{t}$ for some $n$, then because $R^{t}$ is transitive it follows that 
 
  \[\set{(x,y)\in X^{2}:\text{ $\exists z\in X$ with $(x,z)\in R^{t_{n}}$ and $(z,y)\in R^{t_{n}}$}}\subseteq R^{t}\] and thus $R^{t_{n+1}}\subseteq R^{t}$. Thus $\bigcup_{n=0}^{\infty}R^{t_{n}}\subseteq R^{t}$.
\end{proof}


\begin{fact}
  Let $(a,b)\in X^{2}$, then $(a,b)\in R^{t_{n}}$ if and only if there exists $0<k\leq 2^{n}$ and there exists a sequence $\langle{x_{i}}_{i=0}^{k}\rangle$ with $x_{0}=a$ and $x_{k}=b$ and for all $i\in[k]$, $(x_{i-1},x_{i})\in R$.
\end{fact}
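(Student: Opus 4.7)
The plan is to prove this by induction on $n$, following naturally from the inductive definition of $R^{t_n}$.

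For the base case $n=0$, we have $R^{t_0}=R$. Then $(a,b)\in R^{t_0}$ iff $(a,b)\in R$, which is equivalent to the existence of a sequence $x_0=a,\;x_1=b$ of length $k=1$ (noting $0<1\leq 2^0=1$) with $(x_0,x_1)\in R$. So the base case is immediate.

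For the inductive step, assume the fact holds for $n$, and consider $(a,b)\in X^2$. For the forward direction, suppose $(a,b)\in R^{t_{n+1}}$. By definition, either $(a,b)\in R^{t_n}$, in which case the inductive hypothesis gives a sequence of some length $k\leq 2^n\leq 2^{n+1}$ with the required property, or there exists $z\in X$ with $(a,z),(z,b)\in R^{t_n}$. In the latter case, the inductive hypothesis gives sequences $\langle x_i\rangle_{i=0}^{k_1}$ from $a$ to $z$ and $\langle y_i\rangle_{i=0}^{k_2}$ from $z$ to $b$ with $k_1,k_2\leq 2^n$; concatenating (identifying $x_{k_1}=z=y_0$) gives a sequence from $a$ to $b$ of length $k_1+k_2\leq 2\cdot 2^n=2^{n+1}$, with each consecutive pair in $R$, as required.

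For the reverse direction, suppose there is a sequence $\langle x_i\rangle_{i=0}^{k}$ with $x_0=a$, $x_k=b$, $0<k\leq 2^{n+1}$, and consecutive pairs in $R$. Let $j=\lceil k/2\rceil$; then the subsequences $\langle x_i\rangle_{i=0}^{j}$ and $\langle x_i\rangle_{i=j}^{k}$ both have length at most $\lceil k/2\rceil\leq 2^n$ (and positive length, provided $k\geq 1$; the edge case $k=1$ is handled by observing $(a,b)\in R\subseteq R^{t_{n+1}}$ directly). Setting $z=x_j$, the inductive hypothesis applied to these two subsequences yields $(a,z),(z,b)\in R^{t_n}$, and hence $(a,b)\in R^{t_{n+1}}$ by the definition of $R^{t_{n+1}}$.

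The main (minor) obstacle is the bookkeeping around the concatenation/splitting step: one must verify that the exponential bound $2^n$ doubles correctly to $2^{n+1}$ under both concatenation and splitting, and handle the case $k=1$ carefully in the splitting direction (where $\lceil k/2\rceil=1$ yields a degenerate subsequence). Neither is a conceptual difficulty—the argument is essentially the observation that $R^{t_n}$ captures precisely the pairs joined by an $R$-path of length at most $2^n$, which matches the doubling structure of the definition of $R^{t_{n+1}}$.
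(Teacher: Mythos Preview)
Your proof is correct and follows essentially the same approach as the paper: induction on $n$, concatenating two $R$-paths of length at most $2^n$ for the forward direction, and splitting a path at $\lceil k/2\rceil$ (with the $k=1$ case handled separately) for the reverse direction. The paper's argument is organized identically, so there is nothing substantive to add.
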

\begin{proof}
  The case $n=0$ is trivial and serves as an inductive base case. For induction, assume the statement for $n-1$. For the forward direction, if $(a,b)\in R^{t_{n}}$ then either $(a,b)\in R^{t_{n-1}}$ and the required sequence exists by IH, or $(a,b)\in\set{(x,y)\in X^{2}:\text{ $\exists z\in X$ with $(x,z)\in R^{t_{n-1}}$ and $(z,y)\in R^{t_{n-1}}$}}$ and thus there exists $c\in X$ such that $(a,c),(c,b)\in R^{t_{n-1}}$ so by IH, there exists $0<k',k''\leq 2^{n-1}$ and sequences $\langle x_{i} \rangle_{i=0}^{k'}$ and $\langle y_{0} \rangle_{i=1}^{k''}$ with $x_{0}=a$, $x_{k'}=c=y_{0}$, and $y_{k''}=b$, and thus pasting the sequences together as $\langle z_{i} \rangle_{i=0}^{k'+k''}$ with $z_{i}=
  \begin{cases}
    x_{i} & 0\leq i\leq k'\\
    y_{i-k'} & k'\leq i\leq k'+k''
  \end{cases}
$
is a sequence with $0<k=k'+k''\leq 2^{n}$ and $z_{i}=a$ and $z_{k=k'+k''}=b$.

For the reverse direction, if a sequence $\langle x_{i} \rangle_{i=0}^{k}$ exists with $x_{0}=a$, $x_{k}=b$, and $0<k\leq 2^{n}$, then either $k=1$ and we are done (because then $(a,b)\in R$) or $k>1$ in which case we let $k'=\ceil{k/2}$ and $k''=\floor{k/2}$ so that $k'+k''=k$ and $0<k',k''\leq 2^{n-1}$, so by inductive hypothesis, the sequence $\langle x_{i} \rangle_{i=0}^{k'}$ demonstrates that $(x_{0},x_{k'})\in R^{t_{n-1}}$ and the sequence $\langle y_{i}\rangle_{i=k'+0}^{k'+k''=k}$ demonstrates that $(x_{k'},x_{k})\in R^{t_{n-1}}$ and thus $(a,b)=(x_{0},x_{k})\in R^{t_{n}}$.
\end{proof}

\begin{fact}
  For any $a,b\in X$, $(a,b)\in R^{t}$ if and only if there exists some $N\in\N$ and some sequence $\langle x_{i}\rangle_{i=0}^{N}$ with $x_{0}=a$, and $x_{N}=b$, and for all $i\in[N]$ $(x_{i-1},x_{i})\in R$.
\end{fact}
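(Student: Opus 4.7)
The plan is to derive this fact as a direct corollary of the two preceding facts in the excerpt, namely $R^{t}=\bigcup_{n=0}^{\infty}R^{t_{n}}$ and the characterization of $R^{t_{n}}$ in terms of $R$-sequences of length at most $2^{n}$. In essence, we are just collapsing a quantifier over $n$ into a quantifier over sequence length $N$.

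For the forward direction, I would start with $(a,b)\in R^{t}$. By the penultimate fact, $R^{t}=\bigcup_{n=0}^{\infty}R^{t_{n}}$, so there exists some $n\in\N\cup\{0\}$ with $(a,b)\in R^{t_{n}}$. Applying the previous fact to this $n$ yields $0<k\leq 2^{n}$ and a sequence $\langle x_{i}\rangle_{i=0}^{k}$ with $x_{0}=a$, $x_{k}=b$, and $(x_{i-1},x_{i})\in R$ for all $i\in[k]$. Setting $N=k$ (which lies in $\N$ since $k>0$) gives exactly the sequence required.

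For the reverse direction, suppose such $N\in\N$ and a sequence $\langle x_{i}\rangle_{i=0}^{N}$ exist. Since $N\geq 1$, we have $2^{N}\geq N$, so choose any $n\in\N$ with $2^{n}\geq N$ (e.g.\ $n=N$). Then the sequence witnesses the hypothesis of the previous fact with $k=N$ and this $n$, so $(a,b)\in R^{t_{n}}\subseteq\bigcup_{m=0}^{\infty}R^{t_{m}}=R^{t}$, again by the penultimate fact.

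There is essentially no obstacle: the entire statement reduces to combining the two preceding facts, and the only minor subtlety is checking that the quantifier ranges align (namely that $k\geq 1$ in the previous fact matches $N\in\N$ here, which is immediate since the paper's convention is $\N=\{1,2,\ldots\}$). So the proof will be two short paragraphs, one per direction, each invoking the previous fact and then the $R^{t}=\bigcup R^{t_{n}}$ identity.
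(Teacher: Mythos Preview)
Your proposal is correct and matches the paper's approach exactly: both directions invoke the two preceding facts (the decomposition $R^{t}=\bigcup_{n} R^{t_{n}}$ and the sequence characterization of $R^{t_{n}}$) in the same way. If anything, your version is slightly more explicit than the paper's, which leaves the choice of $n$ in the converse direction implicit.
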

\begin{proof}
  If $(a,b)\in R^{t}$, then $(a,b)\in R^{t_{n}}$ for some $n\in\N\cup\set{0}$, so by the prior fact there exists some $0<N\leq2^{n}$ for which a sequence as described exists. Conversely, if such a sequence $\langle x_{i}\rangle_{i=0}^{N}$ exists then $(a,b)\in R^{t_{n}}\subseteq R^{t}$.
\end{proof}

\bibliographystyle{IEEEtran}
\bibliography{IEEEabrv,references,ref}
\end{document}